\global\long\def\real{\mathbb{R}}
\global\long\def\t{\mathrm{T}}
\global\long\def\prob{\mathbb{P}}
\global\long\def\expect{\mathbb{E}}
\global\long\def\cov{\mathrm{cov}}
\global\long\def\tdomain{\mathcal{T}}
\global\long\def\bdelta{\boldsymbol{\delta}}
\global\long\def\bw{\boldsymbol{w}}
\global\long\def\bY{\boldsymbol{Y}}
\global\long\def\bb{\boldsymbol{b}}
\global\long\def\trace{\mathrm{tr}}
\global\long\def\bepsilon{\boldsymbol{\epsilon}}
\global\long\def\bxi{\boldsymbol{\xi}}
\global\long\def\bv{\boldsymbol{v}}
\global\long\def\bgamma{\boldsymbol{\gamma}}
\global\long\def\brho{\boldsymbol{\rho}}
\newtheorem{theorem}{Theorem}
\newtheorem{lemma}{Lemma}
\newtheorem{corollary}{Corollary}
\newtheorem{proposition}{Proposition}
\newtheorem{remark}{Remark}
\newtheorem{assumption}{Assumption}
\title{Transfer Learning Meets Functional Linear Regression: No Negative Transfer under Posterior Drift}
\author{
Xiaoyu Hu,
Zhenhua Lin
}
\begin{document}

\maketitle	
	
	\begin{abstract}
		Posterior drift refers to changes in the relationship between responses and covariates while the distributions of the covariates remain unchanged. 
		In this work, we explore functional linear regression under posterior drift with transfer learning.
		Specifically, we investigate when and how auxiliary data can be leveraged to improve the estimation accuracy of the slope function in the target model when posterior drift occurs.
		We employ the approximated least square method together with a lasso penalty to construct an estimator that transfers beneficial knowledge from source data. 
		Theoretical analysis indicates that our method avoids negative transfer under posterior drift, even when the contrast between slope functions is quite large. Specifically, the estimator is shown to perform at least as well as the classical estimator using only target data, and it enhances the learning of the target model when the source and target models are sufficiently similar. 
		Furthermore, to address scenarios where covariate distributions may change, we propose an adaptive algorithm using aggregation techniques. This algorithm is robust against non-informative source samples and effectively prevents negative transfer.
		Simulation and real data examples are provided to demonstrate the effectiveness of the proposed algorithm.
	\end{abstract}
	
	%
	
	\section{Introduction}
	
	Functional data analysis (FDA) has gained increasing attention over the past two decades. Two monographs, \citet{ramsay2005functional} and \citet{hsing2015theoretical}, provide comprehensive treatments on methodologies and theories of FDA. In particular, functional linear regression has emerged as a crucial tool and has been extensively studied in the literature, including \citet{yao2005functional,cai2006prediction,hall2007methodology,li2007rates,crambes2009smoothing,yuan2010reproducing}. The aforementioned methodologies are developed and can be successful when there is sufficient training data for the target task. However, in many real-world applications, the available training data is often limited, leading to unsatisfactory estimation results. Fortunately, samples from different but related sources can provide beneficial information to boost performance on the target problem. 
	The process of transferring knowledge from additional data to improve the task on the target data is a popular topic known as transfer learning \citep{pan2009survey,zhuang2020comprehensive}.
	
	Transfer learning has been widely applied to various tasks, including text classification \citep{xue2008topic}, recommendation systems \citep{pan2013transfer}, and medical diagnosis \citep{hajiramezanali2018bayesian}. 
	Although many methodologies have been developed in the machine learning community, less attention is paid to the statistical properties and theoretical guarantees. Recently, some works have begun to explore transfer learning algorithms in different statistical models \citep{cai2021transfer,li2022transferlasso,tian2022transfer,tian2022unsupervised,zhang2022transfer,li2023estimation,li2023transfer,jin2024transfer}.
	However, in the context of functional linear regression, the estimation performance under transfer learning remains unclear. 
	For prediction, \citet{lin2024hypothesis} considered using reproducing kernel Hilbert spaces (RKHS) and derived the error bound on the excess prediction risk under transfer learning. In contrast, estimation in functional linear regression is a more difficult problem than prediction, as measured by the minimax optimal rate \citep{cai2006prediction,hall2007methodology}. The impact on slope estimation, which matters for understanding the effect of functional explanatory variables on the response, has not yet been investigated in the context of transfer learning. 
	
	In this paper, we study the slope estimation problem in functional linear regression under the transfer learning setting. Specifically, we focus on the case of posterior drift, where the relationship between responses and covariates changes while the covariate distributions remain unchanged \citep{kouw2018introduction,cai2021transfer,li2022transferlasso,maity2024linear}. Notably, we can relax the condition of unchanged covariate distributions to aligned eigenspaces of the functional covariates, see Sections \ref{sec:method} and \ref{sec:thm} for details.
	Since the slope function is intrinsically infinite-dimensional, truncation is indispensable to balance bias and variance during estimation.   
	Our main idea involves constructing the estimator using the approximated least square method.
	We project all functional covariates onto the common eigenspace estimated from the auxiliary data and define the transformed variables as scores.
	We use the scores from the source data to obtain an initial estimator, then correct the bias with a lasso penalty using the target data \citep{tibshirani1996regression}.  
	
	Moreover, we derive the convergence rate of the proposed estimator and demonstrate that there is no negative transfer in the presence of posterior drift. 
	When the contrast between slope functions is sufficiently small, the proposed algorithm effectively improves estimation performance on the target model by transferring knowledge from source data.
	It is noteworthy that the truncation parameter depends not only on the smoothness of the slope function and the decay rate of the eigenvalues of covariance functions, but also on the relatedness between target and source models.
	
	The theoretical challenges mainly lie in the following aspects. First, the projection scores of the target data are correlated, differing from the conventional case in \citet{hall2007methodology}. Second, the truncation level is potentially larger than the target sample size due to additional information from auxiliary data, which brings new theoretical issues.
	To tackle these challenges, we employ the oracle inequality to quantify the error bounds and generalize the results regarding the restricted eigenvalue condition in high-dimensional regression to the functional data setting \citep{raskutti2010restricted,negahban2012unified}. 
	
	In practice, we may not know whether the covariate distributions are equal or share an aligned eigenspace. To avoid performance deterioration, we provide an adaptive algorithm utilizing aggregation techniques. We demonstrate its robustness and effectiveness in alleviating negative transfer caused by non-informative source samples through numerical studies.
	
	The main contributions of this paper are summarized as follows:
	\begin{itemize}
		\item We propose an effective algorithm to enhance the learning of the target slope function by transferring knowledge from source samples under posterior drift. 
		\item We theoretically demonstrate the absence of negative transfer in slope estimation for functional linear regression under posterior drift.
		\item Under scenarios where covariate distributions may differ substantially, we propose an adaptive algorithm through sparse aggregation to prevent performance degradation.
	\end{itemize}
	
	\paragraph{Notations.} 
	We use bold letters to denote vectors. For a given $p$-dimensional vector $\bw=(w_1, \dots, w_p)^\t \in \real^p$, the $l_q$ norm is given by $\|\bw\|_q = (\sum_{j=1}^p |w_j|^q)^{1/q}, q>0$ and $\|\bw\|_{\infty} = \max_{j}|w_j|$. For a function $f:[0,1] \to \real$, let $\|f\|_2^2 = \int_0^1 f^2(t) dt$.
	For a random variable $\xi$, define the $\psi_p$-Orlicz norm by $\|\xi\|_{\psi_p} = \inf\{c>0: E \exp(|\xi|^p/c^p) \le 2 \} $ for $p\ge 1$.
	For $a, b \in \real$, define $a \vee b = \max(a,b)$ and $a \land b = \min(a, b)$. 
	For two deterministic and non-negative sequences $\{a_n\}_{n=1}^\infty$ and $\{b_n\}_{n=1}^\infty$, we use $a_n \ll b_n$ or $a_n = o(b_n)$ if $a_n/b_n \to 0$ as $n \to \infty$. And $a_n = O(b_n)$ or $a_n \lesssim b_n$ if $\sup_n a_n/b_n < \infty$. Let $a_n \asymp b_n$ if $a_n \lesssim b_n$ and $b_n \lesssim a_n$. 
	For two random sequences $\{x_n\}_{n=1}^\infty$ and $\{y_n\}_{n=1}^\infty$, let $x_n = O_P(y_n)$ denote $P(|x_n/y_n| \le c) \to 1$ for some finite constant $c>0$ and $x_n=o_P(y_n)$ denote $P(|x_n/y_n|>c) \to 0$ for any constant $c>0$.
	Unless otherwise stated, let $c, c_1, c_2, \dots$ and $C, C_1, C_2, \dots$ denote positive constants, not depending on the sample sizes $n, n_1, \dots, n_L$. We allow $c$ and $C$ to be different at different appearances.
	
	\section{Methodology Under Posterior Drift}\label{sec:method}
	
	\subsection{Models}
	Given the target distribution $Q$, the observations $(X_{1}, Y_{1}), \cdots, (X_{n}, Y_{n})$ are independent and identically distributed (i.i.d.) from $Q$, where $X_i \in L^2(\tdomain)$ is the random covariate function in the space of square integrable functions on a compact interval $\tdomain$ and $Y_i$ is the scale response. Without loss of generality, let $\tdomain = [0, 1]$. 
	The target model is 
	\begin{equation}\label{eq:target_model}
		Y_{i} - EY = \int_{\tdomain} b(t)\big(X_{i}(t) - \mu(t) \big) dt +\epsilon_{i},  ~~~~i=1, \dots, n,
	\end{equation}
	where $b(t)$ is an unknown slope function, $\mu(t) = \expect \{X(t)\}$ and $\epsilon_i$ is random noise with $\expect(\epsilon_i | X_i) = 0$ and variance $\sigma^2$, independent of the covariates. 
	
	In the context of transfer learning, we observe additional data from source distributions $P^{(l)}$, where $l=1,\dots, L$. Denote $\mathcal A = \{ 1, \dots, L\}$.
	The independent random samples $(X_{1}^{(l)}, Y_{1}^{(l)}), \cdots, (X_{n_l}^{(l)}, Y_{n_l}^{(l)}) \stackrel{i.i.d.}{\sim} P^{(l)}$ are generated by the following source models,
	\begin{equation}\label{eq:source_model}
		Y_{i}^{(l)} - \expect (Y^{(l)}) = \int_{\tdomain} w^{(l)}(t)\big(X_{i}^{(l)}(t) - \mu^{(l)}(t) \big) dt +\epsilon_{i}^{(l)}, 
	\end{equation}
	$i=1, \dots, n_l$, where $w^{(l)}(t)$ is an unknown slope function, $\mu^{(l)}(t) = \expect \{X^{(l)}(t)\}$ and $\epsilon_i^{(l)}$ is random noise with $\expect (\epsilon_i^{(l)} | X_i^{(l)} ) = 0$ and variance $(\sigma^2)^{(l)}$, independent of the covariates.
	
	\subsection{Contrast under Posterior Drift}
	Denote the covariance functions by $K(s,t) = \cov(X(s), X(t))$, $K^{(l)}(s,t) = \cov(X^{(l)}(s), X^{(l)}(t))$. Under posterior drift where $K^{(l)}(s,t)$ and $K(s,t)$ share an aligned eigenspace, the spectral expansion is given by
	\[ K(s,t) = \sum_k \lambda_k \phi_k(s) \phi_k(t),\] 
	\[K^{(l)}(s,t) = \sum_k \lambda_k^{(l)} \phi_k(s) \phi_k(t), \]
	where $\lambda_1 \ge \lambda_2 \ge \cdots \ge 0$, $\lambda_1^{(l)} \ge \lambda_2^{(l)} \ge \cdots \ge 0$ and $\int_\tdomain \phi_k(t) \phi_l(t)dt = \mathds{1}(k=l)$.
	Consequently, the Karhunen-L$\grave{\mathrm o}$eve expansion is
	\[ X_i(t) = \mu(t) +  \sum_{k} \xi_{i,k} \phi_k(t), ~~ i=1, \dots, n, \] 
	\[ X_i^{(l)}(t) = \mu^{(l)}(t) + \sum_k \xi_{i,k}^{(l)}\phi_k(t), ~~ i =1, \dots, n_l, \]
	where $\xi_{i,k} = \int_\tdomain \big( X_i(t) - \mu(t) \big) \phi_k(t) dt $ and $\xi_{i,k}^{(l)} = \int_\tdomain \big(X_i^{(l)}(t) - \mu^{(l)}(t) \big) \phi_k(t)dt $.
	Given the complete orthonormal basis $\phi_1, \phi_2, \dots$, write
	\[ w^{(l)}(t) = \sum_{k=1}^\infty w_{k}^{(l)} \phi_k(t), ~~~~ b(t) = \sum_{k=1}^\infty b_{k} \phi_k(t). \] 
	In general, the slope functions $w^{(l)}(t)$ are different from $b(t)$.
	Let $\delta_k^{(l)} = b_k - w_k^{(l)}$ and $\|\bdelta^{(l)} \|_1 = \sum_{k=1}^\infty |\delta_k^{(l)}|$ for $l=1, \dots, L$. 
	Note that the contrast $\|\bdelta^{(l)}\|_1$ measures the relatedness between the target model and source models.
	Clearly, the smaller the contrast is, the more information can be transferred from the source data. 
	Intuitively, if we expect to improve the target estimation by borrowing information from source data, the source models should be sufficiently close to the target model, that is, the contrast should be sufficiently small. This will be discussed further in Section \ref{sec:thm}.
	
	\subsection{Transfer Learning under Posterior Drift}
	
	To handle the infinite dimensionality, we first perform functional principal component analysis (FPCA) on the source data to estimate the eigenfunctions and obtain the projected score variables for both source and target data.
	Denote 
	\begin{align*}
		&\quad \hat{K}^{(l)}(s,t)  \\
		& =  \frac{1}{n_l-1} \sum_{i=1}^{n_l}\left\{X_{i}^{(l)}(s) - \bar{X}^{(l)}(s) \right\}\left\{X_{i}^{(l)}(t) - \bar{X}^{(l)}(t)\right\}, 
	\end{align*}
	\begin{align}\label{eq:cov_source}
		\hat K^{\mathcal A}(s,t) = \sum_{l=1}^L \pi_l \hat K^{(l)}(s,t),
	\end{align}
	where $\bar{X}^{(l)}(t) = n_1^{-1}\sum_{i=1}^{n_1}X_{i}^{(l)}(t)$ and $\pi_l = n_l/ N$, $N = \sum_{l=1}^L n_l$. 
	Note that
	\begin{align}\label{eq:fpca}
		\hat K^{\mathcal A}(s,t) = \sum_k \hat{\lambda}_k^{\mathcal A} \hat{\phi}_k^{\mathcal A}(s) \hat{\phi}_k^{\mathcal A}(t), 
	\end{align}
	where $\hat{\lambda}_1^{\mathcal A} \ge \hat{\lambda}_2^{\mathcal A} \ge \cdots$ are eigenvalues and $\hat{\phi}_1^{\mathcal A}, \hat{\phi}_2^{\mathcal A}, \cdots$ are corresponding eigenfunctions. 
	Let 
	\begin{align}\label{eq:scores}
		\begin{split}
			\hat{\xi}_{i,k}^{(l)} & = \int_{\tdomain} \big(X_{i}^{(l)}(t) - \bar{X}^{(l)}(t)\big) \hat{\phi}_k^{\mathcal A}(t) dt, ~ i=1,\dots, n_l, \\
			\hat{\xi}_{i,k} & = \int_{\tdomain} \big(X_{i}(t) - \bar{X}(t)\big) \hat{\phi}_k^{\mathcal A}(t) dt, ~ i=1, \dots, n,
		\end{split}
	\end{align}
	where $\bar{X}(t) = n^{-1} \sum_{i=1}^{n}X_i(t)$. 
	
	We introduce the following notations for vectors and matrices. 
	Let $\bY^{(l)} = (Y_1^{(l)}, \dots, Y_{n_l}^{(l)})^{\t}$ and $\bar{\bY}^{(l)}$ be an $n_l$-dimensional vector with each element equal to $\bar{Y}^{(l)} = n_l^{-1}\sum_{i=1}^{n_l} Y_i^{(l)}$. Define $\boldsymbol{Y}$ and $\bar{\bY}$ similarly for the target sample.
	Denote
	\[ \hat{\Xi}^{(l)} = \left( \begin{array}{ccc}
		\hat{\xi}_{1,1}^{(l)} & \cdots & \hat{\xi}_{1,m}^{(l)} \\
		\hat{\xi}_{2,1}^{(l)} & \cdots & \hat{\xi}_{2,m}^{(l)} \\
		\vdots & \ddots & \vdots \\
		\hat{\xi}_{n_l,1}^{(l)} & \cdots & \hat{\xi}_{n_l,m}^{(l)}
	\end{array} \right),\] 
		where the truncation parameter $m$ is allowed to grow with the sample sizes. Define $\hat \Xi \in \real^{n\times m}$ analogously.
		
		We use the approximated least square method to transfer knowledge from the source data and obtain an initial estimator. Since the slope functions of the target model and the source models are generally different, this initial estimator may be biased. To correct the bias, we use the target data and apply the approximated least square with a lasso penalty. The proposed algorithm is presented in Algorithm \ref{alg:tl-flr}.
		
		\begin{algorithm}[!h]
			\caption{\label{alg:tl-flr} The transfer learning algorithm under posterior drift.}
			\begin{algorithmic}
				\STATE {\bf Input:} Target data $(X_i, Y_i), i=1,\dots, n$ and auxiliary data $(X_i^{(l)}, Y_i^{(l)}), i=1,\dots, n_l; l=1, \dots, L$.
				\STATE Compute $\hat K^{\mathcal A}(s,t)$ as defined in \eqref{eq:cov_source} and obtain the estimates $\hat \lambda_k^{\mathcal A}, \hat \phi_k^{\mathcal A}$ in \eqref{eq:fpca}.
				\STATE Obtain the score variables $\hat \xi_{i,k}$ and $\hat \xi_{i,k}^{(l)}$ in \eqref{eq:scores}, $k=1, \dots, m$.
				\STATE Step 1: An initial estimator. 
				\[ \hat{\bw} = \mathop{\arg\min}_{\bw \in \real^m} \sum_{l=1}^L \pi_l (n_l-1)^{-1} \|\bY^{(l)} - \bar{\bY}^{(l)} - \hat{\Xi}^{(l)} \bw \|_2^2 .\]	
				\STATE Step 2: Bias correction.
				\[ \hat{\bdelta} = \mathop{\arg\min}_{\bdelta \in \real^m} \frac{1}{2n} \|\bY - \bar{\bY} - \hat{\Xi} \hat{\bw} - \hat{\Xi} \bdelta\|_2^2 + \tau \|\bdelta\|_1,  \]
				where $\tau\ge 0$. Let $\hat b_k = \hat w_k + \hat \delta_k$.
				\STATE {\bf Output:} $\hat b(t) = \sum_{k=1}^m \hat b_k \hat \phi_k^{\mathcal A}(t)$.
			\end{algorithmic}
		\end{algorithm}
		
		In Step 1, the true parameter of interest is $\bw = \big( \sum_{l=1}^L \pi_l \Sigma^{(l)})^{-1}\big( \sum_{l=1}^L \pi_l \expect( \bxi^{(l)} Y^{(l)}) \big) $, where $\Sigma^{(l)}$ is a diagonal matrix with elements $\lambda_1^{(l)}, \dots, \lambda_m^{(l)}$ and $\bxi^{(l)} = (\xi_1^{(l)}, \dots, \xi_m^{(l)})^\t$. In step 2, the targeted parameter is $\bdelta = \bb - \bw = \big(\sum_{l=1}^L \pi_l \Sigma^{(l)} \big)^{-1}\sum_{l=1}^L (\pi_l \Sigma^{(l)} \bdelta^{(l)})$, where $\bb=(b_1, \dots, b_m)^{\t}$ and $\bdelta^{(l)} = (\delta_1^{(l)}, \dots, \delta_m^{(l)})$.
		By computation, we obtain
		\begin{align*}
			\hat{\bw} & = \bigg( \sum_{l=1}^L \pi_l (n_l-1)^{-1} \hat{\Xi}^{(l) \t} \hat{\Xi}^{(l)} \bigg)^{-1}  \bigg\{ \\
			& \quad \quad \sum_{l=1}^L \pi_l (n_l-1)^{-1} \hat{\Xi}^{(l) \t} (\bY^{(l)}-\bar{\bY}^{(l)}) \bigg\}. 
		\end{align*}
		Based on the estimation procedure for $\hat K^{\mathcal A}$ and $\hat \phi_k^{\mathcal A}$, it follows that $\sum_{l=1}^L \pi_l (n_l-1)^{-1} \hat{\Xi}^{(l) \t} \hat{\Xi}^{(l)}$ is a diagonal matrix. 
		However, the $\hat \xi_{i,k}$'s in Step 2 may be correlated, meaning that $ \hat \Xi^{\t} \hat \Xi$ is not necessarily a diagonal matrix, which differs from the conventional case in \citet{hall2007methodology}. Furthermore, there is no explicit solution for $\hat \bdelta$ due to the lasso penalty.
		
		\section{Theoretical Properties}\label{sec:thm}
		
		We investigate the theoretical properties of the proposed estimator. To begin with, we provide some necessary conditions. 
		Assumption \ref{assump:xdist} states that the score variables of the target distribution are sub-Gaussian, which is common in the literature on functional and nonparametric analysis \citep{lin2021unified,tian2022transfer}. 
		In Assumption \ref{assump:common}, we assume that the covariance functions share common eigenfunctions \citep{dai2017optimal}. Moreover, Assumption \ref{assump:cov-decay} characterizes the decay rate of the eigenvalues of different covariance functions using the parameter $\alpha$, which simplifies the exposition. Assumptions \ref{assump:common} and \ref{assump:cov-decay} are milder than assuming equal covariate distributions under posterior drift. Assumption \ref{assump:slope-decay} concerns the slope parameters of interest and the contrast between the target model and source models. 
		Assumption \ref{assump:x-moment} about the moments of the source data is quite standard \citep{hall2007methodology}.
		
		\begin{assumption}\label{assump:xdist}
			The score vector $\bxi = (\xi_1, \xi_2, \cdots, \xi_m)^{\t}$ is sub-Gaussian, i.e., $\| \bv^\t \bxi \|_{\psi_2} \le K(\bv^{\t}\Sigma_{\bxi} \bv)^{1/2} $, for some constant $K>0$, any vector $v \in \real^m$ and any integer $m>0$, where $\Sigma_{\bxi}$ is the covariance of $\bxi$.
		\end{assumption}
		
		\begin{assumption}\label{assump:common}
			The covariance functions $K(s,t)$ and $K^{(l)}(s,t)$ share common eigenfunctions.
		\end{assumption}
		
		\begin{assumption}\label{assump:cov-decay}
			Foe some universal constant $c_1>0$, the eigenvalues satisfy $\lambda_k \le c_1 k^{-\alpha}, \lambda_k - \lambda_{k+1} \ge c_1^{-1} k^{-\alpha-1}$ and $\lambda_k^{(l)} \le c_1k^{-\alpha}, \lambda_k^{(l)} - \lambda_{k+1}^{(l)} \ge c_1^{-1} k^{-\alpha-1} $ for $l=1, \dots, L$ and $\alpha>1$.
		\end{assumption}
		
		\begin{assumption}\label{assump:slope-decay}
			For some universal constant $c_2 >0$, assume $b_k \le c_2 k^{-\beta}, \beta > \alpha/2 + 1$ and $\sum_k |\delta_k^{(l)}| \le h$ for $l=1,\dots,L$. 
		\end{assumption}
		
		\begin{assumption}\label{assump:x-moment}
			For some universal constant $c_3>0$, assume $X^{(l)}$ has finite fourth moment, $\int_\tdomain \expect\{ \big(X^{(l)}(t)\big)^4\} dt \le c_3 < \infty $, and $\expect(\xi_k^{(l)})^4 \le c_3 \{\expect(\xi_k^{(l)})^2\}^2$ for all $k$. Moreover, $\expect(\epsilon^{(l)})^4 \le c_3 < \infty$. 
		\end{assumption}
		
		In the context of transfer learning, the truncation parameter $m$ can potentially be much larger than $n$, and the sample correlations between $\hat \xi_{ik}$'s are nonzero. These features distinguish the problem from the classical functional linear regression.
		To tackle these issues, we establish the restricted eigenvalue property under the scenario of functional data, and leverage the oracle inequalities to obtain error bounds. 
		
		Denote $\Xi = (\xi_{i,k}) \in \real^{n \times m}$ and $D$ is a diagonal matrix with elements $\lambda_1^{1/2}, \lambda_2^{1/2}, \dots, \lambda_m^{1/2}$. 
		If $m > n$, the smallest eigenvalue of $\Xi^\t \Xi /n$ is 0. This means that small perturbations in $\|\Xi \bv\|_2/n$ can turn into large changes in $\|\bv\|_2$ for $\bv \in \real^m$, leading to unstable solutions.
		In Proposition \ref{prop:re}, we establish the connection between $n^{-1}\|\Xi \bv\|_2^2$ and $\|D \bv\|_2^2$ in the functional data setting, which is known as the restricted eigenvalue property in high-dimensional regression analysis \citep{raskutti2010restricted}.
		
		\begin{proposition}\label{prop:re}
			Under Assumption \ref{assump:xdist}, for any vector $\bv \in \real^m$, there exists some constant $c_K$ depending on the sub-Gaussian parameter $K$, such that 
			\begin{equation}\label{eq:re}
				\frac{1}{n}\|\Xi \bv \|_2^2 \ge \frac{1}{4} \|D \bv \|_2^2 - \frac{c_K\|D\|_F}{n^{1/2}}\|\bv\|_1 \|D\bv \|_2 , 
			\end{equation}
			with probability at least $1-c_4\exp(-c_5n)$ for some constants $c_4, c_5>0$.
		\end{proposition}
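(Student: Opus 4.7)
The plan is to adapt the Gaussian restricted eigenvalue argument of \citet{raskutti2010restricted} to the sub-Gaussian, functional-data setting, replacing the ambient-dimension factor $\rho(\Sigma)\sqrt{\log m}$ in their bound by the Hilbert--Schmidt quantity $\|D\|_F$, which stays finite under the trace-class decay of Assumption~\ref{assump:cov-decay}.

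First I would whiten the design: set $\bz_i = D^{-1}\bxi_i$ and $Z = (\bz_1,\ldots,\bz_n)^{\t}$. Under Assumption~\ref{assump:xdist} the rows of $Z$ are i.i.d.\ isotropic sub-Gaussian with parameter depending only on $K$. Letting $\bu = D\bv$, one has $\Xi\bv = Z\bu$, $\|D\bv\|_2 = \|\bu\|_2$, and $\|\bv\|_1 = \|D^{-1}\bu\|_1$, so the target inequality is equivalent to
\[
\frac{1}{n}\|Z\bu\|_2^2 \;\ge\; \frac{1}{4}\|\bu\|_2^2 - \frac{c_K\|D\|_F}{\sqrt n}\,\|D^{-1}\bu\|_1\,\|\bu\|_2,
\]
holding uniformly in $\bu \in \real^m$.

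Next I would invoke a sub-Gaussian matrix deviation inequality (the sub-Gaussian analogue of Gordon's theorem used in \citet{raskutti2010restricted}): for any bounded $T \subset \real^m$,
\[
\sup_{\bu \in T}\bigl|\,n^{-1/2}\|Z\bu\|_2 - \|\bu\|_2\bigr| \;\le\; \frac{cK^2}{\sqrt n}\bigl(W(T) + \mathrm{diam}_2(T)\bigr)
\]
with probability at least $1 - 2\exp(-c'n)$, where $W(T) = \expect\sup_{\bu\in T}\langle\bu,\boldsymbol{g}\rangle$ and $\boldsymbol{g}\sim N(0,I_m)$. Applying this with $T_r := \{\bu : \|\bu\|_2 = 1,\,\|D^{-1}\bu\|_1 \le r\}$ and dualizing through $\bv = D^{-1}\bu$,
\[
W(T_r) \;\le\; r\,\expect\|D\boldsymbol{g}\|_\infty \;\le\; r\bigl(\expect\max_k \lambda_k g_k^2\bigr)^{1/2} \;\le\; r\bigl(\textstyle\sum_k \lambda_k\bigr)^{1/2} = r\,\|D\|_F,
\]
which is the critical substitution for the textbook $\lambda_1^{1/2}\sqrt{\log m}$ factor.

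Finally I would use homogeneity to reduce to $\|\bu\|_2 = 1$ and peel over dyadic scales $r \in \{2^j\}$ of $\|D^{-1}\bu\|_1$. Only $O(\log n)$ scales are genuinely relevant, since for $r > \sqrt n/(4 c_K\|D\|_F)$ the right-hand side of \eqref{eq:re} is already non-positive and the inequality is vacuous; a union bound across these scales preserves the $\exp(-c'n)$ probability. Squaring the per-scale bound $n^{-1/2}\|Z\bu\|_2 \ge \tfrac{1}{2}\|\bu\|_2 - c_K r\|D\|_F/\sqrt n$ via $(b/2-\delta)^2 \ge b^2/4 - b\delta$ (combined with $n^{-1/2}\|Z\bu\|_2 \ge 0$ to cover the case where the lower bound is negative) then delivers the stated quadratic form. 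The main obstacle is the width estimate in the previous step: in the transfer-learning regime $m$ may exceed $n$, so the textbook $\lambda_1^{1/2}\sqrt{\log m}$ bound is unacceptable, and the argument must genuinely leverage the eigenvalue decay of Assumption~\ref{assump:cov-decay} to replace it with the trace-norm quantity $\|D\|_F$ and carry this refinement through the peeling step without reintroducing a factor that grows in $m$.
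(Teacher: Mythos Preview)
Your proposal is correct and reaches the same conclusion, but the route differs from the paper's in the concentration machinery. The paper does not invoke a matrix deviation inequality; instead it works with the \emph{squared} empirical process directly, introducing a truncation level $T$ so that the summands $(\bv^\t D\bgamma_i)^2\mathds{1}(|\bv^\t D\bgamma_i|\le T)$ are bounded, then applies Azuma--Hoeffding for concentration and symmetrization plus the Ledoux--Talagrand contraction inequality to bound the expected supremum. Only at the end does it lower-bound the truncated mean by $1/2$ via the sub-Gaussian tail. Your approach instead packages all of this into the Liaw--Mehrabian--Plan--Vershynin style deviation bound on the \emph{unsquared} norm and squares afterwards. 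Both arguments hinge on the identical width estimate---replacing $\lambda_1^{1/2}\sqrt{\log m}$ by $\|D\|_F$ via the crude $\max\le(\sum)^{1/2}$ bound---and both finish with the same peeling over $\|\bv\|_1$-shells. Your route is shorter if one is willing to cite the matrix deviation inequality as a black box; the paper's route is more self-contained, needing only Azuma--Hoeffding and contraction.

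One small correction: the high-probability form of the matrix deviation inequality you quote is slightly misstated. With failure probability $2\exp(-u^2)$ the additive term is $u\cdot\mathrm{rad}(T)$, so to obtain tails of order $\exp(-c'n)$ you must take $u\asymp\sqrt n$, and the resulting bound after dividing by $\sqrt n$ is $cK^2\,W(T)/\sqrt n + cK^2$ (a constant), not $cK^2\,\mathrm{diam}_2(T)/\sqrt n$. This does no harm---choosing the constant in $u$ small enough (depending on $K$) still leaves $n^{-1/2}\|Z\bu\|_2\ge\tfrac12-c_K r\|D\|_F/\sqrt n$, and the squaring and peeling go through exactly as you describe---but the stated inequality should be adjusted.
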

		
		With the result in Proposition \ref{prop:re}, we quantify the estimation error via oracle inequalities in Theorem \ref{thm:upper}. Theorem \ref{thm:upper} reveals an interesting phenomenon indicating that the sparsity parameter $\tau$ plays a crucial role.
		If we take $\tau \asymp n^{-1/2}$, the rate of convergence consists of several components. First, the error term $m^{1-2\beta}$ represents the bias caused by truncation. Second, the bound $(m^{1+\alpha} + m^3h^2)N^{-1}$ quantifies the estimation error of the initial estimator to its probabilistic limit in Step 1. These two components are standard in functional linear regression \citep{hall2007methodology} with $h=0$. The extra term $\big(n^{-1/2}m^\alpha  h\big) \land h^2$ characterizes the error in Step 2. If we set $\tau = 0$, the rate of convergence is consistent with that in classical functional linear regression using only target data.
		
		\begin{theorem}\label{thm:upper}
			Suppose Assumptions \ref{assump:xdist}-\ref{assump:x-moment} hold. 
			If $\tau \asymp n^{-1/2}$, $N^{-1}m^{2(\alpha+1)} = o(1)$ and $h = O(1)$,  then
			\[ \|\hat b - b\|_2^2 = O_P\left( \frac{m^\alpha h}{n^{1/2}} \land h^2 + \frac{m^{1+\alpha}}{N} + m^{1-2\beta} + \frac{m^3h^2}{N}  \right). \]
			If $\tau=0$, $n^{-1}m^2=o(1)$ and $N^{-1}m^{\alpha+3}=o(1)$, then 
			\[ \|\hat b - b\|_2^2 = O_P\left( \frac{m^{1+\alpha}}{n} + \frac{m^{1+\alpha}}{N} + m^{1-2\beta} \right). \]
		\end{theorem}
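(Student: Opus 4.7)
The plan is to decompose $\|\hat b - b\|_2^2$ into a truncation bias, a Step 1 variance, a Step 2 bias-correction error, and an eigenfunction-mismatch residual, then bound each piece. Because $\{\hat\phi_k^{\mathcal A}\}_{k=1}^m$ is orthonormal, I would first write
\begin{equation*}
\|\hat b - b\|_2^2 \;=\; \sum_{k=1}^m (\hat b_k - b^*_k)^2 + \Bigl(\|b\|_2^2 - \sum_{k=1}^m (b^*_k)^2\Bigr),
\end{equation*}
where $b^*_k = \int b(t)\hat\phi_k^{\mathcal A}(t)dt$. The second part reduces to $\sum_{k>m} b_k^2 \lesssim m^{1-2\beta}$ under Assumption \ref{assump:slope-decay} plus a perturbation term coming from $\|\hat\phi_k^{\mathcal A} - \phi_k\|_2$, which is negligible under the spectral gap in Assumption \ref{assump:cov-decay} and the $N^{-1/2}$ convergence of $\hat K^{\mathcal A}$ to its population aggregate (a standard Davis--Kahan argument using Assumption \ref{assump:x-moment}).

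For the first part I would split $\hat b_k - b^*_k = (\hat w_k - w_k) + (\hat\delta_k - \delta_k) + (w_k + \delta_k - b^*_k)$ and treat the Step 1 and Step 2 errors separately. The Step 1 design matrix $\sum_l \pi_l (n_l-1)^{-1}\hat\Xi^{(l)\t}\hat\Xi^{(l)}$ is diagonal by construction of $\hat\phi_k^{\mathcal A}$, so a coordinate-wise analysis based on Assumption \ref{assump:x-moment} yields the variance term $m^{1+\alpha}/N$; the additional $m^3h^2/N$ contribution arises because the source responses carry the contrast $w^{(l)}=b-\delta^{(l)}$, and when $\hat\Xi^{(l)\t}(\bY^{(l)}-\bar\bY^{(l)}-\hat\Xi^{(l)}\bw)$ is expanded into signal and noise the signal part involves $\Xi^{(l)}\bdelta^{(l)}$, whose squared norm scales like $h^2$ and inflates the Step 1 error through Cauchy--Schwarz.

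Step 2 is handled by a lasso oracle inequality. Starting from the minimizer definition I would derive the basic inequality
\begin{equation*}
\tfrac{1}{2n}\|\hat\Xi(\hat\bdelta - \bdelta)\|_2^2 \;\le\; \tfrac{1}{n}\langle \hat\Xi^\t \boldsymbol{r},\,\hat\bdelta - \bdelta\rangle + \tau\bigl(\|\bdelta\|_1 - \|\hat\bdelta\|_1\bigr),
\end{equation*}
where $\boldsymbol{r} = \bY-\bar\bY-\hat\Xi\hat\bw-\hat\Xi\bdelta$. A sub-Gaussian maximal inequality under Assumption \ref{assump:xdist} gives $\|\hat\Xi^\t\bepsilon\|_\infty/n = O_P(n^{-1/2})$, which motivates $\tau\asymp n^{-1/2}$ and forces the cone condition $\|\hat\bdelta-\bdelta\|_1\lesssim \|\bdelta\|_1$. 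Since $\bdelta=(\sum_l\pi_l\Sigma^{(l)})^{-1}\sum_l\pi_l\Sigma^{(l)}\bdelta^{(l)}$ and $\sum_k|\delta_k^{(l)}|\le h$ by Assumption \ref{assump:slope-decay}, we have $\|\bdelta\|_1\lesssim h$. Plugging into Proposition \ref{prop:re} gives $\|D(\hat\bdelta-\bdelta)\|_2^2\lesssim \tau h$, and inverting the $D$-weighting via $\lambda_m\asymp m^{-\alpha}$ yields $\|\hat\bdelta-\bdelta\|_2^2\lesssim m^\alpha h/n^{1/2}$. The competing bound $h^2$ follows from $\|\bdelta\|_2^2\le\|\bdelta\|_1^2\lesssim h^2$, obtained by using $\hat\bdelta=0$ as a reference in the basic inequality; the minimum of the two produces the stated $(m^\alpha h/n^{1/2})\wedge h^2$ term. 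Taking $\tau=0$ collapses Step 2 into ordinary least squares on the $m$ target scores, and the coordinate-wise analysis (now using $\|\Xi^\t\bepsilon\|_2^2/n^2$ directly) produces the classical $m^{1+\alpha}/n$ rate stated in the second part.

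The main obstacle is the coupling between the two steps: the effective residual $\boldsymbol{r}$ in Step 2 is not just $\bepsilon$ but also contains $\hat\Xi(\hat\bw-\bw)$ together with the basis-mismatch residual $\hat\Xi(\bw+\bdelta-\bb^*)$. Because the condition $N^{-1}m^{2(\alpha+1)}=o(1)$ does not prevent $m$ from exceeding $n$, the target Gram matrix is singular and Proposition \ref{prop:re} is indispensable. Keeping the restricted-eigenvalue inequality usable requires absorbing the cross term $\langle \hat\Xi^\t\hat\Xi(\hat\bw-\bw), \hat\bdelta-\bdelta\rangle$ via Cauchy--Schwarz and Young's inequality without destroying the cone condition, and it is precisely this bookkeeping that re-introduces the $m^3h^2/N$ term into the final rate.
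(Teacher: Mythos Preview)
Your overall architecture matches the paper's: the same bias/variance decomposition, the same coordinate analysis of $\hat\bw-\bw$ in Step~1, and the same lasso basic inequality combined with Proposition~\ref{prop:re} in Step~2. Two points in the sketch are imprecise relative to what actually makes the argument work.

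First, the $m^3h^2/N$ term in Step~1 does not arise from a Cauchy--Schwarz bound on $\Xi^{(l)}\bdelta^{(l)}$ in the response. It comes from the eigenfunction perturbation: writing $\hat w_k - w_k$ via $\langle \hat g,\hat\phi_k\rangle/\hat\lambda_k^{\mathcal A} - \langle g,\phi_k\rangle/\lambda_k^{\mathcal A}$, the dominant contribution is $\langle g,\hat\phi_k-\phi_k\rangle/\hat\lambda_k^{\mathcal A}$, and because $g_j=\langle g,\phi_j\rangle$ contains the contrast pieces $\sum_l\pi_l\lambda_j^{(l)}\delta_j^{(l)}$, the resolvent expansion of $\hat\phi_k-\phi_k$ (Lemma~\ref{lem:expansions}) picks up the factor $h$ and produces the $m^3h^2/N$ rate after summation over $k$. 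Your attribution would not recover the correct power of $m$.

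Second, for $\tau=0$ the target Gram matrix $\hat\Xi^{\t}\hat\Xi/n$ is \emph{not} diagonal (unlike the source Gram matrix in Step~1), so a ``coordinate-wise'' argument does not apply directly. The paper needs the extra hypothesis $m^2/n=o(1)$ precisely to show $\|n^{-1}\breve\Xi^{\t}\breve\Xi - I_m\|_F=o_P(1)$ for $\breve\Xi=\hat\Xi D^{-1}$, which is what justifies inverting the design and obtaining the $m^{1+\alpha}/n$ rate; you should make that step explicit. Similarly, the $\wedge\,h^2$ bound in the paper is obtained from the cone inequality $\|\hat\bdelta-\bdelta\|_1\le 8\|\bdelta\|_1\lesssim h$ together with $\|\cdot\|_2\le\|\cdot\|_1$, not by plugging $\hat\bdelta=0$ into the basic inequality.
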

		
		The truncation parameter $m$ plays an important role in the final convergence rate, which is determined by the bias-variance trade-off. In Corollary \ref{cor:rate}, we elucidate the choice of the parameter $m$ and the corresponding rate of convergence under different bias levels $h$.
		Compared to the minimax rate $n^{-(2\beta-1)/(\alpha+2\beta)}$ in the conventional functional linear regression, we identify when the transfer learning algorithm improves estimation performance. When $N \gg n$ and $h \ll n^{-(\beta-1/2)/(\alpha+2\beta)}$, the obtained rate of convergence is faster than that of the classical estimator using only target data. This is intuitive, as effective information transfer from the source data is possible only when there are sufficient source data and the source models are sufficiently close to the target model. 
		Note that the truncation level $m$ can be larger than $n$ by taking advantage of the additional information from the source data.   
		
		\begin{corollary}\label{cor:rate}
			Suppose Assumptions \ref{assump:xdist}-\ref{assump:x-moment} hold. Assume $n \lesssim N$.
			\begin{itemize}
				\item If $h \lesssim N^{-\frac{2\beta-1}{2(\alpha+2\beta)}}$,  we take $\tau \asymp n^{-1/2}$ and $ m \asymp N^{1/(\alpha+2\beta)}$, then
				\[ \|\hat b - b\|_2^2 = O_P\big( N^{-\frac{2\beta-1}{\alpha+2\beta}} \big). \]
				\item If $ N^{-\frac{2\beta-1}{2(\alpha+2\beta)}} \lesssim  h \lesssim n^{-\frac{2\beta-1}{2(\alpha+2\beta)}}$, we take $\tau \asymp n^{-1/2}$ and $  h^{-2/(2\beta-1)} \lesssim m \lesssim (Nh^2)^{1/(1+\alpha)}$, $ m^{2(\alpha+1)}N^{-1}=o(1)$, then
				\[ \|\hat b - b\|_2^2 = O_P( h^2 ). \]
				\item If $ h \gtrsim n^{-\frac{2\beta-1}{2(\alpha+2\beta)}}$, we take $\tau = 0$ and $ m \asymp n^{1/(\alpha+2\beta)} $, then
				\[ \|\hat b - b\|_2^2 = O_P\big( n^{-\frac{2\beta-1}{\alpha+2\beta}} \big). \]
			\end{itemize}
		\end{corollary}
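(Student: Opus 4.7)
The plan is to reduce Corollary~\ref{cor:rate} to a direct substitution into Theorem~\ref{thm:upper}: Cases 1 and 2 invoke the $\tau \asymp n^{-1/2}$ bound, Case 3 the $\tau = 0$ bound. In each regime I would plug the prescribed $m$ into the four (or three) summands of the rate, identify the dominant term, and separately verify the side conditions on $m$ and $h$ required by the theorem.

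For Case 1, with $m \asymp N^{1/(\alpha+2\beta)}$, the terms $m^{1+\alpha}/N$ and $m^{1-2\beta}$ both evaluate to $N^{-(2\beta-1)/(\alpha+2\beta)}$, producing the stated rate. The $\land$ term is controlled through its second branch using $h^2 \lesssim N^{-(2\beta-1)/(\alpha+2\beta)}$. The cross term $m^3 h^2/N$ is handled by an exponent check that reduces to $\alpha+2\beta\ge 3$, which is implied by $\alpha>1$ and $\beta>\alpha/2+1$ in Assumptions~\ref{assump:cov-decay}--\ref{assump:slope-decay}; the same inequality also verifies the side condition $N^{-1}m^{2(\alpha+1)}=o(1)$.

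For Case 2, the first step is to confirm that the admissible range $h^{-2/(2\beta-1)}\lesssim m\lesssim (Nh^2)^{1/(1+\alpha)}$ is nonempty. Rearranging shows this is equivalent to $h\gtrsim N^{-(2\beta-1)/(2(\alpha+2\beta))}$, precisely the left endpoint of the $h$-interval assumed in Case 2. Inside this window, the lower bound on $m$ forces $m^{1-2\beta}\lesssim h^2$, the upper bound forces $m^{1+\alpha}/N\lesssim h^2$, and the $\land$ term is trivially bounded by $h^2$; finally, $m^3 h^2/N = o(h^2)$ because $m^{2(\alpha+1)}/N=o(1)$ and $2(\alpha+1)\ge 3$ when $\alpha>1$.

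For Case 3 the bound carries no explicit $h$, so the argument is the usual bias--variance balance: with $m\asymp n^{1/(\alpha+2\beta)}$, each of $m^{1+\alpha}/n$, $m^{1+\alpha}/N$, and $m^{1-2\beta}$ collapses to $n^{-(2\beta-1)/(\alpha+2\beta)}$, using $n\lesssim N$ for the middle one; the side conditions $n^{-1}m^2=o(1)$ and $N^{-1}m^{\alpha+3}=o(1)$ become $\alpha+2\beta>2$ and $\beta>3/2$, both comfortably implied by the assumptions. The overall proof is essentially algebraic; the one genuine care point is the feasibility check in Case 2, which is what makes the three regimes knit together at the stated thresholds on $h$.
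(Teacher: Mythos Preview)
Your argument is correct and in fact more economical than the paper's own proof. You verify the three bullets by direct substitution: plug the prescribed $m$ into the bound of Theorem~\ref{thm:upper} and check that every summand is dominated by the stated rate, handling the $\land$ term through its $h^2$ branch. The one genuinely delicate point---nonemptiness of the $m$-window in Case~2---you identify and resolve, and you correctly note that it is what pins down the left threshold on $h$.

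The paper proceeds differently. Rather than plugging in the stated $m$, it optimizes over $m$ by first splitting on which side of $\frac{m^\alpha h}{n^{1/2}}\land h^2$ is active, and then within each branch balancing the remaining terms. This generates several intermediate rates (for instance $n^{-\frac{2\beta-1}{2(\alpha+2\beta-1)}}h^{\frac{2\beta-1}{\alpha+2\beta-1}}$ when the $n^{-1/2}m^\alpha h$ branch dominates), which the paper then compares against the $\tau=0$ benchmark $n^{-(2\beta-1)/(\alpha+2\beta)}$ to \emph{derive} the threshold $h\asymp n^{-(2\beta-1)/(2(\alpha+2\beta))}$ at which one should switch to $\tau=0$. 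So the paper's proof does more than the corollary requires: it explains where the three regimes and the choices of $\tau,m$ come from. Your direct verification is fully sufficient for the corollary as stated; the paper's longer case analysis buys insight into the phase diagram but is not needed to certify the claimed rates.

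One very minor remark on Case~1: your claim that ``the same inequality also verifies'' $N^{-1}m^{2(\alpha+1)}=o(1)$ is correct if ``the same inequality'' refers to $\beta>\alpha/2+1$ (which is exactly $2(\alpha+1)<\alpha+2\beta$), not to the weaker $\alpha+2\beta\ge3$ used for the $m^3h^2/N$ term.
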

		
		Corollary \ref{cor:rate} implies that if the bias $h$ is sufficiently small, the lasso penalty can leverage this and help achieve a faster convergence rate. Otherwise, the ordinary least square suffices for bias correction.
		Notably, there is no negative transfer under posterior drift in functional linear regression, which contrasts with other high-dimensional regression problems \citep{li2022transferlasso,tian2022transfer}.
		See the Supplement for a detailed discussion of the pediction performance.
		In practice, the tuning parameters $m$ and $\tau$ can be determined through cross validation.
		
		\section{Adaptive Estimation} \label{sec:adaptive}
		
		In some applications, we may not know whether the functional covariates share aligned eigenspace. If the condition is violated, our estimator may be subject to  performance deterioration due to the unaligned eigenfunctions. Similar phenomena have been observed for the FPCA-based approaches in the conventional functional linear regression \citep{yuan2010reproducing,cai2012minimax}.
		
		To avoid potential performance degradation when covariate distributions have different eigenspaces, we propose an adaptive algorithm consisting of two main steps. First, we construct a collection of candidate source sets and obtain candidate estimators using Algorithm \ref{alg:tl-flr} along with auxiliary samples from these source sets. Second, we perform a sparse aggregation step on these candidate estimators \citep{gaiffas2011hyper}. The aggregated estimator is expected to be not much worse than the best candidate estimator under consideration. 
		
		We randomly split the target data into two subsets $\mathcal D_1 = \{ (X_i, Y_i), i \in \mathcal I_1 \}$ and $\mathcal D_2 = \{ (X_i, Y_i), i \in \mathcal I_2 \}$, where $\tilde{n} = |\mathcal I_1|, \bar{n}=|\mathcal I_2|$.  
		We use $\mathcal D_1$ to construct candidate sets and candidate estimators, and $\mathcal D_2$ for aggregation. 
		
		\subsection{Candidate Source Sets}
		
		To avoid the unstable inverse of covariance operators and truncation operations during the construction of candidate sets, we model the discrepancy between cross covariance functions, defined as $\zeta^{(l)} = \| g^{(l)}- g\|_2^2$, where $g^{(l)}(t) = \cov\big(X^{(l)}(t), Y^{(l)}\big) $ and $g(t) = \cov\big(X(t), Y\big) $. 
		The statistic $\zeta^{(l)}$ is a viable choice as it typically increases with the contrast $\|\bdelta^{(l)}\|_1$ and has the potential to characterize differences in eigenspaces.
		The empirical estimates are given by
		\begin{align}\label{eq:zeta_statistic}
			\begin{split}
				\hat \zeta^{(l)} & = \int_0^1 \bigg\{\frac{1}{n_l} \sum_{i=1}^{n_l} X_i^{(l)}(t)(Y_i^{(l)} -  \bar Y^{(l)}) - \\
				& \quad \quad \frac{1}{|\tilde{n}|} \sum_{i \in \mathcal I_1} X_i(t) (Y_i - \bar Y)  \bigg\}^2 dt.  
			\end{split}
		\end{align}
		Then, we construct the candidate sets as
		\begin{align}\label{eq:candidate_set}
			\hat{\mathcal A}_l = \{ 1 \le k \le L: \hat \zeta^{(k)} \text{is among the first } l \text{ smallest of all} \},
		\end{align} 
		for $l=1, \dots, L$. 
		
		\begin{algorithm}[!h]
			\caption{\label{alg:adap-tl-flr} The adaptive transfer learning algorithm for functional linear regression.}
			\begin{algorithmic}
				\STATE {\bf Input:} Target data $(X_i, Y_i), i=1,\dots, n$ and auxiliary data $(X_i^{(l)}, Y_i^{(l)}), i=1,\dots, n_l$.
				\STATE Randomly split the target data into two sub-samples $\mathcal D_1 = \{ (X_i, Y_i), i \in \mathcal I_1 \}$ and $\mathcal D_2 = \{ (X_i, Y_i), i \in \mathcal I_2 \}$.
				\STATE Construct the $L+1$ candidate sets $\hat{\mathcal A_0}, \hat{\mathcal A_1}, \dots, \hat{\mathcal A_L}$ such that $\hat{\mathcal A_0} = \emptyset$ using \eqref{eq:zeta_statistic} and \eqref{eq:candidate_set}. 
				\FOR{$l = 0, \dots, L$}  
				\STATE Obtain the candidate estimator $\hat b_l(t)$ using data $\mathcal D_1 \cup \{ (X_i^{(l)}, Y_i^{(l)}), i=1, \dots, n_l;  l \in \hat{\mathcal A_l} \}$ with Algorithm \ref{alg:tl-flr}.
				\ENDFOR
				\STATE Sparse aggregation:
				\[ \hat b_{sagg}(t) = \hat \lambda \hat b_{l_{1,\star}}(t) + (1-\hat \lambda) \hat b_{l_{2,\star}} (t), \]
				where $l_{1, \star}, l_{2, \star}$ and $\hat \lambda$ are obtained from \eqref{eq:sagg_index_erm} and \eqref{eq:sagg_other}.
				\STATE {\bf Output:} $\hat b_{sagg} (t)$.
			\end{algorithmic}
		\end{algorithm}
		
		\subsection{Sparse Aggregation}
		
		For each constructed candidate set $\hat{\mathcal A}_l$, we obtain the candidate estimator $\hat b_l(t)$ using $\mathcal D_1 \cup \hat{\mathcal A}_l$ with Algorithm \ref{alg:tl-flr} for $l=0, 1, \dots, L$. Motivated by the principle of model aggregation, which aims to obtain an estimator not much worse than the best estimator under consideration, we use sparse aggregation to achieve adaptivity and prevent negative transfer.
		To reduce computational cost, we adopt sparse aggregation without the preselection step as in \citet{gaiffas2011hyper}.
		
		Define $R_{n,2}(b) = \sum_{i \in \mathcal I_2} \{Y_i - \bar{Y}_2- \int_0^1 b(t) \big(X_i(t) -\bar{X}_2(t)\big)dt\}^2 / \bar{n}$, where $\bar{Y}_2 =\sum_{i \in \mathcal I_2}Y_i/ \bar{n}$ and $\bar{X}_2(t) = \sum_{i \in \mathcal I_2} X_i(t)/ \bar{n}$. 
		Denote
		\begin{align}\label{eq:sagg_index_erm}
			l_{1,\star} = \mathop{\arg\min}_{ l=0, 1, \dots, L} R_{n,2}(\hat b_l).
		\end{align}
		Moreover,
		\begin{align}\label{eq:sagg_other}
			\hat \lambda, l_{2,\star} = \mathop{\arg\min}_{\lambda \in [0,1], l=0, \dots, L}  R_{n,2}( \lambda \hat b_{l_{1, \star}} + (1-\lambda) \hat b_l). 
		\end{align}
		The sparse aggregate estimator is $\hat b_{sagg}(t) = \hat \lambda \hat b_{l_{1,\star}}(t) + (1-\hat \lambda) \hat b_{l_{2,\star}} (t)$.
		
		\begin{remark}
			In addition to sparse aggregation, other methods such as exponential aggregation \citep{rigollet2011exponential} and Q-aggregation \citep{dai2012deviation} can be utilized for adaptive estimation. However, their aggregate performance heavily depends on the critical temperature parameter. In our numerical experiments, we compare their performance by tuning the temperature parameter through 5-fold cross-validation against the performance of sparse aggregation. 
			The detailed results are provided in the Supplementary Material. 
			Based on numerical results, we advocate for sparse aggregation because of its competitive, robust performance and computational efficiency, which does not require parameter tuning.
		\end{remark}
		
		\section{Synthetic Data}\label{sec:sim}
		
		We conduct several experiments to demonstrate the performance of the transfer learning algorithm. 
		Let $\tdomain=[0,1]$, $n=150$, $n_l=100$ and the total number of source samples $L=20$. We take $X_i(t) = \sum_{k=1}^{50} \sqrt{\lambda_k} Z_{ik} \phi_k(t)$, $i=1, \dots, n$, where $\lambda_k = k^{-\alpha}$, $\phi_k(t) = 2^{1/2}\mathrm{cos}(k\pi t)$ for $k \ge 1$, and $Z_{ik}$'s are uniformly distributed on $[-3^{1/2}, 3^{1/2}]$. The target slope function is $b(t) = \sum_{k=1}^{50} b_{k} \phi_k(t)$ with $b_{k} = 4k^{-\beta}(-1)^{k+1}$.  
		For the slope functions in the auxiliary samples, we have $w^{(l)}(t) = \sum_{k=1}^{50} w_k^{(l)} \phi_k(t)$ for $l=1, \dots, L$.  
		The generation mechanisms of $w_k^{(l)}$ and $X^{(l)}$ will be provided later.
		The responses $Y$ and $Y^{(l)}$ are generated from $\eqref{eq:target_model}$ and $\eqref{eq:source_model}$, respectively, with $\expect Y = \expect Y^{(l)} = 0$, $\mu(t) = \mu^{(l)}(t) = 0$ and the errors $\epsilon, \epsilon^{(l)} \sim N(0, \sigma_{\epsilon}^2)$ where $\sigma_\epsilon = 0.5$.
		Denote $\mathcal A_h = \{ 1\le l \le L: \|\bdelta^{(l)}\|_1 \le h \}$.
		
		\subsection{Transfer Learning on $\mathcal A_h$ with Aligned Eigenspace}
		
		We explore the numerical performance of Algorithm \ref{alg:tl-flr} when the eigenspaces of the target sample and source samples are aligned.
		
		\begin{itemize}
			\item[(I)] For $l \in \mathcal A_h$, let $X_i^{(l)}(t) = \sum_{k=1}^{50} \sqrt{\lambda_k} Z_{ik}^{(l)} \phi_k(t)$ for $i=1, \dots, n_l$ and $w_k^{(l)} = b_k - R_k h/s $ for $k=1, \dots, 50$, where $Z_{ik}^{(l)}$'s are generated from $N(0,1)$, $s$ is some positive integer within $[1, 50]$ and $R_k$'s are independent Rademacher random variables.  
		\end{itemize}
		
		Let $K=|\mathcal A_h|$. We consider different combinations of $s=1, 5, 20, 50$ and $h=2, 20, 200, 2000$, respectively.
		The proposed method in Algorithm \ref{alg:tl-flr} is denoted by ``$\mathcal A_h$ TL-FLR".
		For comparison purposes,  we include the FPCA-based estimation \citep{hall2007methodology} using only target data, which is denoted by ``FLR".
		We evaluate the performance using the mean integrated squared error (MISE), i.e., $\int_0^1 (\hat b(t) - b(t))^2 dt$, which is approximated on a grid of 100 equally spaced points on $[0,1]$.
		The tuning parameters of all considered approaches are chosen by 5-fold cross-validation. 
		
		The results under $h=2000$ are deferred to the Supplementary Material. 
		As shown in Figure \ref{fig:model-I}, the ``$\mathcal A_h$ TL-FLR" consistently outperforms ``FLR" across all considered scenarios, even when the contrast is extremely large. The numerical results align with theoretical findings in Theorem \ref{thm:upper} and Corollary \ref{cor:rate}. In functional linear regression, when the eigenspaces of the target sample and source samples are aligned, the proposed transfer learning algorithm avoids negative transfer, meaning its performance is at least as good as that of the ``FLR" using only target data. 
		
		\begin{figure}
			\centering
			\newcommand{\thiswidth}{0.3\linewidth}
			\newcommand{\thisgap}{0mm}
			\begin{tabular}{ccc}
				\hspace{\thisgap}\includegraphics[width=\thiswidth]{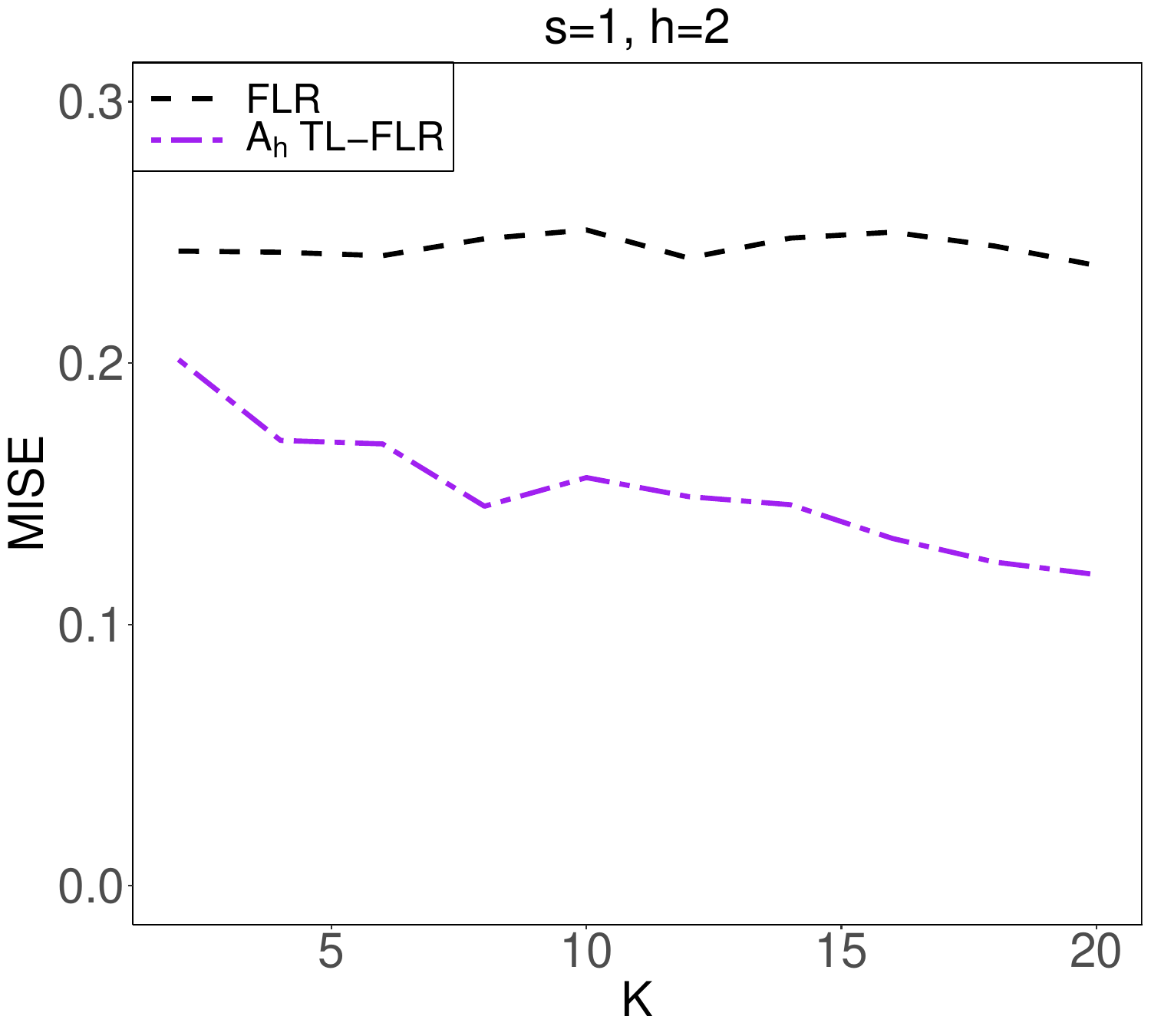} &
				\hspace{\thisgap}\includegraphics[width=\thiswidth]{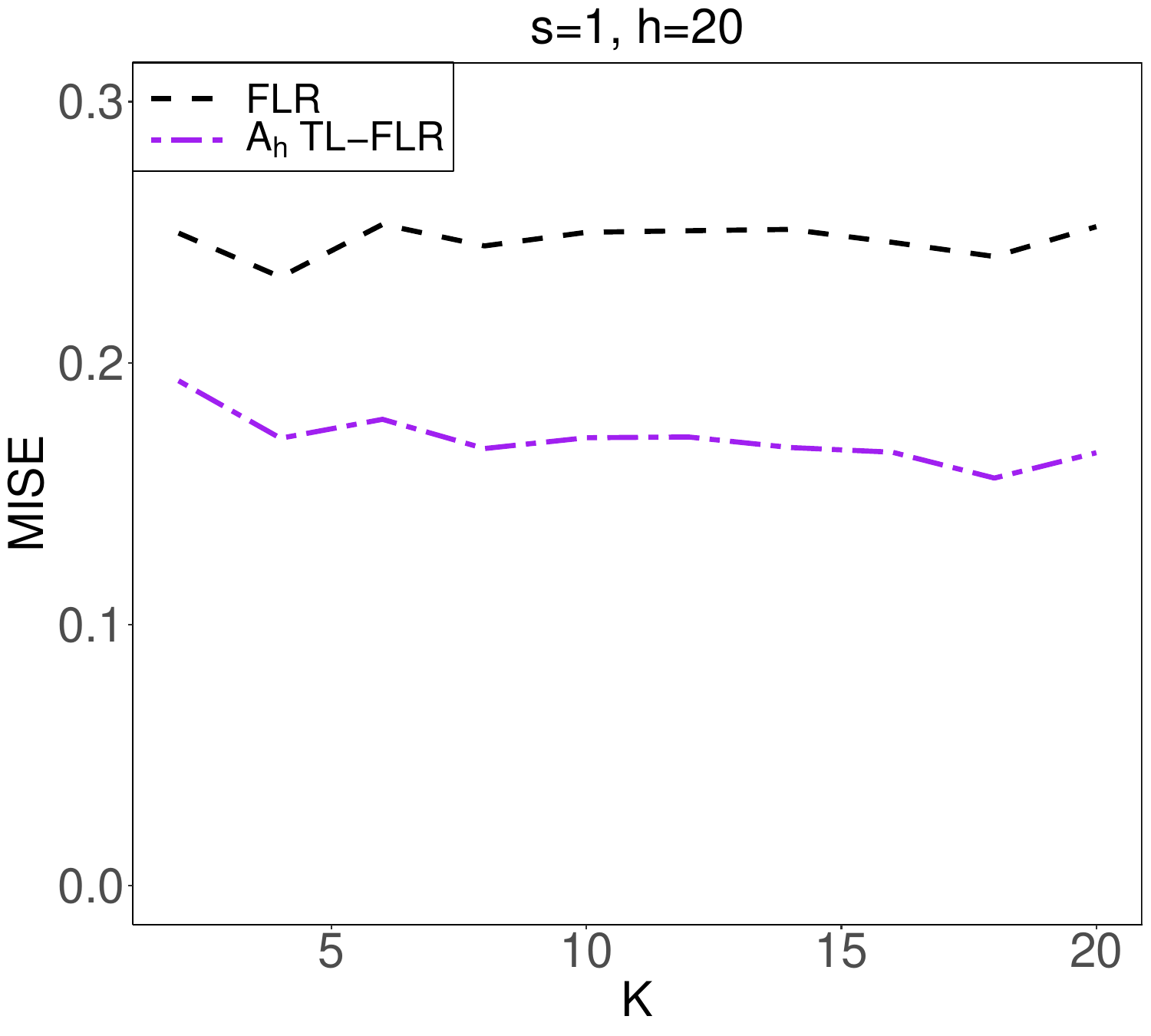} &
				\hspace{\thisgap}\includegraphics[width=\thiswidth]{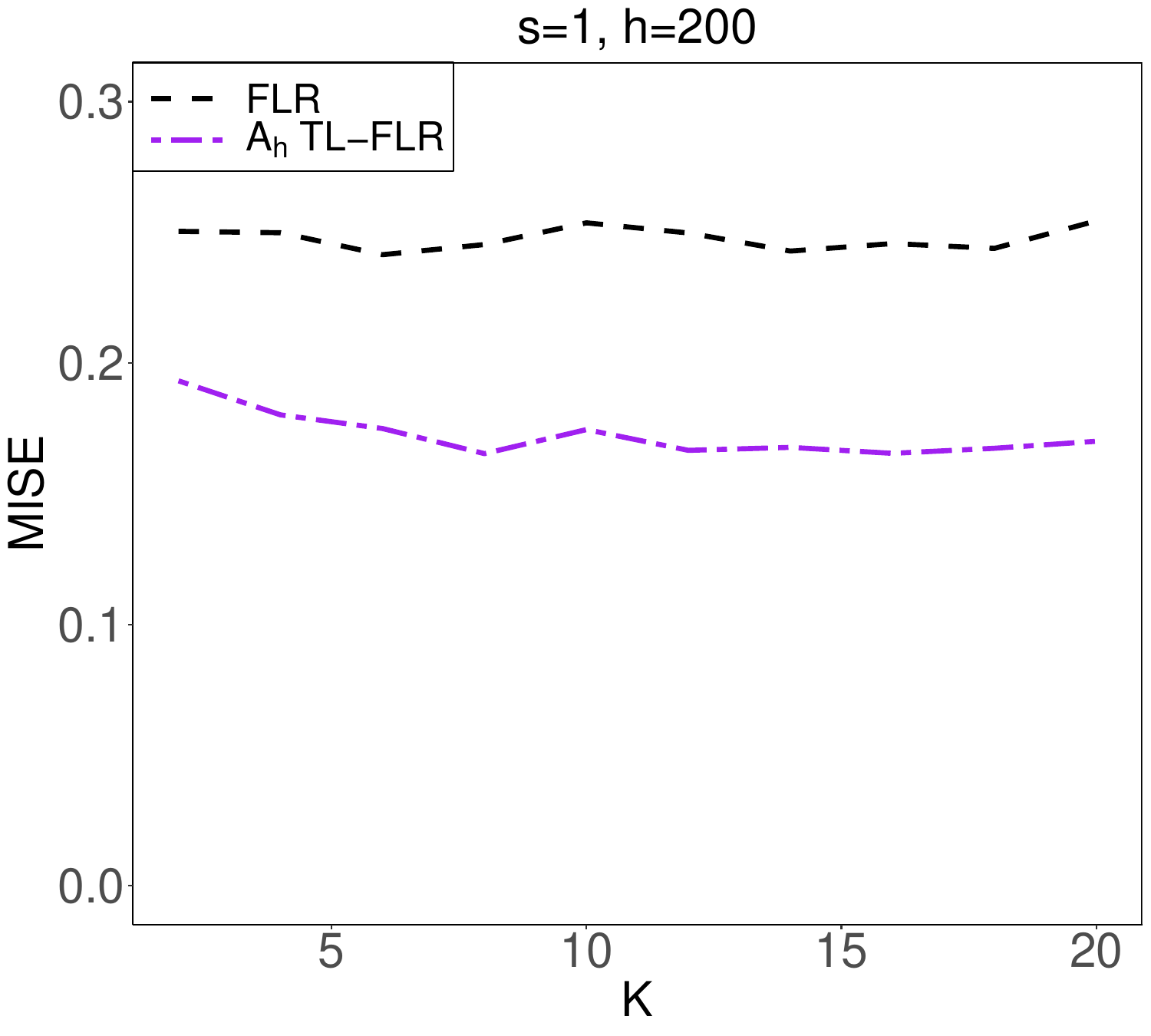} \\
				\hspace{\thisgap}\includegraphics[width=\thiswidth]{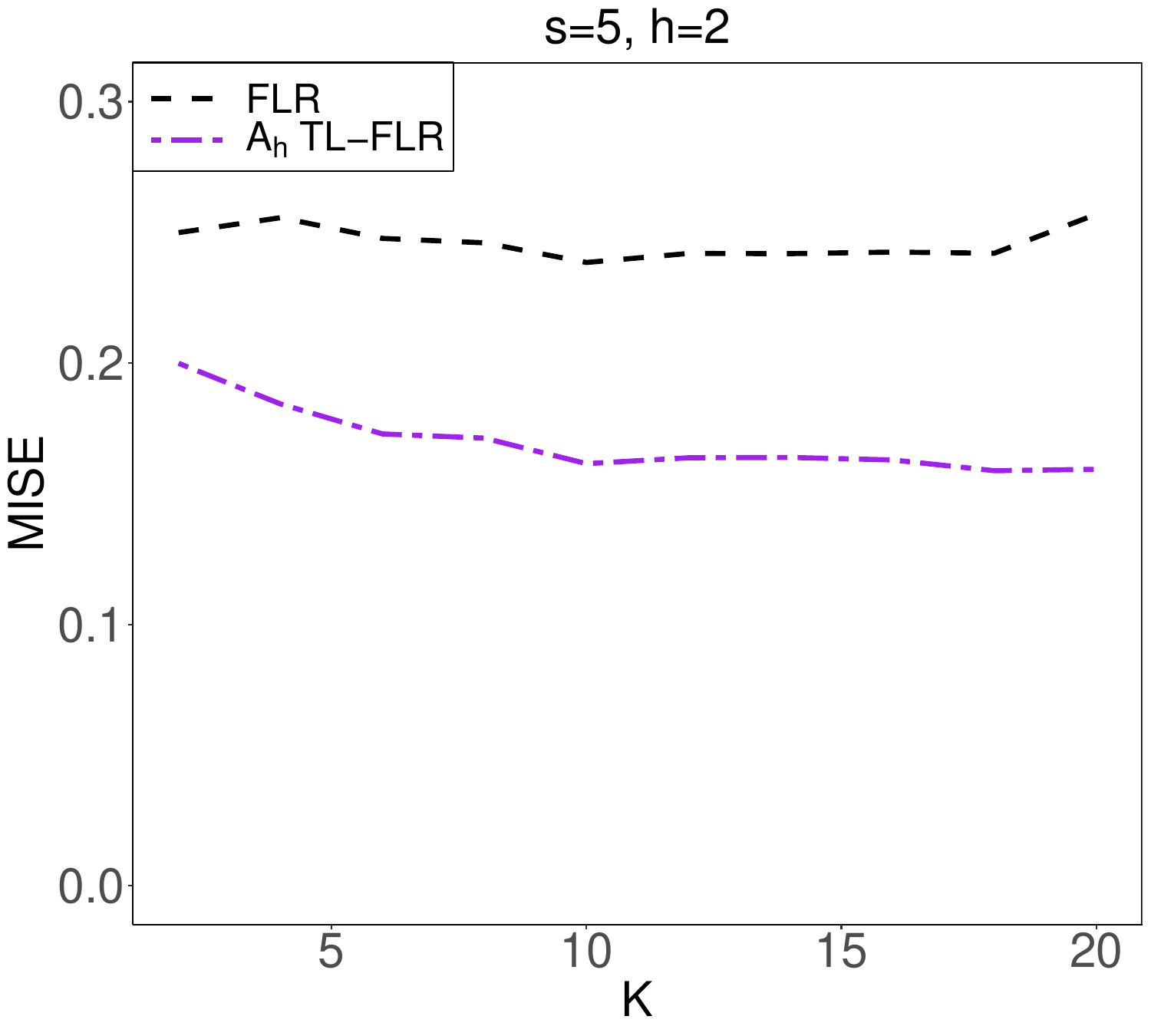} &
				\hspace{\thisgap}\includegraphics[width=\thiswidth]{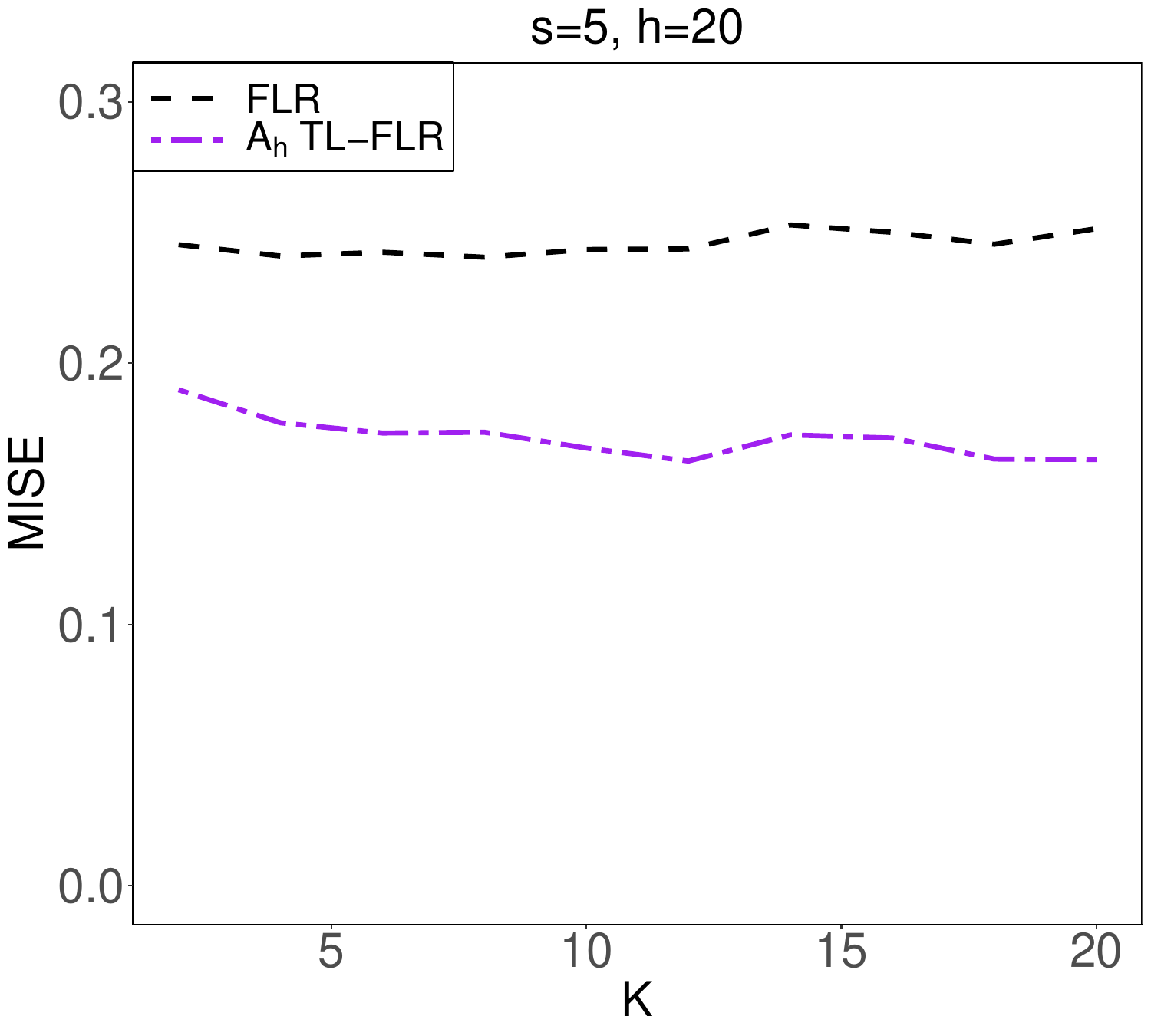} &
				\hspace{\thisgap}\includegraphics[width=\thiswidth]{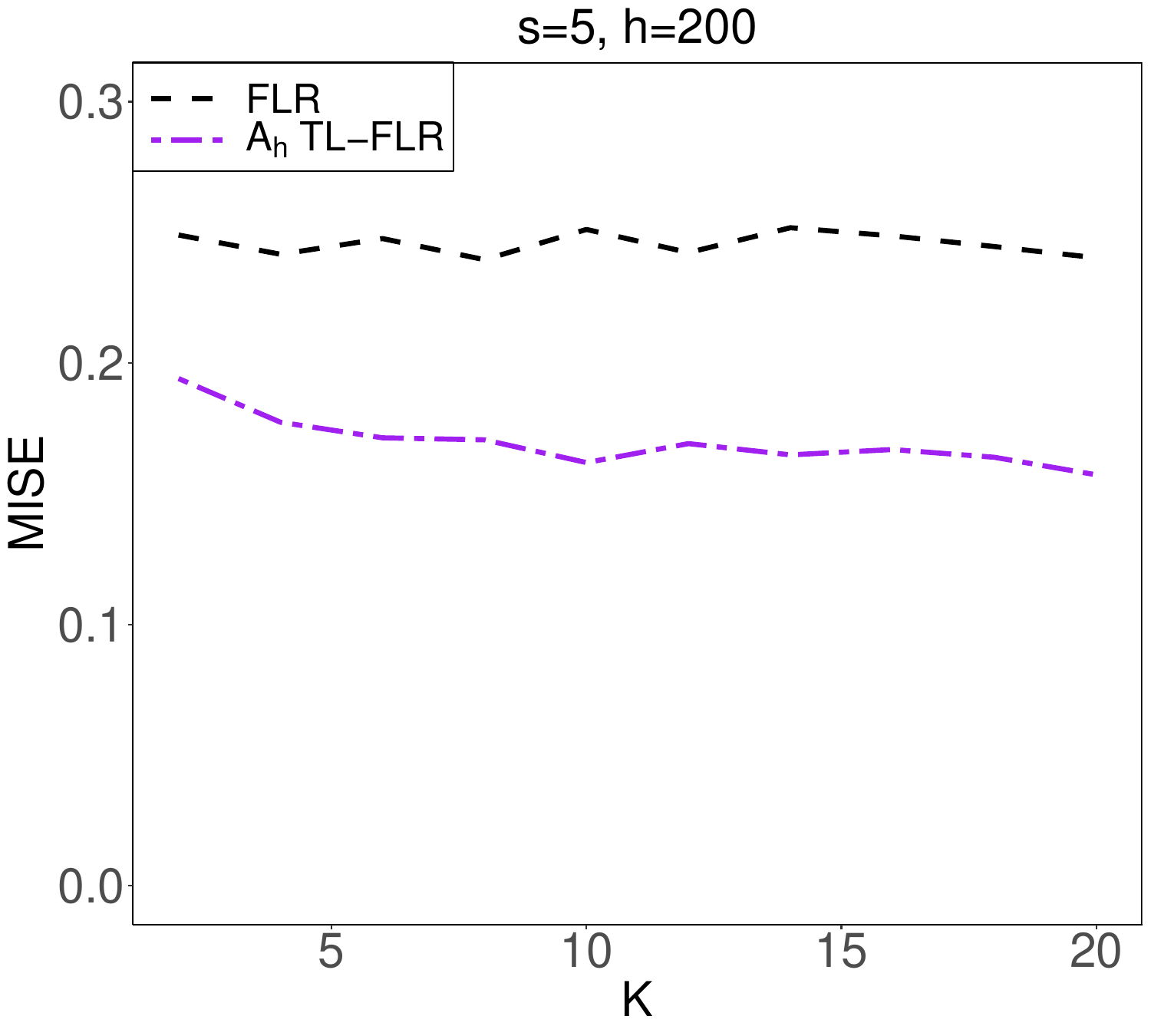} \\
				\hspace{\thisgap}\includegraphics[width=\thiswidth]{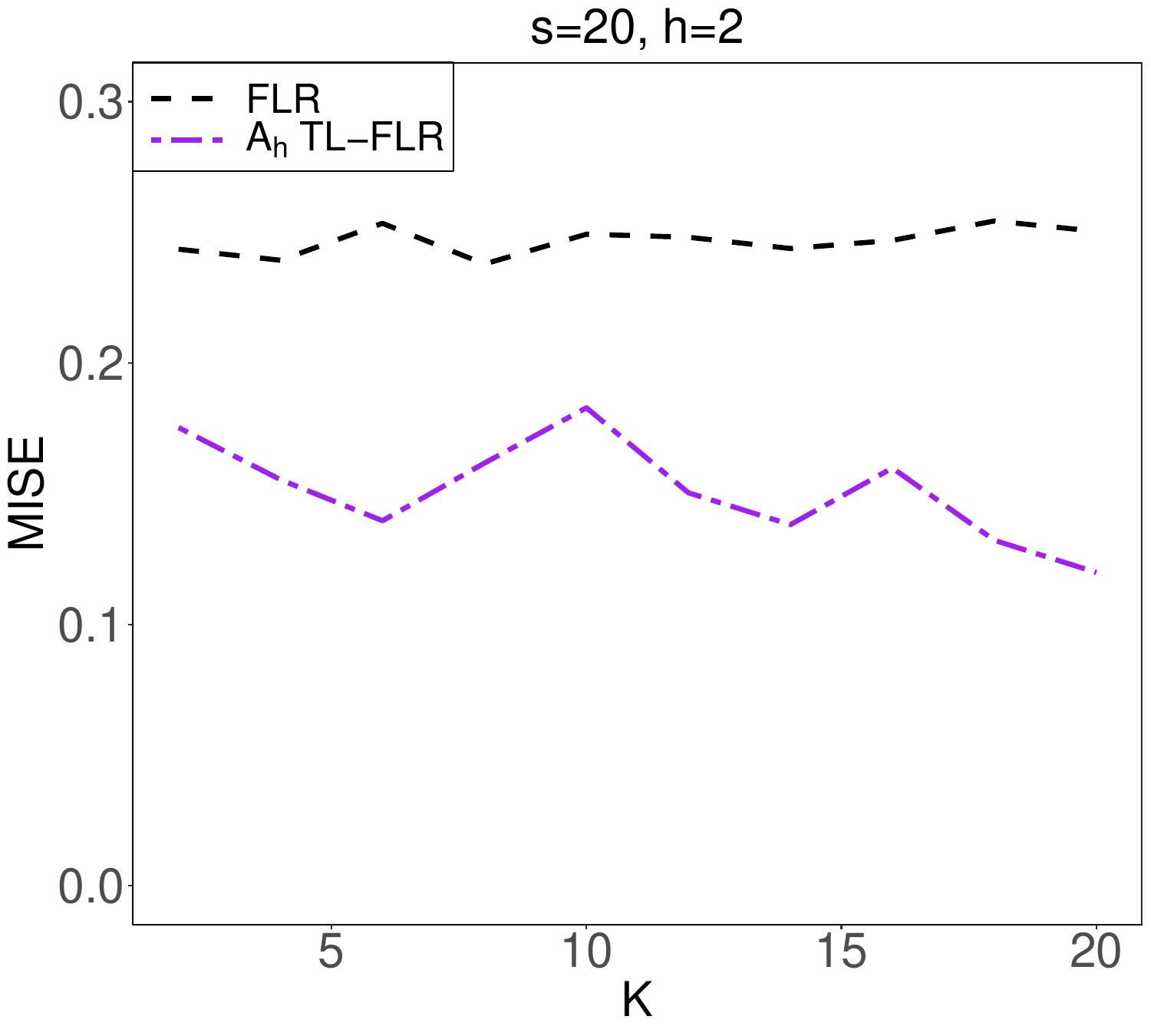} &
				\hspace{\thisgap}\includegraphics[width=\thiswidth]{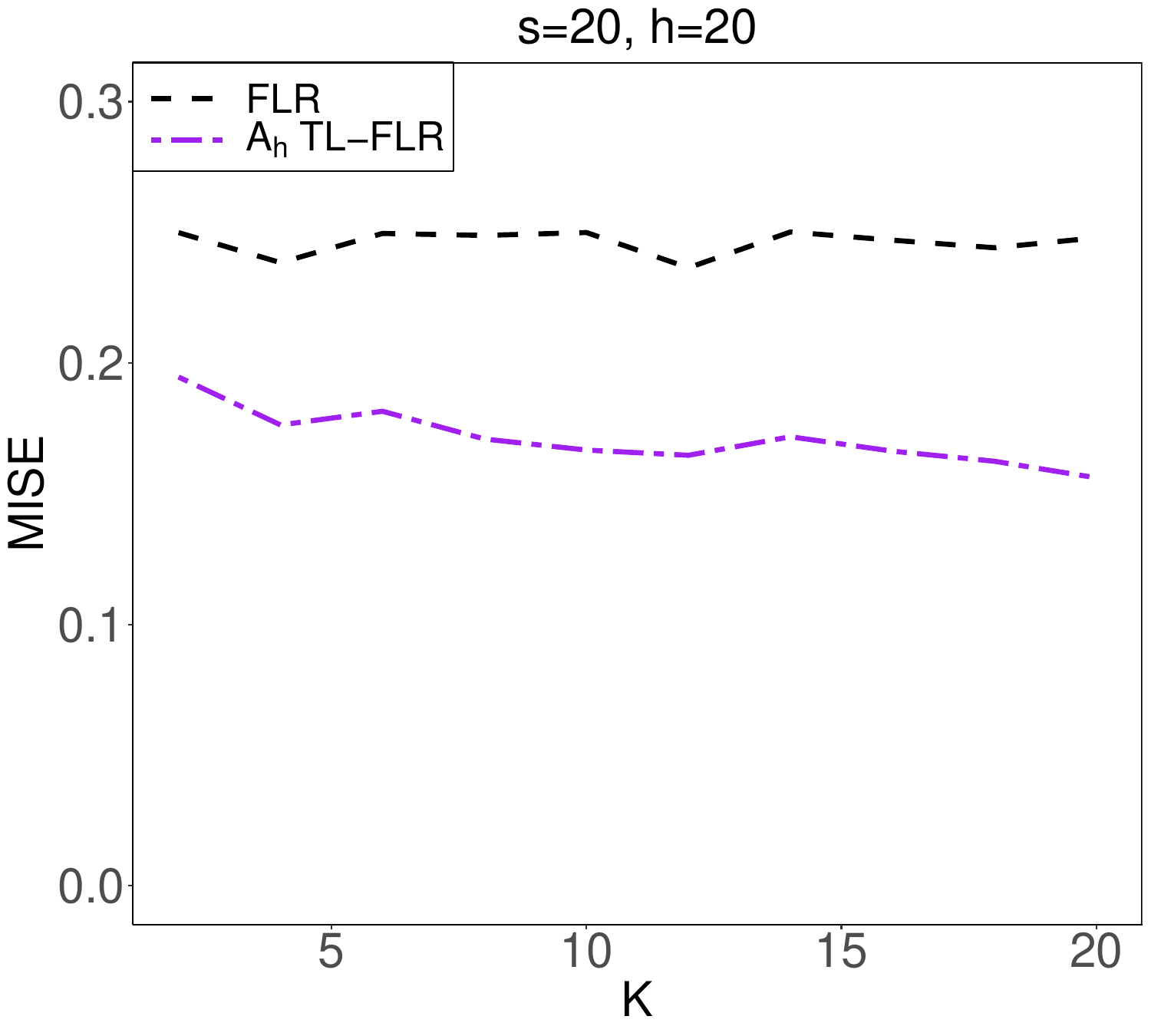} &
				\hspace{\thisgap}\includegraphics[width=\thiswidth]{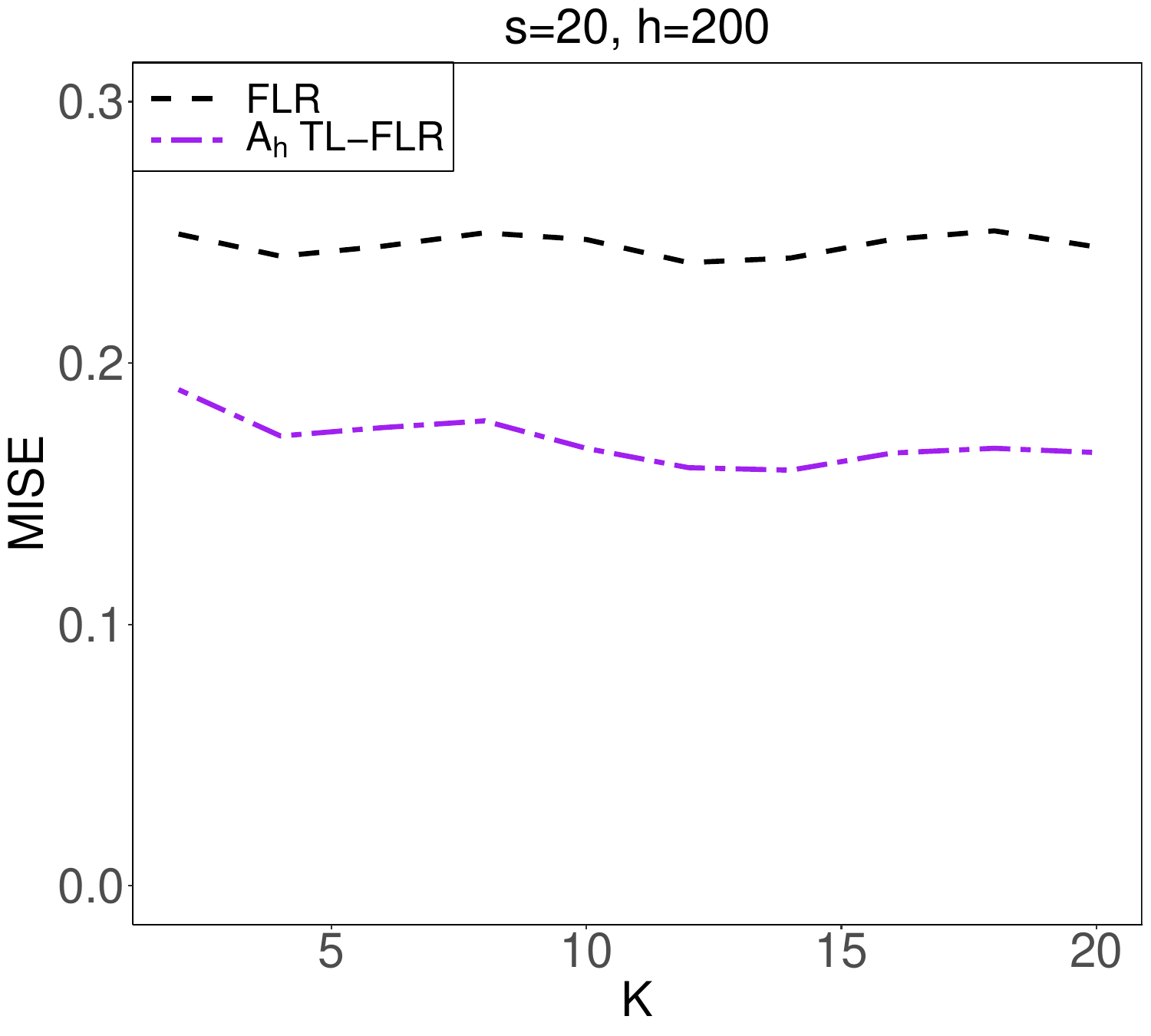} \\
				\hspace{\thisgap}\includegraphics[width=\thiswidth]{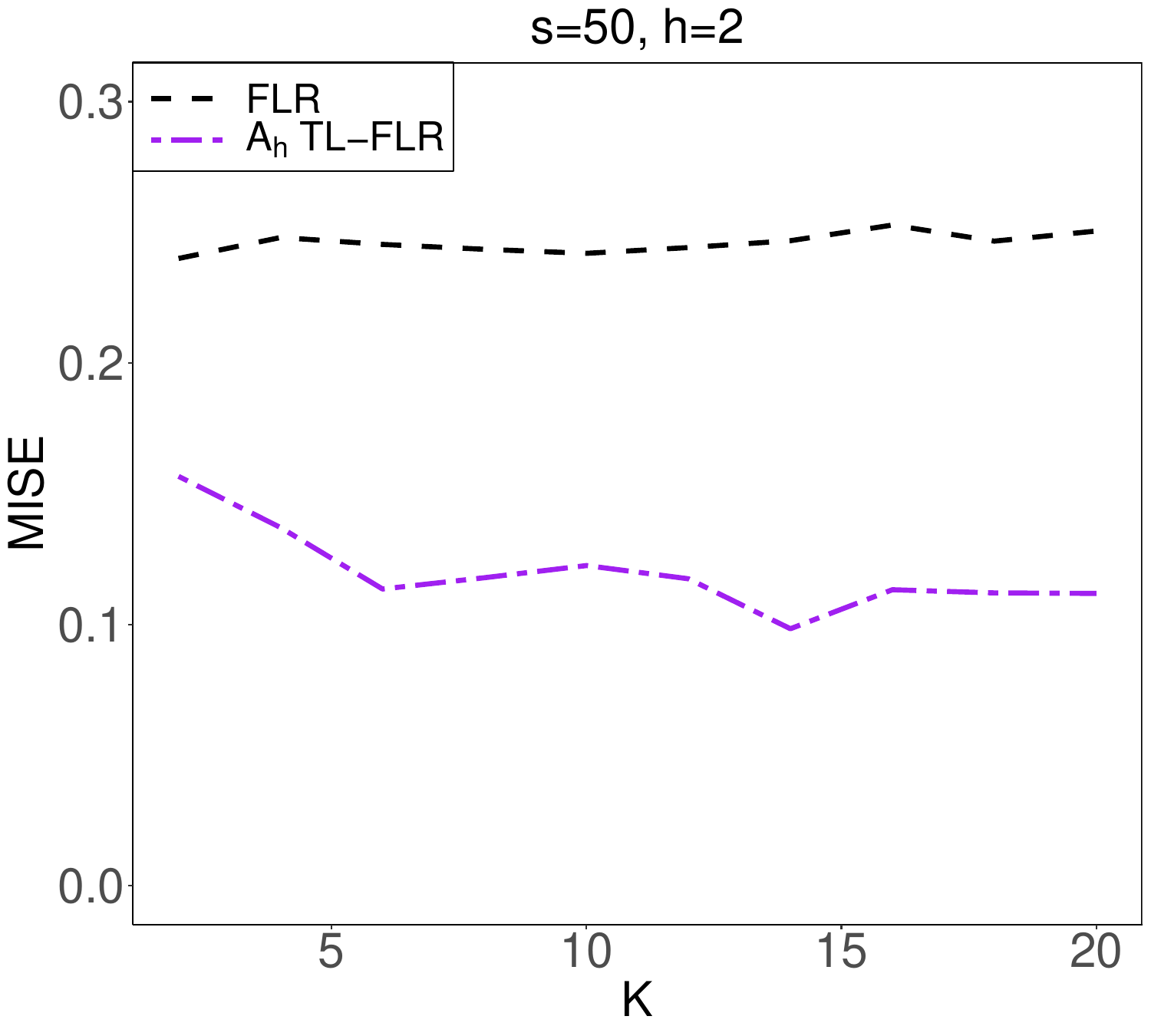} &
				\hspace{\thisgap}\includegraphics[width=\thiswidth]{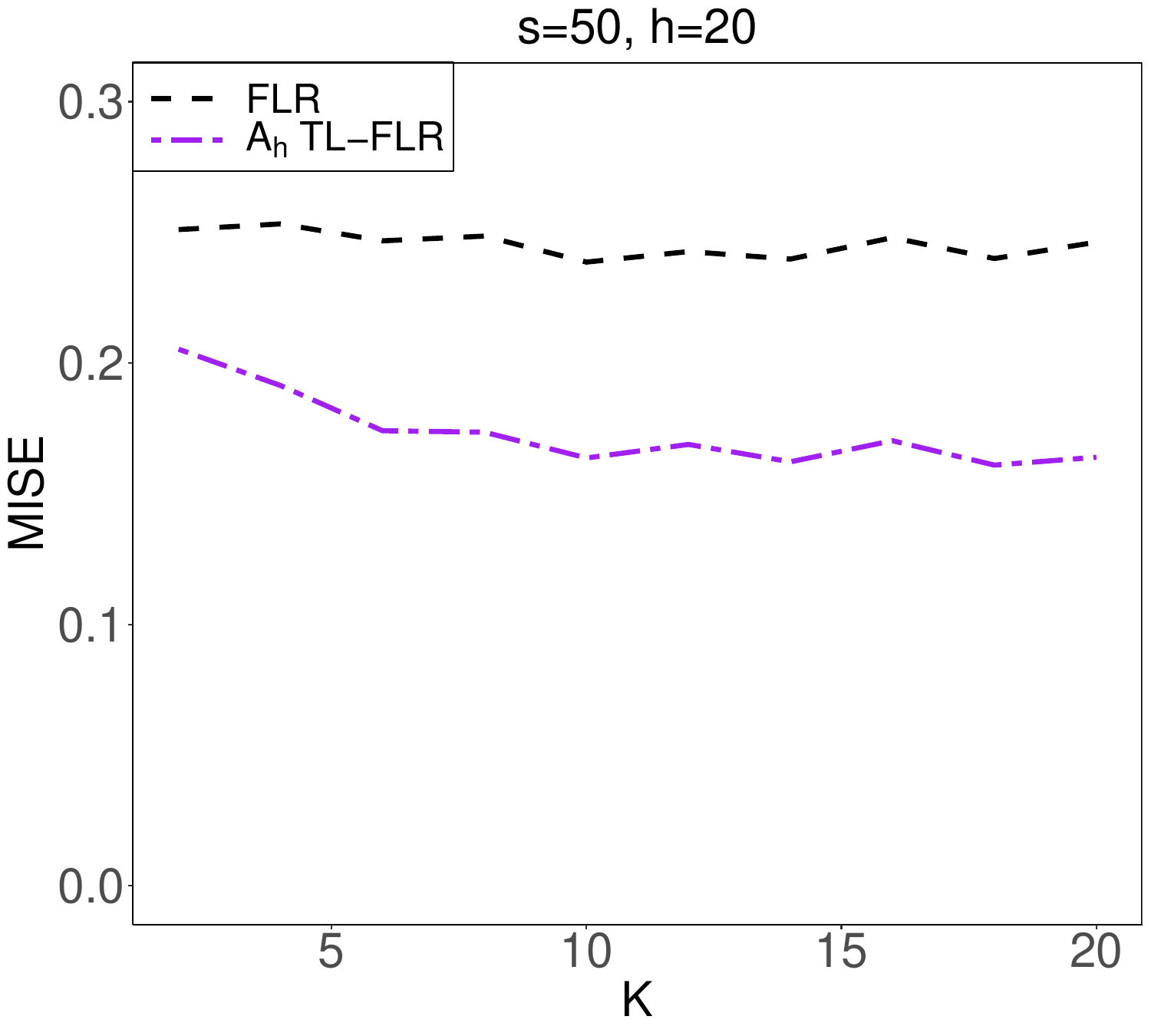} &
				\hspace{\thisgap}\includegraphics[width=\thiswidth]{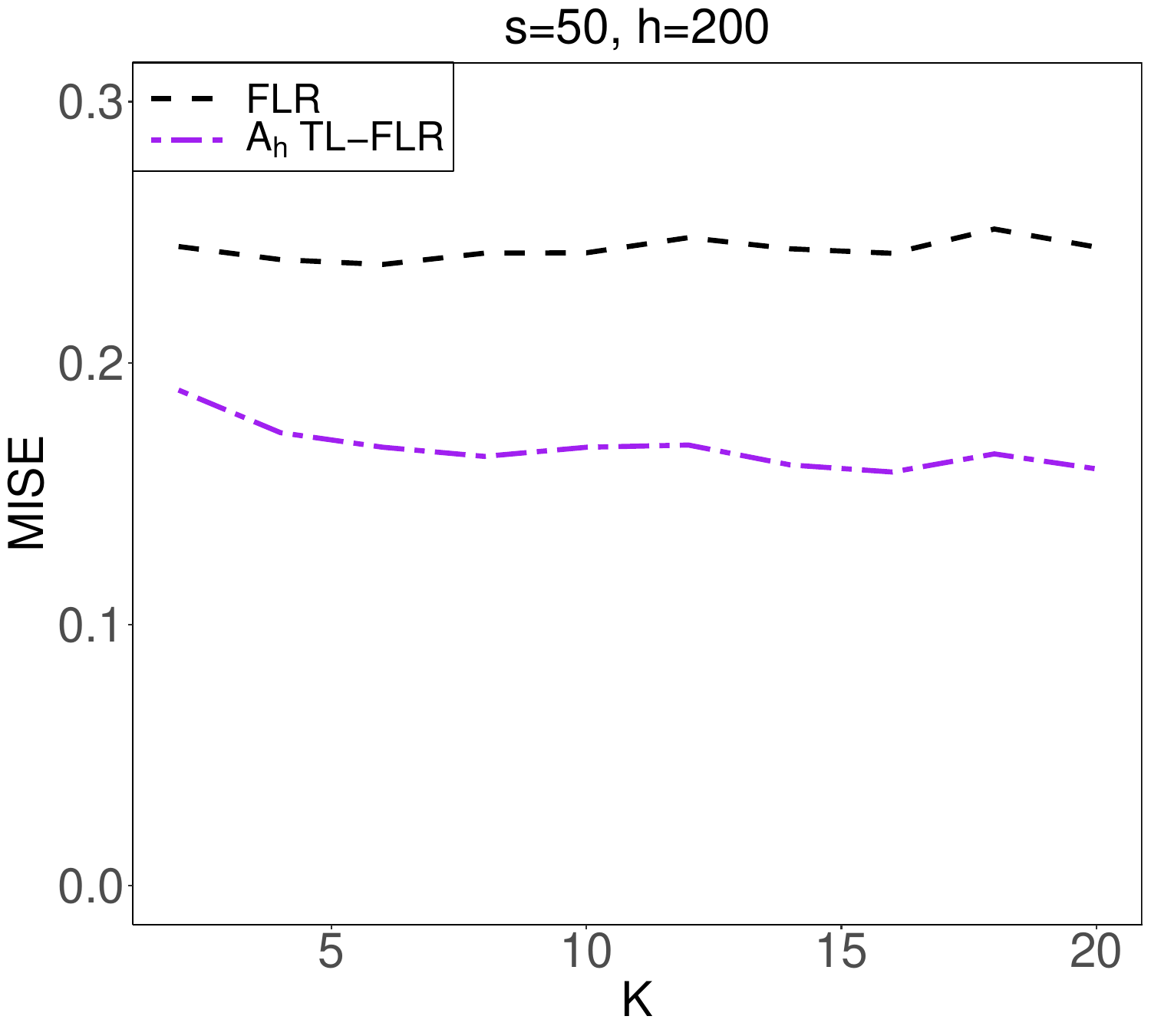} 
			\end{tabular}
			\caption{Estimation errors of different methods under Model (I) over 1000 repetitions.}
			\label{fig:model-I}
		\end{figure}
		
		\subsection{Adaptive Transfer Learning in General Settings}
		
		We further include the adaptive method in Algorithm \ref{alg:adap-tl-flr}, denoted by ``Agg TL-FLR", and the naive transfer learning method using all source samples, denoted by ``Naive TL-FLR" for comparison under various data generation mechanisms.
		
		\begin{itemize}	
			\item[(II).] For $l=1, \dots, L$, let $X_i^{(l)}(t) = \sum_{k=1}^{50} \sqrt{\lambda_k} Z_{ik}^{(l)} \phi_k(t)$, $i=1, \dots, n_l$, where $Z_{ik}^{(l)}$'s are generated from $N(0,1)$. For $l \in \mathcal A_h$, $w_k^{(l)} = b_k - R_k h/50 $; for $l \in \mathcal A_h^c$, $w_k^{(l)} = b_k - 40R_k $, where $R_k$'s are independent Rademacher random variables.
			
			\item[(III)] For $l \in \mathcal A_h$, let $X_i^{(l)}(t) = \sum_{k=1}^{50} \sqrt{\lambda_k} Z_{ik}^{(l)} \phi_k(t)$; for  $l \in \mathcal A_h^c$, $X_i^{(l)}(t) = \sum_{k=1}^{50} \sqrt{\lambda_k} Z_{ik}^{(l)} \psi_k(t)$, where $\psi_k$'s are Haar functions, i.e.,
			\begin{equation*}
				\psi_{2^j+\ell}(t) = \left\{
				\begin{array}{cc}
					2^{j/2}, & t \in [\frac{\ell}{2^j}, \frac{\ell+0.5}{2^j}), \\
					-2^{j/2}, & t \in [\frac{\ell+0.5}{2^j}, \frac{\ell+1}{2^j}], \\
					0, & \text{otherwise},
				\end{array}
				\right.
			\end{equation*}
			for $j=0, 1, \dots$ and $\ell=0, 1, \dots$. The regression coefficients $w_k^{(l)}$ are the same as in Model (II).
			
			\item[(IV)] For $l=1, \dots, L$, let $X_i^{(l)}(t) = \sum_{k=1}^{50} \sqrt{\lambda_k} Z_{ik}^{(l)} \psi_k(t)$, $i=1, \dots, n_l$, where $\psi_k$'s are the same as in Model (III). The regression coefficients $w_k^{(l)}$ are the same as in Model (II).
		\end{itemize}
		
		Model (II) corresponds to the case where all the source samples share the same eigenspace as the target sample. In contrast, the eigenspaces for each source sample in Model (IV) are completely unaligned with the eigenspace of the target data. In addition, Model (III) represents an intermediate case where the eigenspaces for source samples in $\mathcal A_h$ are aligned with the eigenspace of the target data.
		
		The results are depicted in Figure \ref{fig:adap}. In Model (II) where the eigenspaces are perfectly aligned, the three transfer learning methods consistently outperform ``FLR", as there is no negative transfer under this model. The ``Agg TL-FLR" exhibits the ability to adaptively select the most informative source samples for estimation, leading to the best performance. In Model (III), the method ``Agg TL-FLR" achieves comparable performance to ``$\mathcal A_h$ TL-FLR". However, the ``Naive TL-FLR" leads to inferior performance when $K$ is small, due to the inclusion of the source samples in $\mathcal A_h^c$ whose eigenspaces are misaligned with the eigenspace of the target data.
		In the most challenging case, i.e., Model (IV), as expected, both ``Naive TL-FLR" and ``$\mathcal A_h$ TL-FLR" perform worse than the ``FLR", since the eigenspaces in all source samples are different from the eigenspace of the target data. Even the contrast between $b(t)$ and $w^{(l)}(t)$ for $l \in \mathcal A_h$ is relatively small, the FPCA based estimation fails because of the inaccurate estimation of the leading eigenspace of the target data. In contrast, the adaptive method ``Agg TL-FLR" still performs decently in this extremely challenging scenario, comparable to or a bit worse than the ``FLR".
		
		In summary, the proposed algorithm ``Agg TL-FLR" adaptively selects the truly informative source samples for estimation, effectively alleviating negative transfer even in the case where all source samples are not informative at all.
		
		\begin{figure}
			\centering
			\newcommand{\thiswidth}{0.3\linewidth}
			\newcommand{\thisgap}{0\linewidth}
			\begin{tabular}{ccc}
				\hspace{\thisgap}\includegraphics[width=\thiswidth]{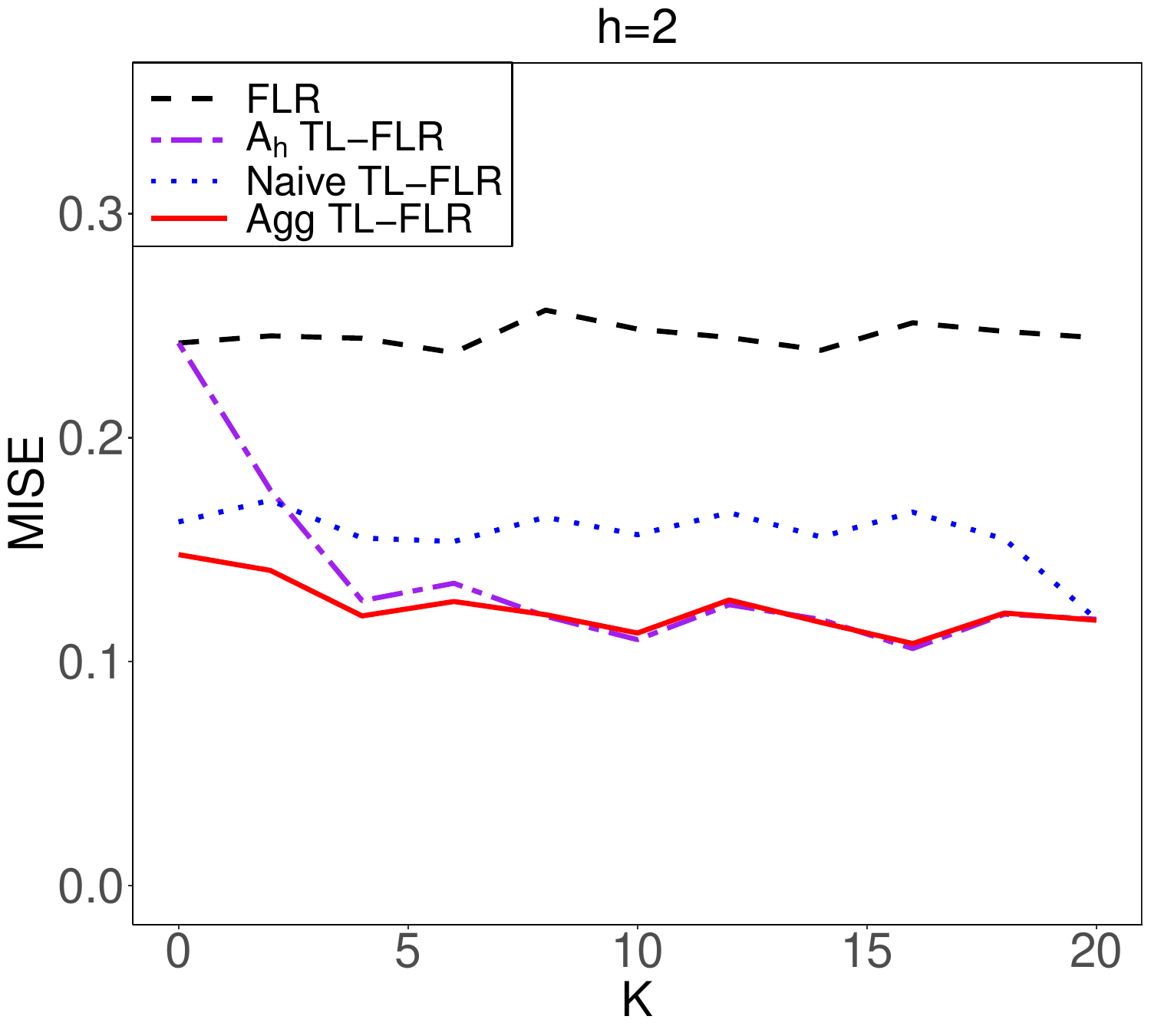} &
				\hspace{\thisgap}\includegraphics[width=\thiswidth]{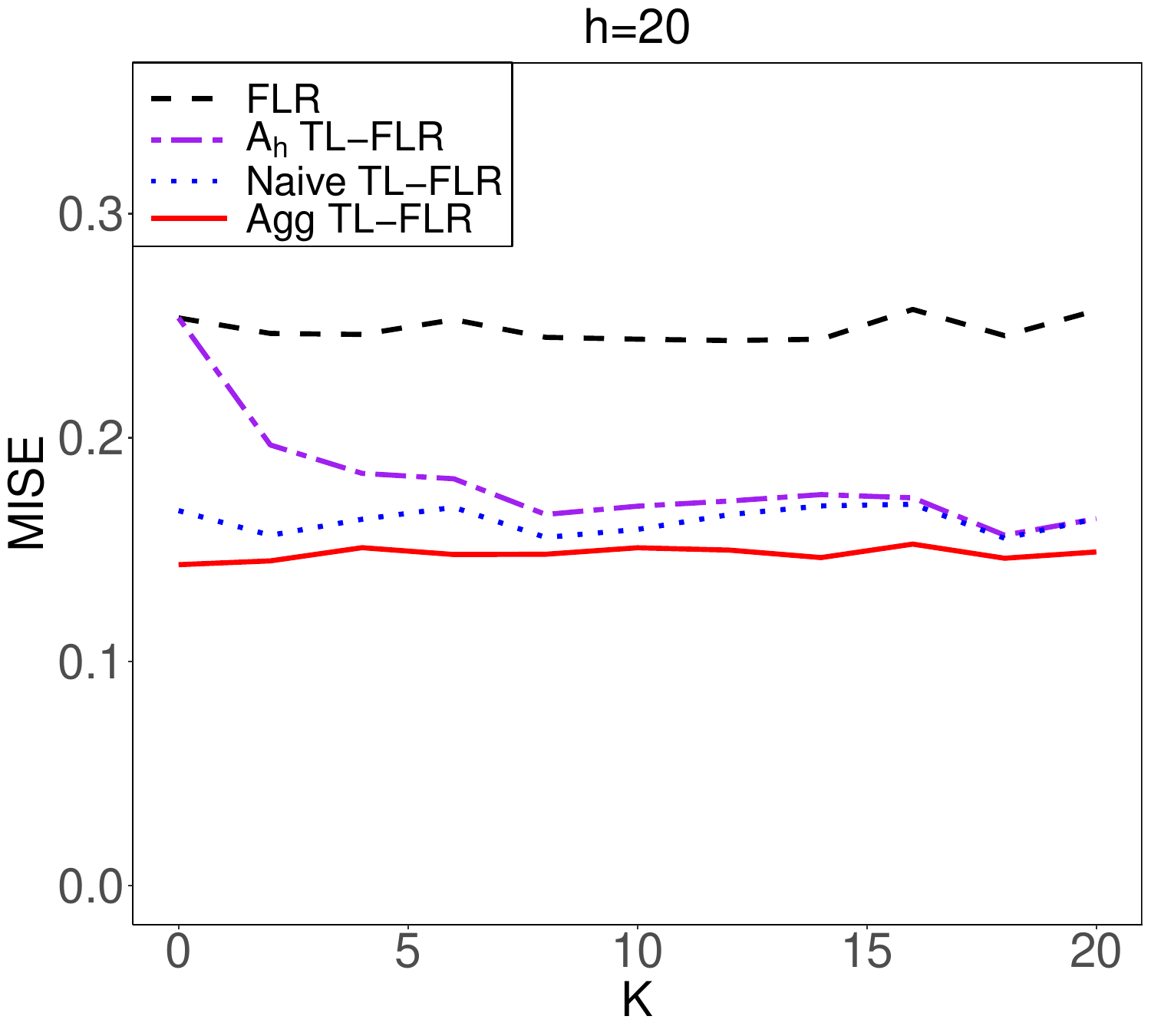} &
				\hspace{\thisgap}\includegraphics[width=\thiswidth]{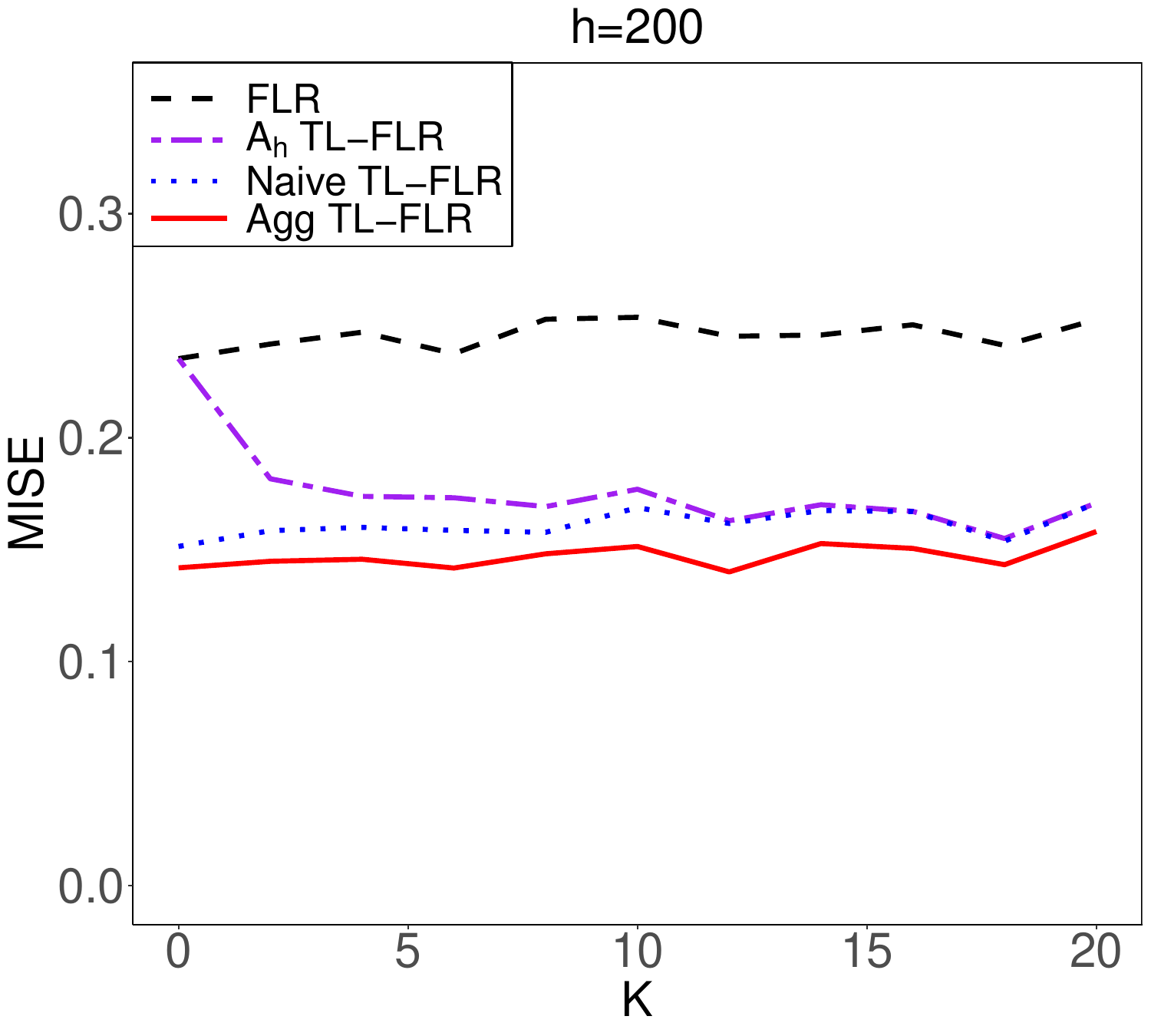} \\
				\hspace{\thisgap}\includegraphics[width=\thiswidth]{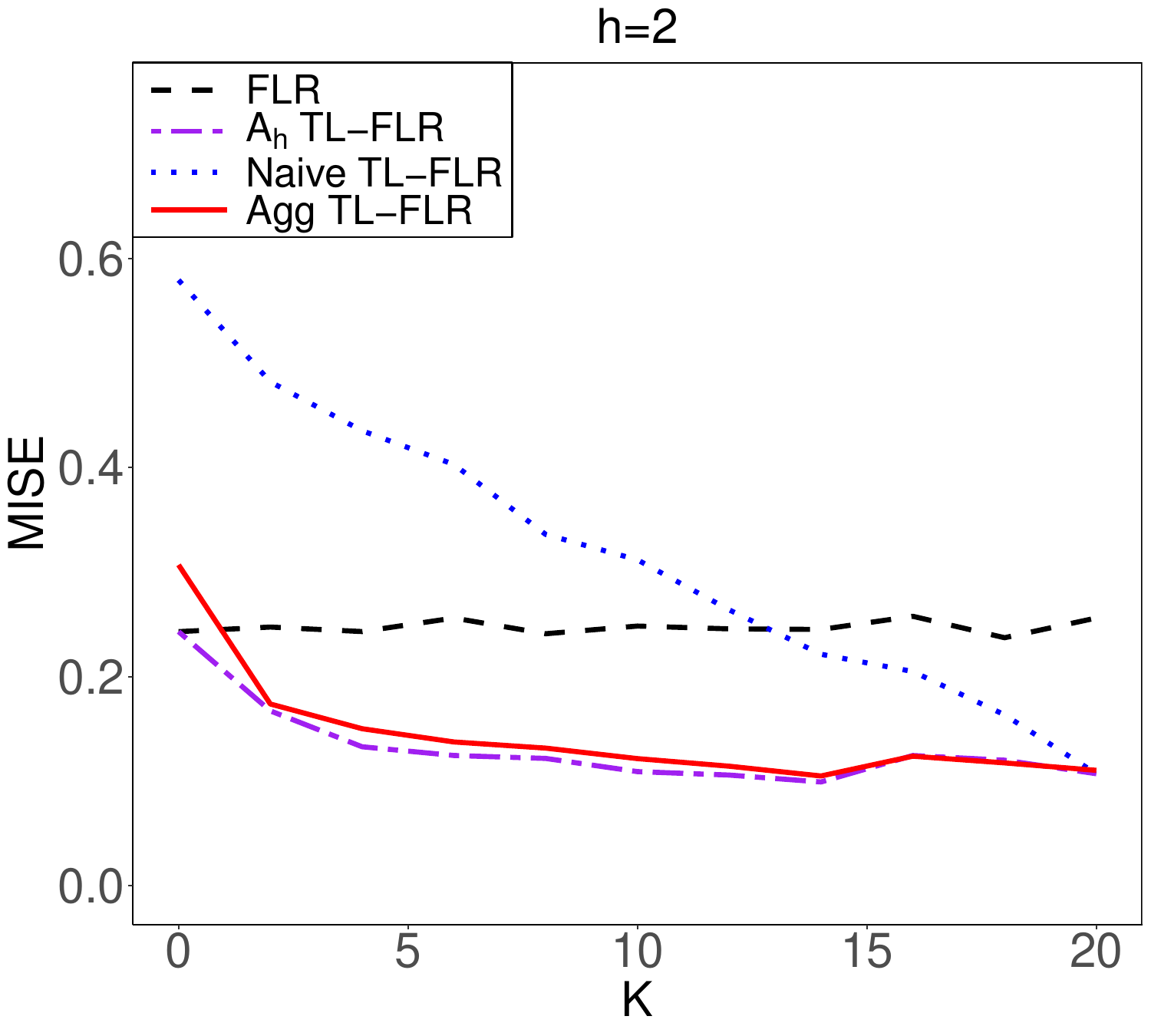} &
				\hspace{\thisgap}\includegraphics[width=\thiswidth]{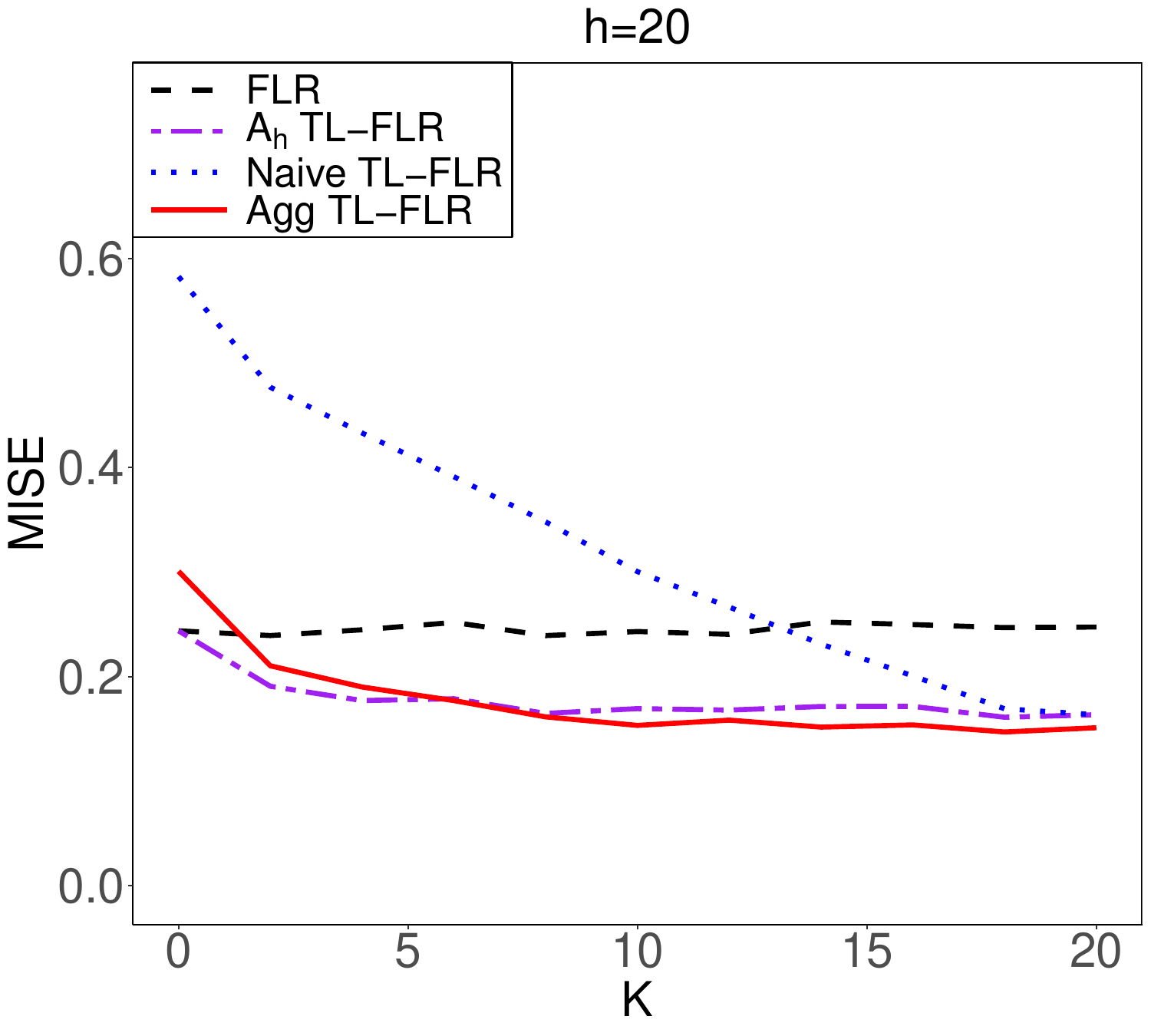} &
				\hspace{\thisgap}\includegraphics[width=\thiswidth]{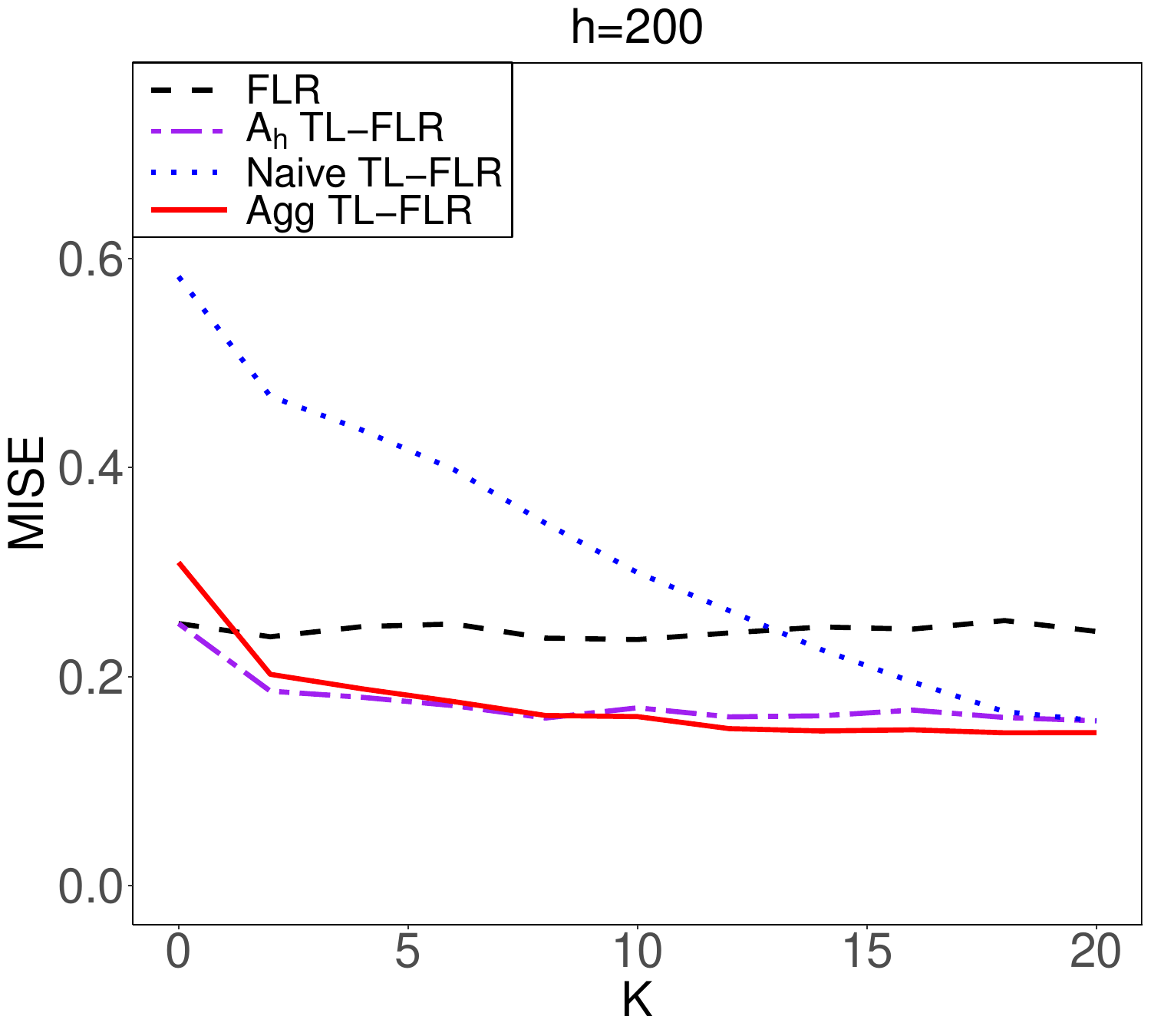} \\
				\hspace{\thisgap}\includegraphics[width=\thiswidth]{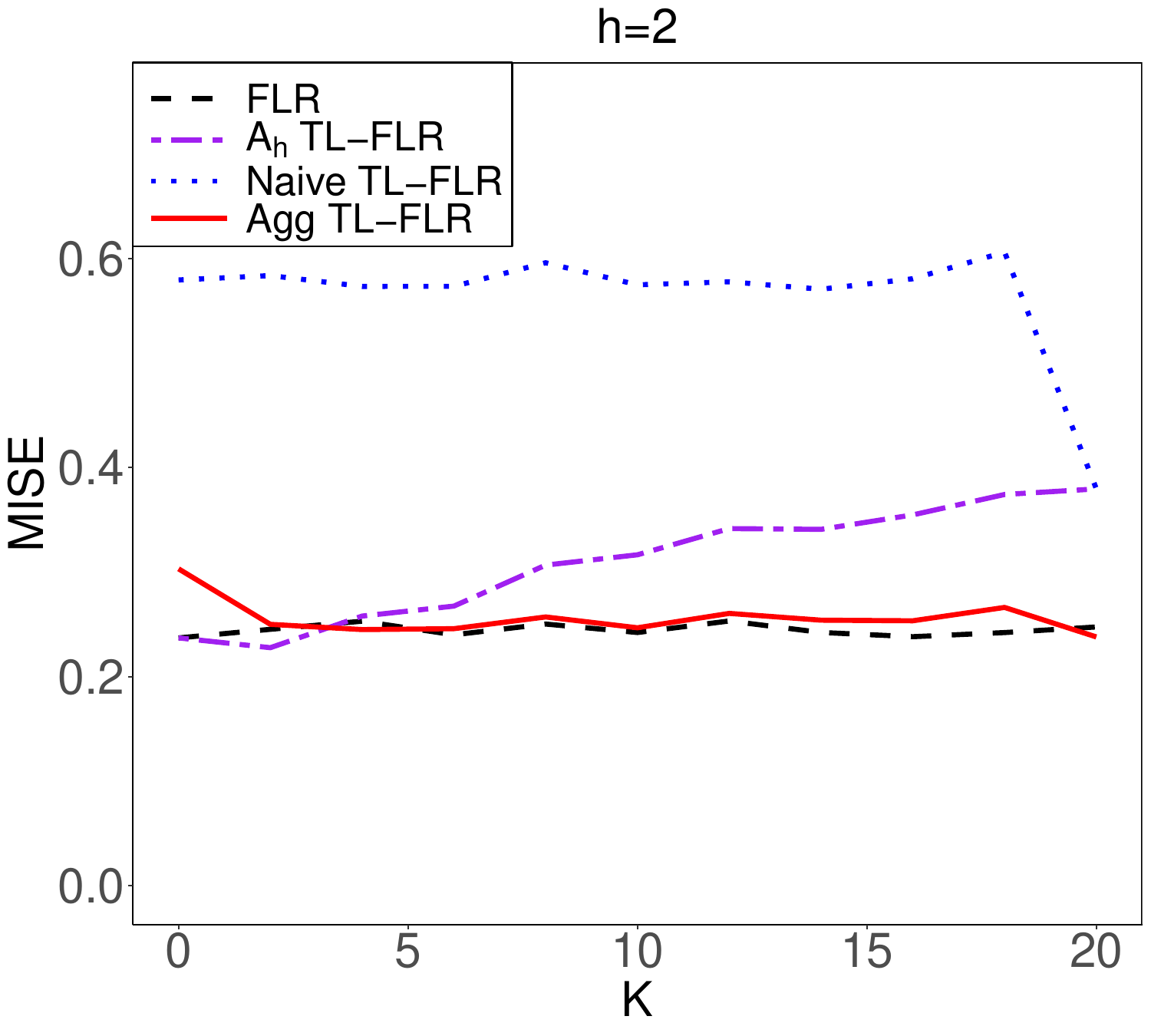} &
				\hspace{\thisgap}\includegraphics[width=\thiswidth]{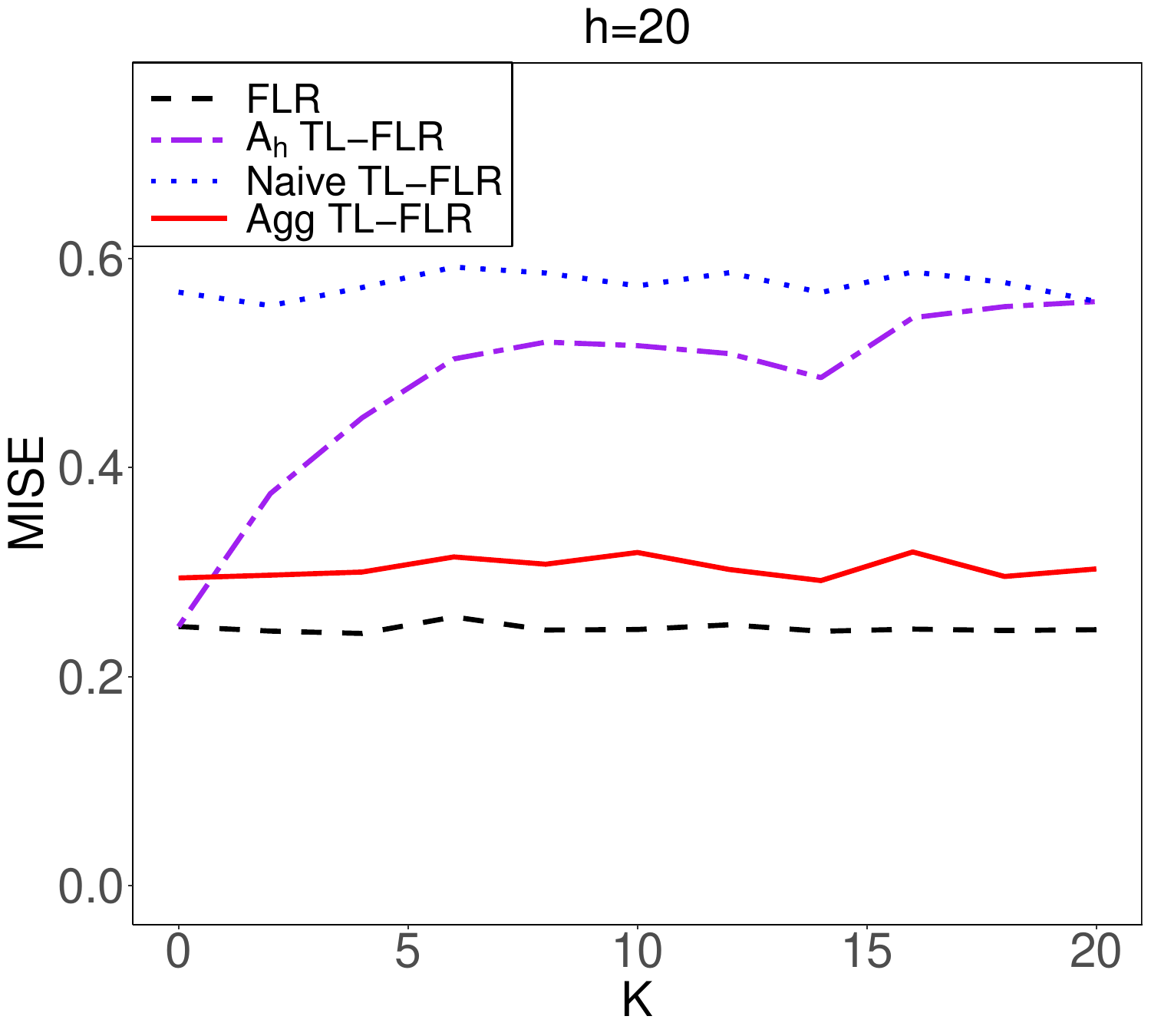} &
				\hspace{\thisgap}\includegraphics[width=\thiswidth]{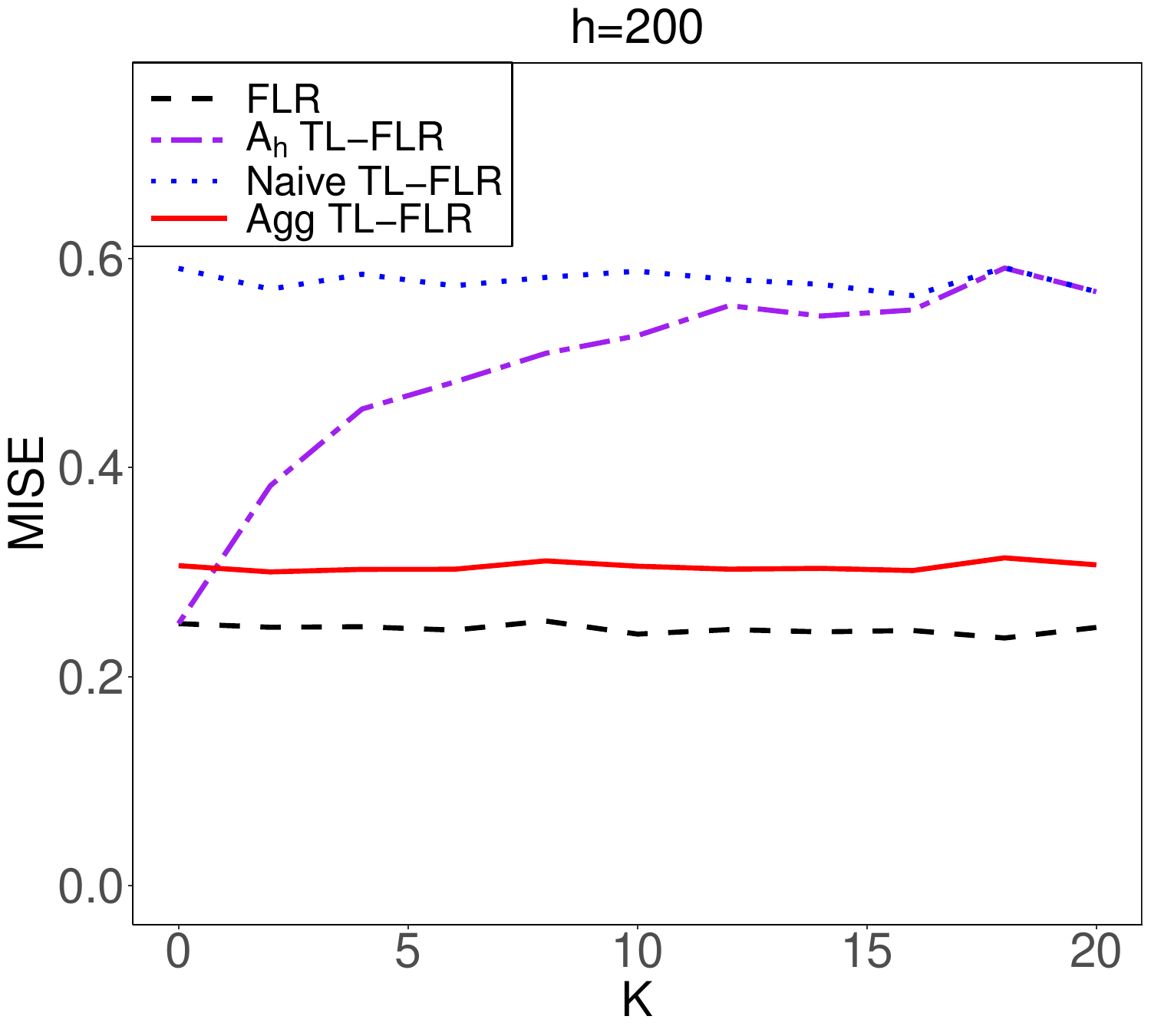} 
			\end{tabular}
			\caption{Estimation errors of different methods over 500 repetitions. Top row: Model (II); middle row: Model (III); bottom row: Model (IV).}
			\label{fig:adap}
		\end{figure}
		
		\section{Real Data} \label{sec:financial}
		
		We demonstrate the performance of the proposed method using stock price data from the period 05/01/2021-09/30/2021 \citep{lin2024hypothesis}. The dataset consists of 60, 58, 31, 30, 104, 55, 70, 68, 46, 103, 41 stocks from 11 sectors, including basic industries (BI), capital goods (CG), consumer durable (CD), consumer non-durable (CND), consumer services (CS), energy (En), finance (Fin), health care (HC), public utility (PU), technology (Tech) and transportation (Trans), respectively. 
		
		We aim to predict the monthly return of the second month using the monthly cumulative return of the first month for a specific sector,  leveraging information from other sectors via transfer learning. For a given stock, the daily prices are $ \{ v(t_0), v(t_1), \dots, v(t_T) \}$ in the first month and are $\{v'(t_0), v'(t_1), \dots, v'(t_T)\}$ in the second month. Therefore, the covariates and responses are defined as 
		\[ X(t) = \frac{v(t) - v(t_0)}{v(t_0)}, ~~ Y = \frac{v'(t_T) - v'(t_0)}{v'(t_0)}. \]
		
		We treat the data from each sector as a target sample, with all other sectors serving as source samples. The target data is randomly split into 80\% training data and 20\% testing data. The prediction errors are evaluated on the testing data over 500 independent repetitions. We compare the performance of ``Naive TL-FLR", ``Agg TL-FLR" and ``FLR" by reporting the average relative prediction errors, i.e., the prediction errors of ``Naive TL-FLR" or ``Agg TL-FLR" divided by the prediction errors of ``FLR". As illustrated in Figure \ref{fig:adap-financial}, the ``Agg TL-FLR" consistently outperforms the ``FLR" in terms of prediction accuracy, with relative prediction errors significantly below 1 in almost all cases. In contrast, the ``Naive TL-FLR" often results in much poorer performance compared to the ``FLR". These observations support our claim that the adaptive algorithm ``Agg TL-FLR" effectively alleviates negative transfer.
		
		\begin{figure}
			\centering
			\newcommand{\thiswidth}{0.4\linewidth}
			\newcommand{\thisgap}{0mm}
			\begin{tabular}{cc}
				\hspace{\thisgap}\includegraphics[width=\thiswidth]{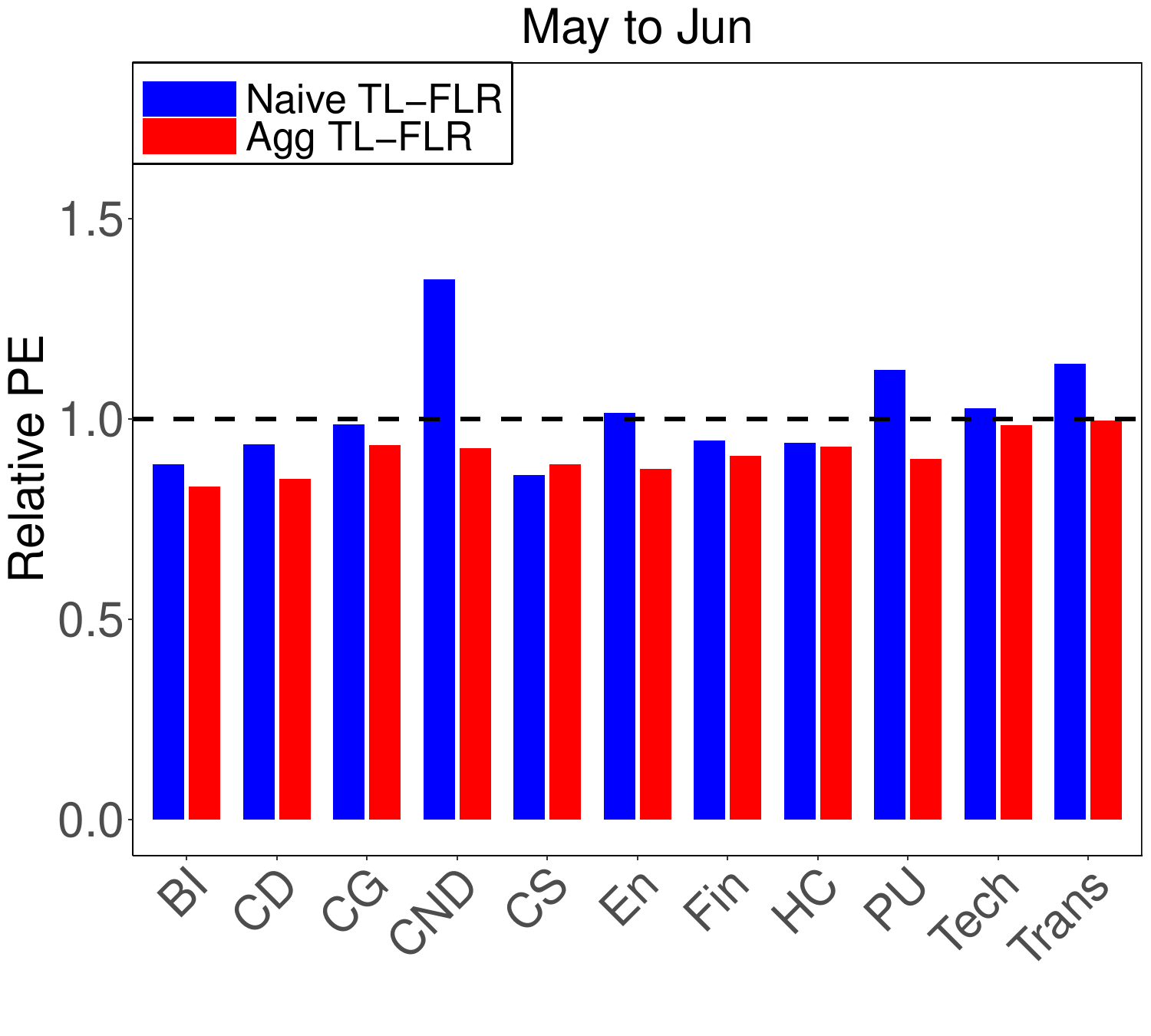} &
				\hspace{\thisgap}\includegraphics[width=\thiswidth]{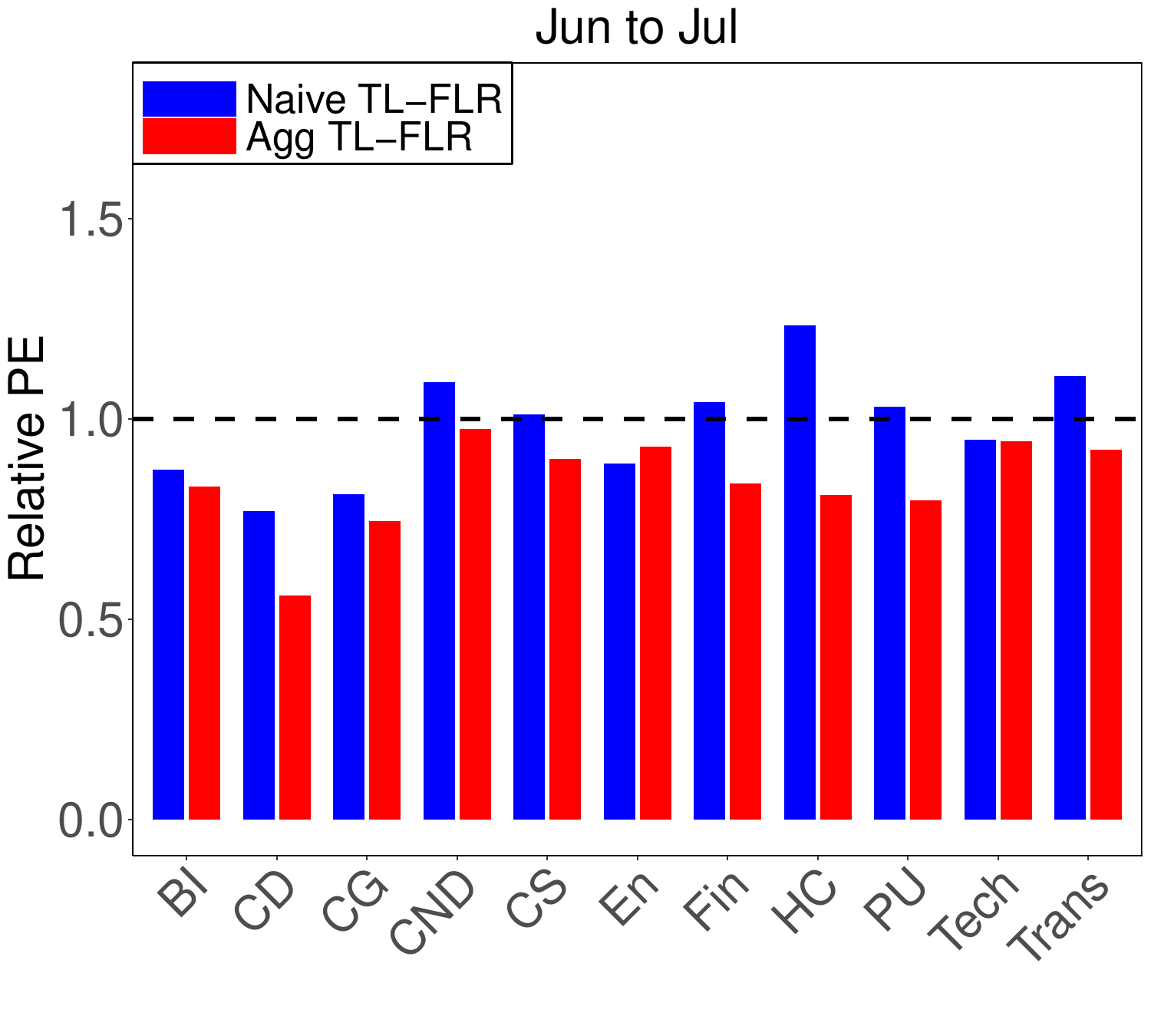} \\
				\hspace{\thisgap}\includegraphics[width=\thiswidth]{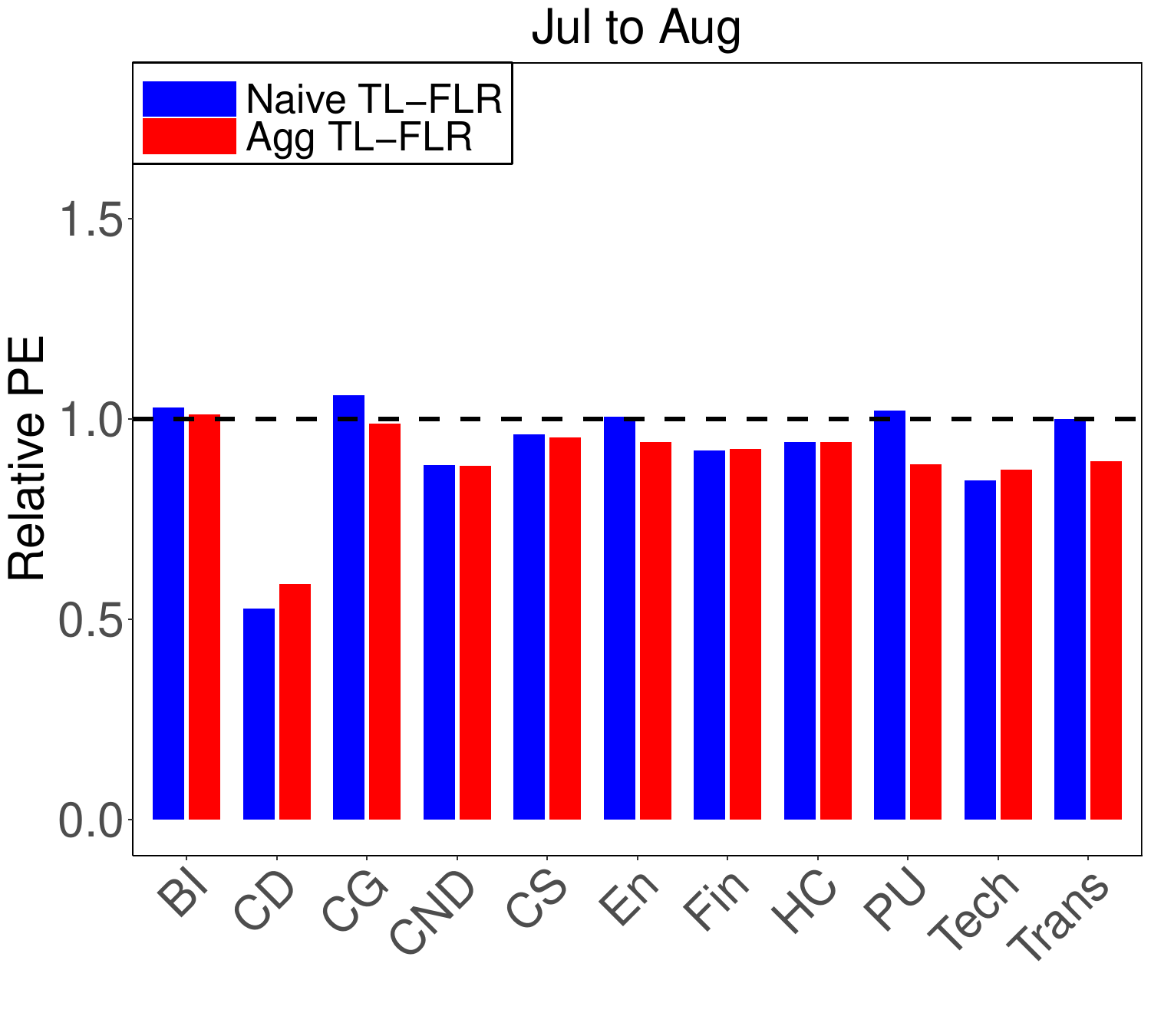} &
				\hspace{\thisgap}\includegraphics[width=\thiswidth]{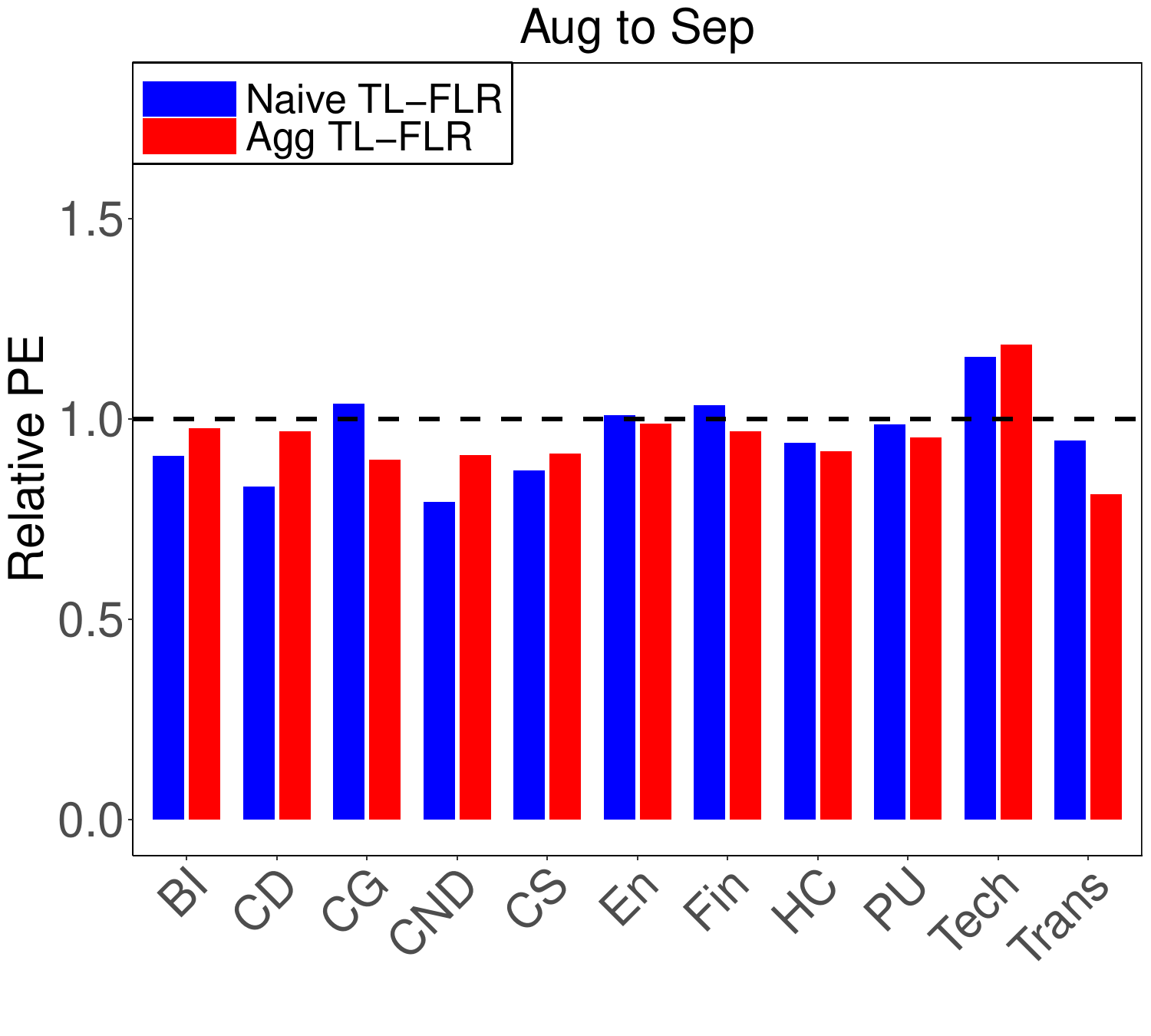} \\
			\end{tabular}
			\caption{Average relative prediction errors of different methods for each target sector over 500 repetitions.}
			\label{fig:adap-financial}
		\end{figure}
		
		\section{Conclusion}\label{sec:conclusion}
		
		In this work, we focus on slope estimation in functional linear regression under transfer learning. 
		When the covariate distributions share an aligned eigenspace, we propose an algorithm that avoids negative transfer even when the contrast is large. Our algorithm enhances the learning of the target model, provided that the sample size of the source data is sufficiently large and the contrast between the source and target models is sufficiently small.
		To mitigate performance degradation when the aligned eigenspace condition is violated, we introduce an adaptive algorithm via constructing potential candidate estimators and performing sparse aggregation. The adaptive algorithm performs well in both synthetic and real data examples.
		

		\section*{Acknowledgments}
		Dr. Lin's research was partially supported by MOE AcRF Tier 1 grant (A-8002518-00-00).

		 \bibliography{TL-FLR}
	
	\onecolumn
	
	\renewcommand{\theequation}{S.\arabic{equation}}
	\renewcommand{\thelemma}{S.\arabic{lemma}}
	\renewcommand{\thetheorem}{S.\arabic{theorem}}
	\renewcommand{\theremark}{S.\arabic{remark}}
	\renewcommand{\thefigure}{S.\arabic{figure}}
	\renewcommand{\thecorollary}{S.\arabic{corollary}}
	
	\setcounter{secnumdepth}{0} 
	\setcounter{figure}{0}
	\setcounter{equation}{0}
	\setcounter{theorem}{0}
	\setcounter{lemma}{0}
	\setcounter{corollary}{0}
	
	\begin{center}
		\LARGE	\textbf{Supplemtary Material for ``Transfer Learning Meets Functional Linear Regression: No Negative Transfer under Posterior Drift"}
	\end{center}
	
		\section{More Numerical Results}
		
		\subsection{Transfer Learning on $\mathcal A_h$ with Aligned Eigenspace}
		
		\begin{figure}[!h]
			\centering
			\newcommand{\thiswidth}{0.4\linewidth}
			\newcommand{\thisgap}{0.2mm}
			\begin{tabular}{cc}
				\hspace{\thisgap}\includegraphics[width=\thiswidth]{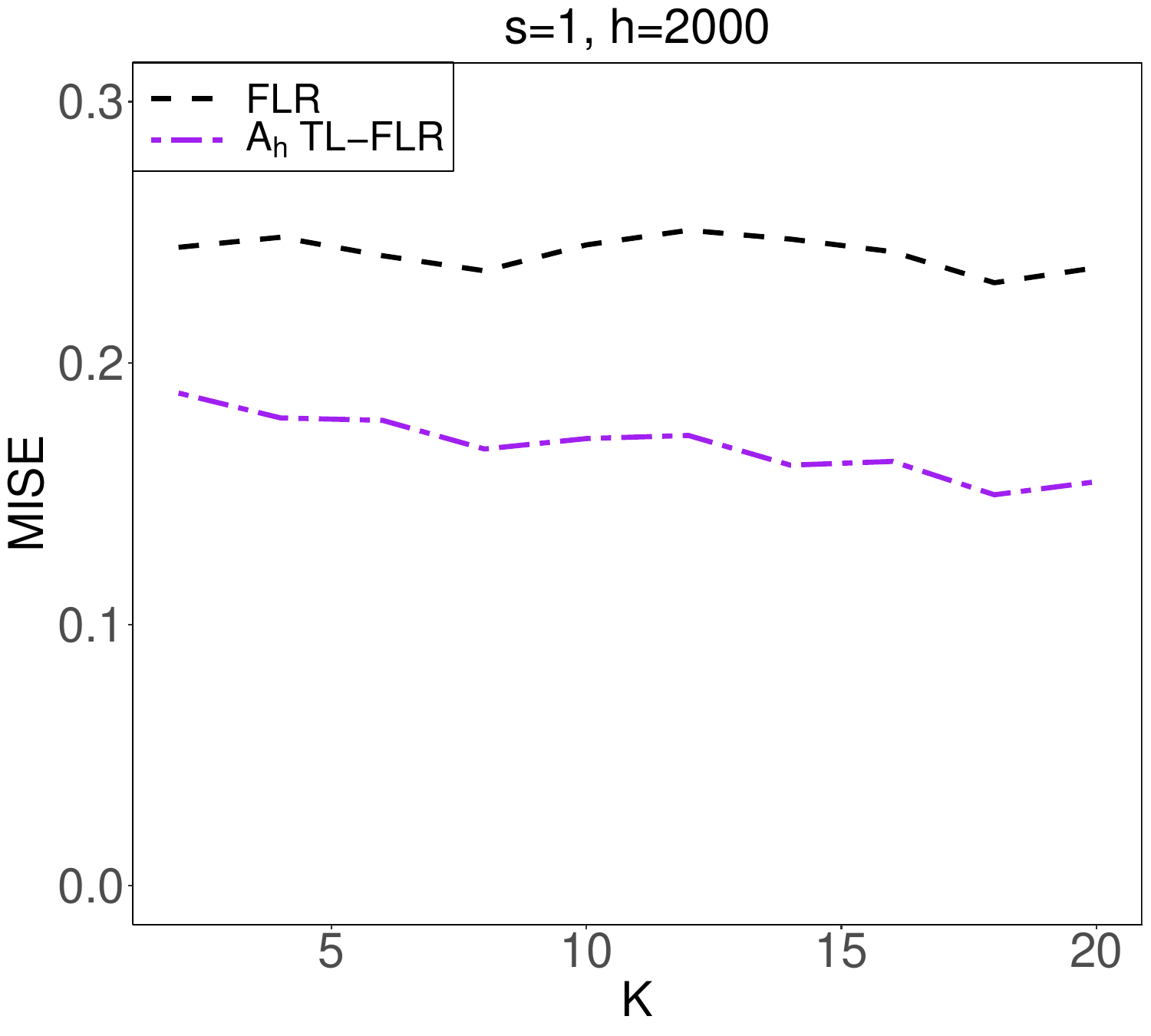} & 
				\hspace{\thisgap}\includegraphics[width=\thiswidth]{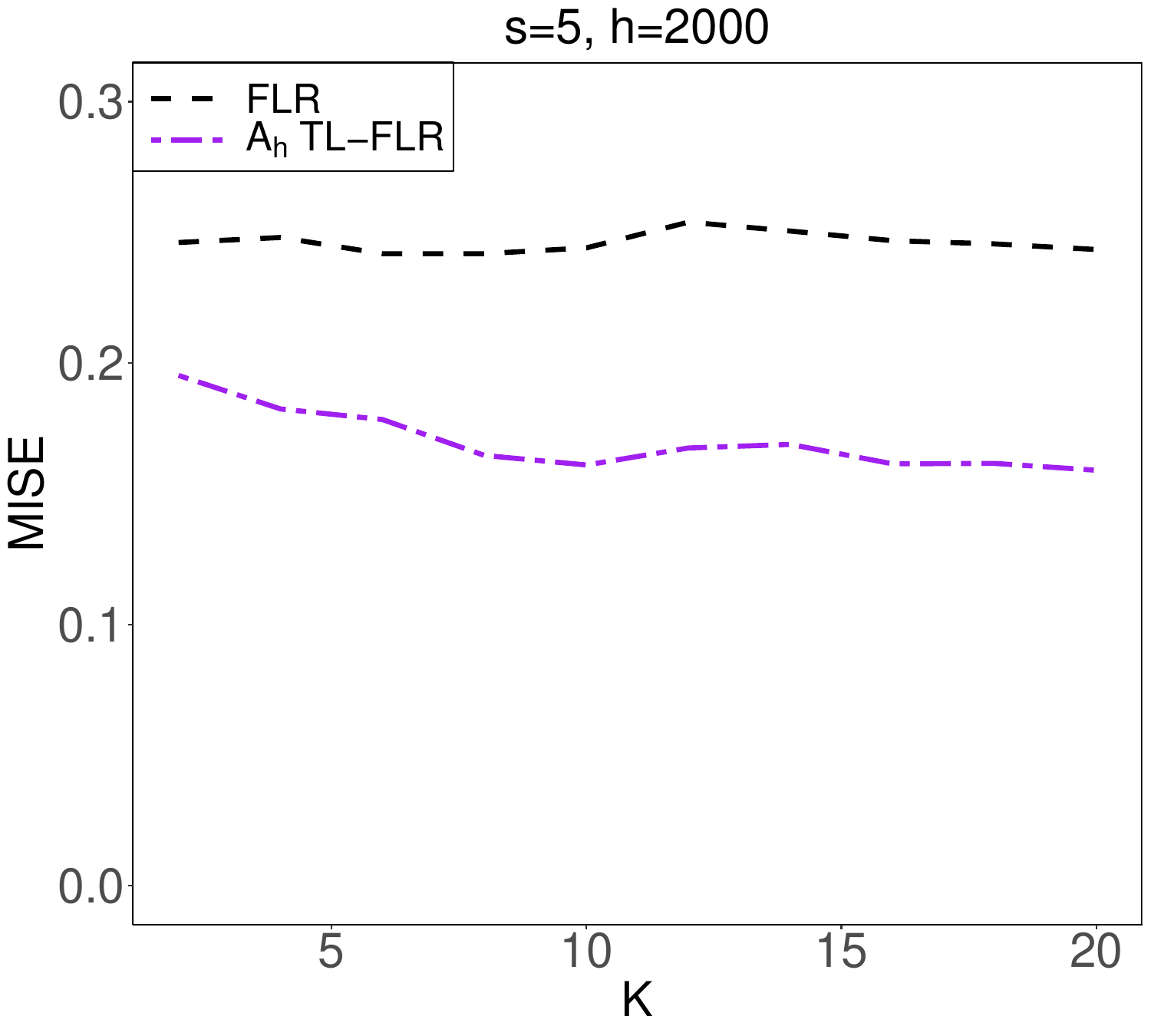} \\
				\hspace{\thisgap}\includegraphics[width=\thiswidth]{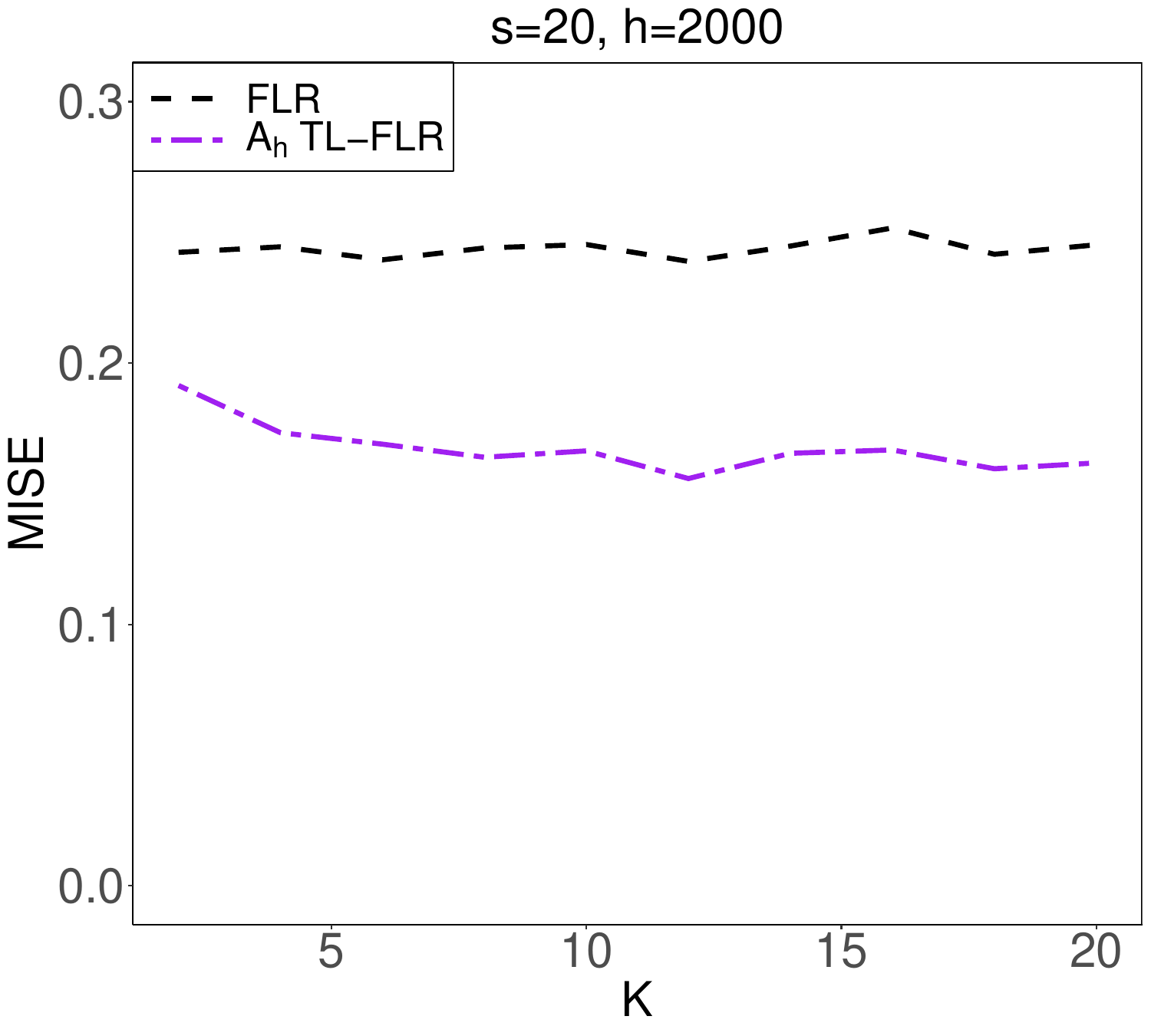} & 
				\hspace{\thisgap}\includegraphics[width=\thiswidth]{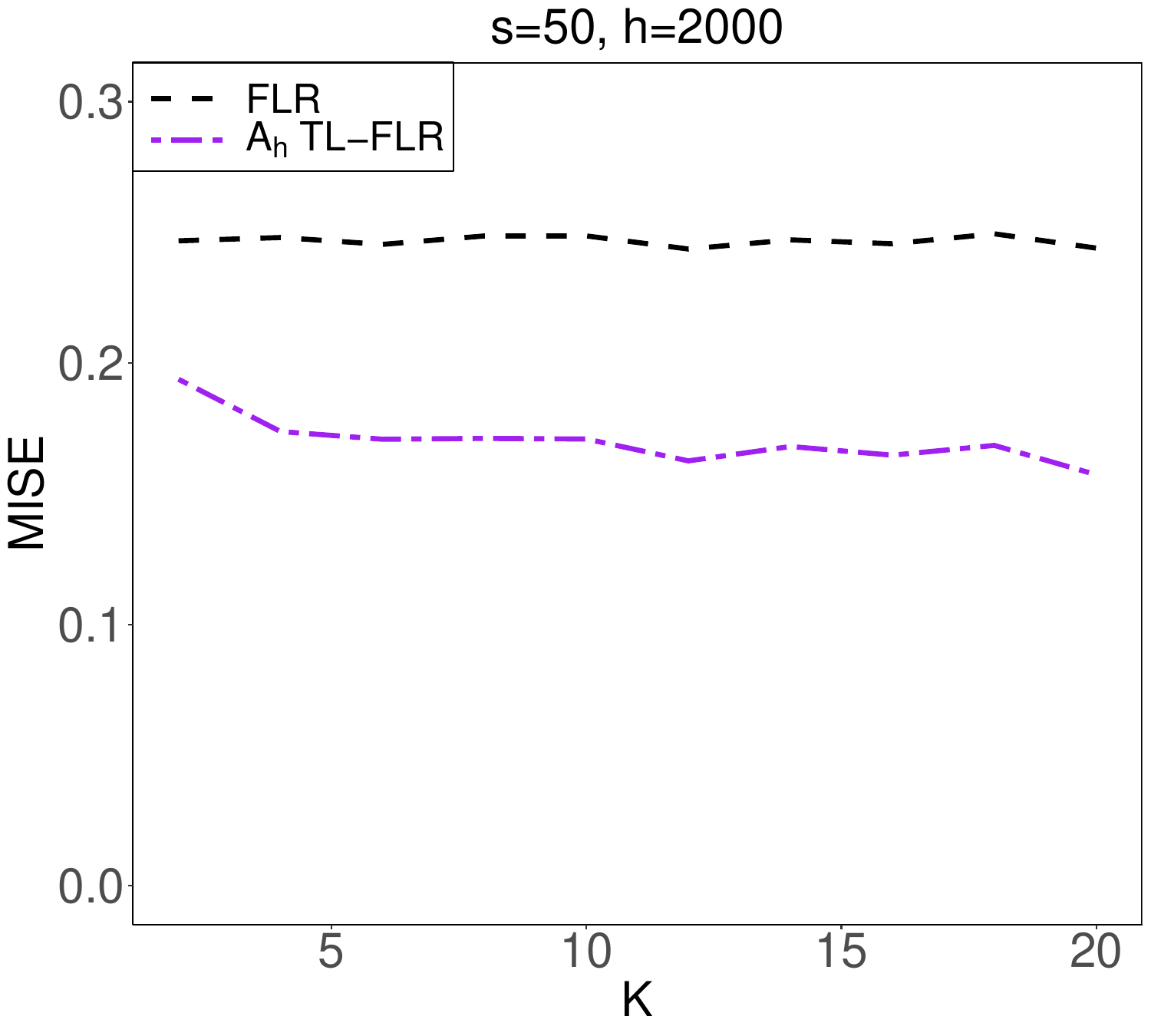}
			\end{tabular}
			\caption{Estimation errors of different methods under Model (I) over 1000 repetitions.}
			\label{fig:h2000}
		\end{figure}
		
		We present the results for $h=2000$ in Figure \ref{fig:h2000}. Despite the extremely large contrast between the source and target models, the algorithm $\mathcal A_h$ TL-FLR still performs better than FLR, demonstrating its ability to avoid negative transfer under the scenario of aligned eigenspace.
		
		\subsection{Comparisons between Q-Aggregation and Sparse Aggregation}
		
		\begin{figure}[!h]
			\centering
			\newcommand{\thiswidth}{0.3\linewidth}
			\newcommand{\thisgap}{0mm}
			\begin{tabular}{ccc}
				\hspace{\thisgap}\includegraphics[width=\thiswidth]{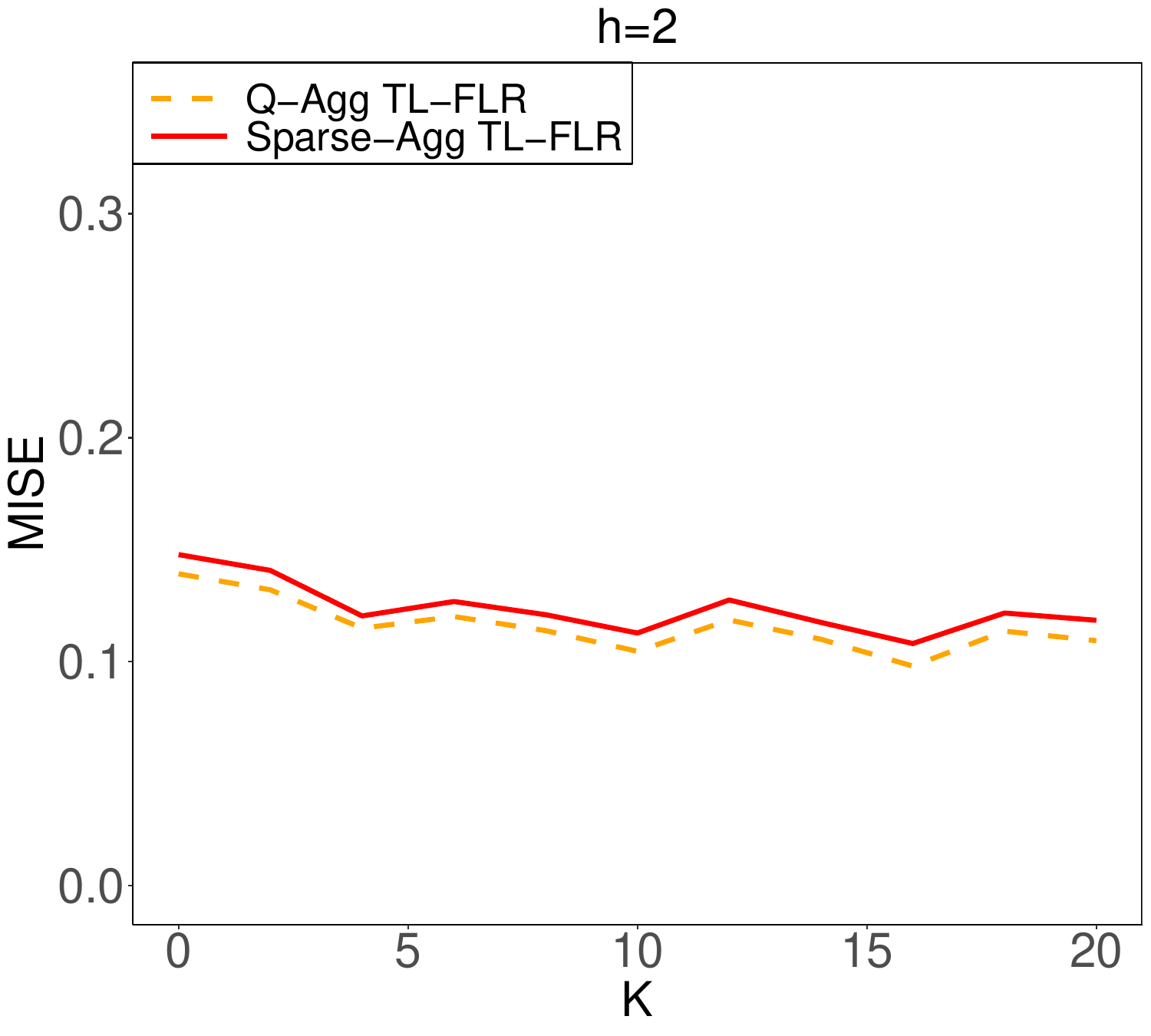} & 
				\hspace{\thisgap}\includegraphics[width=\thiswidth]{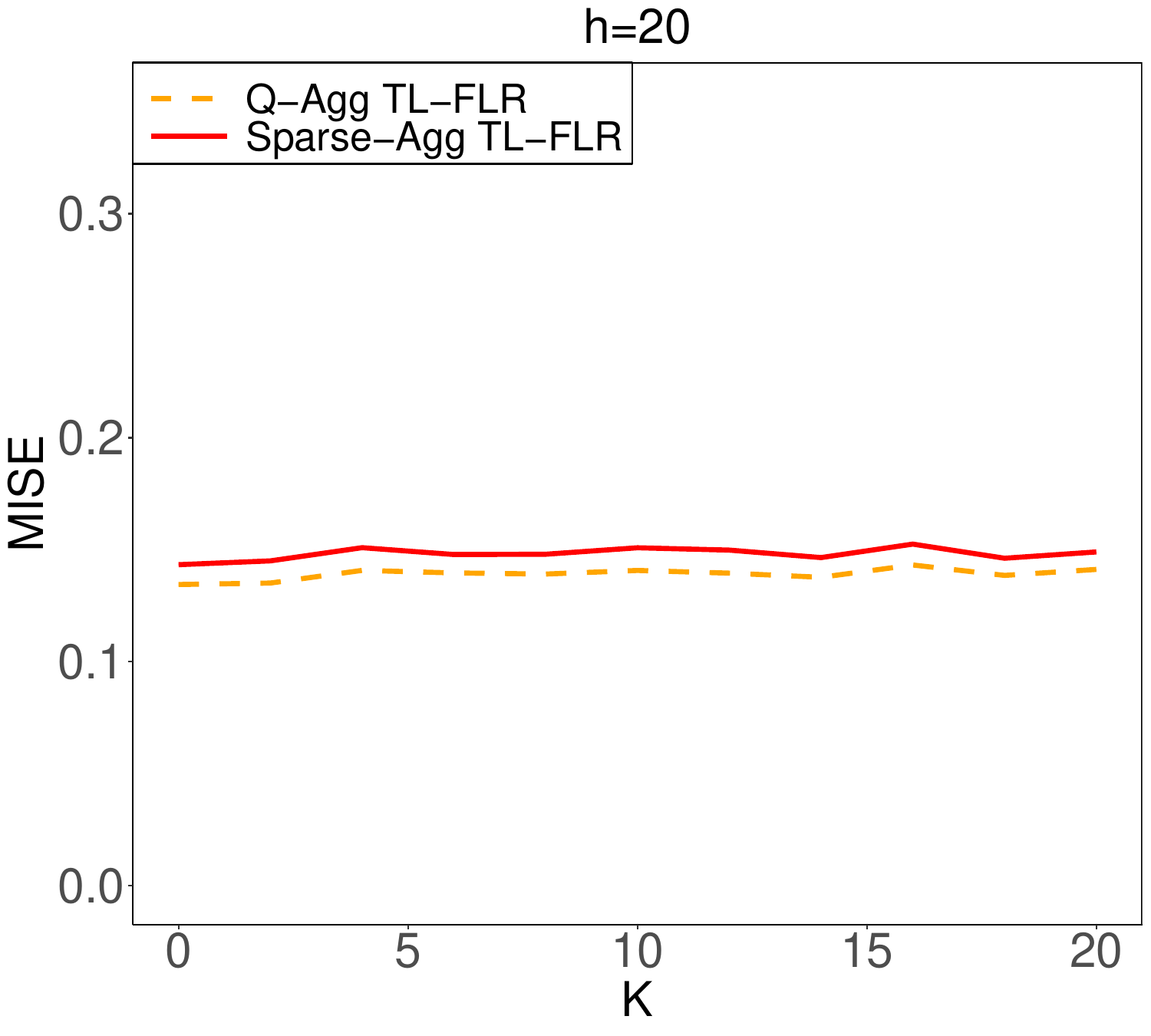} &
				\hspace{\thisgap}\includegraphics[width=\thiswidth]{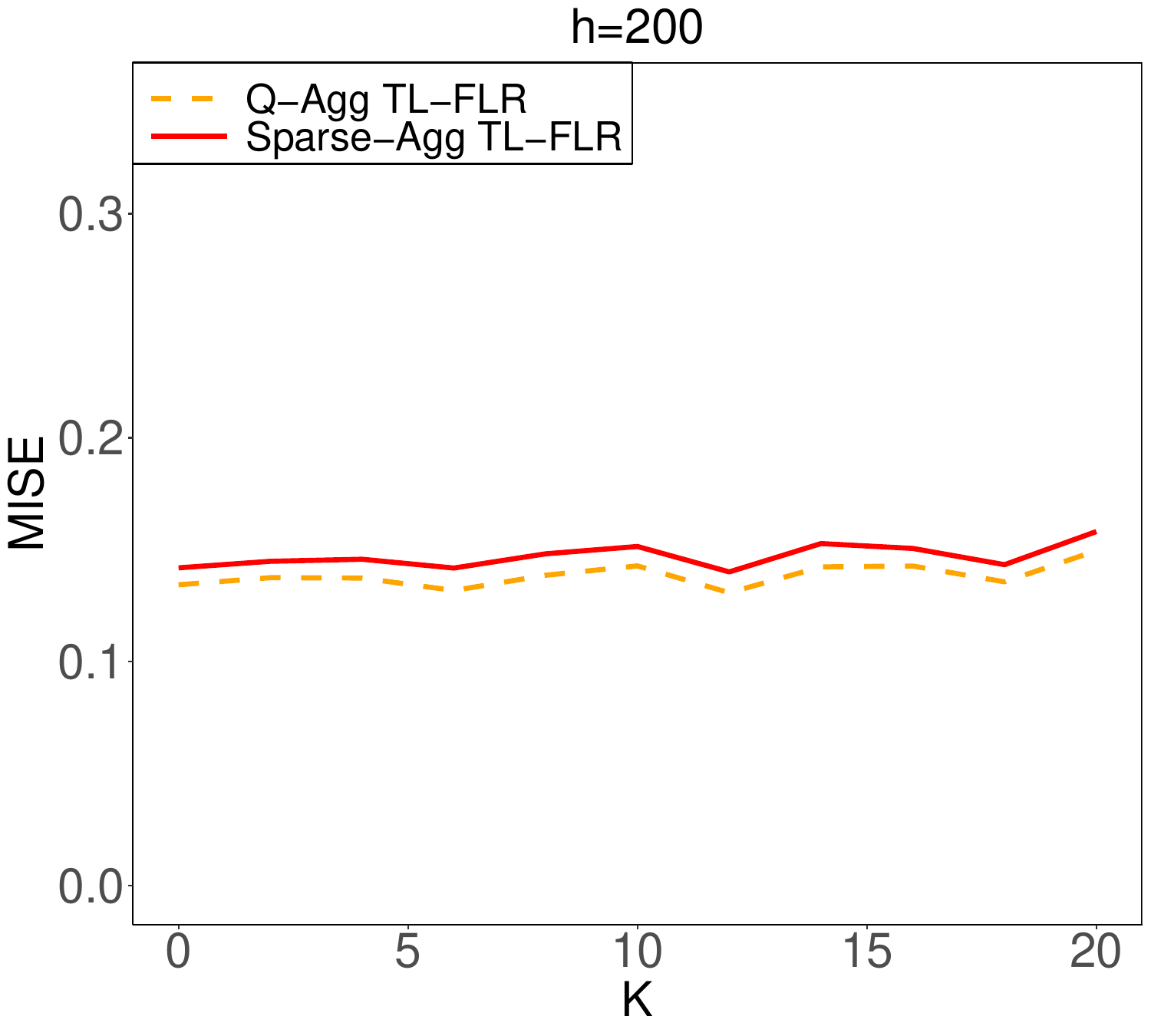} \\
				\hspace{\thisgap}\includegraphics[width=\thiswidth]{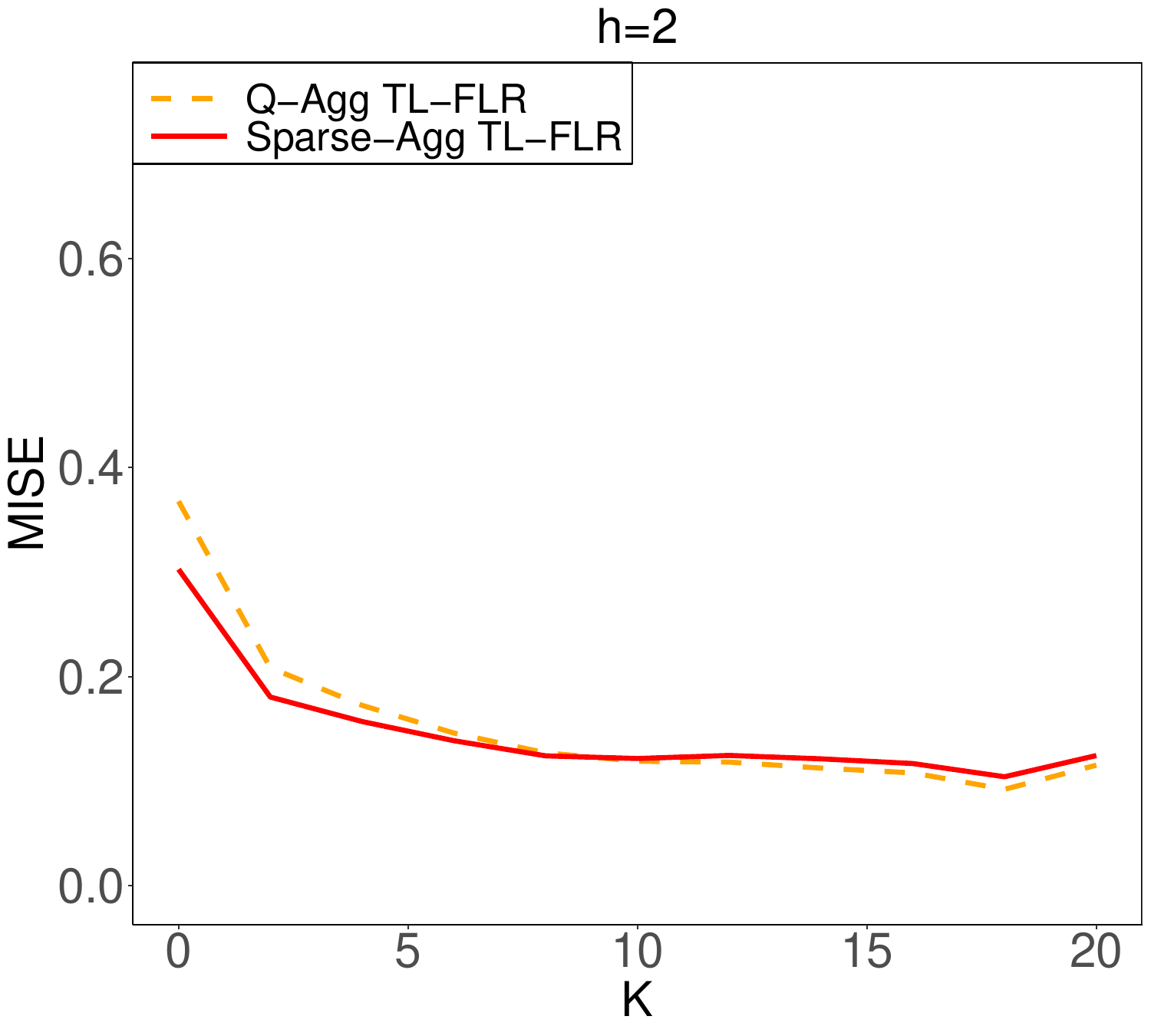} & 
				\hspace{\thisgap}\includegraphics[width=\thiswidth]{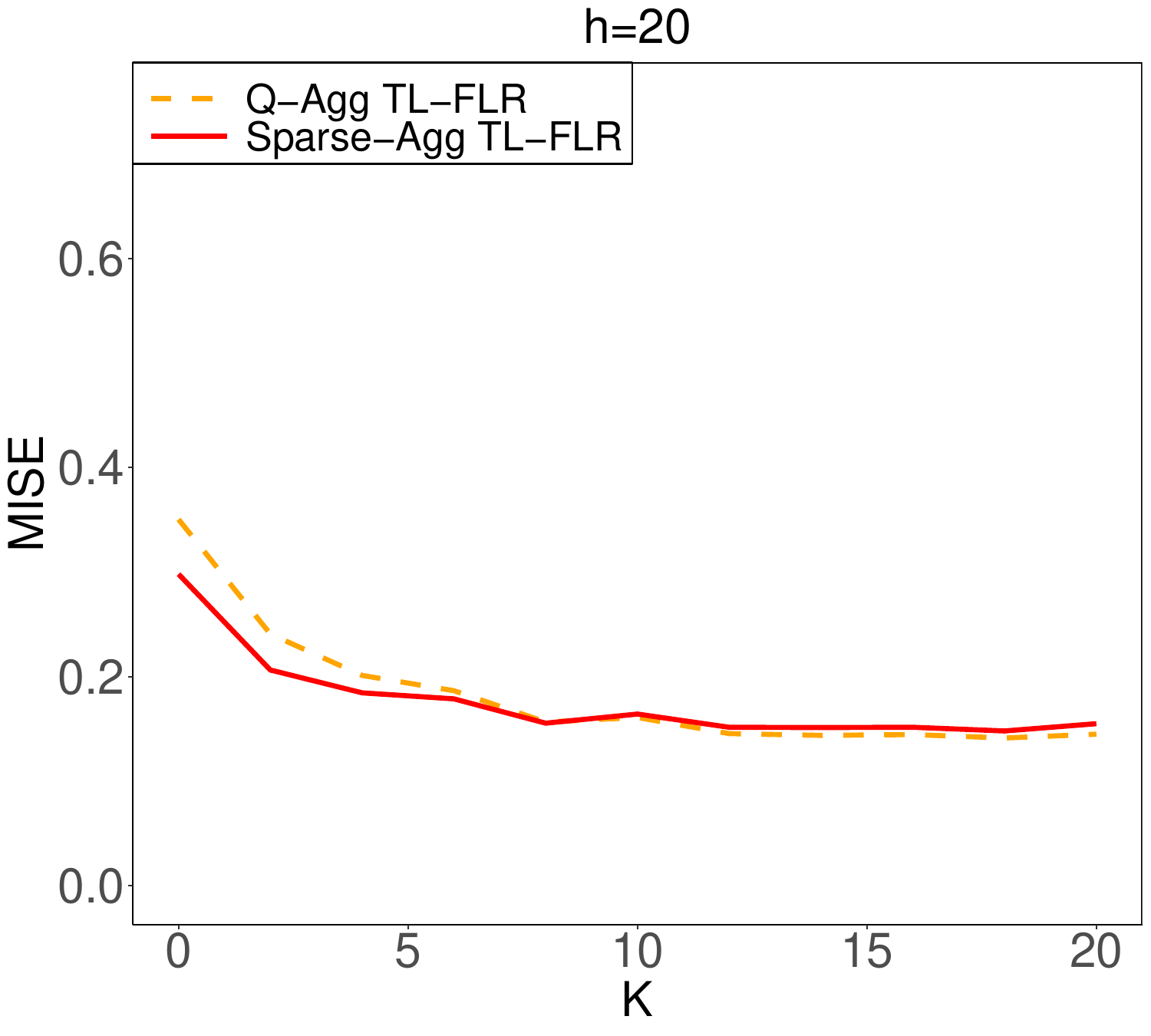} &
				\hspace{\thisgap}\includegraphics[width=\thiswidth]{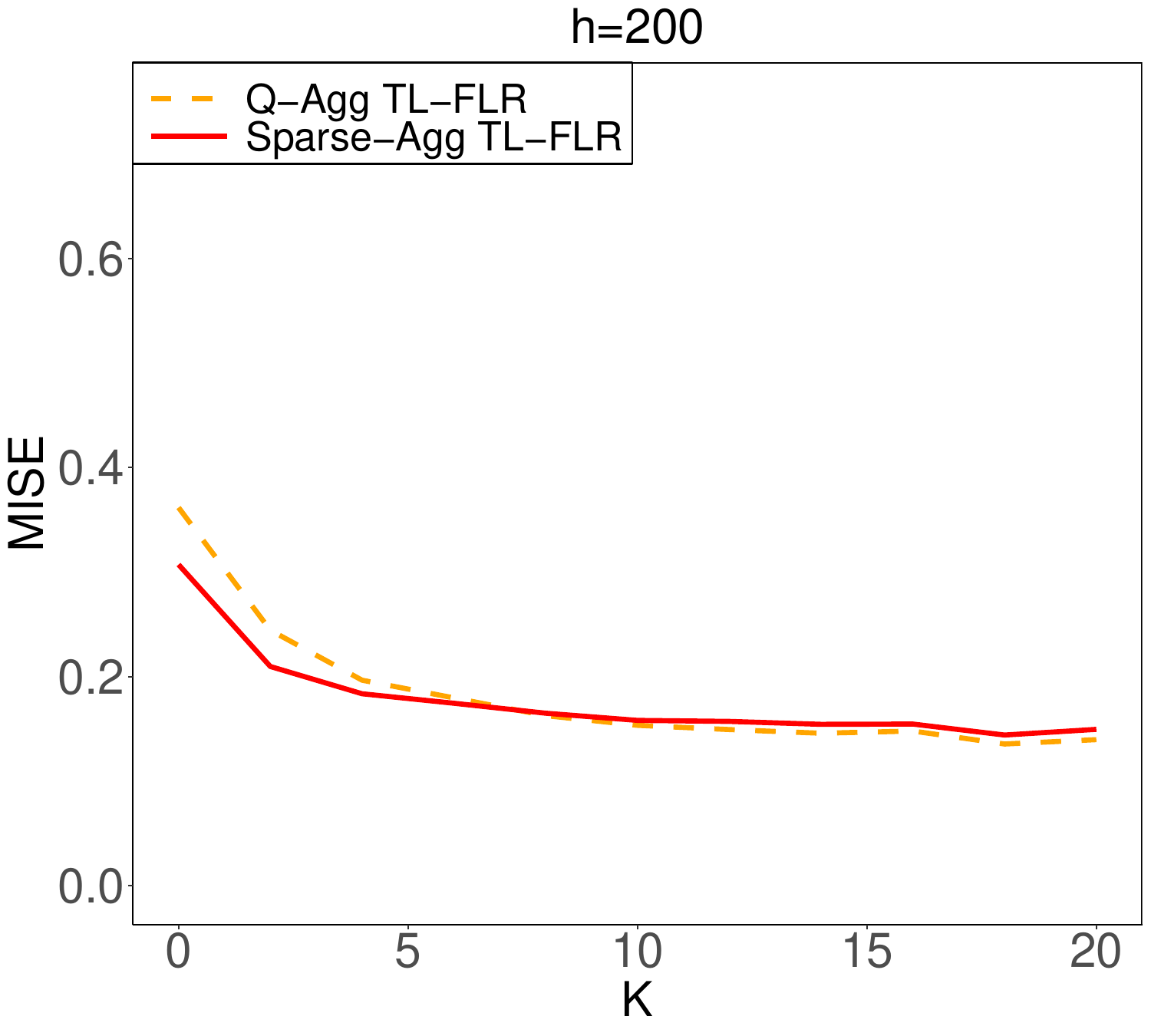} \\
				\hspace{\thisgap}\includegraphics[width=\thiswidth]{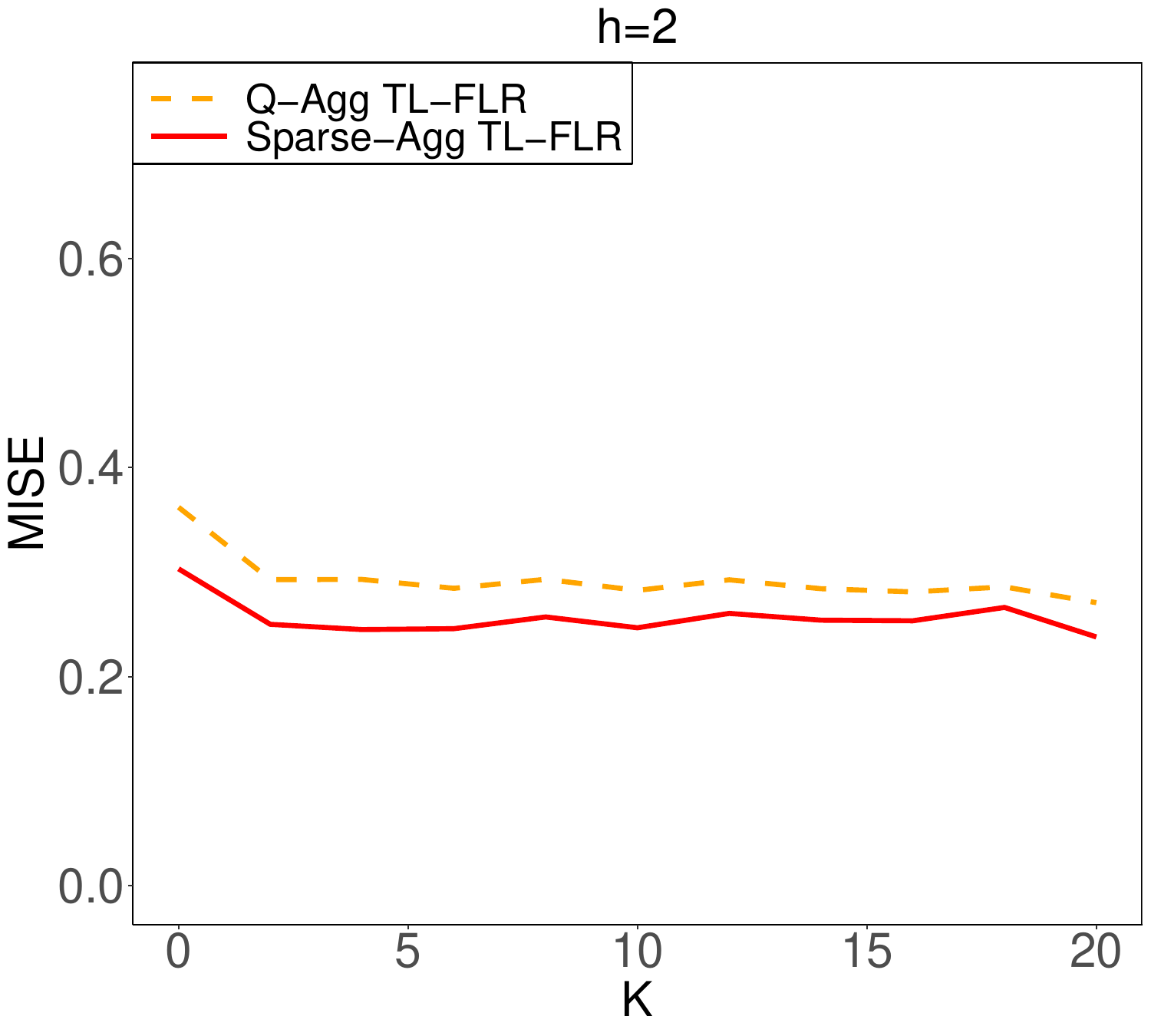} & 
				\hspace{\thisgap}\includegraphics[width=\thiswidth]{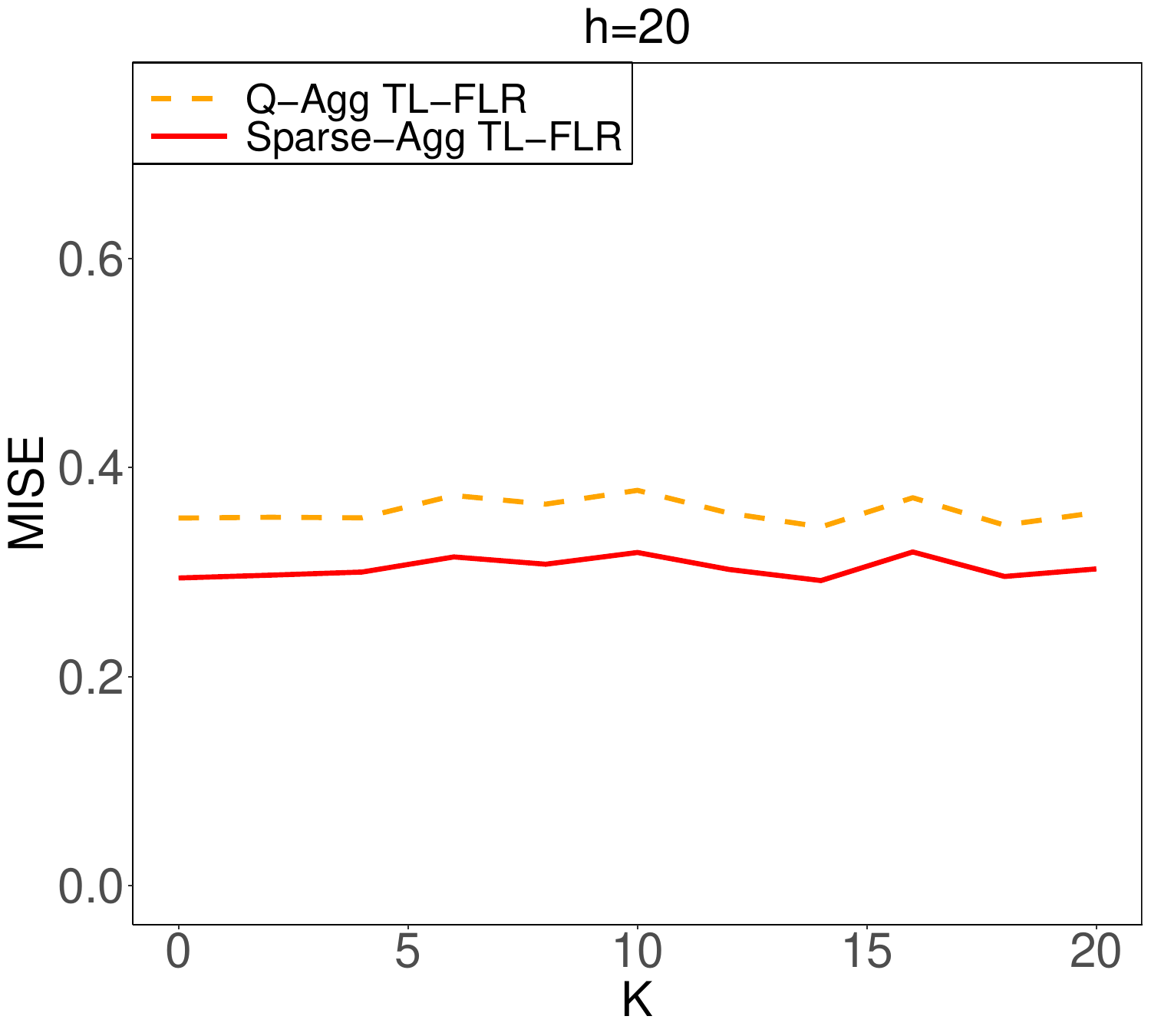} &
				\hspace{\thisgap}\includegraphics[width=\thiswidth]{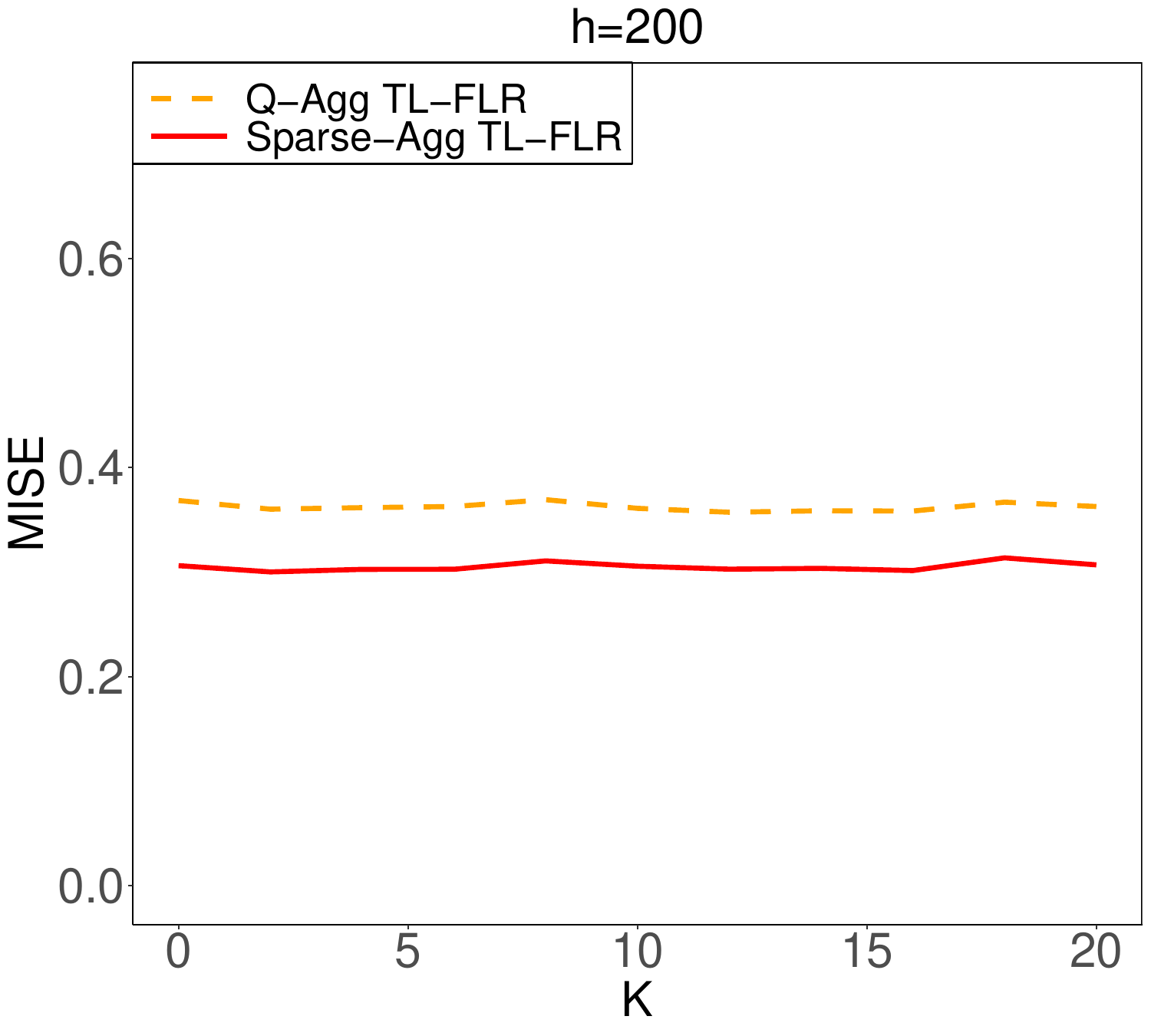} 
			\end{tabular}
			\caption{Estimation errors of the adaptive method with Q-aggregation or sparse aggregation over 500 repetitions. Top row: Model (II); middle row: Model (III); bottom row: Model (IV).}
			\label{fig:qagg-sagg}
		\end{figure}
		
		We demonstrate how to obtain the aggregate estimator by replacing sparse aggregation with Q-aggregation.
		Compute 
		\begin{align*}
			\hat \brho =  & \mathop{\arg\min}_{\brho \in \Lambda^{L+1}} \frac{1}{\bar{n}} \sum_{i \in \mathcal I_2} \left\{Y_i - \bar{Y}_2- \int_{\tdomain} \big(X_i(t) - \bar{X}_2(t)\big) \bigg(\sum_{l=0}^{L} \rho_l \hat b_l(t) \bigg) dt  \right\}^2  + \\ 
			& \quad\quad\quad\quad \frac{1}{\bar{n}} \sum_{l=0}^L \rho_l \sum_{i \in \mathcal I_2} \bigg\{Y_i - \bar{Y}_2 - \int_{\tdomain} \big(X_i(t) - \bar{X}_2(t) \big) \hat b_l(t)dt  \bigg\}^2 + \frac{2 \tau_{\brho} }{\bar{n}} \sum_{l=0}^{L} \rho_l \log(\rho_l),  
		\end{align*}
		where $\Lambda^{L+1} = \{\brho \in \real^{L+1}: \rho_l \ge 0, \sum_{l=0}^L \rho_l =1 \}$ and $\tau_{\brho} >0$.
		Denote the corresponding aggregate estimator by $\hat b_{qagg}(t) = \sum_{l=0}^L \hat \rho_l \hat b_l(t)$.
		For Q-aggregation, we tune the parameter $\tau_{\brho}$ through 5-fold cross-validation. 
		
		As depicted in Figure \ref{fig:qagg-sagg}, both aggregation methods yield comparable performance under Model (II). By contrast, the adaptive algorithm with sparse aggregation outperforms the one with Q-aggregation under Models (III) and (IV). In light of its computational advantages, we recommend using sparse aggregation.
		
		\subsection{Results under Violation of Assumptions}
		
		In this section, we present the results when the sub-Gaussianity in Assumption \ref{assump:xdist} is violated. Specifically,  We consider $X_i(t) = \sum_{k=1}^{50} \sqrt{\lambda_k} Z_{ik} \phi_k(t)$, $i=1, \dots, n$, where $Z_{ik} \stackrel{i.i.d.}{\sim} \sqrt{3/5}t_5$ and all other parameters are consistent with those in Section \ref{sec:sim}, with $t_5$ representing the $t$-distribution with 5 degrees of freedom. To avoid confusion, we refer to the corresponding models under the new data-generating mechanism of the target data as Models (I+), (II+), (III+) and (IV+), respectively.
		
		As shown in Figures \ref{fig:model-I+} and \ref{fig:adap+}, the results exhibit similar patterns to those observed in Section \ref{sec:sim}, despite the violation of the sub-Gaussian assumption for the target data. This demonstrates the robustness of the proposed method.
		
		\begin{figure}[!h]
			\centering
			\newcommand{\thiswidth}{0.23\linewidth}
			\newcommand{\thisgap}{0mm}
			\begin{tabular}{cccc}
				\hspace{\thisgap}\includegraphics[width=\thiswidth]{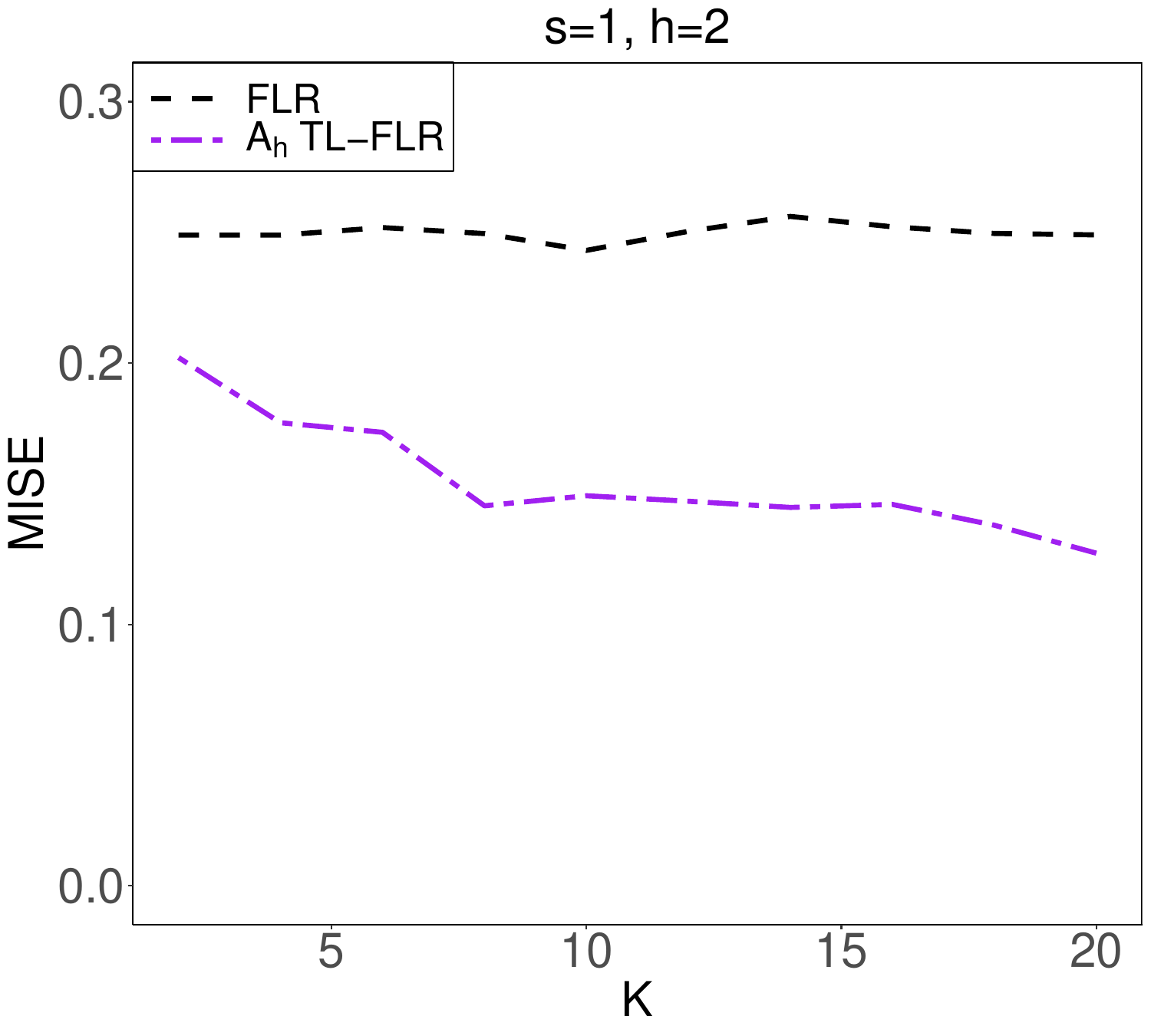} &
				\hspace{\thisgap}\includegraphics[width=\thiswidth]{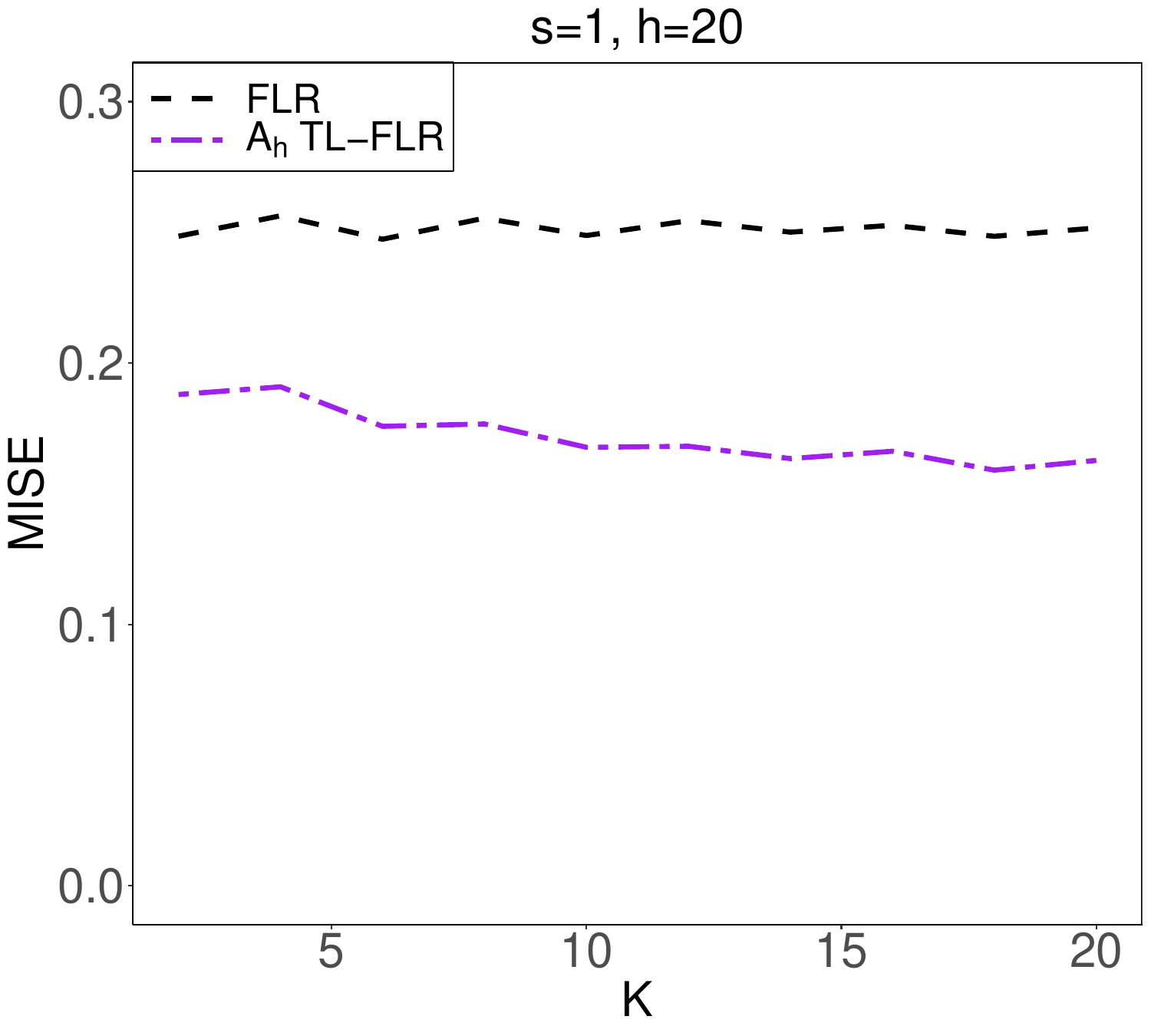} &
				\hspace{\thisgap}\includegraphics[width=\thiswidth]{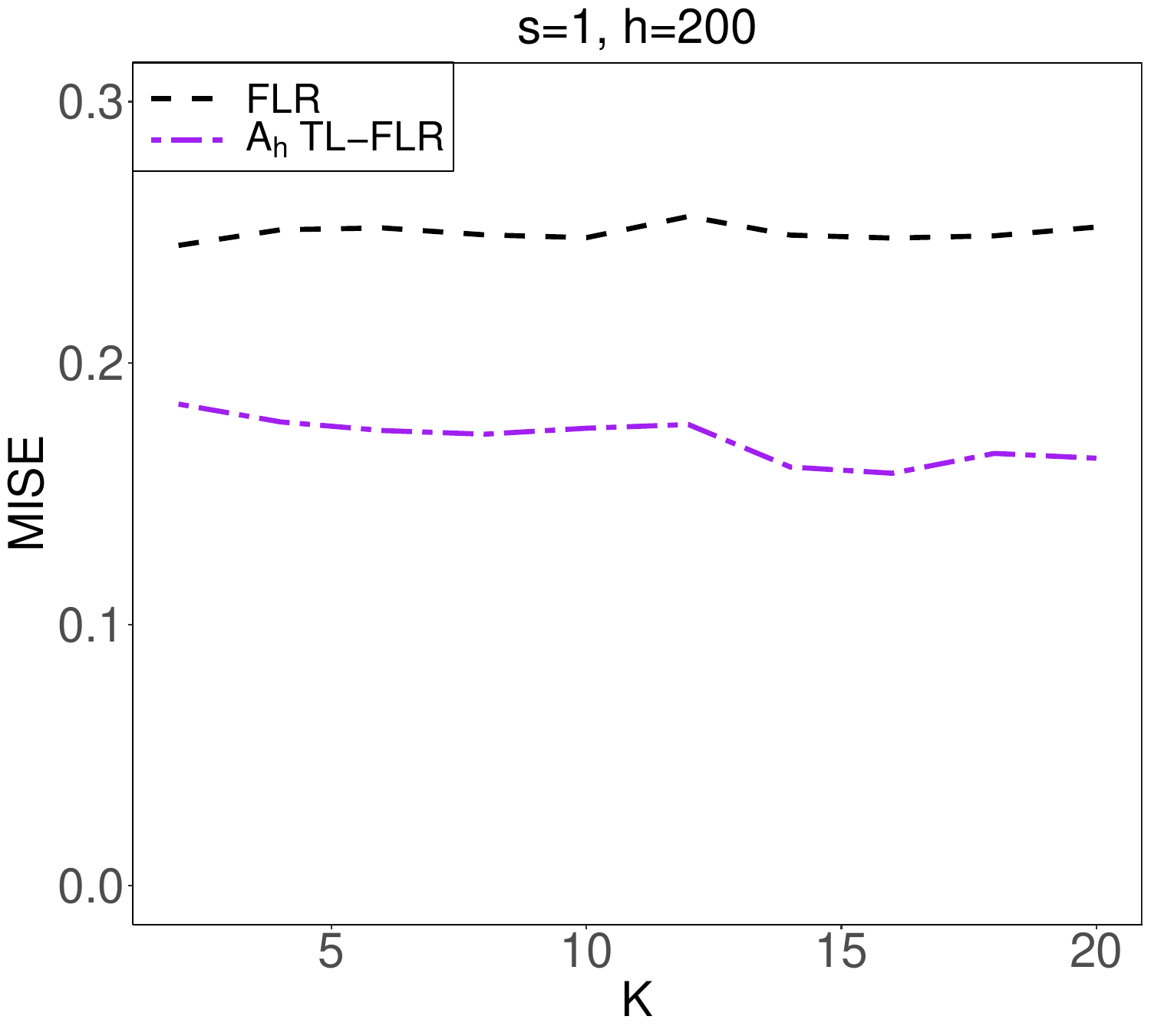}  &
				\hspace{\thisgap}\includegraphics[width=\thiswidth]{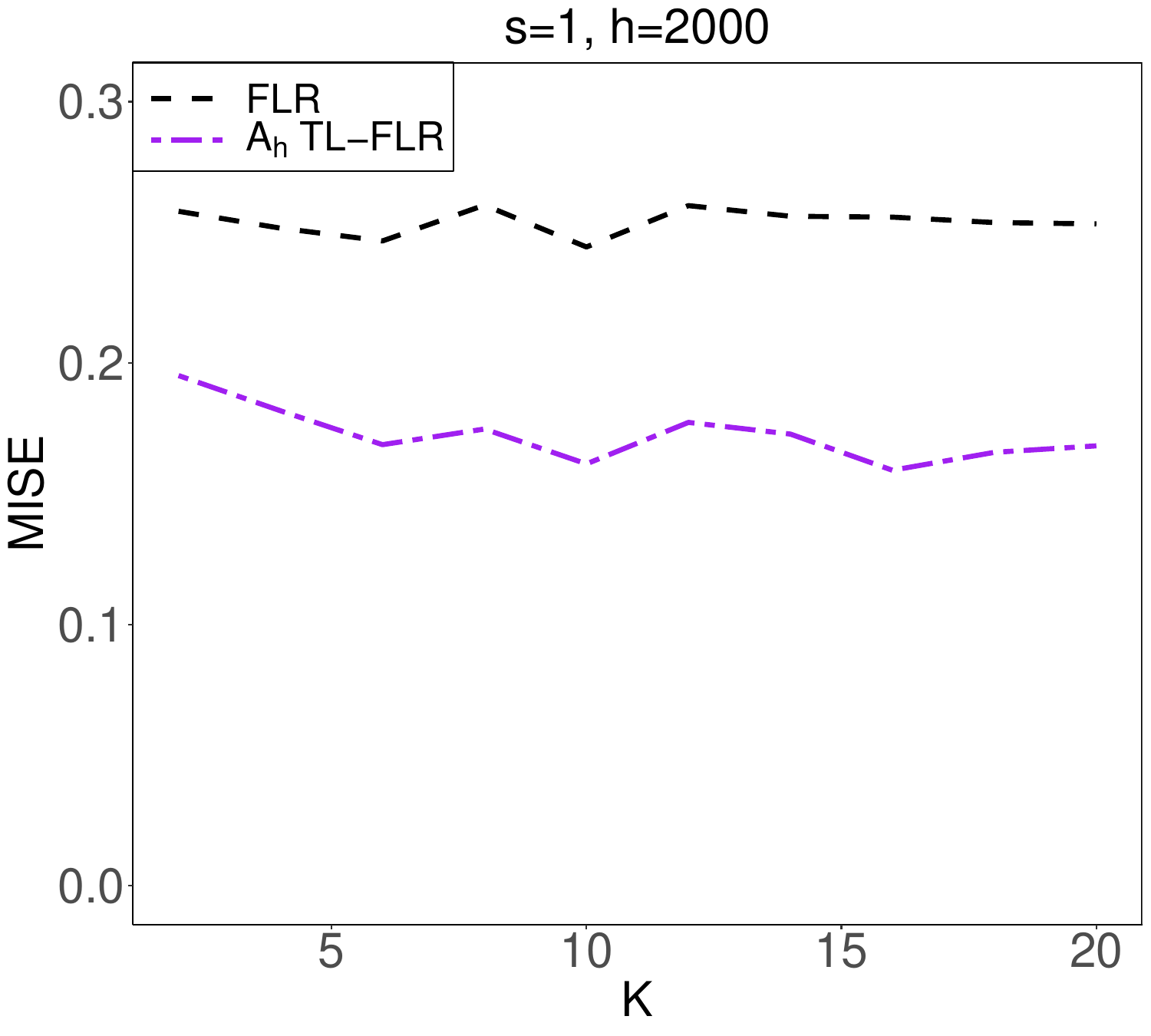}  \\
				\hspace{\thisgap}\includegraphics[width=\thiswidth]{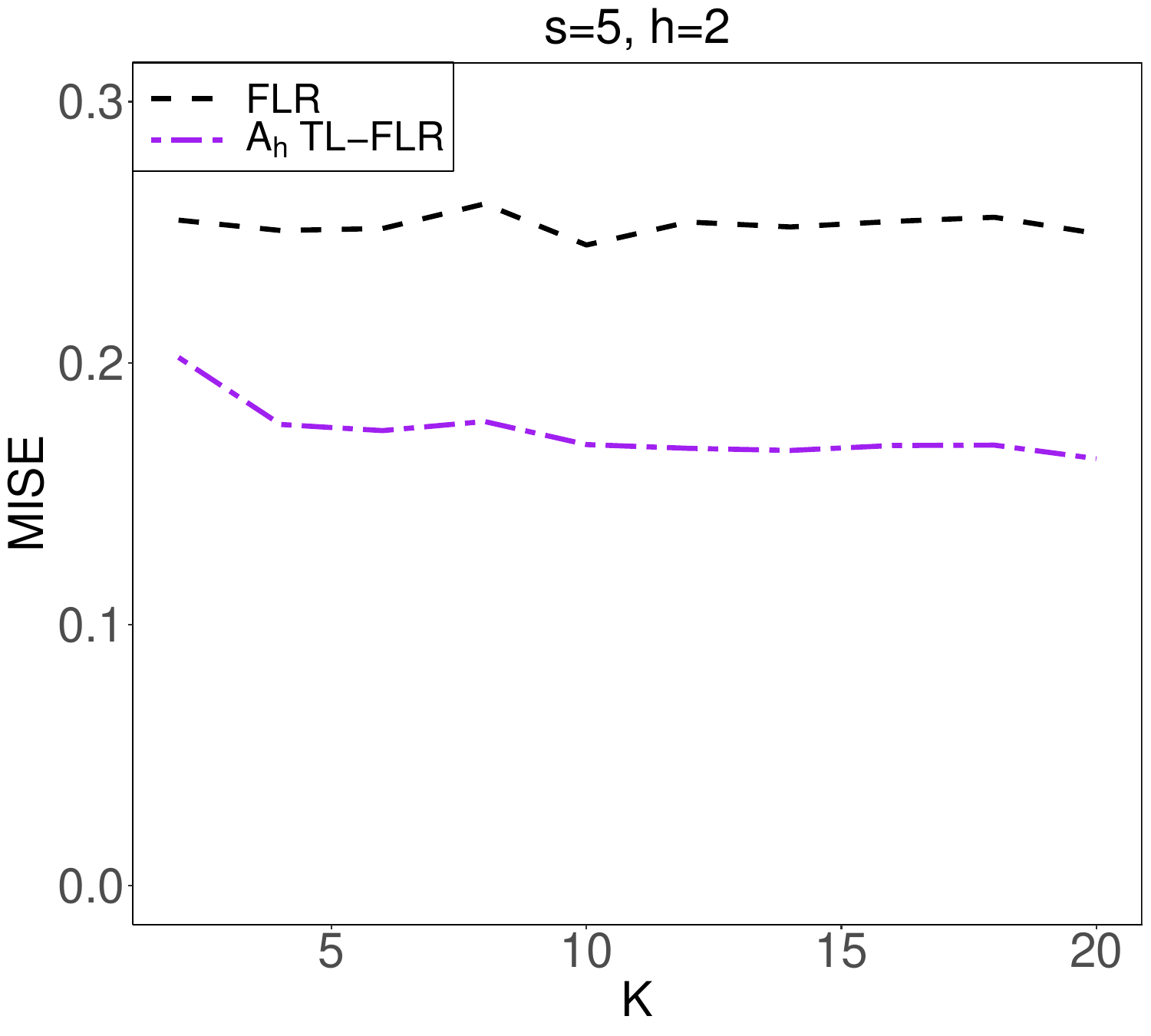} &
				\hspace{\thisgap}\includegraphics[width=\thiswidth]{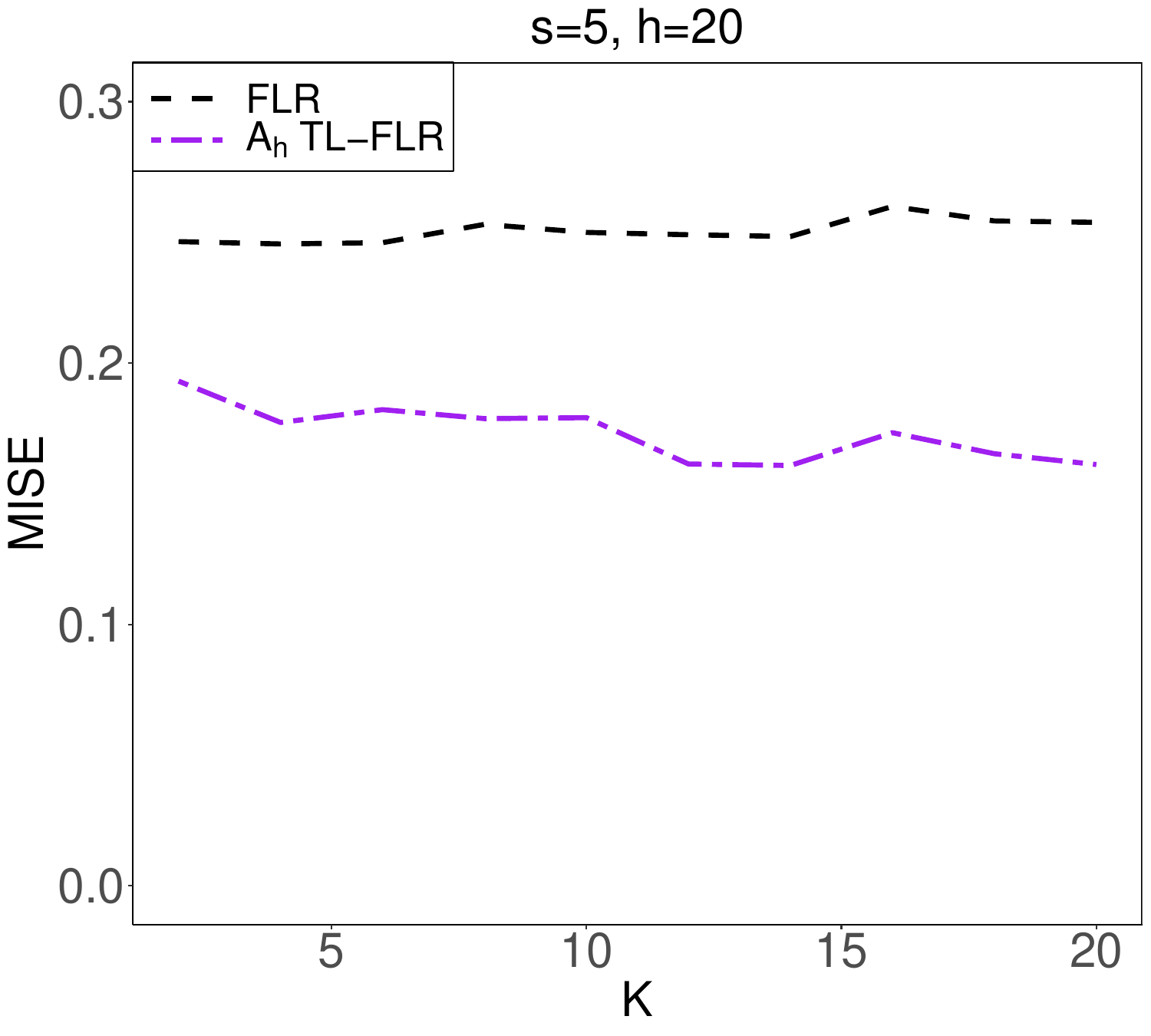} &
				\hspace{\thisgap}\includegraphics[width=\thiswidth]{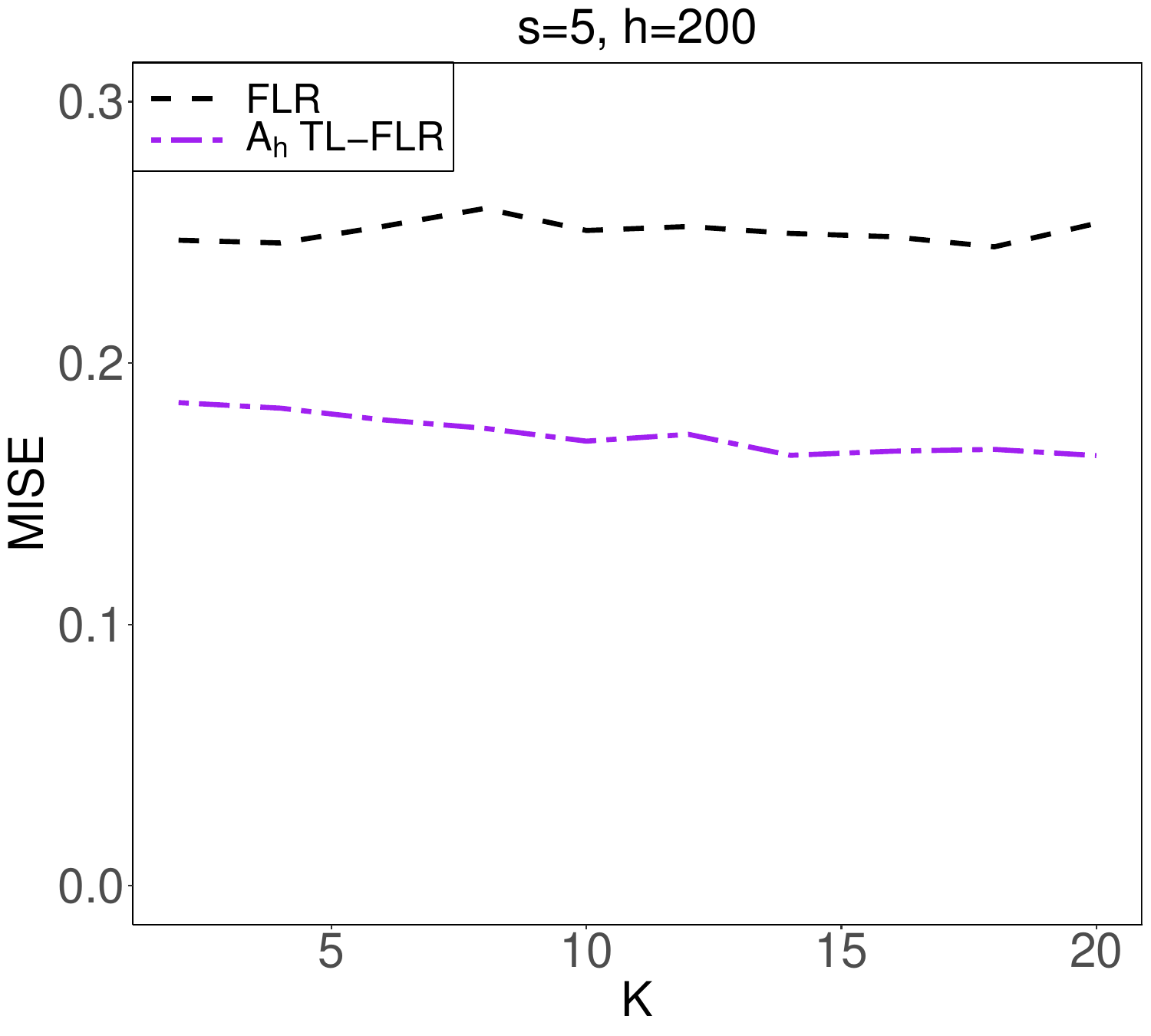} &
				\hspace{\thisgap}\includegraphics[width=\thiswidth]{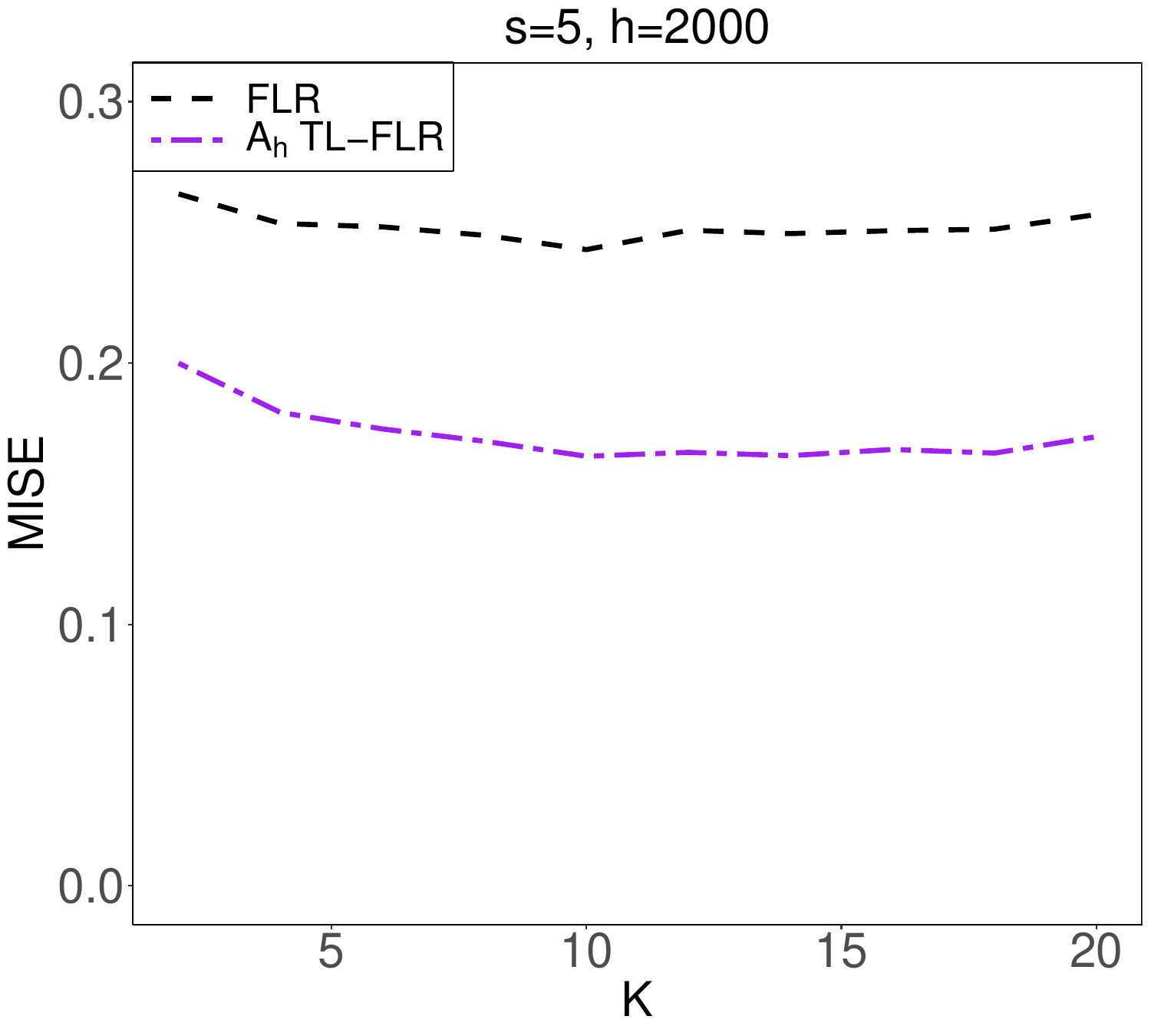}  \\
				\hspace{\thisgap}\includegraphics[width=\thiswidth]{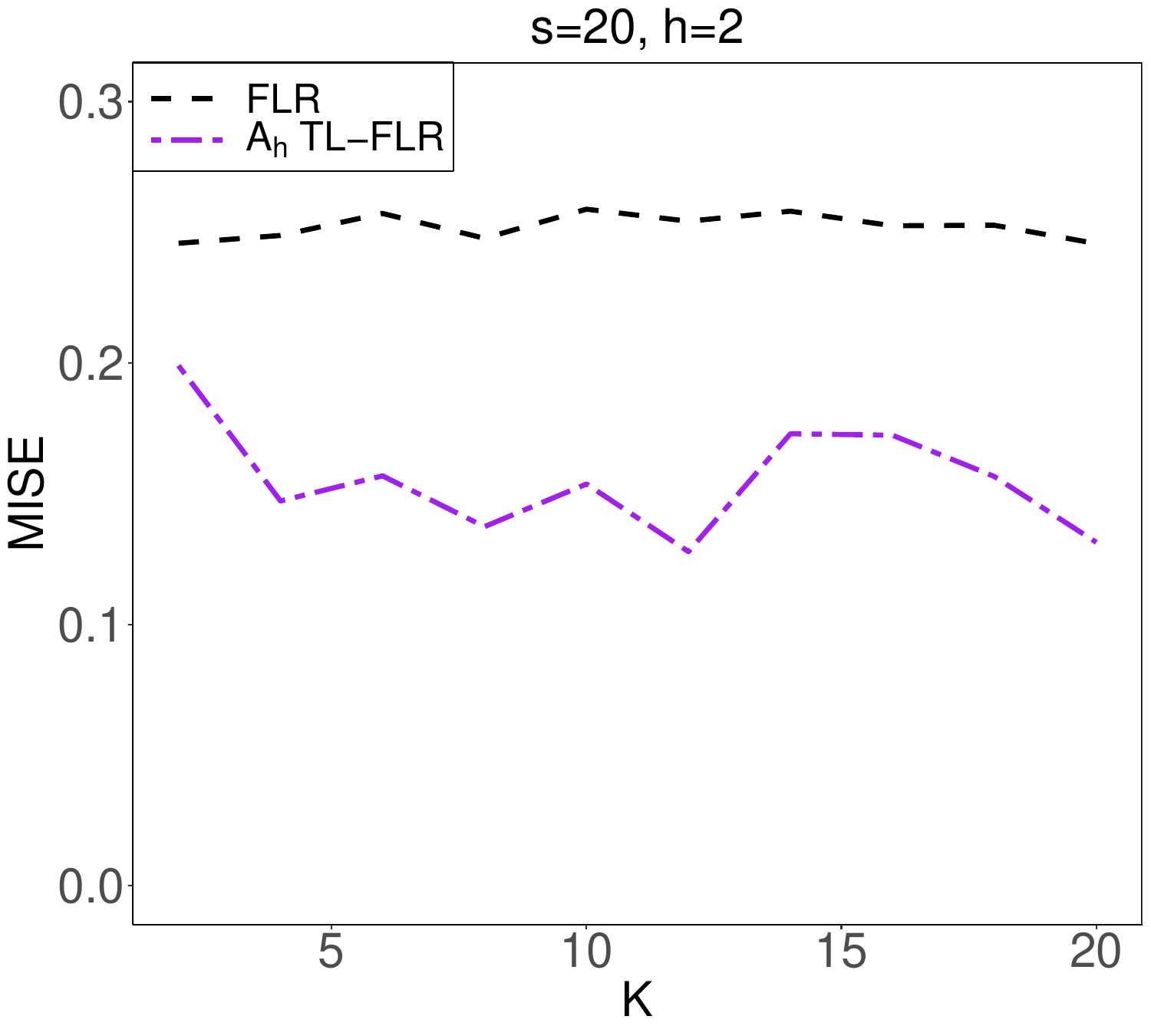} &
				\hspace{\thisgap}\includegraphics[width=\thiswidth]{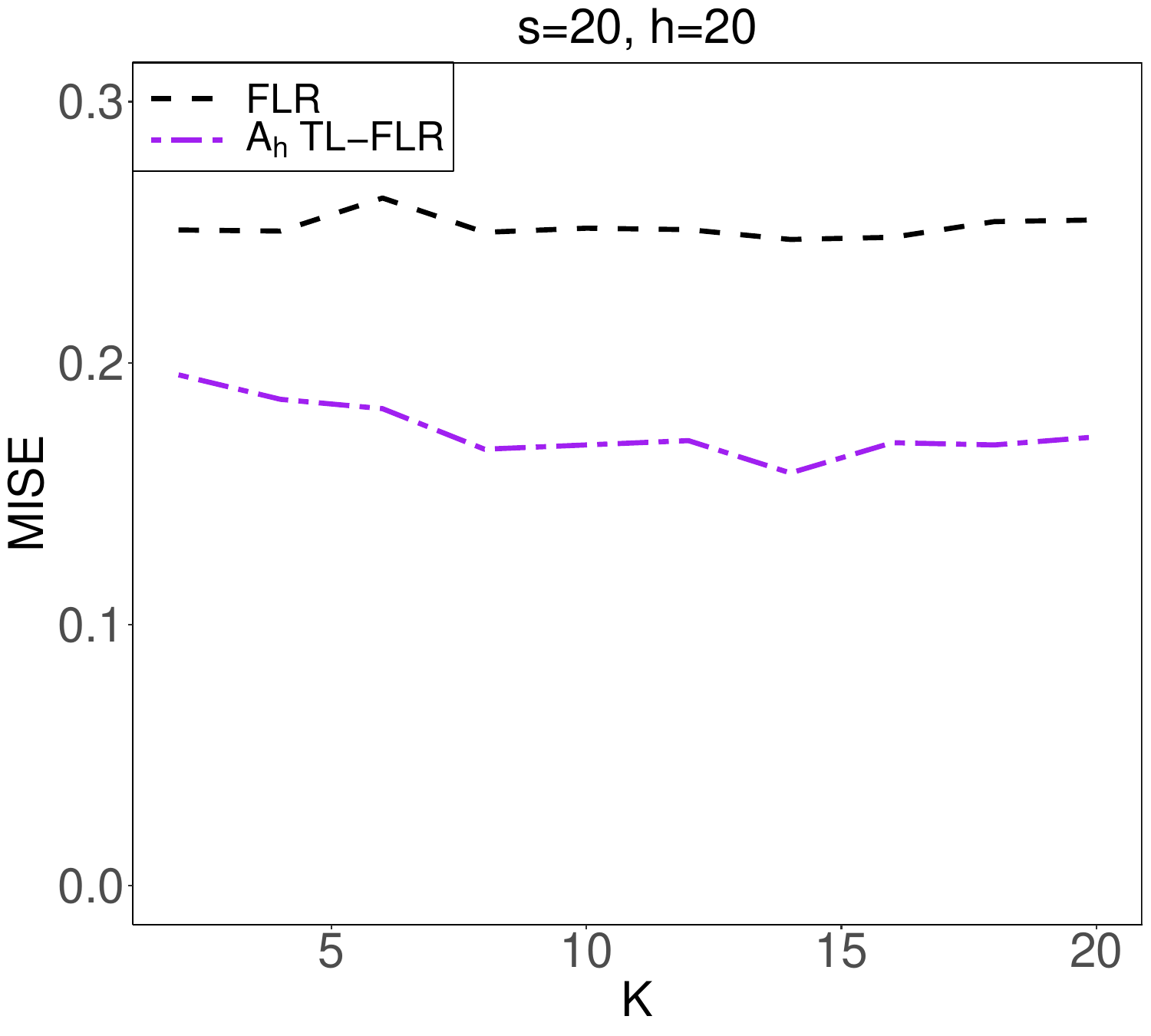} &
				\hspace{\thisgap}\includegraphics[width=\thiswidth]{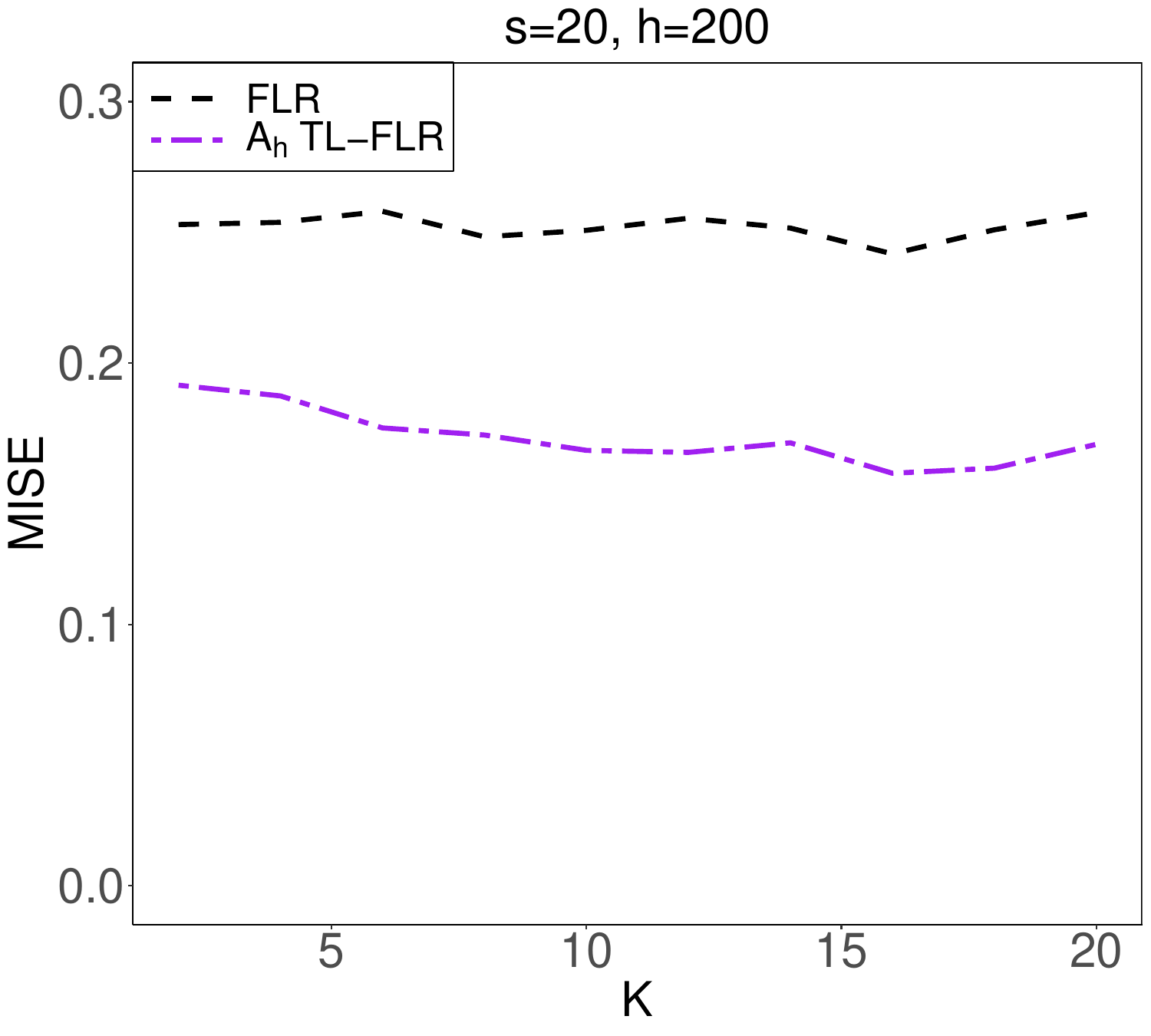} &
				\hspace{\thisgap}\includegraphics[width=\thiswidth]{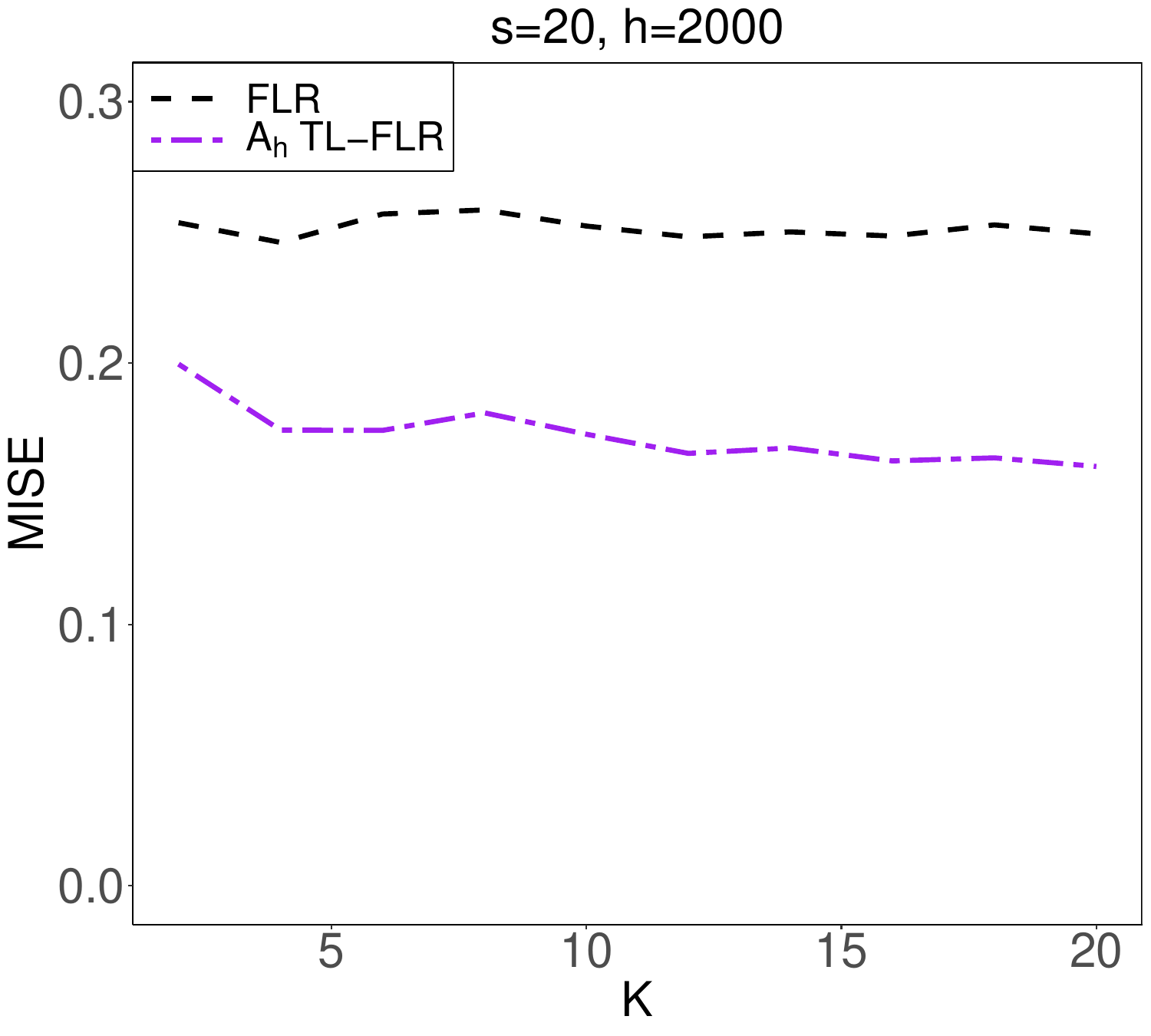}	\\	
				\hspace{\thisgap}\includegraphics[width=\thiswidth]{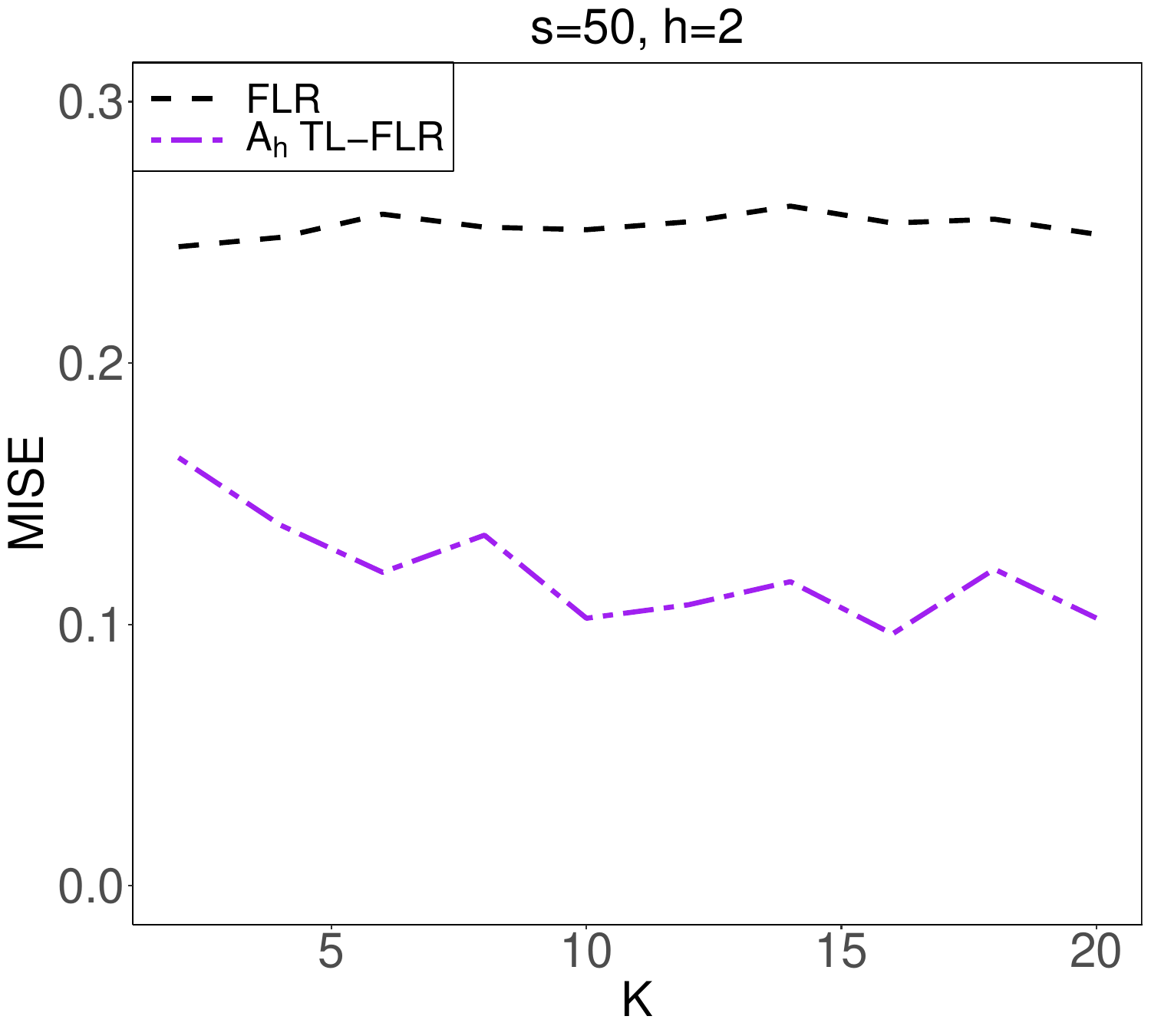} &
				\hspace{\thisgap}\includegraphics[width=\thiswidth]{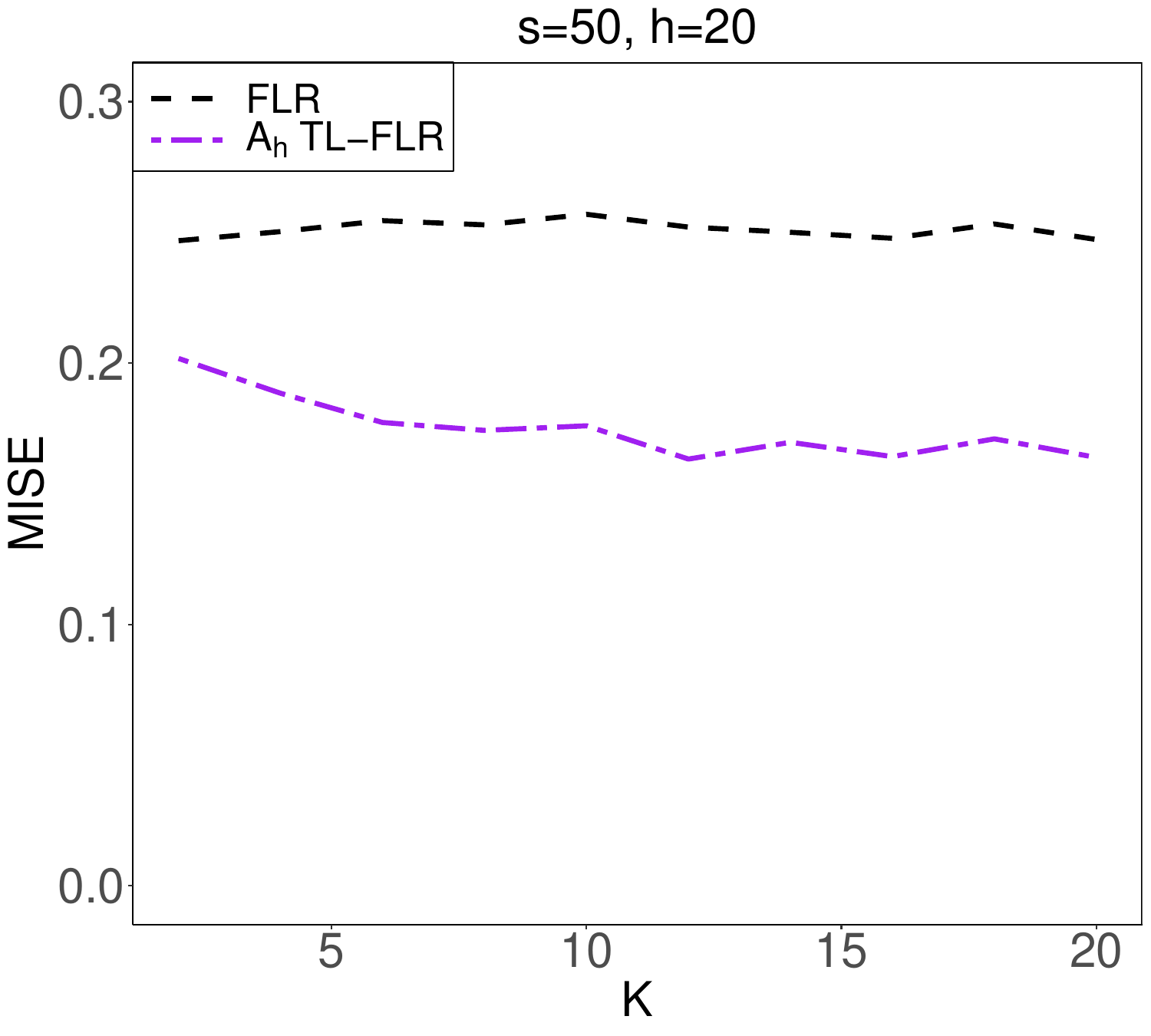} &
				\hspace{\thisgap}\includegraphics[width=\thiswidth]{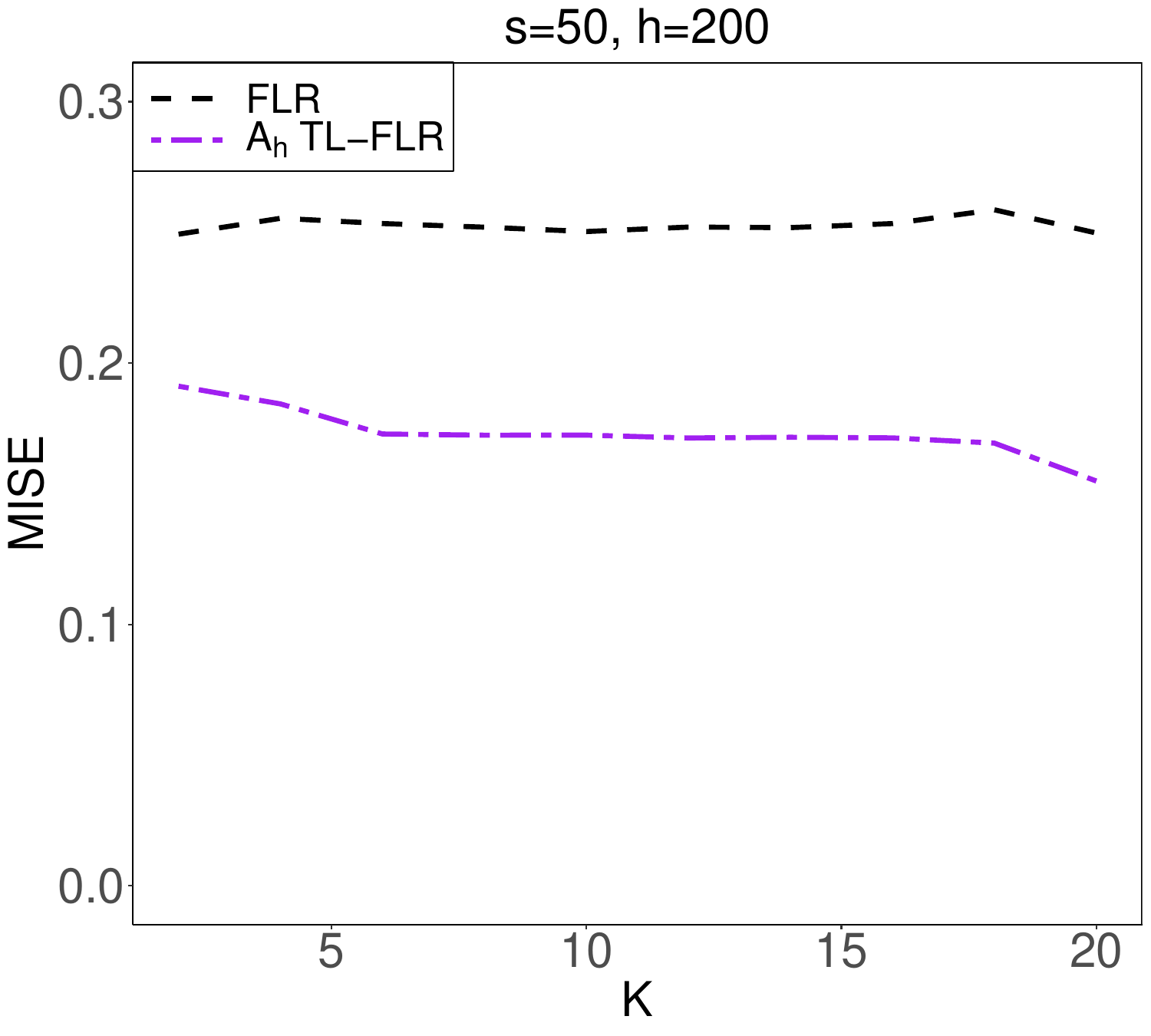} &
				\hspace{\thisgap}\includegraphics[width=\thiswidth]{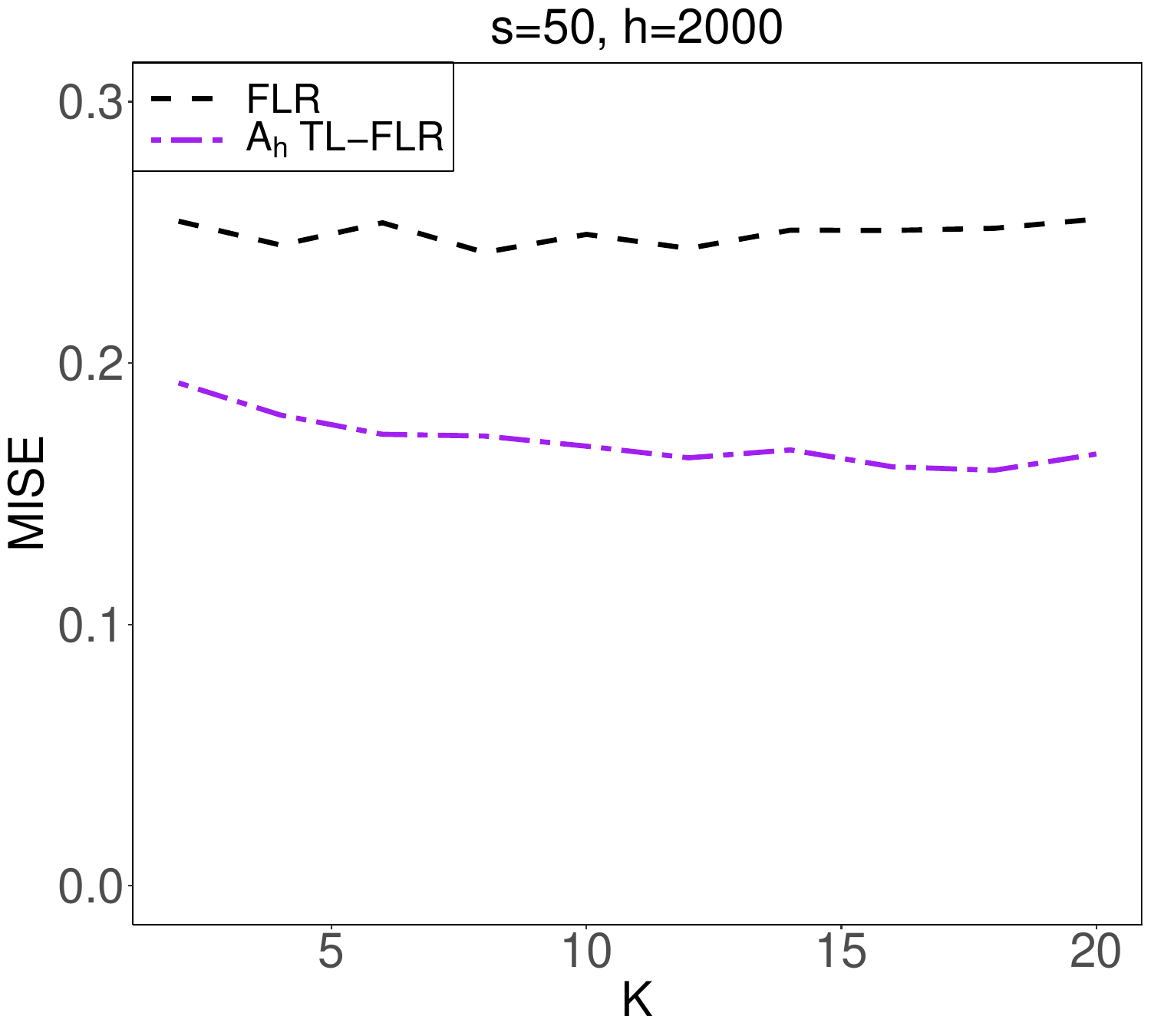}	
			\end{tabular}
			\caption{Estimation errors of different methods under Model (I+) over 1000 repetitions.}
			\label{fig:model-I+}
		\end{figure}

		\begin{figure}[!h]
			\centering
			\newcommand{\thiswidth}{0.3\linewidth}
			\newcommand{\thisgap}{0\linewidth}
			\begin{tabular}{ccc}
				\hspace{\thisgap}\includegraphics[width=\thiswidth]{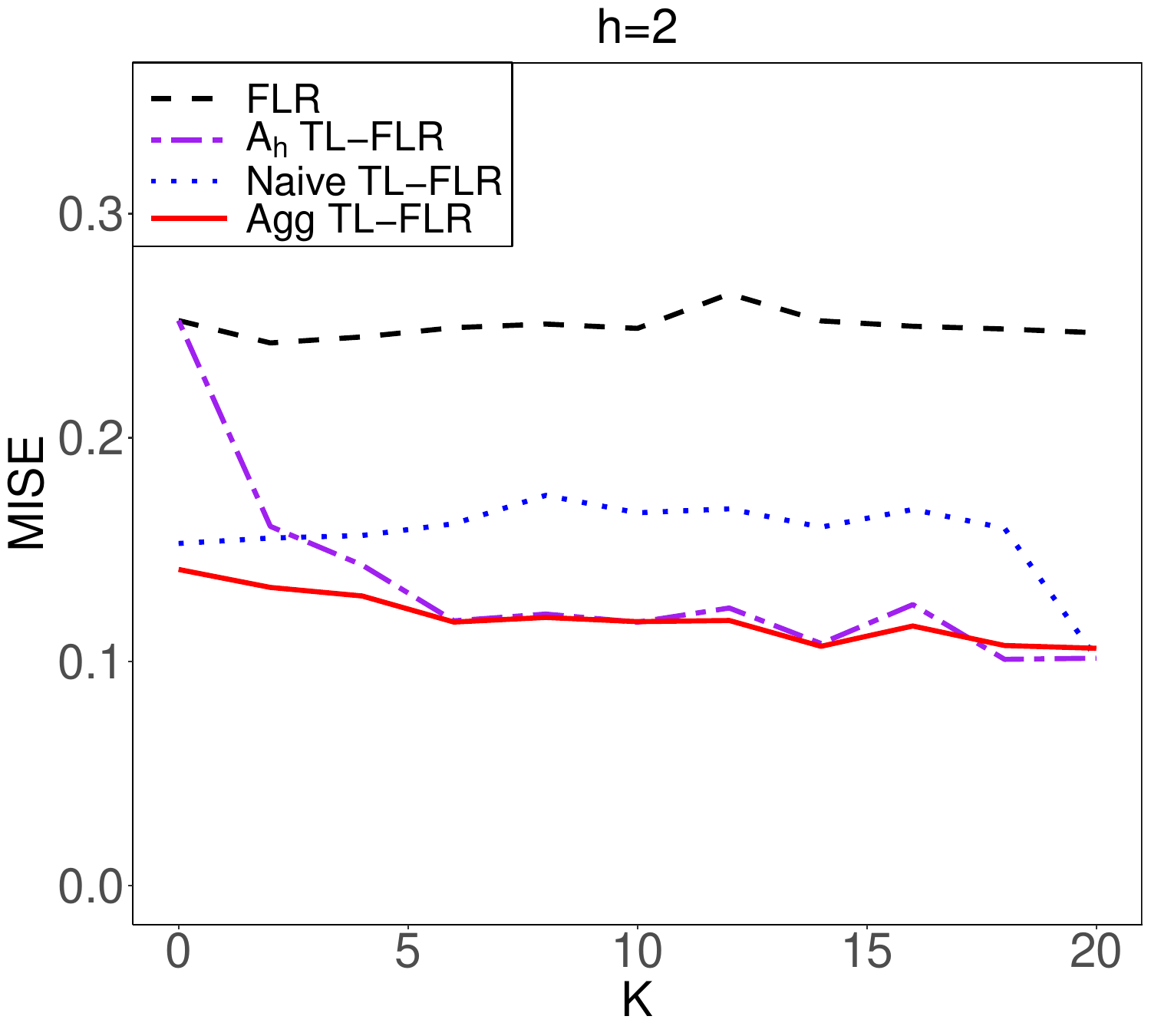} &
				\hspace{\thisgap}\includegraphics[width=\thiswidth]{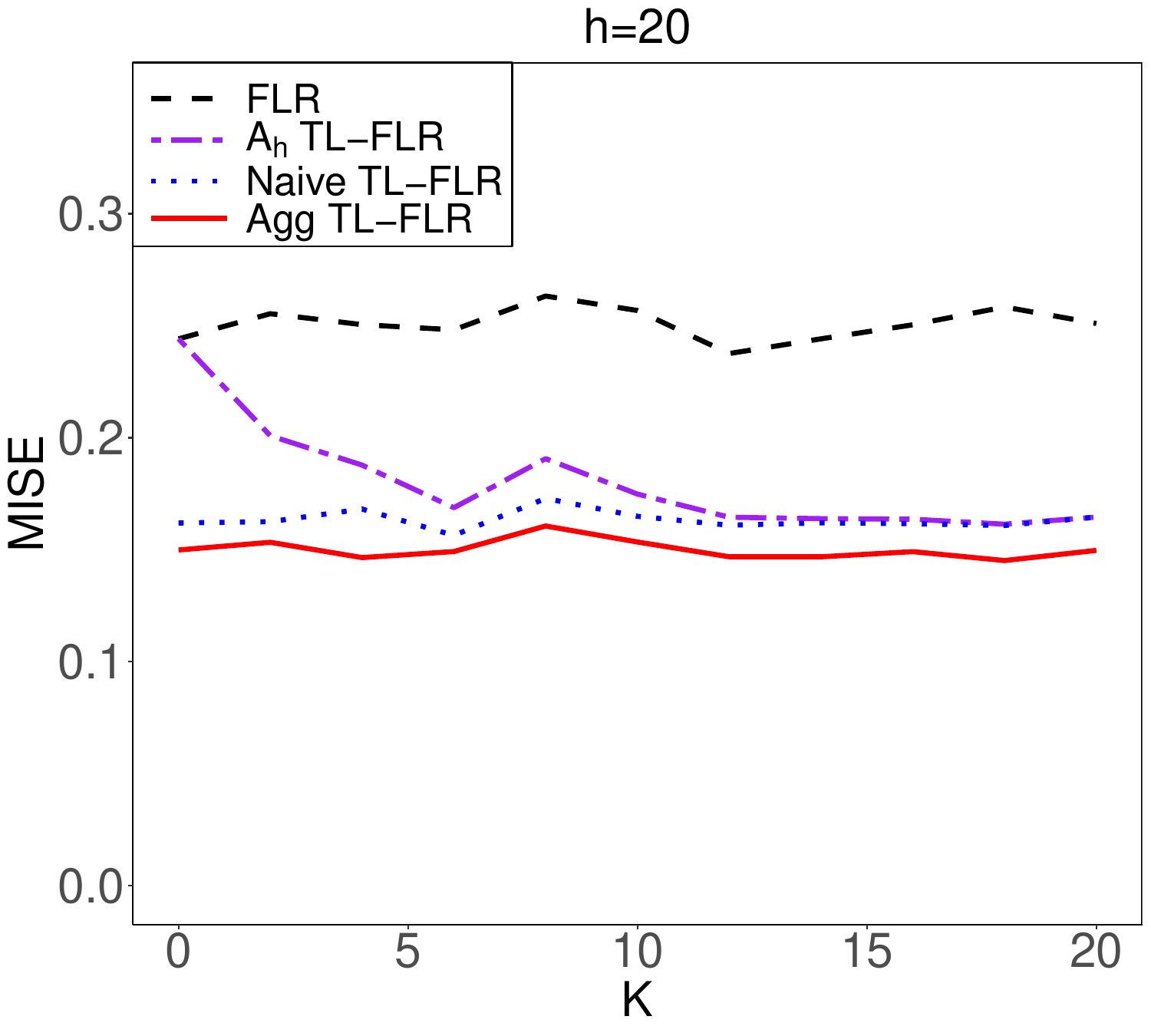} &
				\hspace{\thisgap}\includegraphics[width=\thiswidth]{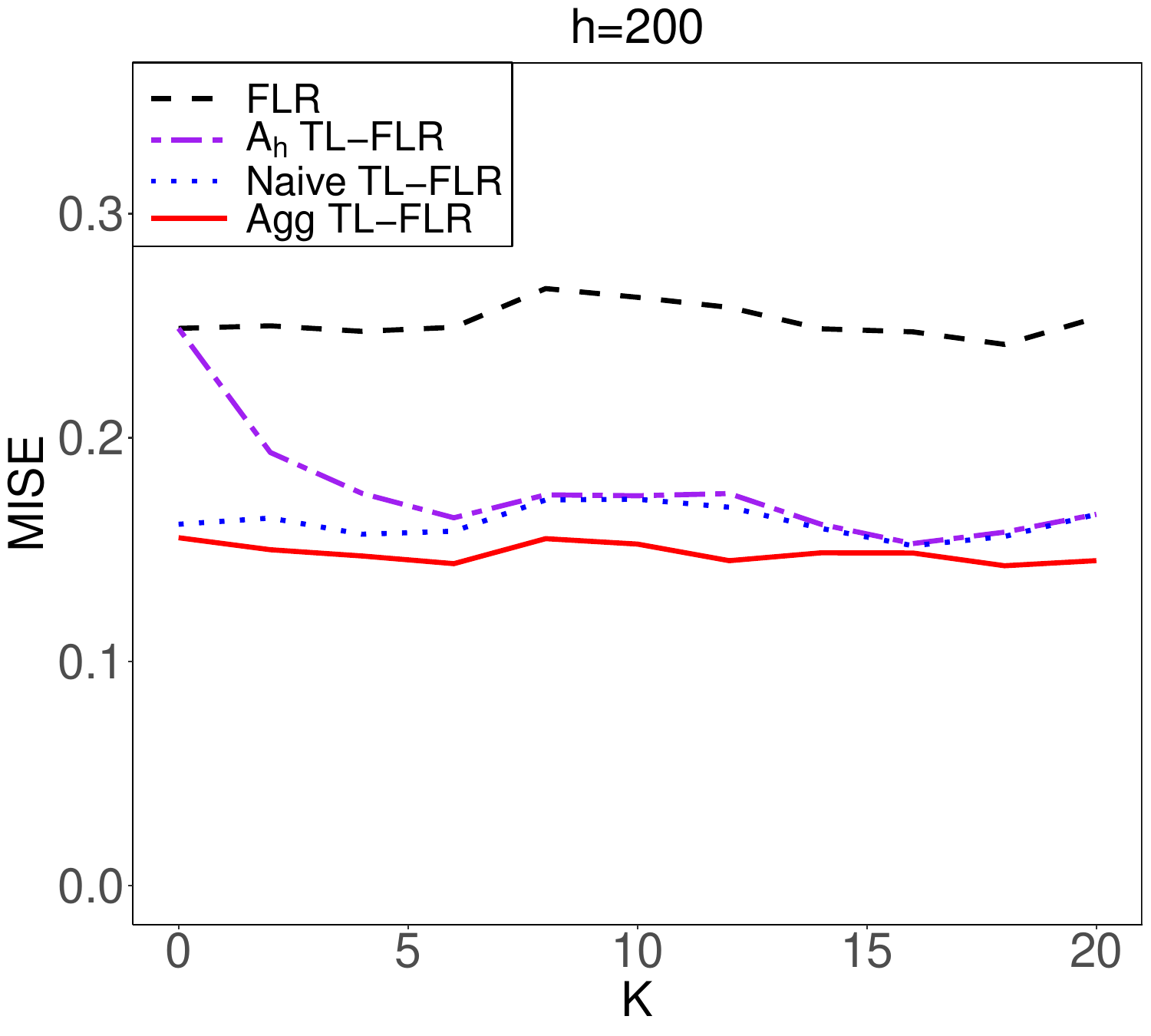} \\
				\hspace{\thisgap}\includegraphics[width=\thiswidth]{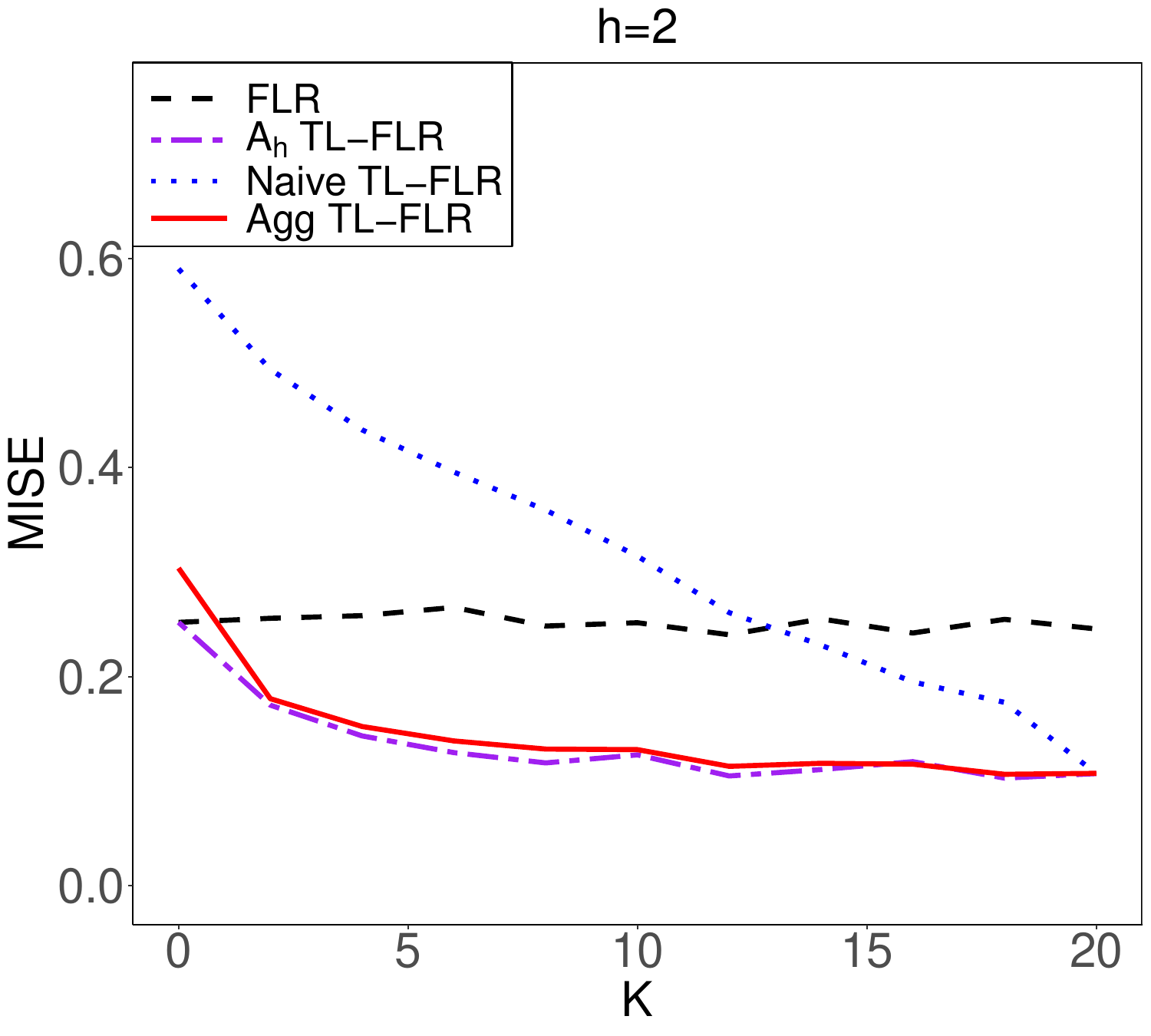} &
				\hspace{\thisgap}\includegraphics[width=\thiswidth]{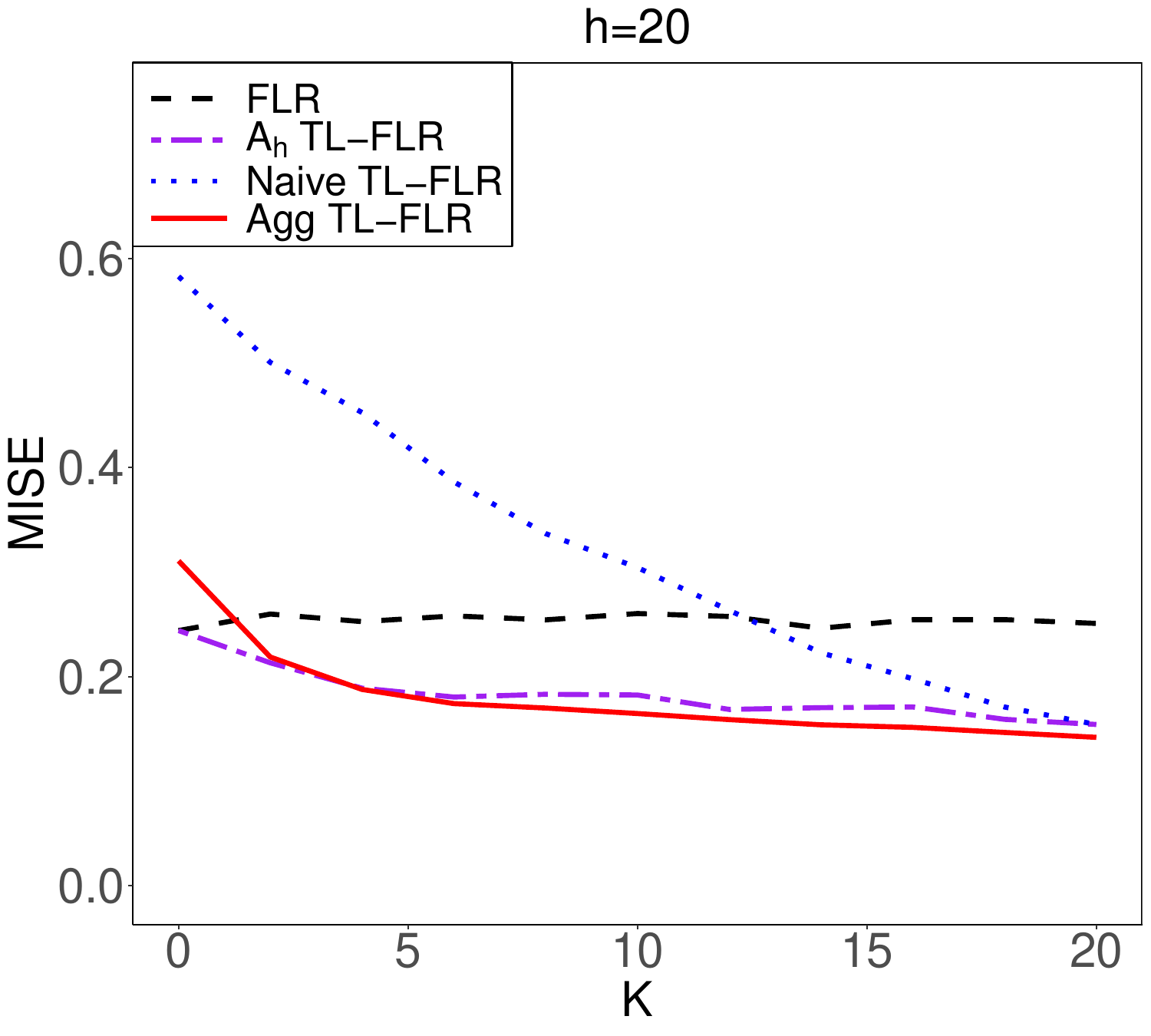} &
				\hspace{\thisgap}\includegraphics[width=\thiswidth]{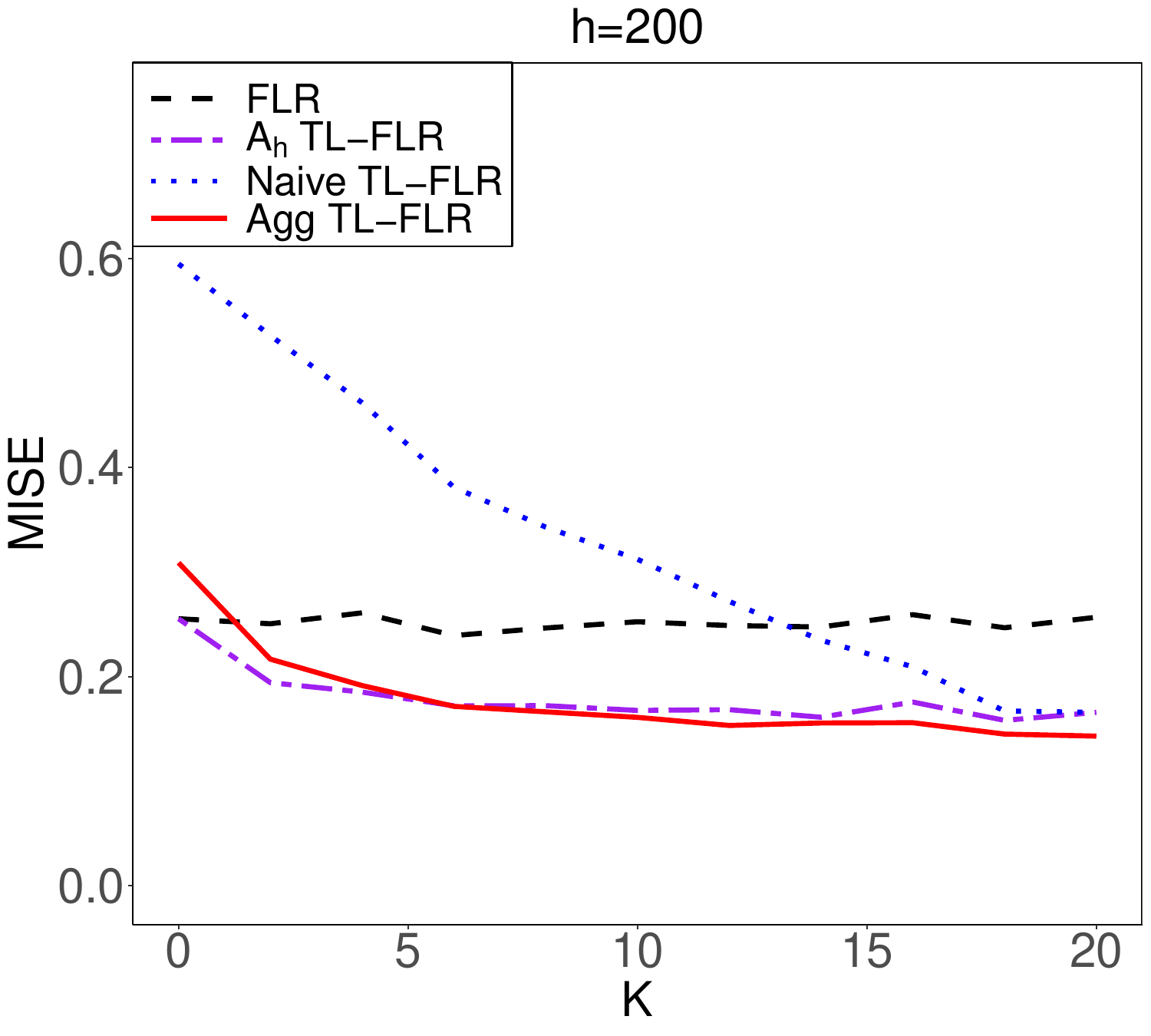} \\
				\hspace{\thisgap}\includegraphics[width=\thiswidth]{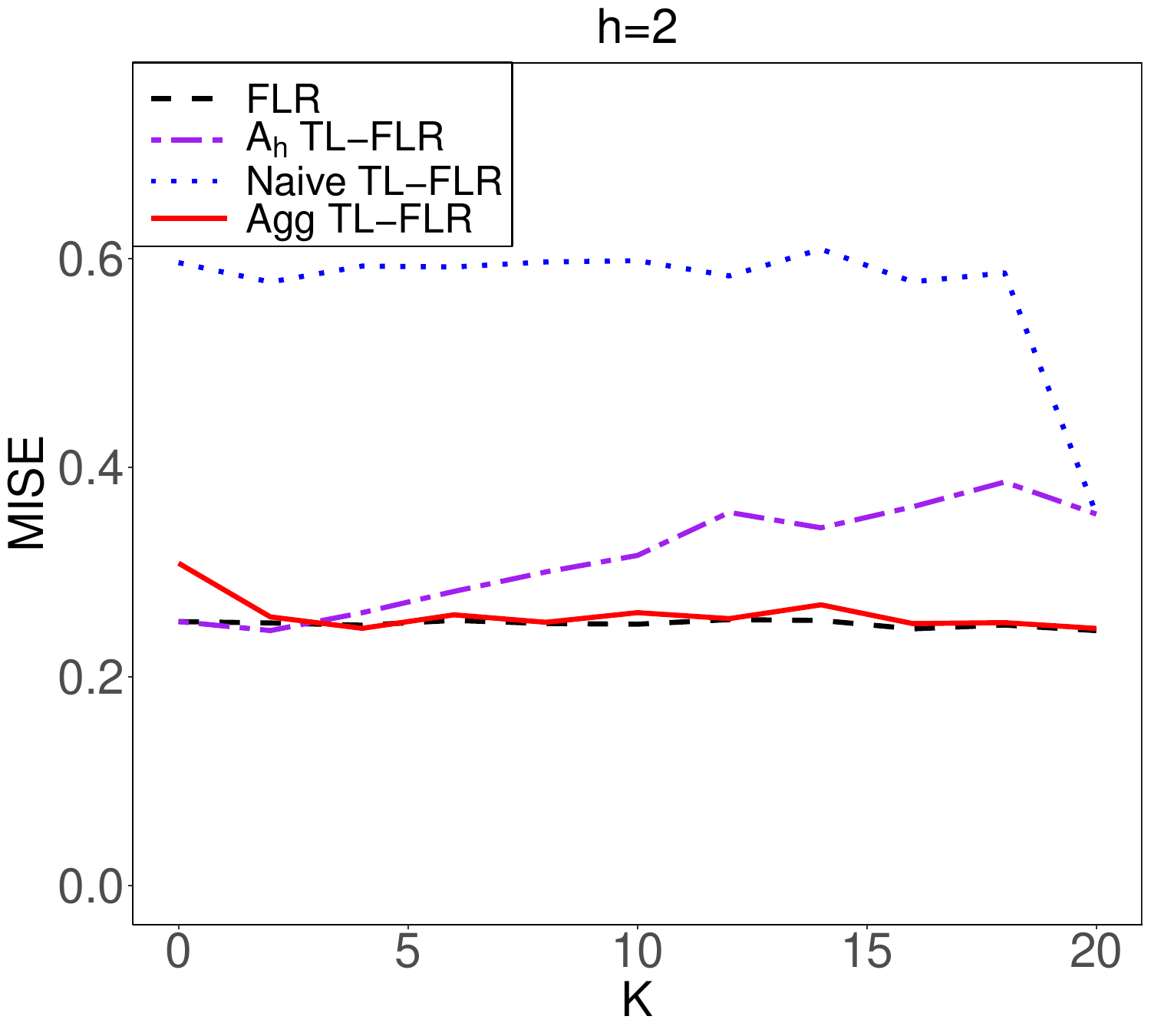} &
				\hspace{\thisgap}\includegraphics[width=\thiswidth]{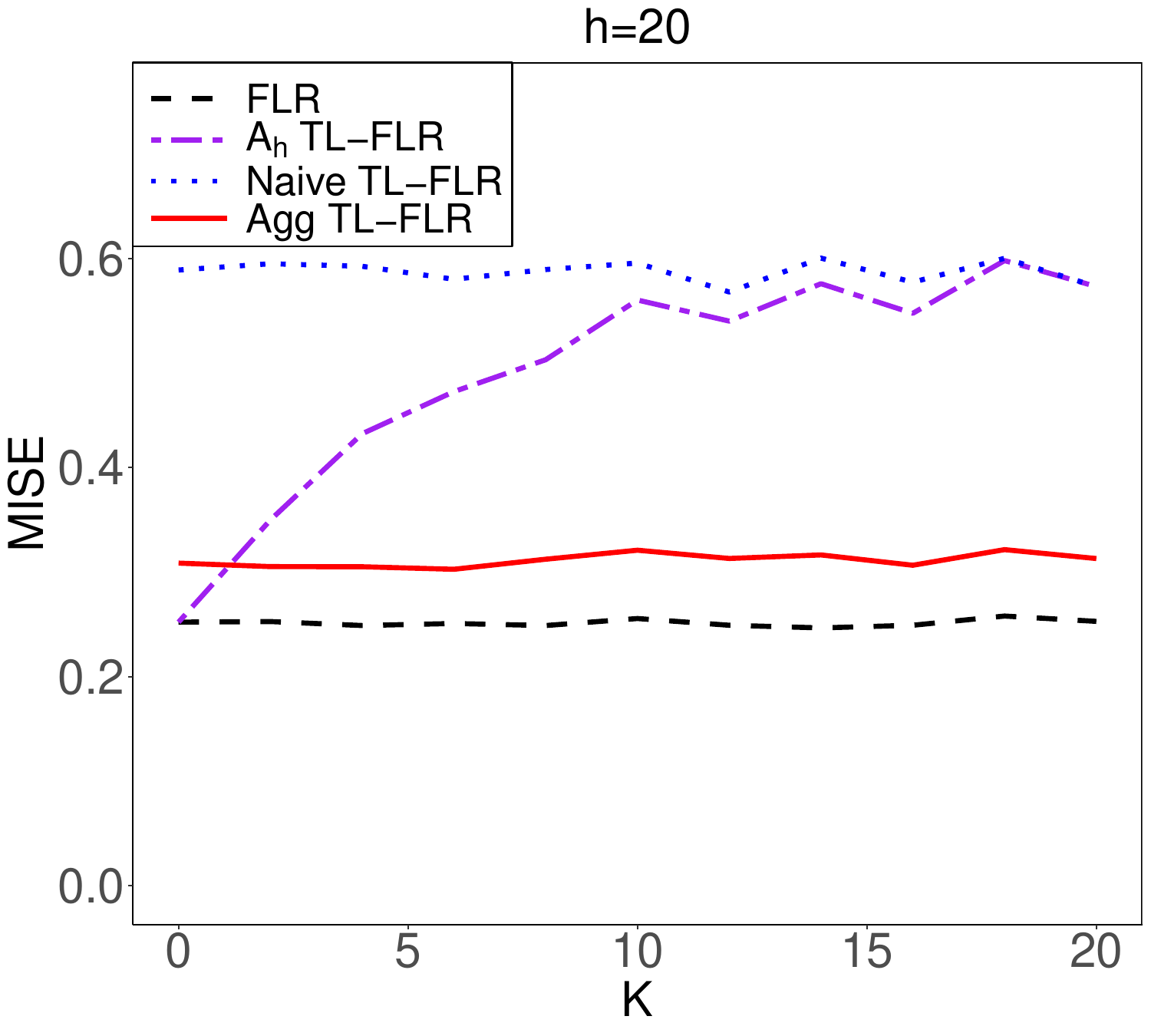} &
				\hspace{\thisgap}\includegraphics[width=\thiswidth]{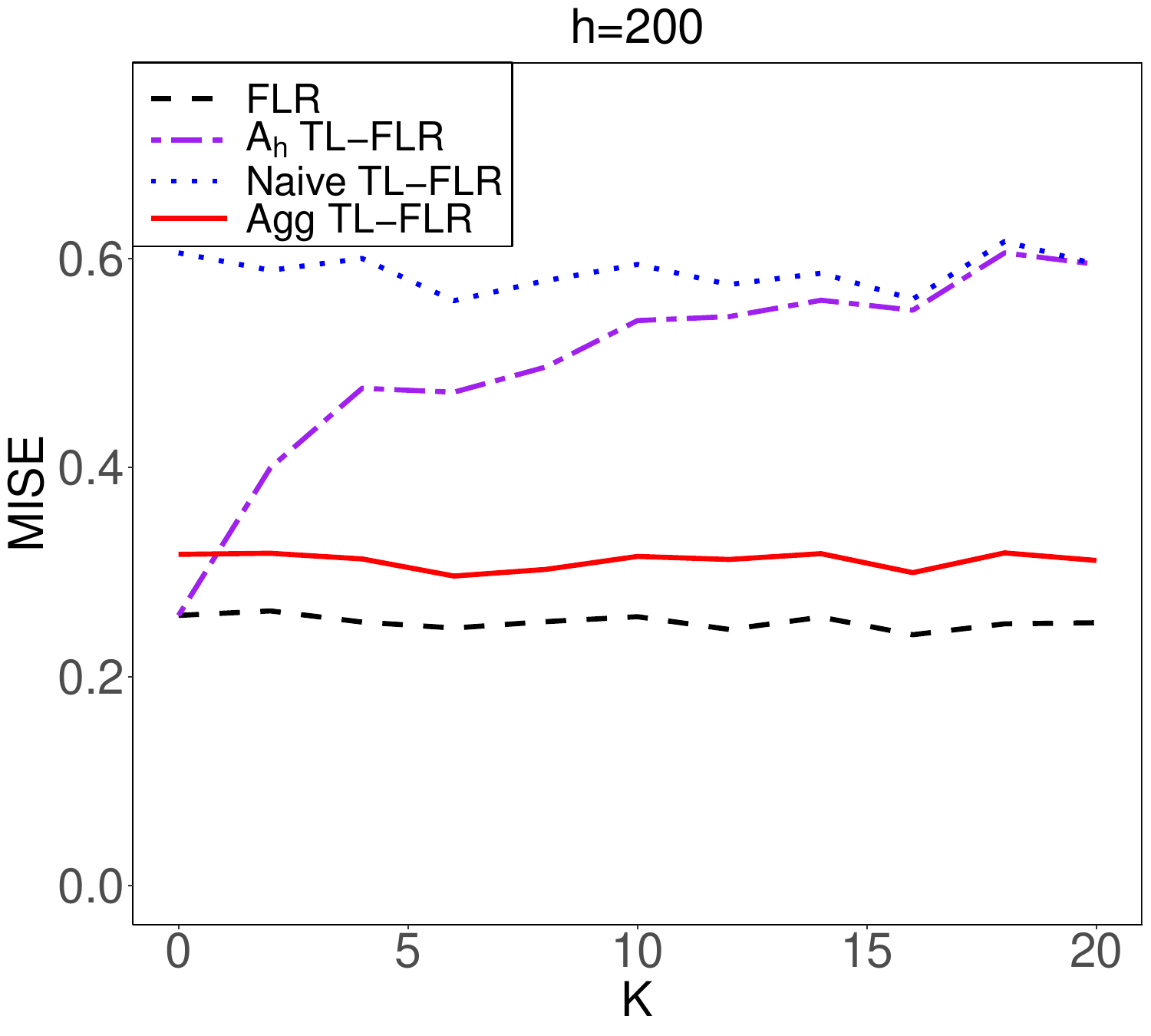} 
			\end{tabular}
			\caption{Estimation errors of different methods over 500 repetitions. Top row: Model (II+); middle row: Model (III+); bottom row: Model (IV+).}
			\label{fig:adap+}
		\end{figure}
		
		\section{Preliminaries on Proofs}
		
		For ease of presentation, we write $\hat \phi_k = \hat \phi_k^{\mathcal A}$, $\int pq = \int p(t) q(t) dt$ and $\int Mpq = \int\int M(s,t)p(s)q(t)dsdt$.
		Let $\hat \Delta = \big\{ \int \int \big(\hat K^{\mathcal A}(s,t) - K^{\mathcal A}(s,t)\big)^2 dsdt \big\}^{1/2} $, $\hat \Delta_k  = [\int  \{\int \big( \hat K^{\mathcal A}(s,t) - K^{\mathcal A}(s,t) \big) \phi_k(s) ds \}^2 dt]^{1/2} $ and $\hat \Delta_{kj} = \big|  \int \big( \hat K^{\mathcal A} - K^{\mathcal A}\big) \phi_k \phi_j \big|$, where $K^{\mathcal A} = \sum_{l=1}^L \pi_l K^{(l)}(s,t)$. Denote $\lambda_k^{\mathcal A} = \sum_{l=1}^L \pi_l \lambda_k^{(l)}. $
		
		Let $g(t) = \sum_{l=1}^L \pi_l \cov\big(Y^{(l)}, X^{(l)}(t)  \big)$ and $$\hat g(t) = \sum_{l=1}^L \pi_l (n_l-1)^{-1} \sum_{i=1}^{n_l} \left\{(X_i^{(l)}(t) - \bar{X}^{(l)}(t) )(Y_i^{(l)} - \bar{Y}^{(l)} )\right\} .$$  Then, we have $\langle g, \phi_k \rangle = \sum_{l=1}^L \pi_l \lambda_k^{(l)} w_k^{(l)} =  \sum_{l=1}^L \pi_l \lambda_k^{(l)} (b_k - \delta_k^{(l)}) $.
		Let $\bepsilon = (\epsilon_1, \dots, \epsilon_n)^{\t}$ and $\bar{\bepsilon}$ is a $n$-dimensional vector with elements $\bar{\epsilon} = n^{-1}\sum_{i=1}^n \epsilon_i$.
		Define the events 
		\begin{align}
			\mathcal E_1 &= \{ (\hat \lambda_k^{\mathcal A} - \lambda_j^{\mathcal A})^{-2} \le 2 (\lambda_j^{\mathcal A} - \lambda_k^{\mathcal A})^{-2} \le C m^{2(\alpha+1)} \text{ for } k,j=1,\dots, m  \} , \label{eq:event1}\\
			\mathcal E_2 & = \{ \lambda_m^{\mathcal A} \ge 2 \hat \Delta \} \label{eq:event2}, \\
			\mathcal E_3 &= \{ n^{-1} \| \hat \Xi^{\t} (\bepsilon - \bar{\bepsilon})\|_{\infty} \le \tau/2 \} . \label{eq:event3}
		\end{align}
		
		\section{Proof of Theorem \ref{thm:upper}}
		
		\begin{proof}[Proof of Theorem \ref{thm:upper}]
			
			Note that 
			\begin{eqnarray*}
				\| b - \hat b \|_2^2 & = & \int \left( \sum_{k=1}^{\infty} b_k \phi_k(t) - \sum_{k=1}^m \hat b_k \hat \phi_k(t)  \right)^2 dt \\
				& \le & 3\sum_{k=m+1}^{\infty} b_k^2 + 3 \sum_{k=1}^m (b_k - \hat b_k)^2 + 3 \int \left\{ \sum_{k=1}^m b_k \big( \phi_k(t) - \hat \phi_k(t) \big) \right\}^2 dt.
			\end{eqnarray*}
			
			Observe that
			\[ \sum_{k=m+1}^{\infty} b_k^2 = O(m^{1-2\beta}),  \]
			and from Lemma \ref{lem:eigfun},
			\begin{eqnarray*}
				\int \left\{ \sum_{k=1}^m b_k \big( \phi_k(t) - \hat \phi_k(t) \big) \right\}^2 dt & \le & m \sum_{k=1}^m b_k^2 \| \phi_k - \hat \phi_k \|_2^2 = O_P(m N^{-1}).
			\end{eqnarray*}
			It remains to bound $\sum_{k=1}^m (b_k - \hat b_k)^2$. If $\tau \asymp n^{-1/2}$,  it is handled by bounding $\|\bw - \hat \bw\|_2^2$ in Lemma \ref{lem:rate_w} and $\|\bdelta - \hat \bdelta\|_2^2$ in Lemma \ref{lem:rate_delta} according to the triangle inequality. If $\tau =0$, it is addressed in Lemma \ref{lem:w/o_lasso}.
			Combining the above pieces yields the claim.
		\end{proof}
		
		\begin{lemma}\label{lem:prob_events}
			Recall the events $\mathcal E_1, \mathcal E_2$ and $\mathcal E_3$ as defined in \eqref{eq:event1}-\eqref{eq:event3}. 
			\begin{itemize}
				\item If $m^{2(\alpha+1)}N^{-1} \to 0$, then $\prob(\mathcal E_1) \to 1$ and $\prob(\mathcal E_2) \to 1$.
				\item If $\tau \ge c n^{-1/2}$ for some large constant $c>0$, then $\prob(\mathcal E_3) \to 1$.
			\end{itemize}
		\end{lemma}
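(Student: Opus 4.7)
The plan is to derive the three events from two ingredients: a Hilbert--Schmidt concentration bound for $\hat K^{\mathcal A} - K^{\mathcal A}$, which yields $\mathcal E_1$ and $\mathcal E_2$ via eigenvalue perturbation, and a conditional sub-Gaussian argument that yields $\mathcal E_3$. First I would establish $\hat\Delta = O_P(N^{-1/2})$: writing $\hat K^{\mathcal A} - K^{\mathcal A} = \sum_{l=1}^L \pi_l(\hat K^{(l)} - K^{(l)})$ as an independent sum, Assumption \ref{assump:x-moment} gives the standard second-moment bound $\expect \int\!\!\int (\hat K^{(l)} - K^{(l)})^2 \lesssim n_l^{-1}$, and Chebyshev then yields $\hat\Delta^2 = O_P(\sum_l \pi_l^2 n_l^{-1}) = O_P(N^{-1})$.

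For $\mathcal E_1$ and $\mathcal E_2$ I would appeal to Weyl's inequality, which bounds $|\hat\lambda_k^{\mathcal A} - \lambda_k^{\mathcal A}|$ by the operator norm of $\hat K^{\mathcal A} - K^{\mathcal A}$, hence by $\hat\Delta$. Summing consecutive gaps from Assumption \ref{assump:cov-decay} yields $|\lambda_k^{\mathcal A} - \lambda_j^{\mathcal A}| \ge c_1^{-1} m^{-(\alpha+1)}$ for any distinct $k, j \le m$, and $\lambda_m^{\mathcal A} \gtrsim m^{-\alpha}$. The condition $m^{2(\alpha+1)}/N \to 0$ forces $\hat\Delta = o_P(m^{-(\alpha+1)})$, so
\[ |\hat\lambda_k^{\mathcal A} - \lambda_j^{\mathcal A}| \ge |\lambda_k^{\mathcal A} - \lambda_j^{\mathcal A}| - \hat\Delta \ge (1 - 2^{-1/2})|\lambda_k^{\mathcal A} - \lambda_j^{\mathcal A}| \]
with probability tending to one, delivering the first inequality in $\mathcal E_1$; the uniform bound $Cm^{2(\alpha+1)}$ follows from the explicit gap lower bound. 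The same estimate implies $\hat\Delta \le \lambda_m^{\mathcal A}/2$, i.e.\ $\mathcal E_2$, since $m^{2\alpha}/N = o(1)$ is a weaker consequence of the same hypothesis.

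For $\mathcal E_3$ I would condition on the entire source sample, which fixes $\hat\phi_k^{\mathcal A}$. The $k$-th coordinate of $n^{-1}\hat\Xi^{\t}(\bepsilon - \bar\bepsilon)$ is
\[ n^{-1}\sum_{i=1}^n \hat\xi_{i,k}\epsilon_i \;-\; \bar\epsilon \cdot n^{-1}\sum_{i=1}^n \hat\xi_{i,k}, \]
a sum of conditionally independent mean-zero terms, since $\epsilon_i$ is independent of $X_i$ with mean zero. Assumption \ref{assump:xdist} implies that the projection of $X_i - \mu$ onto the deterministic (conditionally on source data) unit direction $\hat\phi_k^{\mathcal A}$ is sub-Gaussian with a uniformly bounded parameter, so each product $\hat\xi_{i,k}\epsilon_i$ is sub-exponential. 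A Bernstein-type tail inequality combined with a union bound over $k \le m$ then gives an $L_\infty$ bound of order $\sqrt{\log m / n}$; for $m$ polynomial in $n$, the logarithmic factor is absorbed into a sufficiently large $c$, pushing the quantity below $\tau/2$ with probability going to one.

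The main obstacle is $\mathcal E_3$: $\hat\phi_k^{\mathcal A}$ depends on the source data, so the target scores $\hat\xi_{i,k}$ are neither independent of one another nor simply Gaussian projections. The conditioning argument cleanly decouples them by using that the target noise is independent of both the source sample and the target covariates, together with the fact that Assumption \ref{assump:xdist} provides sub-Gaussianity of $\int (X_i - \mu)\hat\phi_k^{\mathcal A}$ uniformly over deterministic unit directions. Once this decoupling is in place, the remaining operator-perturbation step for $\mathcal E_1$, $\mathcal E_2$ and the Bernstein concentration for $\mathcal E_3$ are routine.
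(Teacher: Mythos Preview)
Your treatment of $\mathcal E_1$ and $\mathcal E_2$ is essentially the paper's argument: Weyl's inequality gives $\sup_k|\hat\lambda_k^{\mathcal A}-\lambda_k^{\mathcal A}|\le\hat\Delta$, Lemma~\ref{lem:Delta} gives $\hat\Delta=O_P(N^{-1/2})$, and the gap lower bounds from Assumption~\ref{assump:cov-decay} finish the job exactly as you wrote.

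For $\mathcal E_3$, however, there is a genuine gap. Two problems. First, the Bernstein step does not go through under the stated hypotheses: the target noise $\epsilon_i$ is assumed only to satisfy $\expect(\epsilon_i\mid X_i)=0$ with variance $\sigma^2$; no sub-Gaussian or sub-exponential tail is imposed, so the product $\hat\xi_{i,k}\epsilon_i$ need not be sub-exponential and Bernstein's inequality is unavailable. Second, even granting sub-exponential tails, your conclusion is $\|n^{-1}\hat\Xi^\t(\bepsilon-\bar\bepsilon)\|_\infty=O_P(\sqrt{\log m/n})$, and since $m\to\infty$ the factor $\sqrt{\log m}$ cannot be absorbed into a fixed constant $c$; thus you would not recover the stated hypothesis $\tau\ge cn^{-1/2}$ with $c$ constant.

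The paper avoids both issues by a more elementary route: it bounds the $\ell_\infty$-norm by the $\ell_2$-norm,
\[
\Big(n^{-1}\|\hat\Xi^\t(\bepsilon-\bar\bepsilon)\|_\infty\Big)^2
\le \sum_{k=1}^m\Big(n^{-1}\sum_{i=1}^n(\epsilon_i-\bar\epsilon)\int(X_i-\bar X)\hat\phi_k\Big)^2,
\]
and then controls the expectation of the right-hand side using only second moments. Conditioning on the source sample and the $X_i$'s (your decoupling idea is correct here), the inner expectation over $\epsilon$ gives $n^{-2}\sigma^2\sum_i(\int(X_i-\bar X)\hat\phi_k)^2$; summing over $k$ and applying Bessel's inequality for the orthonormal system $(\hat\phi_k)_k$ collapses this to $n^{-2}\sigma^2\sum_i\|X_i-\bar X\|_2^2=O_P(n^{-1})$, with no $m$- or $\log m$-dependence. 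This yields $n^{-1}\|\hat\Xi^\t(\bepsilon-\bar\bepsilon)\|_\infty=O_P(n^{-1/2})$ directly, which is exactly what the lemma claims.
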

		
		\begin{proof}
			
			\textbf{The event $\mathcal E_1$.} If $\hat \Delta \le (1-2^{1/2}/2) |\lambda_k^{\mathcal A} -  \lambda_j^{\mathcal A}| $, then the triangle inequality implies
			\begin{align*}
				| \hat \lambda_k^{\mathcal A} - \lambda_j^{\mathcal A} | & \ge |\lambda_k^{\mathcal A} -  \lambda_j^{\mathcal A}| - |\hat \lambda_k^{\mathcal A} - \lambda_k^{\mathcal A} | \\
				& \ge |\lambda_k^{\mathcal A} -  \lambda_j^{\mathcal A}| - \hat \Delta \\
				& \ge 2^{1/2}/2 |\lambda_k^{\mathcal A} -  \lambda_j^{\mathcal A}| ,
			\end{align*}
			where the second inequality holds due to the fact that $\sup_{k \ge 1}| \hat \lambda_k^{\mathcal A} - \lambda_k^{\mathcal A}  | \le \hat \Delta$. Thus, for $k,j=1,\dots, m$,
			\begin{align*}
				(\hat \lambda_k^{\mathcal A} - \lambda_j^{\mathcal A})^{-2} & \le 2 (\lambda_k^{\mathcal A} -  \lambda_j^{\mathcal A})^{-2} \le Cm^{2(\alpha+1)}.
			\end{align*}
			Since $\hat \Delta = O_P(N^{-1/2})$ in Lemma \ref{lem:Delta}, then if $m^{2(\alpha+1)}N^{-1} \to 0$, then we have $\prob( \mathcal E_1) \to 1$. 
			
			\textbf{The event $\mathcal E_2$.} Assumption \ref{assump:cov-decay} implies $\lambda_m^{\mathcal A} \ge c m^{-\alpha}$. From Lemma \ref{lem:Delta}, $\hat \Delta = O_P(N^{-1/2})$. If $m^{2\alpha}N^{-1} \to 0$, then $\prob(\mathcal E_2) \to 1$.
			
			\textbf{The event $\mathcal E_3$.} Note that
			\begin{align*}
				(n^{-1} \| \hat \Xi^{\t} (\bepsilon - \bar{\bepsilon})\|_{\infty})^2 & = \max_{1 \le k \le m} \bigg| \frac{1}{n} \sum_{i=1}^n (\epsilon_i - \bar{\epsilon}) \int(X_i - \bar X) \hat \phi_k  \bigg|^2 \\
				& \le  \sum_{k=1}^m \left( \frac{1}{n} \sum_{i=1}^n (\epsilon_i - \bar{\epsilon}) \int(X_i - \bar X) \hat \phi_k \right)^2  \\
				& = O_P\left\{ \sum_{k=1}^m \expect \left( n^{-1} \sum_{i=1}^n \epsilon_i \int(X_i - \bar X) \hat \phi_k  \right)^2    \right\} \\
				&  = O_P(n^{-1}).
			\end{align*}
			If we take $\tau \ge c n^{-1/2}$ for some large constant $c>0$, then $\prob(\mathcal E_3) \to 1$.
			
		\end{proof}
		
		\begin{lemma}\label{lem:rate_w}
			Under Assumptions \ref{assump:xdist}-\ref{assump:x-moment}, if
			\[ N^{-1}m^{2(\alpha+1)} \to 0,  \]
			then we have	
			\begin{align*}
				& \|\hat \bw - \bw\|_2^2 = O_P\left( \frac{m^{1+\alpha}}{N} +  \frac{m^3h^2}{N} \right).
			\end{align*}
		\end{lemma}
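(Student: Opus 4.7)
The plan is to exploit the diagonal structure of the pooled source Gram matrix $\sum_{l=1}^L \pi_l (n_l-1)^{-1}\hat{\Xi}^{(l)\t}\hat{\Xi}^{(l)}$, whose $k$-th diagonal entry equals $\hat\lambda_k^{\mathcal A}$ by construction of the pooled FPCA. This yields the closed form $\hat w_k = (\hat\lambda_k^{\mathcal A})^{-1}\int \hat g\,\hat\phi_k$, and paired with $w_k = (\lambda_k^{\mathcal A})^{-1}\int g\,\phi_k$ it leads to the Hall--Horowitz style decomposition
$$\hat w_k - w_k \;=\; (\hat\lambda_k^{\mathcal A})^{-1}\int (\hat g - g)\,\hat\phi_k \;+\; (\hat\lambda_k^{\mathcal A})^{-1}\int g\,(\hat\phi_k - \phi_k) \;-\; \frac{\hat\lambda_k^{\mathcal A} - \lambda_k^{\mathcal A}}{\hat\lambda_k^{\mathcal A}}\,w_k.$$
I would work on the event $\mathcal E_1 \cap \mathcal E_2$, whose probability tends to one under $m^{2(\alpha+1)}/N = o(1)$ by Lemma \ref{lem:prob_events}, so that $(\hat\lambda_k^{\mathcal A})^{-1} \asymp k^\alpha$ and the spectral gaps obey $|\lambda_k^{\mathcal A} - \lambda_j^{\mathcal A}|^{-1} \lesssim \max(k,j)^{\alpha+1}$.

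For the first summand I would compute $\var(\int(\hat g^{(l)} - g^{(l)})\phi_k)$ directly. Expanding $Y^{(l)} = \int w^{(l)}(X^{(l)} - \mu^{(l)}) + \epsilon^{(l)}$ with $w^{(l)} = b - \delta^{(l)}$ and invoking Assumption \ref{assump:x-moment} for the fourth moments of the scores and residuals, the variance admits a bound of the form $\lambda_k^{(l)}(1 + h^2)/n_l$, where the baseline piece driven by $b$ is responsible for the $m^{1+\alpha}/N$ contribution after pooling over $l$ and summing $(\hat\lambda_k^{\mathcal A})^{-2}$ against it for $k=1,\dots,m$. The swap $\phi_k \to \hat\phi_k$ produces a cross piece controlled by Cauchy-Schwarz with Lemma \ref{lem:eigfun}. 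For the third summand, $|\hat\lambda_k^{\mathcal A} - \lambda_k^{\mathcal A}| \le \hat\Delta = O_P(N^{-1/2})$ from Lemma \ref{lem:Delta}, combined with $|w_k| \lesssim |b_k| + h$, yields an error absorbed by $m^{2(\alpha+1)}/N = o(1)$.

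The dominant $m^3 h^2/N$ term arises from the middle summand. I would split $g = \tilde g - r$ with $\tilde g_j = \lambda_j^{\mathcal A} b_j$ and $r_j = \sum_{l=1}^L \pi_l \lambda_j^{(l)} \delta_j^{(l)}$, where the $\ell_1$ bound $\sum_j |r_j| \le h$ follows from Assumption \ref{assump:slope-decay}. The $\tilde g$ part mirrors the standard functional linear regression calculation and reproduces the $m^{1+\alpha}/N$ rate. For the $r$ part, expanding $\int r(\hat\phi_k - \phi_k) = \sum_j r_j \int \phi_j(\hat\phi_k - \phi_k)$, I would apply a Bosq-type perturbation identity that bounds $|\int \phi_j(\hat\phi_k - \phi_k)|$ by $\hat\Delta_{kj}/|\lambda_k^{\mathcal A} - \lambda_j^{\mathcal A}|$ modulo higher-order remainder, with $\expect \hat\Delta_{kj}^2 \lesssim \lambda_k^{\mathcal A}\lambda_j^{\mathcal A}/N$, and then sum carefully in $k$ and $j$.

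The main obstacle is precisely this $r$-part bookkeeping. A blunt Cauchy-Schwarz $|\int r(\hat\phi_k - \phi_k)| \le \|r\|_2\|\hat\phi_k - \phi_k\|_2$ yields only a much weaker rate of the form $m^{3\alpha+3} h^2/N$. To recover $m^3 h^2/N$, I would split the inner index sum into the regimes $j \le k$ and $j > k$, combine the $\ell_1$-bound $\sum_j |r_j| \le h$ with the $\ell_2$-decay of the $\hat\Delta_{kj}$'s from the variance formula, and cancel the eigenvalue-gap factor $\max(k,j)^{\alpha+1}$ against the $j^{-\alpha}$ decay implicit in $|r_j| \lesssim c j^{-\alpha} \sum_l \pi_l |\delta_j^{(l)}|$. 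This combinatorial step is the technically most delicate portion of the argument; the remaining pieces are variance computations and applications of the previously established lemmas.
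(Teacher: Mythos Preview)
Your overall decomposition and strategy match the paper's closely: both identify $\hat w_k=(\hat\lambda_k^{\mathcal A})^{-1}\langle\hat g,\hat\phi_k\rangle$, $w_k=(\lambda_k^{\mathcal A})^{-1}\langle g,\phi_k\rangle$, split $\hat w_k-w_k$ into a $\hat g-g$ piece, a $\langle g,\hat\phi_k-\phi_k\rangle$ piece, and an eigenvalue-difference piece, and both recognize that the $m^3h^2/N$ rate is produced by the middle piece after expanding $\hat\phi_k-\phi_k$ via the Bosq-type identity (Lemma~\ref{lem:expansions}) and using the $\ell_1$ control $\sum_j|r_j|\le h$. Your description of the delicate $r$-part bookkeeping is accurate and is precisely what the paper carries out in its Lemma~\ref{lem:T1}.

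There is, however, a genuine gap in your treatment of the eigenvalue-difference term. Using only the crude bound $|\hat\lambda_k^{\mathcal A}-\lambda_k^{\mathcal A}|\le\hat\Delta=O_P(N^{-1/2})$ together with $|w_k|\lesssim|b_k|+h$ (or even the sharper $|w_k|\le|b_k|+|\delta_k|$ with $\sum_k|\delta_k|\le h$) gives
\[
\sum_{k=1}^m(\hat\lambda_k^{\mathcal A})^{-2}(\hat\lambda_k^{\mathcal A}-\lambda_k^{\mathcal A})^2 w_k^2
\;\lesssim\; N^{-1}\sum_{k=1}^m k^{2\alpha}\delta_k^2
\;\le\; N^{-1}m^{2\alpha}h^2,
\]
which is \emph{not} bounded by the target $m^3h^2/N$ once $\alpha>3/2$; the assumption $m^{2(\alpha+1)}/N=o(1)$ only says this term is $o(1)$, not that it is $O(m^3h^2/N)$. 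The paper avoids this by invoking the second-order eigenvalue expansion of Lemma~\ref{lem:expansions},
\[
|\hat\lambda_k^{\mathcal A}-\lambda_k^{\mathcal A}|\le \hat\Delta_{kk}+2\|\hat\phi_k-\phi_k\|_2\,\hat\Delta,
\]
with $\expect\hat\Delta_{kk}^2\lesssim k^{-2\alpha}/N$ from Lemma~\ref{lem:Delta}. The $k^{-2\alpha}$ decay cancels the $k^{2\alpha}$ factor from $(\hat\lambda_k^{\mathcal A})^{-2}$, yielding $\sum_k\delta_k^2/N\le h^2/N$ for the contrast part, which is then safely absorbed into $m^3h^2/N$. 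You need this refinement; the uniform $\hat\Delta$ bound is too coarse here.
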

		
		\begin{proof}
			To bound $\sum_{k=1}^m (w_k - \hat w_k)^2$, note that
			\[ \hat{w}_k = \frac{ \langle \hat g, \hat \phi_k \rangle }{\hat{\lambda}_k^{\mathcal A}}, ~~ w_k = \frac{\langle g, \phi_k\rangle}{\lambda_k^{\mathcal A}}.  \]
			Then,
			\begin{equation}\label{eq:hatw-w}
				\hat w_k - w_k = \frac{\langle \hat g-g, \hat \phi_k - \phi_k\rangle}{\hat \lambda_k^{\mathcal A}} + \frac{\langle \hat g-g, \phi_k\rangle}{\hat \lambda_k^{\mathcal A} } + \frac{\langle g, \hat \phi_k - \phi_k\rangle  }{\hat \lambda_k^{\mathcal A}} + \left(\frac{1}{\hat \lambda_k^{\mathcal A}} - \frac{1}{\lambda_k^{\mathcal A}} \right) \langle g, \phi_k\rangle. 
			\end{equation}
			
			We first bound the first term in \eqref{eq:hatw-w}. Under the event $\mathcal E_2$ in \ref{eq:event2}, we have $ \hat \lambda_k^{\mathcal A} \ge \lambda_k^{\mathcal A} /2 $.  By Lemma \ref{lem:g}, $\|\hat g- g\|_2^2 = O_P(N^{-1})$. Then, together with Lemma \ref{lem:eigfun}, we obtain
			\begin{eqnarray*}
				\sum_{k=1}^m \frac{\langle \hat g-g, \hat \phi_k - \phi_k\rangle^2}{(\hat \lambda_k^{\mathcal A})^2} \le 4 \sum_{k=1}^m (\lambda_k^{\mathcal A})^{-2} \|\hat g-g\|_2^2 \|\hat \phi_k - \phi_k\|_2^2 = O_P(m^{2\alpha+3}N^{-2}).
			\end{eqnarray*}
			By Lemma \ref{lem:g}, we obtain $\expect \langle \hat g-g, \phi_k\rangle^2 = O(k^{-\alpha}N^{-1})$.
			To bound the sum about the second term in \eqref{eq:hatw-w},
			\begin{eqnarray*}
				\sum_{k=1}^m \frac{\langle \hat g-g, \phi_k\rangle^2}{(\hat \lambda_k^{\mathcal A})^2 } \le 4 \sum_{k=1}^m (\lambda_k^{\mathcal A})^{-2}\langle \hat g-g, \phi_k\rangle^2 = O_P(m^{1+\alpha}N^{-1}).
			\end{eqnarray*}
			
			To handle the sum about the last term in \eqref{eq:hatw-w},
			we observe
			\begin{align*}
				\left(\frac{1}{\hat \lambda_k^{\mathcal A}} - \frac{1}{\lambda_k^{\mathcal A}} \right)^2 \langle g, \phi_k\rangle^2 & = (\hat \lambda_k^{\mathcal A} \lambda_k^{\mathcal A})^{-2}(\hat \lambda_k^{\mathcal A} - \lambda_k^{\mathcal A} )^2 \left[\sum_{l=1}^L \pi_l \expect \{ (Y^{(l)} - \expect(Y^{(l)}) ) \xi_k^{(l)} \} \right]^2 \\
				& = (\hat \lambda_k^{\mathcal A} \lambda_k^{\mathcal A})^{-2}(\hat \lambda_k^{\mathcal A} - \lambda_k^{\mathcal A} )^2 \left[\sum_{l=1}^L \pi_l \expect \left\{ \left( \sum_{j=1}^\infty \xi_j^{(l)}w_j^{(l)} + \epsilon^{(l)} \right) \xi_k^{(l)} \right\} \right]^2 \\
				&  = (\hat \lambda_k^{\mathcal A} \lambda_k^{\mathcal A})^{-2}(\hat \lambda_k^{\mathcal A} - \lambda_k^{\mathcal A} )^2 \left( \sum_{l=1}^L \pi_l \lambda_k^{(l)} (b_k - \delta_k^{(l)}) \right)^2  .
			\end{align*}
			By Lemma \ref{lem:expansions}, we obtain
			\[ (1-\| \hat \phi_k - \phi_k \|_2 ) |\hat \lambda_k^{\mathcal A} - \lambda_k^{\mathcal A} | \le \hat \Delta_{kk} + \| \hat \phi_k - \phi_k \|_2 \hat \Delta_k.  \]
			Since $\sup_{k \ge 1} |\hat \lambda_k^{\mathcal A} - \lambda_k^{\mathcal A}| \le \hat \Delta$ and $\sup_{k \ge 1} \hat \Delta_k \le \hat \Delta$, we have
			\[ |\hat \lambda_k^{\mathcal A} - \lambda_k^{\mathcal A} | \le \hat \Delta_{kk} + 2\| \hat \phi_k - \phi_k \|_2 \hat \Delta.  \]
			Consequently, using Assumptions \ref{assump:cov-decay}, \ref{assump:slope-decay}, Lemmas \ref{lem:Delta} and \ref{lem:eigfun} yields
			\begin{align*}
				& \sum_{k=1}^m \left(\frac{1}{\hat \lambda_k^{\mathcal A}} - \frac{1}{\lambda_k^{\mathcal A}} \right)^2 \langle g, \phi_k\rangle^2 \\
				\le & \sum_{k=1}^m 8(\lambda_k^{\mathcal A})^{-2}(\hat \lambda_k^{\mathcal A} - \lambda_k^{\mathcal A} )^2 b_k^2 + \sum_{k=1}^m 8(\lambda_k^{\mathcal A})^{-4}(\hat \lambda_k^{\mathcal A} - \lambda_k^{\mathcal A} )^2 \left( \sum_{l=1}^L \pi_l (\lambda_k^{(l)})^2 \right)\left( \sum_{l=1}^L \pi_l (\delta_k^{(l)})^2 \right)    \\
				\le & c \sum_{k=1}^m k^{2\alpha-2\beta}( \hat \Delta_{kk}^2 + \|\hat \phi_k - \phi_k \|_2^2\hat \Delta^2 ) + c\sum_{l=1}^L \pi_l \sum_{k=1}^m k^{2\alpha} (\delta_k^{(l)})^2  ( \hat \Delta_{kk}^2 + \|\hat \phi_k - \phi_k \|_2^2\hat \Delta^2 ) \\
				= & O_P\left(\frac{1}{N} + \frac{1+\log m + m^{3+2\alpha-2\beta}}{N^2} + \frac{h^2}{N} + \frac{m^{2\alpha+2} h^2}{N^2} \right).
			\end{align*}
			
			Next, we address the sum about the third term in \eqref{eq:hatw-w}.
			Lemma \ref{lem:expansions} implies
			\begin{eqnarray}\label{eq:g_hatpsi_k-psi_k}
				\langle g, \hat \phi_k - \phi_k\rangle &= & \sum_{j: j \neq k} (\lambda_k^{\mathcal A} - \lambda_j^{\mathcal A})^{-1}\langle g, \phi_j\rangle \left( \int(\hat K^{\mathcal A} - K^{\mathcal A})  \phi_k \phi_j \right) +  \nonumber \\
				& & \sum_{j: j \neq k} ( (\hat \lambda_k^{\mathcal A} - \lambda_j^{\mathcal A})^{-1} - (\lambda_k^{\mathcal A} - \lambda_j^{\mathcal A})^{-1} ) \langle g, \phi_j\rangle \left( \int(\hat K^{\mathcal A} - K^{\mathcal A})  \phi_k \phi_j \right) + \nonumber \\
				& & \sum_{j: j \neq k} (\hat \lambda_k^{\mathcal A} - \lambda_j^{\mathcal A})^{-1} \langle g, \phi_j\rangle\left( \int(\hat K^{\mathcal A} - K^{\mathcal A})  (\hat \phi_k - \phi_k) \phi_j \right) + \nonumber \\
				& & \langle g, \phi_k\rangle \int  (\hat \phi_k - \phi_k) \phi_k = T_{k1} + T_{k2} + T_{k3} + T_{k4}.
			\end{eqnarray}
			Then, under the event $\mathcal E_2$ in \eqref{eq:event2},
			\[ \sum_{k=1}^m (\hat \lambda_k^{\mathcal A})^{-2} \langle g, \hat \phi_k - \phi_k\rangle^2 \le c \sum_{k=1}^m (\lambda_k^{\mathcal A})^{-2} (T_{k1}^2 + T_{k2}^2 + T_{k3}^2 + T_{k4}^2).  \]
			Note that 
			\begin{align*} 
				\sum_{k=1}^m (\lambda_k^{\mathcal A})^{-2} T_{k4}^2 & \le \sum_{k=1}^m (\lambda_k^{\mathcal A})^{-2} \left(\sum_{l=1}^L \pi_l \lambda_k^{(l)} (b_k - \delta_k^{(l)})\right)^2   \|\hat\phi_k - \phi_k\|_2^2 \\
				& \le 2\sum_{k=1}^m b_k^2 \|\hat\phi_k - \phi_k\|_2^2 + 2 \sum_{k=1}^m (\lambda_k^{\mathcal A})^{-2} \left(\sum_{l=1}^L \pi_l (\lambda_k^{(l)})^2 \right)\left( \sum_{l=1}^L \pi_l (\delta_k^{(l)})^2  \right) \|\hat\phi_k - \phi_k\|_2^2  \\
				& \le c \sum_{k=1}^m b_k^2 \|\hat\phi_k - \phi_k\|_2^2 + c \sum_{l=1}^L \pi_l \sum_{k=1}^m (\delta_k^{(l)})^2  \|\hat\phi_k - \phi_k\|_2^2  \\
				&= O_P\left(\frac{1}{N} + \frac{m^2h^2}{N}\right).
			\end{align*}
			
			If the event $\mathcal E_1$ in \eqref{eq:event1} holds, then by Lemma \ref{lem:lam_k-lam_j},
			\begin{align*}
				|T_{k3}| \le & c \hat \Delta \|\hat \phi_k - \phi_k\|_2 \sum_{j: j \neq k} |\lambda_k^{\mathcal A} - \lambda_j^{\mathcal A}|^{-1}\left(j^{-(\alpha+\beta)} + j^{-\alpha}\sum_{l=1}^L \pi_l |\delta_j^{(l)}|\right) \\
				\le  & c\hat \Delta \|\hat \phi_k - \phi_k\|_2 \left\{ \sum_{j: j<k/2} \left(j^{-\beta} + \sum_{l=1}^L \pi_l |\delta_j^{(l)}|\right)  + \sum_{j:j>2k} k^\alpha \left(j^{-(\alpha+\beta)} + j^{-\alpha}\sum_{l=1}^L \pi_l |\delta_j^{(l)}|\right) + \right. \\
				& \left. \sum_{j: k/2 \le j \le 2k, j\neq k} k^{\alpha+1}|k-j|^{-1} \left(j^{-(\alpha+\beta)} + j^{-\alpha}\sum_{l=1}^L \pi_l |\delta_j^{(l)}|\right)  \right\}  \\
				\le & c\hat \Delta \|\hat \phi_k - \phi_k\|_2 \left\{ 1 + k^{1-\beta}\log k +\sum_{l=1}^L \pi_l \left(\sum_{j:j<k/2} |\delta_j^{(l)}| + \sum_{j:j>2k} |\delta_j^{(l)}| + k \sum_{j: k/2 \le j \le 2k, j\neq k} |\delta_j^{(l)}|   \right)  \right\}  \\
				\le & c \hat \Delta \|\hat \phi_k - \phi_k\|_2(1+k^{1-\beta}\log k+kh).
			\end{align*}
			Thus,
			\[
			\sum_{k=1}^m (\lambda_k^{\mathcal A})^{-2} T_{k3}^2 = O_P\left(\frac{m^{2\alpha+3}}{N^2} + \frac{m^{2\alpha+5} h^2}{N^2} \right).
			\]
			
			Under the event $\mathcal E_1$ in \eqref{eq:event1} and by Lemma \ref{lem:lam_k-lam_j},
			\begin{align*}
				T_{k2}^2 & \le c\left\{\sum_{j: j \neq k} \frac{|\lambda_k^{\mathcal A} - \hat \lambda_k^{\mathcal A}|}{(\lambda_k^{\mathcal A} - \lambda_j^{\mathcal A})^2} \bigg(j^{-(\alpha+\beta)} + j^{-\alpha}\sum_{l=1}^L \pi_l |\delta_j^{(l)}|\bigg) \left( \int(\hat K^{\mathcal A} - K^{\mathcal A})  \phi_k \phi_j \right) \right\}^2 \\
				& \le c\left\{\sum_{j: j \neq k} \frac{|\lambda_k^{\mathcal A} - \hat \lambda_k^{\mathcal A}|^2}{(\lambda_k^{\mathcal A} - \lambda_j^{\mathcal A})^4} \bigg(j^{-(\alpha+\beta)} + j^{-\alpha}\sum_{l=1}^L \pi_l |\delta_j^{(l)}|\bigg)^2 \right\} \left\{\sum_{j: j \neq k} \left( \int(\hat K^{\mathcal A} - K^{\mathcal A})  \phi_k \phi_j \right)^2\right\} \\
				& \le c|\lambda_k^{\mathcal A} - \hat \lambda_k^{\mathcal A}|^2 \hat \Delta_k^2\left\{ \sum_{j: j \neq k} j^{-2(\alpha+\beta)} (\lambda_k^{\mathcal A} - \lambda_j^{\mathcal A})^{-4} + \sum_{j: j \neq k} j^{-2\alpha}\bigg( \sum_{l=1}^L \pi_l |\delta_j^{(l)}| \bigg)^2(\lambda_k^{\mathcal A}-\lambda_j^{\mathcal A})^{-4} \right\} \\
				& \le c |\lambda_k^{\mathcal A} - \hat \lambda_k^{\mathcal A}|^2 \hat \Delta_k^2 ( 1 + \log k + k^{2\alpha-2\beta+4} + k^{2\alpha+4}h^2 ).
			\end{align*}
			Consequently,
			\begin{align*}
				\sum_{k=1}^m (\lambda_k^{\mathcal A})^{-2} T_{k2}^2  & \le c\sum_{k=1}^m k^{2\alpha} |\lambda_k^{\mathcal A} - \hat \lambda_k^{\mathcal A}|^2 \hat \Delta_k^2 ( 1 + \log k + k^{2\alpha-2\beta+4} + k^{2\alpha+4}h^2 ) \\
				& = O_P\left( N^{-2}\sum_{k=1}^m (k^{2\alpha}\log m + k^{4\alpha-2\beta+4} ) + N^{-1}\sum_{k=1}^m (\hat \Delta_{kk}^2 + \|\hat \phi_k - \phi_k\|_2^2 \hat \Delta^2 ) k^{4\alpha+4}h^2  \right) \\
				& = O_P\left( \frac{m^{4\alpha-2\beta+5}}{N^2} + \frac{m^{1+2\alpha}\log m}{N^2} + \frac{m^{2\alpha+5}h^2}{N^2}+  \frac{m^{4\alpha+7}h^2}{N^3} \right).
			\end{align*}
			
			Applying Lemma \ref{lem:T1} yields 
			\begin{eqnarray*}
				E  \sum_{k=1}^m (\lambda_k^{\mathcal A})^{-2} T_{k1}^2 & = & E \left\{ \sum_{k=1}^m (\lambda_k^{\mathcal A})^{-2} \left( \sum_{j: j \neq k} (\lambda_k^{\mathcal A} - \lambda_j^{\mathcal A})^{-1} \langle g, \phi_j\rangle \left( \int(\hat K^{\mathcal A} - K)  \phi_k \phi_j \right)   \right)^2  \right\} \\
				& = & O\left( \frac{m^{1+\alpha}}{N} + \frac{m^3h^2}{N} + \frac{m^{1+\alpha}h^2}{N} \right).
			\end{eqnarray*}
			Combining all the above pieces completes the proof.
		\end{proof}
		
		\begin{lemma}\label{lem:rate_delta}
			Under Assumptions \ref{assump:xdist}-\ref{assump:x-moment}, if $m^{2(\alpha+1)}N^{-1} \to 0$ and we take $\tau \asymp n^{-1/2}$, then
			\begin{align*}
				\|\bdelta - \hat \bdelta\|_2^2 = O_P\left( \frac{m^{\alpha} h}{n^{1/2}}  \wedge h^2 + m^{1-2\beta} + \frac{m^{1+\alpha}}{N} + \frac{m^3h^2}{N} \right).
			\end{align*}
		\end{lemma}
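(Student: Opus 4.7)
The plan is to combine the standard lasso basic inequality with the restricted-eigenvalue bound of Proposition \ref{prop:re}, after first decomposing the target-model residual (evaluated at the truth $\bdelta$) into terms one can control. I would substitute the target model $Y_i-\bar Y = \sum_k b_k(\xi_{ik}-\bar\xi_k)+\epsilon_i-\bar\epsilon$ into $\bY-\bar\bY-\hat\Xi\hat\bw$ and write
\begin{equation*}
    \bY-\bar\bY-\hat\Xi\hat\bw \;=\; \hat\Xi\bdelta \;+\; \hat\Xi(\bw-\hat\bw) \;+\; \boldsymbol{A} \;+\; (\bepsilon-\bar\bepsilon),
\end{equation*}
where $\boldsymbol{A}\in\real^n$ collects (i) the truncation tail $\sum_{k>m}b_k(\xi_{ik}-\bar\xi_k)$ and (ii) the eigenfunction-mismatch piece $\sum_{k=1}^m b_k\int(X_i-\bar X)(\phi_k-\hat\phi_k)$. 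Standard FPCA perturbation bounds together with Assumption \ref{assump:slope-decay} would give $n^{-1}\expect\|\boldsymbol{A}\|_2^2 = O(m^{1-2\beta}+mN^{-1})$ and analogous coordinate-wise control on $n^{-1}\hat\Xi^{\t}\boldsymbol{A}$.

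Setting $\boldsymbol{u}=\hat\bdelta-\bdelta$ and using the first-order optimality of $\hat\bdelta$ yields the familiar basic inequality
\begin{equation*}
    \frac{1}{2n}\|\hat\Xi\boldsymbol{u}\|_2^2 \;\le\; \frac{1}{n}\boldsymbol{r}^{\t}\hat\Xi\boldsymbol{u} + \tau\bigl(\|\bdelta\|_1-\|\hat\bdelta\|_1\bigr),\qquad \boldsymbol{r}:=\hat\Xi(\bw-\hat\bw)+\boldsymbol{A}+(\bepsilon-\bar\bepsilon).
\end{equation*}
I would then bound $n^{-1}\boldsymbol{r}^{\t}\hat\Xi\boldsymbol{u}$ piece by piece: the noise contribution is absorbed into $\tau\|\boldsymbol{u}\|_1/2$ on event $\mathcal E_3$ of Lemma \ref{lem:prob_events}; the $\boldsymbol{A}$ contribution is handled by Cauchy--Schwarz together with the coordinate-wise bound above; and the cross term in $\bw-\hat\bw$ is controlled through the operator bound $\|n^{-1}\hat\Xi^{\t}\hat\Xi\|_{\mathrm{op}}=O_P(1)$ valid on $\mathcal E_1\cap\mathcal E_2$, combined with the $\ell_2$ rate $\|\bw-\hat\bw\|_2^2=O_P(m^{1+\alpha}/N+m^3h^2/N)$ from Lemma \ref{lem:rate_w}. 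With $\tau\asymp n^{-1/2}$ this would produce an inequality of the form
\begin{equation*}
    n^{-1}\|\hat\Xi\boldsymbol{u}\|_2^2 \;\lesssim\; \tau\|\boldsymbol{u}\|_1 + R_n,\qquad R_n=O_P\!\Big(m^{1-2\beta}+\tfrac{m^{1+\alpha}}{N}+\tfrac{m^3h^2}{N}\Big).
\end{equation*}

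To convert this into an $\ell_2$ bound on $\boldsymbol{u}$, I would invoke Proposition \ref{prop:re} after lifting its restricted-eigenvalue inequality from $\Xi$ to $\hat\Xi$; the lift is a perturbation argument showing $n^{-1}\|(\hat\Xi-\Xi)\boldsymbol{u}\|_2^2 = o_P(\|\boldsymbol{u}\|_2^2)$ under the standing assumption $m^{2(\alpha+1)}/N\to 0$. Since $\|D\boldsymbol{u}\|_2^2\ge\lambda_m\|\boldsymbol{u}\|_2^2\gtrsim m^{-\alpha}\|\boldsymbol{u}\|_2^2$, this gives $\|\boldsymbol{u}\|_2^2\lesssim m^{\alpha}(\tau\|\boldsymbol{u}\|_1+R_n)$ on the favorable event. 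I would close the loop by re-reading the basic inequality to extract the $\ell_1$ bound $\|\hat\bdelta\|_1\le\|\bdelta\|_1+O_P(R_n/\tau)\le h+o_P(h)$, hence $\|\boldsymbol{u}\|_1=O_P(h)$; plugging in gives the slow rate $\|\boldsymbol{u}\|_2^2=O_P(m^{\alpha}h/n^{1/2})$. The universal inequality $\|\boldsymbol{u}\|_2\le\|\boldsymbol{u}\|_1=O_P(h)$ supplies the second arm $h^2$ of the claimed minimum, and the terms $m^{1-2\beta}$, $m^{1+\alpha}/N$ and $m^3h^2/N$ come directly from $R_n$ through the RE inequality.

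The principal difficulty is the cross-term $n^{-1}(\bw-\hat\bw)^{\t}\hat\Xi^{\t}\hat\Xi\boldsymbol{u}$. Lemma \ref{lem:rate_w} only controls $\bw-\hat\bw$ in $\ell_2$, and unlike $\hat\Xi^{(l)\t}\hat\Xi^{(l)}/(n_l-1)$ in Step 1 the matrix $\hat\Xi^{\t}\hat\Xi/n$ is not diagonal, so the $\ell_\infty$ noise control the lasso analysis nominally requires is not directly available. My planned workaround is to bound this cross term as $\|n^{-1}\hat\Xi^{\t}\hat\Xi(\bw-\hat\bw)\|_2\|\boldsymbol{u}\|_2$ and then iterate with the RE-derived estimate of $\|\boldsymbol{u}\|_2$, effectively solving a quadratic in $\|\boldsymbol{u}\|_2$. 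This iteration, together with the transfer of RE from $\Xi$ to $\hat\Xi$ in the regime $m\gg n$ permitted by the transfer-learning setting, is what injects the residual rates $m^{1+\alpha}/N$ and $m^3h^2/N$ into the final bound.
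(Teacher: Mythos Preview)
Your overall architecture matches the paper's: start from the lasso basic inequality, split the residual $\bY-\bar\bY-\hat\Xi\hat\bw-\hat\Xi\bdelta$ into noise, truncation tail, eigenfunction mismatch, and the $\hat\Xi(\bw-\hat\bw)$ piece, then convert the prediction-norm bound to an $\ell_2$ bound via Proposition~\ref{prop:re} after a $\Xi\to\hat\Xi$ perturbation. The gap is in the bookkeeping of scales, and it costs you exactly a factor $m^{\alpha}$ in every residual term.

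The quantity the basic inequality controls is the \emph{prediction norm} $n^{-1}\|\hat\Xi\boldsymbol u\|_2^2\approx\|D\boldsymbol u\|_2^2$, and only at the very end do you inflate by $m^{\alpha}$ to reach $\|\boldsymbol u\|_2^2$. Your $R_n=O_P\bigl(m^{1-2\beta}+m^{1+\alpha}/N+m^{3}h^{2}/N\bigr)$ is already stated at the $\ell_2$ scale; when you then write $\|\boldsymbol u\|_2^2\lesssim m^{\alpha}\bigl(\tau\|\boldsymbol u\|_1+R_n\bigr)$ you end up with $m^{1+\alpha-2\beta}+m^{1+2\alpha}/N+m^{3+\alpha}h^{2}/N$, which is not the lemma. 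The specific places where the extra $m^{\alpha}$ creeps in are: (i) the truncation tail satisfies $n^{-1}\expect\|\tilde\Xi_\circ\tilde\bb\|_2^2=\sum_{k>m}\lambda_k b_k^{2}=O(m^{1-\alpha-2\beta})$, not $O(m^{1-2\beta})$; and (ii) bounding $n^{-1}\|\hat\Xi(\bw-\hat\bw)\|_2^2$ by $\|n^{-1}\hat\Xi^{\t}\hat\Xi\|_{\mathrm{op}}\,\|\bw-\hat\bw\|_2^2$ and then quoting Lemma~\ref{lem:rate_w} is wasteful by exactly $m^{\alpha}$. The paper instead observes $n^{-1}\|\hat\Xi(\bw-\hat\bw)\|_2^2=O_P\bigl(\sum_{k=1}^{m}\lambda_k(w_k-\hat w_k)^2\bigr)$ and re-runs the Lemma~\ref{lem:rate_w} argument with the $\lambda_k$ weights to get $\sum_{k}\lambda_k(w_k-\hat w_k)^2=O_P\bigl(m/N+m^{3-\alpha}h^{2}/N\bigr)$. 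With all residuals controlled at this weighted scale, the single $m^{\alpha}$ inflation from $\|D\boldsymbol u\|_2^2$ to $\|\boldsymbol u\|_2^2$ yields precisely the claimed rate.

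A smaller point: your ``solve a quadratic in $\|\boldsymbol u\|_2$'' workaround for the cross term is unnecessary and, as you suspected, is where things go wrong. The paper simply uses Young's inequality in the prediction norm,
\[
\tfrac{1}{n}\bigl|\langle \hat\Xi(\bw-\hat\bw),\hat\Xi\boldsymbol u\rangle\bigr|\;\le\;\tfrac{1}{4n}\|\hat\Xi\boldsymbol u\|_2^2+\tfrac{1}{n}\|\hat\Xi(\bw-\hat\bw)\|_2^2,
\]
absorbs the first piece into the left side, and bounds the second piece as above. No iteration is needed, and no $\ell_\infty$ control on $n^{-1}\hat\Xi^{\t}\hat\Xi(\bw-\hat\bw)$ is required.
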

		
		\begin{proof}
			Notice that
			\begin{align*}
				\frac{1}{2n} \|\bY - \bar{\bY} - \hat \Xi \hat \bw - \hat \Xi \hat \bdelta\|_2^2 + \tau \|\hat \bdelta\|_1 \le \frac{1}{2n} \|\bY - \bar{\bY} - \hat \Xi \hat \bw - \hat \Xi \bdelta\|_2^2 + \tau \| \bdelta\|_1.
			\end{align*}
			Thus, under the event $\mathcal E_3$ in \eqref{eq:event3}, 
			\begin{eqnarray}\label{eq:basic_ineq}
				& & \frac{1}{2n}\|\hat{\Xi} (\bdelta - \hat{\bdelta}) \|_2^2 \nonumber\\
				& \le & \tau \|\bdelta\|_1 - \tau \|\hat{\bdelta}\|_1 + \frac{1}{n} \langle \bY - \bar{\bY} - \hat{\Xi} \hat{\bw} - \hat{\Xi}\bdelta, \hat{\Xi}(\hat{ \bdelta}-\bdelta) \rangle \nonumber\\
				& = & \tau \|\bdelta\|_1 - \tau \|\hat{\bdelta}\|_1 + \frac{1}{n}|\langle \bepsilon - \bar{\bepsilon} + \tilde{\Xi}_\circ \tilde{\bb} + (\Xi_\circ - \hat{\Xi})\bb + \hat{\Xi}(\bw-\hat{\bw}), \hat{\Xi}(\hat{\bdelta}-\bdelta)  \rangle| \nonumber\\
				& \le & \tau \|\bdelta\|_1 - \tau \|\hat{\bdelta}\|_1 + \frac{1}{n}\|\hat \Xi^\t (\bepsilon - \bar{\bepsilon})\|_\infty \|\bdelta - \hat \bdelta\|_1 + \frac{1}{n}|\langle \tilde{\Xi}_\circ \tilde{\bb} + (\Xi_\circ - \hat{\Xi})\bb + \hat{\Xi}(\bw-\hat{\bw}), \hat{\Xi}(\hat{\bdelta}-\bdelta)  \rangle|  \nonumber\\
				& \le & 2\tau \|\bdelta\|_1 - \frac{\tau}{2} \|\hat{\bdelta} - \bdelta\|_1 + \frac{1}{4n} \|\hat{\Xi} (\bdelta - \hat{\bdelta}) \|_2^2 + \frac{1}{n}\| \tilde{\Xi}_\circ \tilde{\bb} + (\Xi_\circ - \hat{\Xi})\bb + \hat{\Xi}(\bw-\hat{\bw}) \|_2^2,
			\end{eqnarray}
			where $\bb = (b_{1}, \dots, b_{m})^{\t}$, $\tilde{\bb} = (b_{(m+1)}, \dots)^{\t}$, $\xi_{i,k} = \int_\tdomain \big(X_i(t) - \mu(t)\big) \phi_k(t)dt $, $\bar \xi_k = n^{-1}\sum_{i=1}^n \xi_{i,k}$,
			\[ \Xi_\circ = \left( \begin{array}{ccc}
				\xi_{1,1}-\bar \xi_1 & \cdots & \xi_{1,m} - \bar \xi_m \\
				\xi_{2,1} -\bar \xi_1 & \cdots & \xi_{2,m} - \bar \xi_m \\
				\vdots & \ddots & \vdots \\
				\xi_{n,1} - \bar \xi_1 & \cdots & \xi_{n,m} - \bar \xi_m
			\end{array} \right), ~~~~  
			\tilde{\Xi}_\circ = \left( \begin{array}{ccc}
				\xi_{1,m+1} - \bar \xi_{m+1} & \cdots  \\
				\xi_{2,m+1} - \bar \xi_{m+1} & \cdots  \\
				\vdots & \ddots \\
				\xi_{n,m+1}-\bar \xi_{m+1} & \cdots 
			\end{array} \right). \]
			This implies,
			\[ \frac{1}{4n}\|\hat{\Xi} (\bdelta - \hat{\bdelta}) \|_2^2 \le 2\tau \|\bdelta\|_1 - \frac{\tau}{2} \|\hat{\bdelta} - \bdelta\|_1 + \frac{3}{n} \| \tilde{\Xi}_\circ \tilde{\bb}\|_2^2 + \frac{3}{n}\|(\Xi_\circ - \hat{\Xi})\bb\|_2^2 + \frac{3}{n} \|\hat{\Xi}(\bw-\hat{\bw}) \|_2^2.   \]
			By Assumptions \ref{assump:cov-decay} and \ref{assump:slope-decay}, we have
			\[ \frac{1}{n} \expect \|\tilde \Xi_\circ \tilde \bb\|_2^2 = O(m^{1-\alpha-2\beta}). \]
			Note that 
			\begin{eqnarray}\label{eq:error1}
				\frac{1}{n} \|(\Xi_\circ - \hat{\Xi})\bb\|_2^2 & = & \frac{1}{n} \sum_{i=1}^n \left\{ \sum_{k=1}^m  b_k \int (X_i - \bar{X}) (\phi_k - \hat \phi_k) \right\}^2 \nonumber\\
				& \le & \frac{1}{n} \sum_{i=1}^n m \sum_{k=1}^m b_k^2 \|X_i - \bar{X}\|_2^2 \| \phi_k - \hat \phi_k\|_2^2 = O_P\left(\frac{m}{N} \right).
			\end{eqnarray}
			
			Moreover,
			\begin{eqnarray*}
				& & \frac{1}{n} \|\hat{\Xi}(\bw-\hat{\bw}) \|_2^2  =  \frac{1}{n} \sum_{i=1}^n \left\{ \sum_{k=1}^m  (w_k - \hat w_k) \int (X_i - \bar{X})\hat \phi_k \right\}^2 \\
				& \le & \frac{3}{n} \sum_{i=1}^n \left\{ \sum_{k=1}^m (w_k - \hat w_k) \int (X_i - \mu) \phi_k \right\}^2 + \frac{3}{n}\sum_{i=1}^n \left\{ \sum_{k=1}^m (w_k - \hat w_k) \int (X_i - \mu) (\hat \phi_k - \phi_k) \right\}^2 + \\
				& & \frac{3}{n} \sum_{i=1}^n \left\{ \sum_{k=1}^m (w_k - \hat w_k) \int ( \mu - \bar{X}) \hat \phi_k \right\}^2.
			\end{eqnarray*}
			It is obvious that the last term is negligible. 
			Using the independence between $\hat \bw$ and $X_i$ yields 
			\begin{align*}
				\frac{1}{n}\sum_{i=1}^n E \left[\left\{ \sum_{k=1}^m (w_k - \hat w_k) \int (X_i - \mu) \phi_k \right\}^2 \bigg| \hat \bw \right] &=  \frac{1}{n}\sum_{i=1}^n \sum_{k=1}^m \lambda_k (w_k - \hat w_k)^2 \\
				&= O_P\left(\sum_{k=1}^m \lambda_k (w_k - \hat w_k)^2 \right). 
			\end{align*}
			Applying similar arguments in the proof of Lemma \ref{lem:rate_w}, we obtain
			\[ \sum_{k=1}^m \lambda_k (w_k - \hat w_k)^2 = O_P\bigg(\frac{m + m^{3-\alpha}h^2}{N} \bigg). \] 
			And by Lemma \ref{lem:rate_w} and Lemma \ref{lem:eigfun},
			\begin{align*}
				\frac{1}{n}\sum_{i=1}^n \left\{ \sum_{k=1}^m (w_k - \hat w_k) \int (X_i - \mu) (\hat \phi_k - \phi_k) \right\}^2 & = O_P\left( \sum_{k=1}^m (w_k - \hat w_k)^2\sum_{k=1}^m \|\hat \phi_k - \phi_k\|_2^2 \right) \\
				& = O_P\bigg(\frac{m^{\alpha+4}}{N^2} + \frac{m^6 h^2}{N^2} \bigg) = o_P\bigg(\frac{m + m^{3-\alpha}h^2}{N} \bigg),
			\end{align*}
			due to $m^{3+\alpha}N^{-1} = o(1)$. Thus,
			\[ \frac{1}{n} \|\hat{\Xi}(\bw-\hat{\bw}) \|_2^2  = O_P\bigg(\frac{m + m^{3-\alpha}h^2}{N} \bigg). \]
			
			
			Note that from Proposition \ref{prop:re} with probability at least $1-c_4\exp(-c_5n)$,
			\begin{align*}
				\frac{1}{n}\|\hat{\Xi} (\bdelta - \hat{\bdelta}) \|_2^2 & \ge \frac{1}{2n}\|\Xi(\hat \bdelta - \bdelta)\|_2^2 - \frac{1}{n}\|(\Xi-\hat \Xi)(\hat \bdelta - \bdelta) \|_2^2 \\
				& \ge \frac{1}{8} \|D(\hat \bdelta -\bdelta)\|_2^2 - \frac{c_K\|D\|_F}{2n^{1/2}}\|\hat \bdelta - \bdelta\|_1\|D(\hat \bdelta -\bdelta)\|_2 - \frac{1}{n}\|(\Xi-\hat \Xi)(\hat \bdelta - \bdelta) \|_2^2.   
			\end{align*}
			Applying the Cauchy-Schwarz inequality leads to
			\begin{eqnarray*}
				& & \frac{1}{n}\|(\Xi-\hat \Xi)(\hat \bdelta - \bdelta) \|_2^2 \\
				& = & \frac{1}{n} \sum_{i=1}^n \left[ \sum_{k=1}^m \left\{ \int_\tdomain \big(X_i(t)-\mu(t)\big) (\phi_k(t) - \hat \phi_k(t))dt +  \int_\tdomain \big(\bar X(t) - \mu(t)\big) \hat \phi_k(t)dt \right\} (\hat \delta_k - \delta_k) \right]^2 \\
				& \le & \|\hat \bdelta - \bdelta\|_2^2 \frac{1}{n}\sum_{i=1}^n\sum_{k=1}^m \left\{ \int_\tdomain \big(X_i(t)-\mu(t)\big) (\phi_k(t) - \hat \phi_k(t))dt +  \int_\tdomain \big(\bar X(t) - \mu(t)\big) \hat \phi_k(t)dt \right\}^2  \\
				& \le & c\|\hat \bdelta - \bdelta\|_2^2 \frac{1}{n}\sum_{i=1}^n \big(\|X_i - \mu\|_2^2 \sum_{k=1}^m \|\hat \phi_k - \phi_k\|_2^2 + \|\bar X-\mu \|_2^2 \big) \\
				& = & O_P\left(\frac{m^3}{N} \|\bdelta-\hat \bdelta\|_2^2 + \frac{1}{n} \|\hat \bdelta - \bdelta\|_2^2 \right).
			\end{eqnarray*}
			
			\begin{itemize}
				\item Suppose $3n^{-1} \| \tilde{\Xi} \tilde{\bb}\|_2^2 + 3n^{-1}\|(\Xi - \hat{\Xi})\bb\|_2^2 + 3n^{-1} \|\hat{\Xi}(\bw-\hat{\bw}) \|_2^2 \le 2\tau\|\bdelta\|_1 $. Then,
				\[ \frac{1}{4n}\|\hat{\Xi} (\bdelta - \hat{\bdelta}) \|_2^2 \le 4\tau\|\bdelta\|_1, ~~~ \|\hat \bdelta - \bdelta\|_1 \le 8 \|\bdelta\|_1. \] 
				Consequently,
				\begin{align*}
					\frac{1}{32} \|D(\hat \bdelta -\bdelta)\|_2^2  & \le  4\tau\|\bdelta\|_1 + \frac{c_K\|D\|_F}{8n^{1/2}}\|\hat \bdelta - \bdelta\|_1\|D(\hat \bdelta -\bdelta)\|_2 + O_P\left(\frac{m^3}{N} \|\bdelta-\hat \bdelta\|_2^2 + \frac{1}{n} \|\hat \bdelta - \bdelta\|_2^2  \right) \\
					& \le 4\tau h +  \frac{c_6 h}{n^{1/2}} \|D(\hat \bdelta -\bdelta)\|_2 + O_P\left(\frac{m^3}{N} \|\bdelta-\hat \bdelta\|_2^2+ \frac{1}{n} \|\hat \bdelta - \bdelta\|_2^2  \right).
				\end{align*}
				Moreover, since $m^{3+\alpha}N^{-1} = o(1)$, direct calculations yield
				\[ \|\bdelta - \hat \bdelta\|_2^2 = O_P\left( \frac{m^{\alpha} h}{n^{1/2}} \wedge h^2 \right). \]
				
				\item Suppose $ 2\tau\|\bdelta\|_1 \le 3n^{-1} \| \tilde{\Xi}_\circ \tilde{\bb}\|_2^2 + 3n^{-1}\|(\Xi_\circ - \hat{\Xi})\bb\|_2^2 + 3n^{-1} \|\hat{\Xi}(\bw-\hat{\bw}) \|_2^2  $. 
				By \eqref{eq:basic_ineq}, we have
				\[ \frac{1}{4n}\|\hat{\Xi} (\bdelta - \hat{\bdelta}) \|_2^2  \le  \frac{6}{n} \| \tilde{\Xi}_\circ \tilde{\bb}\|_2^2 + \frac{6}{n}\|(\Xi_\circ - \hat{\Xi})\bb\|_2^2 + \frac{6}{n} \|\hat{\Xi}(\bw-\hat{\bw}) \|_2^2 - \frac{\tau}{2} \|\hat{\bdelta} - \bdelta\|_1 .     \]
				Thus,
				\[ \|\bdelta - \hat \bdelta\|_1^2 = O_P\left(nm^{2(1-\alpha-2\beta)} + \frac{nm^2}{N^2} + \frac{nm^{2(3-\alpha)}h^4}{N^2} \right). \]
				Moreover,
				\begin{align*}
					\frac{1}{32} \|D(\hat \bdelta -\bdelta)\|_2^2  & \le  \frac{c_K\|D\|_F}{8n^{1/2}}\|\hat \bdelta - \bdelta\|_1\|D(\hat \bdelta -\bdelta)\|_2 + O_P\left(\frac{m^3}{N} \|\bdelta-\hat \bdelta\|_2^2 + \frac{1}{n} \|\hat \bdelta - \bdelta\|_2^2 \right) + \\
					& O_P(n^{-1} \| \tilde{\Xi}_\circ \tilde{\bb}\|_2^2 + n^{-1}\|(\Xi_\circ - \hat{\Xi})\bb\|_2^2 + n^{-1} \|\hat{\Xi}(\bw-\hat{\bw}) \|_2^2) .
				\end{align*}
				We obtain
				\[ \|\bdelta - \hat \bdelta\|_2^2 = O_P\left\{\bigg(\frac{m^{1+\alpha}}{N} + m^{1-2\beta} + \frac{m^3 h^2}{N} \bigg) \wedge \bigg( nm^{2(1-\alpha-2\beta)} + \frac{nm^2}{N^2} + \frac{nm^{2(3-\alpha)}h^4}{N^2} \bigg) \right\}. \]
			\end{itemize}
		\end{proof}
		
		\begin{lemma}\label{lem:Fnorm}
			Recall that $D$ is a diagonal matrix with elements $\lambda_1^{1/2}, \dots, \lambda_m^{1/2}$. Write $\hat \Xi = \breve \Xi D$. Then 
			\[ \| n^{-1}\breve \Xi^\t \breve \Xi - I_m \|_F^2 = O_P\left( \frac{m^2}{n} + \frac{m^{3+\alpha}}{N} + \frac{m^{\alpha+4}}{N n} \right). \]
		\end{lemma}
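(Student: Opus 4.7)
The plan is to split each entry of $n^{-1}\breve\Xi^\t\breve\Xi - I_m$ into an oracle sample-covariance piece computed from the true eigenfunctions $\phi_k$ and perturbation pieces stemming from $\hat\phi_k^{\mathcal A} - \phi_k$, then bound the Frobenius sum of the squares piece by piece. Writing $\xi_{i,k} = \int(X_i-\mu)\phi_k$ and $B_{i,k} = \int(X_i-\mu)(\hat\phi_k-\phi_k) - \int(\bar X - \mu)\hat\phi_k$, so that $\hat\xi_{i,k} = \xi_{i,k} + B_{i,k}$, one obtains
\[ \tfrac{1}{n}(\hat\Xi^\t\hat\Xi)_{kj} - \lambda_k\delta_{kj} = A_{kj} + B_{kj}^{(1)} + B_{kj}^{(2)} + B_{kj}^{(3)}, \]
with $A_{kj} = n^{-1}\sum_i \xi_{i,k}\xi_{i,j} - \lambda_k\delta_{kj}$ and $B_{kj}^{(1)}, B_{kj}^{(2)}, B_{kj}^{(3)}$ the three cross-pieces involving the $B_{i,k}$'s. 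Dividing by $\sqrt{\lambda_k\lambda_j}$ and summing squared $(k,j)$ entries produces the target Frobenius norm, and $(a+b+c+d)^2\le 4(a^2+b^2+c^2+d^2)$ reduces the task to bounding four separate sums.

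For the oracle piece, the standardized scores $\xi_{i,k}/\sqrt{\lambda_k}$ are sub-Gaussian and orthonormal in $L^2(P)$ by Assumption \ref{assump:xdist}, so a sub-exponential concentration bound for products of sub-Gaussians gives $E[A_{kj}^2]/(\lambda_k\lambda_j) = O(1/n)$ uniformly over $k,j \le m$; summation over $m^2$ entries yields the $m^2/n$ contribution.

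For the perturbation pieces, I would work on the high-probability event $\mathcal E_1 \cap \mathcal E_2$ and invoke the expansions of $\hat\phi_k-\phi_k$ from Lemma \ref{lem:expansions} together with $\hat\Delta = O_P(N^{-1/2})$ from Lemma \ref{lem:Delta}, the spectral-gap control from Lemma \ref{lem:lam_k-lam_j}, the eigenfunction rates from Lemma \ref{lem:eigfun}, and the fourth-moment conditions of Assumption \ref{assump:x-moment}. Since $\hat\phi_j$ is built from source data, hence independent of the target $X_i$'s, one can condition on it: the leading deterministic contribution to $B_{kj}^{(1)}$ becomes $\lambda_k\langle\phi_k,\hat\phi_j-\phi_j\rangle$ (and symmetrically for $B_{kj}^{(2)}$), while the residual noise has variance of order $\lambda_k\lambda_1\|\hat\phi_j-\phi_j\|_2^2/n$. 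After weighting by $(\lambda_k\lambda_j)^{-1}$ and summing, the systematic piece yields the $m^{3+\alpha}/N$ contribution and the residual-noise piece produces the interaction term $m^{\alpha+4}/(Nn)$; the quadratic piece $B_{kj}^{(3)}$ is of smaller order by a direct Cauchy–Schwarz bound.

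The main obstacle is the bookkeeping of the weight $(\lambda_k\lambda_j)^{-1}\asymp (kj)^\alpha$ against the eigenfunction perturbation. A naive use of $|B_{kj}^{(r)}|\lesssim \|\hat\phi_k-\phi_k\|_2\|\hat\phi_j-\phi_j\|_2$ together with the uniform rate $\|\hat\phi_k-\phi_k\|_2=O_P(k^{\alpha+1}/\sqrt N)$ gives a bound far worse than claimed. To recover the stated rate one must peel off the conditional mean $\lambda_k\langle\phi_k,\hat\phi_j-\phi_j\rangle$, expand it via Lemma \ref{lem:expansions} in terms of $(\lambda_k^{\mathcal A}-\lambda_l^{\mathcal A})^{-1}\hat\Delta_{kl}$, and then split the double sum over $k,j\le m$ into the regimes $l<k/2$, $k/2\le l\le 2k$ with $l\ne k$, and $l>2k$, using the sharper behaviour of $|\lambda_k^{\mathcal A}-\lambda_l^{\mathcal A}|^{-1}$ in each regime, exactly as in the bounds on $T_{k2}$ and $T_{k3}$ in the proof of Lemma \ref{lem:rate_w}.
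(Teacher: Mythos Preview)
Your decomposition and the conditioning-on-$\hat\phi$ strategy match the paper's proof, and the three rate contributions you identify are the correct ones. The only issue is your final paragraph: you quote the eigenfunction rate as $\|\hat\phi_k-\phi_k\|_2=O_P(k^{\alpha+1}/\sqrt N)$, whereas Lemma~\ref{lem:eigfun} gives the much sharper $\|\hat\phi_k-\phi_k\|_2^2=O_P(k^2/N)$. With the correct rate in hand, no expansion via Lemma~\ref{lem:expansions} and no regime splitting \`a la Lemma~\ref{lem:lam_k-lam_j} is required. After separating the conditional mean, the paper simply bounds
\[
\frac{\big[\lambda_j\langle\phi_j,\hat\phi_k-\phi_k\rangle\big]^2}{\lambda_j\lambda_k}
\;\le\;\frac{\lambda_j}{\lambda_k}\,\|\hat\phi_k-\phi_k\|_2^2
\;=\;O_P\!\Big(\frac{j^{-\alpha}k^{2+\alpha}}{N}\Big),
\]
which sums over $j,k\le m$ to $O_P(m^{3+\alpha}/N)$ because $\sum_j j^{-\alpha}<\infty$ for $\alpha>1$. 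For the fluctuation (diagonal) piece after squaring, a direct Cauchy--Schwarz bound $\big[\int(X_i-\mu)(\hat\phi_k-\phi_k)\big]^4\le\|X_i-\mu\|_2^4\|\hat\phi_k-\phi_k\|_2^4$ gives $O_P(k^{2+\alpha}/(Nn))$ per entry and hence $O_P(m^{4+\alpha}/(Nn))$ in total. Your detour through the fine expansion of $\hat\phi_k-\phi_k$ would also reach the goal, but it is unnecessary work. (A minor side remark: the fourth-moment control you need on the target side comes from the sub-Gaussian Assumption~\ref{assump:xdist}, not from Assumption~\ref{assump:x-moment}, which concerns the source data.)
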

		
		\begin{proof}
			Note that 
			\begin{align}\label{eq:Fnorm-decom}
				& \quad \frac{1}{n}\sum_{i=1}^n \frac{\int (X_i - \bar X) \hat \phi_j \int(X_i - \bar X) \hat \phi_k}{\lambda_j^{1/2}\lambda_k^{1/2}} \nonumber\\
				& = \frac{1}{n}\sum_{i=1}^n \frac{\int (X_i - \mu + \mu - \bar X) (\hat \phi_j - \phi_j + \phi_j) \int(X_i - \mu + \mu - \bar X) (\hat \phi_k-\phi_k + \phi_k)}{\lambda_j^{1/2}\lambda_k^{1/2}} \nonumber\\
				& =  \frac{1}{n}\sum_{i=1}^n \frac{\int (X_i - \mu) \phi_j \int(X_i - \mu) \phi_k}{\lambda_j^{1/2}\lambda_k^{1/2}} + \frac{1}{n}\sum_{i=1}^n \frac{\int (X_i - \mu)\phi_j \int(X_i - \mu) (\hat \phi_k - \phi_k)}{\lambda_j^{1/2}\lambda_k^{1/2}} + \nonumber\\
				& \quad \frac{1}{n}\sum_{i=1}^n \frac{\int (X_i - \mu) \phi_j \int(\mu - \bar X) \phi_k}{\lambda_j^{1/2}\lambda_k^{1/2}} + \frac{1}{n}\sum_{i=1}^n \frac{\int (X_i - \mu) \phi_j \int(\mu - \bar X) (\hat \phi_k - \phi_k)}{\lambda_j^{1/2}\lambda_k^{1/2}} +  \nonumber\\
				& \quad \frac{1}{n}\sum_{i=1}^n \frac{\int (X_i - \mu) (\hat \phi_j - \phi_j) \int(X_i - \mu) \phi_k}{\lambda_j^{1/2}\lambda_k^{1/2}}  + \frac{1}{n}\sum_{i=1}^n \frac{\int (X_i - \mu) (\hat \phi_j - \phi_j) \int(X_i - \mu) (\hat \phi_k - \phi_k)}{\lambda_j^{1/2}\lambda_k^{1/2}} + \nonumber\\
				& \quad  \frac{1}{n}\sum_{i=1}^n \frac{\int (X_i - \mu) (\hat \phi_j - \phi_j) \int(\mu - \bar X) \phi_k}{\lambda_j^{1/2}\lambda_k^{1/2}} + \frac{1}{n}\sum_{i=1}^n \frac{\int (X_i - \mu) (\hat \phi_j - \phi_j) \int(  \mu - \bar X) (\hat \phi_k - \phi_k)}{\lambda_j^{1/2}\lambda_k^{1/2}}  + ...,
			\end{align}
			where we omit some terms for simplicity. By independence between $X_i, i=1,\dots,n$,
			\[	\expect \left\{   \frac{1}{n}\sum_{i=1}^n \frac{\int (X_i - \mu) \phi_j \int(X_i - \mu) \phi_k}{\lambda_j^{1/2}\lambda_k^{1/2}} - \delta_{jk} \right\}^2 = O(n^{-1}), \]
			where $\delta_{jk} = 1$ if $j=k$ and 0 otherwise. Observe that
			\begin{align}\label{eq:Fnorm_ineq1}
				& \quad \left\{   \frac{1}{n}\sum_{i=1}^n \frac{\int (X_i - \mu)\phi_j \int(X_i - \mu) (\hat \phi_k - \phi_k)}{\lambda_j^{1/2}\lambda_k^{1/2}} \right\}^2 \nonumber\\
				& = \frac{1}{n^2}\sum_{i=1}^n \frac{\xi_{i,j}^2 \{ \int(X_i - \mu) (\hat \phi_k - \phi_k) \}^2  }{\lambda_j\lambda_k} + \frac{1}{n^2}\sum_{i \neq i'}\frac{ \xi_{i,j}\xi_{i',j}\int(X_i - \mu) (\hat \phi_k - \phi_k)\int(X_{i'} - \mu) (\hat \phi_k - \phi_k) }{\lambda_j \lambda_k}.
			\end{align}
			To control the second term, by independence between $\hat \phi_k$ and $\xi_{i,j}, X_i, i=1,\dots,n$, we have 
			\begin{align*}
				& \quad \expect\left\{\xi_{i,j}\xi_{i',j}\int(X_i - \mu) (\hat \phi_k - \phi_k)\int(X_{i'} - \mu) (\hat \phi_k - \phi_k) \mid \hat \phi_k \right\} \\
				& = \left[\expect \left\{\xi_{i,j}\int(X_i - \mu) (\hat \phi_k - \phi_k) \mid \hat \phi_k \right\} \right]^2 \\
				& = \left[ \expect \left\{ \xi_{i,j} \sum_{l=1}^\infty \xi_{i,l}\int \phi_l(\hat \phi_k - \phi_k)  \mid \hat \phi_k \right\} \right]^2 \\
				& \le  \lambda_j^2 \|\hat \phi_k - \phi_k\|_2^2.
			\end{align*}
			Thus, the second term in \eqref{eq:Fnorm_ineq1} is bounded by $O_P(j^{-\alpha}k^{2+\alpha}N^{-1})$. Now, we address the first term in \eqref{eq:Fnorm_ineq1}. 	By the Cauchy-Schwarz inequality,
			\begin{align*}
				& \quad \frac{1}{n^2}\sum_{i=1}^n \frac{\xi_{i,j}^2 \{ \int(X_i - \mu) (\hat \phi_k - \phi_k) \}^2  }{\lambda_j\lambda_k} \\
				& \le \frac{1}{n^2} \left(\sum_{i=1}^n \frac{\xi_{i,j}^4}{\lambda_j^2}\right)^{1/2} \left(\sum_{i=1}^n \frac{\{ \int(X_i - \mu) (\hat \phi_k - \phi_k) \}^4}{\lambda_k^2}  \right)^{1/2} \\
				& \le \frac{1}{n^2} \left(\sum_{i=1}^n \frac{\xi_{i,j}^4}{\lambda_j^2}\right)^{1/2} \left(\sum_{i=1}^n \frac{\|X_i - \mu\|_2^4}{\lambda_k^2}  \right)^{1/2} \|\hat \phi_k - \phi_k\|_2^2 \\
				& = O_P\left( \frac{k^{2+\alpha}}{N n} \right).
			\end{align*}
			In a similar manner, we can control the other terms in \eqref{eq:Fnorm-decom}.
			Combining the above pieces together yields, 
			\[ \| n^{-1}\breve \Xi^\t \breve \Xi - I_m\|_F^2 = O_P\left(\frac{m^2}{n} + \frac{m^{3+\alpha}}{N} + \frac{m^{4+\alpha}}{N n}\right).  \]
		\end{proof}
		
		\begin{lemma}\label{lem:w/o_lasso}
			Under Assumptions \ref{assump:xdist}-\ref{assump:x-moment}, if $\tau = 0$, $m^2n^{-1} = o(1)$ and $m^{\alpha+3}N^{-1}=o(1)$, then
			\[ \|\bb - \hat \bb\|_2^2 = O_P\bigg(  m^{1-2\beta} + \frac{m^{1+\alpha}}{n}  + \frac{m^{1+\alpha}}{N} \bigg).  \]
		\end{lemma}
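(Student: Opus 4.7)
The plan is to exploit a key simplification under $\tau = 0$: the unconstrained Step 2 collapses into ordinary least squares, and its first-order condition gives $\hat{\Xi}^\t \hat{\Xi}(\hat{\bw} + \hat{\bdelta}) = \hat{\Xi}^\t (\bY - \bar{\bY})$. Hence $\hat{\bb} = \hat{\bw} + \hat{\bdelta} = (\hat{\Xi}^\t \hat{\Xi})^{-1} \hat{\Xi}^\t (\bY - \bar{\bY})$, which is the classical FPCA-based estimator but with the pooled source eigenbasis $\{\hat{\phi}_k^{\mathcal{A}}\}$ in place of the target one. Expanding the target model in the true basis $\{\phi_k\}$ yields $\bY - \bar{\bY} = \Xi_\circ \bb + \tilde{\Xi}_\circ \tilde{\bb} + (\bepsilon - \bar{\bepsilon})$, so that
\[
\hat{\bb} - \bb = (\hat{\Xi}^\t \hat{\Xi})^{-1} \hat{\Xi}^\t (\Xi_\circ - \hat{\Xi}) \bb + (\hat{\Xi}^\t \hat{\Xi})^{-1} \hat{\Xi}^\t \tilde{\Xi}_\circ \tilde{\bb} + (\hat{\Xi}^\t \hat{\Xi})^{-1} \hat{\Xi}^\t (\bepsilon - \bar{\bepsilon}),
\]
with the three summands corresponding respectively to eigenfunction-estimation error, truncation bias, and stochastic noise.

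The central ingredient is a uniform lower bound on $\lambda_{\min}(n^{-1} \hat{\Xi}^\t \hat{\Xi})$. Writing $\hat{\Xi} = \breve{\Xi} D$, Lemma \ref{lem:Fnorm} together with $m^2/n = o(1)$ and $m^{\alpha+3}/N = o(1)$ (which also forces $m^{\alpha+4}/(Nn) = o(1)$) gives $\|n^{-1} \breve{\Xi}^\t \breve{\Xi} - I_m\|_F = o_P(1)$, so $\lambda_{\min}(n^{-1} \breve{\Xi}^\t \breve{\Xi}) \ge 1/2$ on a high-probability event $\mathcal{E}$. Combined with $\lambda_m \ge c m^{-\alpha}$ from Assumption \ref{assump:cov-decay}, this yields $\|(\hat{\Xi}^\t \hat{\Xi})^{-1}\|_{\mathrm{op}} \lesssim m^{\alpha}/n$ on $\mathcal{E}$. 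I expect this operator-norm bound to be the main obstacle, because it must remain valid even when $m$ is not small relative to $n$; once it is in hand, the remaining steps are routine.

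On $\mathcal{E}$, the first two terms are controlled by the elementary inequality $\|(\hat{\Xi}^\t \hat{\Xi})^{-1} \hat{\Xi}^\t v\|_2^2 \le \|(\hat{\Xi}^\t \hat{\Xi})^{-1}\|_{\mathrm{op}} \|v\|_2^2$, which follows because $\hat{\Xi} (\hat{\Xi}^\t \hat{\Xi})^{-1} \hat{\Xi}^\t$ is an orthogonal projector. Applying $\|(\Xi_\circ - \hat{\Xi}) \bb\|_2^2 = O_P(n m / N)$ (derived exactly as in \eqref{eq:error1}) and $\expect \|\tilde{\Xi}_\circ \tilde{\bb}\|_2^2 = O(n m^{1-\alpha-2\beta})$ (from the eigenvalue and slope-coefficient decay in Assumptions \ref{assump:cov-decay} and \ref{assump:slope-decay}) then produces the bounds $O_P(m^{1+\alpha}/N)$ and $O_P(m^{1-2\beta})$, respectively.

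For the noise term I will switch to a trace bound: conditioning on the covariates, $\hat{\Xi}$ is fixed and $\bepsilon - \bar{\bepsilon}$ has covariance $\sigma^2(I - n^{-1}\mathbf{1}\mathbf{1}^\t)$, so
\[
\expect \|(\hat{\Xi}^\t \hat{\Xi})^{-1} \hat{\Xi}^\t (\bepsilon - \bar{\bepsilon})\|_2^2 \le \sigma^2 \trace\big((\hat{\Xi}^\t \hat{\Xi})^{-1}\big).
\]
On $\mathcal{E}$ this trace is bounded by $2\sigma^2 \sum_{k=1}^m (n \lambda_k)^{-1} \lesssim m^{1+\alpha}/n$ via Assumption \ref{assump:cov-decay}. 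Summing the three contributions gives the claimed rate $O_P(m^{1-2\beta} + m^{1+\alpha}/n + m^{1+\alpha}/N)$.
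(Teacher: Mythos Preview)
Your proposal is correct and follows essentially the same route as the paper: both derive $\hat{\bb}-\bb$ from the normal equations, decompose into the same three pieces (noise, truncation, eigenfunction error), and control them via Lemma~\ref{lem:Fnorm} together with the trace bound for the noise term. The only cosmetic difference is that the paper first passes to the rescaled design $\breve{\Xi}=\hat{\Xi}D^{-1}$, bounds $\|D(\hat{\bb}-\bb)\|_2^2$, and then multiplies by $m^{\alpha}$ at the end, whereas you fold that factor directly into the operator-norm bound $\|(\hat{\Xi}^\t\hat{\Xi})^{-1}\|_{\mathrm{op}}\lesssim m^{\alpha}/n$; the two are equivalent since $\hat{\Xi}^\t\hat{\Xi}=D\,\breve{\Xi}^\t\breve{\Xi}\,D$.
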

		
		\begin{proof}
			If $\tau=0$, then
			\[ \hat \bdelta = \mathop{\arg\min}_{\bdelta} \frac{1}{2n}\| \bY - \bar \bY - \hat \Xi (\hat \bw + \bdelta) \|_2^2. \]
			Then by the optimality condition, we obtain
			\[ \frac{1}{n}\hat \Xi^\t \hat \Xi (\hat \bw + \hat \bdelta ) = \frac{1}{n}\hat \Xi^\t (\bY - \bar \bY). \]
			Recall that $D$ is a diagonal matrix with elements $\lambda_1^{1/2}, \dots, \lambda_m^{1/2}$. Let $\Xi_\circ$ and $\tilde \Xi_\circ$ be as defined in the proof of Lemma \ref{lem:rate_delta}. Write $\hat \Xi = \breve \Xi D$ and $\Xi_\circ = \breve \Xi_\circ D$. A direct calculation yields
			\[ D(\hat \bb - \bb) = (\breve \Xi^\t \breve \Xi)^{-1} \breve \Xi^\t \big( \bepsilon - \bar{\bepsilon} + \tilde \Xi_\circ \tilde \bb + (\breve \Xi_\circ - \breve \Xi)D\bb \big).   \]
			Consequently, 
			\begin{align*}
				& \quad \|\hat \bb - \bb\|_2^2  \le m^\alpha \|D(\hat \bb - \bb)\|_2^2 \\
				&  \le 3m^\alpha \bigg( \| (\breve \Xi^\t \breve \Xi)^{-1} \breve \Xi^\t ( \bepsilon - \bar{\bepsilon})\|_2^2 + \| (\breve \Xi^\t \breve \Xi)^{-1} \breve \Xi^\t\tilde \Xi_\circ \tilde \bb   \|_2^2 + \| (\breve \Xi^\t \breve \Xi)^{-1}\breve \Xi^\t(\breve \Xi_\circ - \breve \Xi)D\bb  \|_2^2 \bigg). 
			\end{align*}
			It remains to bound the three terms, respectively.
			Due to the independence between $\bepsilon$ and $\breve \Xi$, we have
			\begin{align*}
				\expect ( \| (\breve \Xi^\t \breve \Xi)^{-1} \breve \Xi^\t ( \bepsilon - \bar{\bepsilon})\|_2^2 \mid \breve \Xi) &= \expect ( ( \bepsilon - \bar{\bepsilon})^\t \breve \Xi (\breve \Xi^\t \breve \Xi)^{-2} \breve \Xi^\t ( \bepsilon - \bar{\bepsilon}) \mid \breve \Xi) \\
				& = \expect \big\{ \trace\big( (\breve \Xi^\t \breve \Xi)^{-2} \breve \Xi^\t ( \bepsilon - \bar{\bepsilon}) ( \bepsilon - \bar{\bepsilon})^\t \breve \Xi  \big)  \mid \breve \Xi \big\} \\
				& = \trace\big( (\breve \Xi^\t \breve \Xi)^{-2} \breve \Xi^\t \expect\{( \bepsilon - \bar{\bepsilon}) ( \bepsilon - \bar{\bepsilon})^\t\} \breve \Xi  \big)  \\
				& = \frac{\sigma^2}{n}\trace((n^{-1}\breve \Xi^\t \breve \Xi)^{-1})  = O_P(mn^{-1}).
			\end{align*}
			due to Lemma \ref{lem:Fnorm} and $\expect\{ (\bepsilon - \bar \bepsilon)(\bepsilon - \bar \bepsilon)^\t \} = \sigma^2 I_n - n^{-1}\sigma^2 1_n 1_n^\t$, where $I_n$ is a $n \times n$ identity matrix and $1_n$ is a $n$-dimensional vector with entries all equal to 1.
			
			By Lemma \ref{lem:Fnorm},  we obtain
			\[ \| (\breve \Xi^\t \breve \Xi)^{-1} \breve \Xi^\t\tilde \Xi_\circ \tilde \bb   \|_2^2 = \| (n^{-1}\breve \Xi^\t \breve \Xi)^{-1} n^{-1}\breve \Xi^\t\tilde \Xi_\circ \tilde \bb   \|_2^2 = O_P(n^{-1}\|\tilde \Xi_\circ \tilde \bb\|_2^2) = O_P(m^{1-\alpha-2\beta}). \]
			Lemma \ref{lem:Fnorm} also implies that
			\[  \| (\breve \Xi^\t \breve \Xi)^{-1}\breve \Xi^\t(\breve \Xi_\circ - \breve \Xi)D\bb  \|_2^2 = O_P(n^{-1}\|(\Xi_\circ - \hat \Xi)\bb\|_2^2) = O_P(m N^{-1}), \]
			due to \eqref{eq:error1} in Lemma \ref{lem:rate_delta}.
			
			Thus, we conclude that 
			\[ \|\hat \bb - \bb\|_2^2 = O_P\bigg(  m^{1-2\beta} + \frac{m^{1+\alpha}}{n} + \frac{m^{1+\alpha}}{N} \bigg) . \]
		\end{proof}

		\section{Proof of Proposition \ref{prop:re} and Corollary \ref{cor:rate}}
		
		\begin{proof}[Proof of Proposition \ref{prop:re}]
			Let $\bgamma = D^{-1}\bxi$ and $\Gamma = \Xi D^{-1}$. According to the sub-Gaussian property of $\bxi$ in Assumption \ref{assump:xdist}, we conclude that $\sup_{\bv} \|\bv^{\t} \bgamma\|_{\psi_2} \le K\|\bv\|_2$.
			
			It suffices to prove the statement for $\|D\bv\|_2 = 1$. Indeed, if $\|D\bv\|_2 = 0$, the claim holds trivially. Otherwise, we may consider $\check \bv = \bv/\|D\bv\|_2$ and prove
			\[ \frac{1}{n}\|\Xi \check \bv \|_2^2 \ge \frac{1}{4} - \frac{c_K\|D\|_F}{n^{1/2}}\|\check \bv\|_1. \]
			If the above inequality holds for $\check \bv$, then \eqref{eq:re} holds for $\bv$ by scale invariance.	
			Further, it suffices to prove that
			\[ n^{-1}\sum_{i=1}^n \{(\bv^{\t}D \bgamma_i)^2\mathds{1}(|\bv^{\t}D \bgamma_i| \le T)\} \ge \frac{1}{4} - \frac{c_K\|D\|_F}{n^{1/2}}\|\bv\|_1, ~~~~ \|D\bv\|_2=1,  \]
			where $T>0$ is to be determined later.
			
			Define $\varsigma(\bv)=|n^{-1}\sum_{i=1}^n \{(\bv^{\t}D \bgamma_i)^2\mathds{1}(|\bv^{\t}D \bgamma_i| \le T)\} - \expect\{(\bv^{\t}D\bgamma)^2\mathds{1}(|\bv^{\t}D \bgamma| \le T)\}$ and $Z(r) = \sup_{\bv \in V(r)} \varsigma(\bv) $, where $V(r) = \{ \bv: \|D\bv\|_2=1, \|\bv\|_1 \le r  \}$. 
			We outline the main steps of the proof.
			\begin{enumerate}
				\item[(1)] Given a fixed radius $r$, prove
				\begin{equation}\label{eq:Z}
					\prob \left(Z(r) \ge \frac{1}{8} + \frac{16rT\|D\|_F}{n^{1/2}}\right) \le 2\exp\bigg(-\frac{n(1/8+16rT\|D\|_Fn^{-1/2})^2}{2T^4}\bigg) .
				\end{equation}
				
				\item[(2)] We use a peeling argument to verify the claim holds uniformly over all possible choices of $r$ with high probability.
				
				\item[(3)] Prove that with the chosen $T$ later and for $\bv$ such that $\|D\bv\|_2=1$,
				\begin{equation}\label{eq:T-mean-lower}
					\expect\{(\bv^{\t}D\bgamma)^2\mathds{1}(|\bv^{\t}D \bgamma| \le T)\} \ge \frac{1}{2}. 
				\end{equation}
			\end{enumerate} 
			
			We first prove \eqref{eq:Z}.
			Applying the Azuma-Hoeffding inequality \citep[Corollary 2.20]{wainwright2019high} yields
			\[ \prob(Z(r)-\expect Z(r) \ge t) \le 2\exp\bigg(-\frac{2nt^2}{T^4}\bigg). \]
			Taking $t = t(r) =  1/8 + 8rT\|D\|_Fn^{-1/2} $, it remains to show that $\expect Z(r) \le 8rT\|D\|_F n^{-1/2}$ to prove \ref{eq:Z}.
			Let $\varepsilon_1, \dots, \varepsilon_n$ be i.i.d. Rademacher variables. By the symmetrization technique and the Ledoux-Talagrand contraction inequality \citep[Proposition 5.28]{wainwright2019high},
			\begin{eqnarray*}
				\expect Z(r) & \le & 2 \expect \sup_{\bv \in V(r)} \bigg| n^{-1}\sum_{i=1}^n \{\varepsilon_i(\bv^{\t}D \bgamma_i)^2\mathds{1}(|\bv^{\t}D \bgamma_i| \le T)\} \bigg| \\
				& \le & 8T \expect \sup_{\bv \in V(r)} \bigg| n^{-1}\sum_{i=1}^n \varepsilon_i\bv^{\t}D \bgamma_i \mathds{1}(|\bv^{\t}D \bgamma_i| \le T) \bigg| \\
				& \le & 8r T \expect  \big \|n^{-1}\sum_{i=1}^n \varepsilon_i \bxi_i \mathds{1}(|\bv^{\t}D \bgamma_i| \le T) \big\|_{\infty} \\
				& \le & 8r T \left( \sum_{k=1}^m  n^{-2}\sum_{i=1}^n \expect \xi_{i,k}^2  \right)^{1/2} = \frac{8rT\|D\|_F}{n^{1/2}}.
			\end{eqnarray*}
			
			Denote the event 
			\[ \mathcal T = \left\{ \exists \bv ~s.t.~ \|D\bv\|_2=1 \text{~and~} \varsigma(\bv) \ge \frac{1}{4} +  \frac{32 T\|D\|_F\|\bv\|_1}{n^{1/2}} \right\}. \]
			Applying the peeling argument in Lemma 3 of \citet{raskutti2010restricted} leads to
			\[ \prob(\mathcal T) \le c_4\exp(-c_5n), \]
			for some positive constants $c_4, c_5>0$.
			
			Now we prove \eqref{eq:T-mean-lower}. Note that for $\bv$ satisfying $\|D\bv\|_2=1$,
			\begin{eqnarray*} 
				\expect\{(\bv^{\t}D\bgamma)^2\mathds{1}(|\bv^{\t}D \bgamma| \le T)\} &= & \expect(\bv^{\t}D\bgamma)^2 - \expect \{(\bv^{\t}D\bgamma)^2\mathds{1}(|\bv^{\t}D \bgamma| > T)\} \\
				& \ge & 1 - \expect \{(\bv^{\t}D\bgamma)^2\mathds{1}(|\bv^{\t}D \bgamma| > T)\}.   
			\end{eqnarray*}
			To obtain the lower bound of the expectation on the left side, it suffices to upper bound the last term. By the sub-Gaussian property and the Cauchy-Schwarz inequality,
			\begin{eqnarray*}
				\expect \{(\bv^{\t}D\bgamma)^2\mathds{1}(|\bv^{\t}D \bgamma| > T)\} & \le &  \{\expect (\bv^{\t}D\bgamma)^4\}^{1/2}\{ \prob(|\bv^{\t}D \bgamma| > T) \}^{1/2} \\
				& \le & C_1 K^2 \exp\bigg(-\frac{T^2}{K^2}\bigg).
			\end{eqnarray*}
			Let $T = K\sqrt{\log(2C_1K^2)}$, then we conclude that $\expect \{(\bv^{\t}D\bgamma)^2\mathds{1}(|\bv^{\t}D \bgamma| \le T)\} \ge 1/2$.
			
			Combining the above arguments completes the proof.
		\end{proof}

		\begin{proof}[Proof of Corollary \ref{cor:rate}]
			From Theorem \ref{thm:upper}, if we take $\tau = 0$, then 
			\[ \|b - \hat b\|_2^2  = O_P\bigg( m^{1-2\beta} + \frac{m^{1+\alpha}}{n} + \frac{m^{1+\alpha}}{N} \bigg). \]
			Taking $m \asymp n^{1/(\alpha+2\beta)}$ yields $\|\hat b - b\|_2^2 = O_P(n^{-(2\beta-1)/(\alpha+2\beta)})$.
			Thus, we conclude that if $h$ is sufficiently large so that 
			\[  \frac{m^\alpha h}{n^{1/2}} \wedge h^2 + \frac{m^{1+\alpha}}{N} + \frac{m^3 h^2}{N} + m^{1-2\beta} \gg n^{-\frac{2\beta-1}{\alpha+2\beta}}, \]
			then we should directly take $\tau=0$.
			To understand how large $h$ needs to be to lead to the above situation, we should examine the rate on the left side.
			 
			Case 1. 
			Suppose $h^2 \lesssim n^{-1/2}m^\alpha h$, that is, $h \lesssim n^{-1/2}m^\alpha$. Then
			\[ \|b - \hat b\|_2^2 = O_P\bigg(h^2 + \frac{m^{1+\alpha}}{N} + \frac{m^3 h^2}{N} + m^{1-2\beta} \bigg) = O_P \bigg(h^2 + \frac{m^{1+\alpha}}{N} + m^{1-2\beta} \bigg), \]
			due to $m^{2(\alpha+1)}N^{-1} \to 0$.
			\begin{itemize}
				
				\item If $h \lesssim N^{-(2\beta-1)/(2\alpha+4\beta)}$, then 
				\[ h^2 \lesssim  N^{-(2\beta-1)/(\alpha+2\beta)} \lesssim  \frac{m^{1+\alpha}}{N} + m^{1-2\beta} . \]
				Taking $m \asymp N^{1/(\alpha+2\beta)}$ yields $\|\hat b - b\|_2^2 = O_P(N^{-(2\beta-1)/(\alpha+2\beta)})$.
				Moreover, $h\lesssim n^{-1/2}m^\alpha$ requires $h \lesssim n^{-1/2}N^{\alpha/(\alpha+2\beta)}$. This condition is indeed satisfied since $ n \lesssim N $ and $n^{-1/2}N^{\alpha/(\alpha+2\beta)} \gtrsim N^{-(2\beta-\alpha)/(2\alpha+4\beta)} \gtrsim N^{-(2\beta-1)/(2\alpha+4\beta)}$.
				
				\item If $h \gtrsim N^{-(2\beta-1)/(2\alpha+4\beta)}$, then taking $  h^{-2/(2\beta-1)} \lesssim m \lesssim (Nh^2)^{1/(\alpha+1)}$ and $m^{2(\alpha+1)}N^{-1} = o(1)$ yields $\|\hat b - b\|_2^2 = O_P(h^2)$. Moreover, $h\lesssim n^{-1/2}m^\alpha$ is satisfied when $h \lesssim n^{-(2\beta-1)/(2(2\alpha+2\beta-1))}$. 
				
			\end{itemize}
			Case 2. Suppose $h^2 \gtrsim n^{-1/2}m^\alpha h$, that is, $h \gtrsim n^{-1/2}m^\alpha$. Then
			\[ \|b - \hat b\|_2^2 = O_P\bigg(\frac{m^\alpha h}{n^{1/2}} + \frac{m^{1+\alpha}}{N} + \frac{m^3 h^2}{N} + m^{1-2\beta} \bigg) . \]
			\begin{itemize}
				\item If $n^{-1/2}m^\alpha h \lesssim m^{1+\alpha}N^{-1}$, then $h$ should satisfy $h \lesssim mn^{1/2}N^{-1}$. However, $h$ can not satisfy $h \lesssim mn^{1/2}N^{-1}$ and $h \gtrsim n^{-1/2}m^\alpha$ simultaneously because $n \lesssim N$.
				
				\item If $n^{-1/2}m^\alpha h \gtrsim m^{1+\alpha}N^{-1}$ and $n^{-1/2}m^\alpha h \gtrsim m^3h^2N^{-1}$, then $h$ should satisfy $mn^{1/2}N^{-1} \lesssim  h \lesssim  m^{\alpha-3}Nn^{-1/2}$ and $\|b - \hat b\|_2^2 = O_P(n^{-1/2}m^\alpha h + m^{1-2\beta})$. Taking $m \asymp (n^{1/2}h^{-1})^{1/(\alpha+2\beta-1)}$ yields 
				$$\|b - \hat b\|_2^2 = O_P\bigg( n^{-\frac{2\beta-1}{2(\alpha+2\beta-1)}}h^{\frac{2\beta-1}{\alpha+2\beta-1}} \bigg).$$
				The condition $ h \gtrsim mn^{1/2}N^{-1} $ implies that $ h \gtrsim n^{1/2} N^{-(\alpha+2\beta-1)/(\alpha+2\beta)}$. The condition $h \gtrsim n^{-1/2}m^\alpha$ implies $h \gtrsim n^{-(2\beta-1)/(2(2\alpha+2\beta-1))}$. Combining the above pieces together, we require $h \gtrsim n^{-(2\beta-1)/(2(2\alpha+2\beta-1))}$ if $n \lesssim N$. Under this condition of $h$, we have $m^{2(\alpha+1)}N^{-1}=o(1)$. Moreover, the condition $ h \lesssim m^{\alpha-3}Nn^{-1/2}$ requires $ h \lesssim N^{(\alpha+2\beta-1)/(2(\alpha+\beta-2))} n^{-(\beta+1)/(2(\alpha+\beta-2))}$.
				
				\item If $n^{-1/2}m^\alpha h \gtrsim m^{1+\alpha}N^{-1}$ and $n^{-1/2}m^\alpha h \lesssim m^3h^2N^{-1}$, then $h$ should satisfy $h \gtrsim mn^{1/2}N^{-1}$ and $ h \gtrsim  m^{\alpha-3}Nn^{-1/2}$ and $\|b - \hat b\|_2^2 = O_P(m^3 h^2 N^{-1} + m^{1-2\beta})$. Taking $m \asymp (Nh^{-2})^{1/(2\beta+2)}$ yields
				\[ \| b - \hat b\|_2^2 = O_P\bigg(  N^{-\frac{2\beta-1}{2(\beta+1)}} h^{\frac{2\beta-1}{\beta+1}} \bigg). \]
				The condition $h \gtrsim  m^{\alpha-3}Nn^{-1/2}$ implies $h \gtrsim N^{(\alpha+2\beta-1)/(2(\alpha+\beta-2))} n^{-(\beta+1)/(2(\alpha+\beta-2))}$. The condition $h \gtrsim n^{-1/2}m^\alpha$ implies $ h \gtrsim  N^{\alpha/(2(\alpha+\beta+1))} n^{-(\beta+1)/(2(\alpha+2\beta-1))}$. The condition $ h \gtrsim mn^{1/2}N^{-1}$ implies $h \gtrsim N^{-(2\beta+1)/(2\beta+4)} n^{(\beta+1)/(2\beta+4)}$. By calculation, we only require $ h \gtrsim N^{(\alpha+2\beta-1)/(2(\alpha+\beta-2))} n^{-(\beta+1)/(2(\alpha+\beta-2))}$. Under this condition of $h$, we have $m^{2(\alpha+1)}N^{-1}=o(1)$.
			\end{itemize}
			Based on the results above, we also conclude that if $ h \gtrsim  n^{-(2\beta-1)/(2(2\alpha+2\beta-1))}$,
			\[ n^{-\frac{2\beta-1}{2(\alpha+2\beta-1)}}h^{\frac{2\beta-1}{\alpha+2\beta-1}} \gg n^{-\frac{2\beta-1}{\alpha+2\beta}},  \] 
			and if $ h \gtrsim N^{(\alpha+2\beta-1)/(2(\alpha+\beta-2))} n^{-(\beta+1)/(2(\alpha+\beta-2))}$, 
			\[ N^{-\frac{2\beta-1}{2(\beta+1)}} h^{\frac{2\beta-1}{\beta+1}}  \gg n^{-\frac{2\beta-1}{\alpha+2\beta}}. \]
			
			Combining the pieces, we complete the proof.
		\end{proof}
		
		\section{Theoretical Analysis of Prediction Performance}
		
		Let $(X^\star, Y^\star) \sim Q$, where $(X^\star, Y^\star) $ is independent of the training data. We measure the prediction performance as follows,
		\[  \expect_\star \{( Y^\star - \langle X^\star, b\rangle)^2 - (Y^\star - \langle X^\star, \hat b\rangle )^2\} = \expect_\star (\langle X^\star, \hat b - b\rangle^2),  \] 
		where $\langle X^\star, b\rangle = \int_\tdomain X^\star(t) b(t)dt$, and $\expect_\star$ denotes the expectation with respect to the randomness of $(X^\star, Y^\star)$.
		
		\begin{theorem}\label{thm:pre_error}
			Suppose Assumptions \ref{assump:xdist}-\ref{assump:x-moment} hold. 
			If $\tau \asymp n^{-1/2}$, $N^{-1}m^{2(\alpha+1)} = o(1)$ and $h = O(1)$,  then
			\[ \expect_\star (\langle X^\star, \hat b - b\rangle^2) = O_P\bigg(\frac{h}{n^{1/2}} \land h^2 + m^{1-\alpha-2\beta} + \frac{m}{N} + \frac{1+m^{3-\alpha}}{N}h^2 \bigg).  \]
			If $\tau=0$, $n^{-1}m^2=o(1)$ and $N^{-1}m^{\alpha+3}=o(1)$, then 
			\[ \expect_\star (\langle X^\star, \hat b - b\rangle^2) = O_P\bigg(m^{1-\alpha-2\beta} + \frac{m}{n} + \frac{m}{N} \bigg). \]
		\end{theorem}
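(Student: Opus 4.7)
The plan is to mirror the $L^2$ analysis in the proof of Theorem~\ref{thm:upper} while exploiting that the prediction error is a $K$-weighted norm. Since $K=\sum_k\lambda_k\phi_k\otimes\phi_k$,
\[
\expect_\star(\langle X^\star,\hat b-b\rangle^2)=\|\hat b-b\|_K^2:=\sum_k\lambda_k\langle\hat b-b,\phi_k\rangle^2,
\]
so every squared Fourier coefficient in direction $k$ is discounted by $\lambda_k\lesssim k^{-\alpha}$. This extra factor is precisely what turns each variance-type term in the $L^2$ rates of Theorem~\ref{thm:upper} into its prediction-rate counterpart: the quantities $m^{1+\alpha}/N$, $m^{1-2\beta}$, $m^\alpha h/n^{1/2}$ and $m^3h^2/N$ are each reduced by a factor $m^{-\alpha}$, while the $\ell_1$-driven $h^2$ clip is unchanged.

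Starting from $b-\hat b=\sum_{k=1}^m b_k(\phi_k-\hat\phi_k)+\sum_{k=1}^m(b_k-\hat b_k)\hat\phi_k+\sum_{k>m}b_k\phi_k$, I would further split the middle piece as $\sum_{k=1}^m(b_k-\hat b_k)\phi_k+\sum_{k=1}^m(b_k-\hat b_k)(\hat\phi_k-\phi_k)$ and expand $\|\cdot\|_K^2$ term by term. The tail contributes $\sum_{k>m}\lambda_k b_k^2\lesssim m^{1-\alpha-2\beta}$ by Assumptions~\ref{assump:cov-decay}--\ref{assump:slope-decay}. The two pieces involving $\hat\phi_k-\phi_k$ are bounded crudely by $\lambda_1$ times their $L^2$ norms; by Lemma~\ref{lem:eigfun} and the same computation used at the start of the proof of Theorem~\ref{thm:upper} these contribute at most $O_P(m/N)$. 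Hence the heart of the proof is the piece $\sum_{k=1}^m\lambda_k(b_k-\hat b_k)^2\le 2\sum_{k=1}^m\lambda_k(\hat w_k-w_k)^2+2\sum_{k=1}^m\lambda_k(\hat\delta_k-\delta_k)^2$.

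For the $\hat\bw$ half, I would replay the proof of Lemma~\ref{lem:rate_w} after inserting an extra $\lambda_k$ into every outer sum: under the event $\mathcal E_2$ in~\eqref{eq:event2}, each of the four terms from~\eqref{eq:hatw-w} now produces $\sum_{k=1}^m\lambda_k(\hat\lambda_k^{\mathcal A})^{-2}(\cdot)^2\lesssim\sum_{k=1}^m k^\alpha(\cdot)^2$, one power of $k^\alpha$ smaller than in the $L^2$ version, and a $\lambda_k$-weighted variant of Lemma~\ref{lem:T1} then yields $\sum_{k=1}^m\lambda_k(\hat w_k-w_k)^2=O_P(m/N+m^{3-\alpha}h^2/N)$; the terms $T_{k2}$--$T_{k4}$ are treated exactly as in the proof of Lemma~\ref{lem:rate_w} and inherit the same $m^{-\alpha}$ discount. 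For the lasso piece with $\tau\asymp n^{-1/2}$, observe that $\sum_{k=1}^m\lambda_k(\hat\delta_k-\delta_k)^2=\|D(\hat\bdelta-\bdelta)\|_2^2$, and since Proposition~\ref{prop:re} is already expressed in the $D$-weighted norm, the basic inequality~\eqref{eq:basic_ineq} and the two-regime peeling in the proof of Lemma~\ref{lem:rate_delta} transfer directly, producing $O_P(h/n^{1/2}\wedge h^2+m/N+m^{1-\alpha-2\beta}+m^{3-\alpha}h^2/N)$. For $\tau=0$, the normal-equation identity $D(\hat\bb-\bb)=(\breve\Xi^\t\breve\Xi)^{-1}\breve\Xi^\t(\bepsilon-\bar\bepsilon+\tilde\Xi_\circ\tilde\bb+(\breve\Xi_\circ-\breve\Xi)D\bb)$ derived in the proof of Lemma~\ref{lem:w/o_lasso} gives $\|D(\hat\bb-\bb)\|_2^2=O_P(m/n+m/N+m^{1-\alpha-2\beta})$ by the same trace calculation, with no $m^\alpha$ amplification at the end.

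The principal obstacle is the $\lambda_k$-reweighted bookkeeping in Lemma~\ref{lem:rate_w}, in particular a variant of Lemma~\ref{lem:T1} that tracks the cross terms $(\lambda_k-\lambda_j)^{-1}\langle g,\phi_j\rangle\int(\hat K^{\mathcal A}-K^{\mathcal A})\phi_k\phi_j$ under the extra $\lambda_k$ weight, relying on the eigenvalue-spacing estimates of Lemma~\ref{lem:lam_k-lam_j} and the same peeling of $j$ relative to $k$ as before. A second subtlety is verifying that the $\ell_1$-to-$\|D\cdot\|_2$ conversion inside~\eqref{eq:basic_ineq} still delivers the $h/n^{1/2}$ clip once one works in the $D$-weighted norm; this works because Proposition~\ref{prop:re} is phrased in exactly this norm and the event $\mathcal E_3$ in~\eqref{eq:event3} already controls $\|n^{-1}\hat\Xi^\t(\bepsilon-\bar\bepsilon)\|_\infty$ at the rate $n^{-1/2}$ uniformly in $k$.
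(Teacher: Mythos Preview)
Your proposal is correct and follows essentially the same route as the paper: reduce the prediction error to the tail $\sum_{k>m}\lambda_k b_k^2$, eigenfunction-perturbation pieces, and the core quantity $\sum_{k=1}^m\lambda_k(b_k-\hat b_k)^2=\|D(\hat\bb-\bb)\|_2^2$, then replay the proofs of Lemmas~\ref{lem:rate_w}, \ref{lem:rate_delta}, \ref{lem:w/o_lasso} with the extra $\lambda_k$ weight. One small imprecision: the cross piece $\sum_{k=1}^m(b_k-\hat b_k)(\hat\phi_k-\phi_k)$ is not directly $O_P(m/N)$ by the Theorem~\ref{thm:upper} computation (which relies on $\sum_k b_k^2<\infty$); the paper handles it instead by Cauchy--Schwarz as $\big(\sum_{k\le m}\lambda_k^{-1}\|\hat\phi_k-\phi_k\|_2^2\big)\sum_{k\le m}\lambda_k(\hat b_k-b_k)^2=o_P(1)\cdot\|D(\hat\bb-\bb)\|_2^2$, which is then absorbed into the main term.
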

		
		Analogous to the estimation error in Theorem \ref{thm:upper}, the sparsity parameter $\tau$ and the truncation parameter $m$ play a critical role in determining the convergence rate of the prediction error, as shown in Theorem \ref{thm:pre_error}. The optimal choices of $\tau$ and $m$ under different regimes of the contrast $h$ are detailed in Corollary \ref{cor:pre_error}. Specifically, when $h \ll n^{-(\alpha+2\beta-2)/(2\alpha+4\beta)}$ and $N \gg n$, the convergence rate for the prediction error is faster than that of the classical estimator using only target data.
		
		\begin{corollary}\label{cor:pre_error}
			Suppose Assumptions \ref{assump:xdist}-\ref{assump:x-moment} hold. 
			\begin{enumerate}
				\item Assume $n \lesssim N^{(\alpha+2\beta-1)/(\alpha+2\beta)}$.
				\begin{itemize}
					\item If $h \lesssim N^{-\frac{\alpha+2\beta-1}{2(\alpha+2\beta)}}$,  we take $\tau \asymp n^{-1/2}$ and $ m \asymp N^{1/(\alpha+2\beta)}$, then
					\[ \expect_\star (\langle X^\star, \hat b - b\rangle^2) = O_P\big( N^{-\frac{\alpha+2\beta-1}{\alpha+2\beta}} \big). \]
					\item If $ N^{-\frac{\alpha+2\beta-1}{2(\alpha+2\beta)}} \lesssim  h \lesssim n^{-1/2}$, we take $\tau \asymp n^{-1/2}$ and $ h^{-2/(\alpha+2\beta-1)} \lesssim m \lesssim Nh^2$, $m^{2(\alpha+1)}N^{-1}=o(1)$, then
					\[ \expect_\star (\langle X^\star, \hat b - b\rangle^2) = O_P( h^2 ). \]
					\item If $  n^{-1/2} \lesssim  h \lesssim n^{-\frac{\alpha+2\beta-2}{2(\alpha+2\beta)}}$, we take $\tau \asymp n^{-1/2}$ and $ (n^{-1/2}h)^{-1/(\alpha+2\beta-1)} \lesssim  m \lesssim Nn^{-1/2}h $, $m^{2(\alpha+1)}N^{-1}=o(1)$, then
					\[ \expect_\star (\langle X^\star, \hat b - b\rangle^2) = O_P( n^{-1/2}h ). \]
					\item  If $ h \gtrsim n^{-\frac{\alpha+2\beta-2}{2(\alpha+2\beta)}}$, we take $\tau =0$ and $m \asymp n^{1/(\alpha+2\beta)}$, then
					\[ \expect_\star (\langle X^\star, \hat b - b\rangle^2) = O_P\big( n^{-\frac{\alpha+2\beta-1}{\alpha+2\beta}} \big). \]
				\end{itemize}
				\item Assume $ N^{(\alpha+2\beta-1)/(\alpha+2\beta)} \lesssim n \lesssim N$.
				\begin{itemize}
					\item If $h \lesssim n^{1/2}N^{-\frac{\alpha+2\beta-1}{\alpha+2\beta}}$,  we take $\tau \asymp n^{-1/2}$ and $ m \asymp N^{1/(\alpha+2\beta)}$, then
					\[ \expect_\star (\langle X^\star, \hat b - b\rangle^2) = O_P\big( N^{-\frac{\alpha+2\beta-1}{\alpha+2\beta}} \big). \]
					\item If $ n^{1/2}N^{-\frac{\alpha+2\beta-1}{\alpha+2\beta}} \lesssim  h \lesssim n^{-\frac{\alpha+2\beta-2}{2(\alpha+2\beta)}}$, we take $\tau \asymp n^{-1/2}$ and $ (n^{-1/2}h)^{-1/(\alpha+2\beta-1)} \lesssim  m \lesssim Nn^{-1/2}h $, $m^{2(\alpha+1)}N^{-1}=o(1)$, then
					\[ \expect_\star (\langle X^\star, \hat b - b\rangle^2) = O_P( n^{-1/2}h ). \]
					\item  If $ h \gtrsim n^{-\frac{\alpha+2\beta-2}{2(\alpha+2\beta)}}$, we take $\tau =0$ and $m \asymp n^{1/(\alpha+2\beta)}$, then
					\[ \expect_\star (\langle X^\star, \hat b - b\rangle^2) = O_P\big( n^{-\frac{\alpha+2\beta-1}{\alpha+2\beta}} \big). \]
				\end{itemize}
			\end{enumerate}
		\end{corollary}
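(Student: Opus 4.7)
The plan is to perform the bias–variance tradeoff on the bounds in Theorem \ref{thm:pre_error}, selecting $(\tau,m)$ to minimize the rate in each regime of $h$, and then deciding the cutoff between $\tau \asymp n^{-1/2}$ and $\tau = 0$ by comparing the two resulting rates. The structure will closely parallel the proof of Corollary \ref{cor:rate}, but with the different scaling on each bias/variance term here.

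First, I would set $\tau = 0$ and minimize $m^{1-\alpha-2\beta} + m/n + m/N$; under $n \lesssim N$ the optimum is $m \asymp n^{1/(\alpha+2\beta)}$, giving the benchmark rate $n^{-(\alpha+2\beta-1)/(\alpha+2\beta)}$. Any regime in which the $\tau \asymp n^{-1/2}$ bound exceeds this benchmark will be handled by $\tau = 0$. For $\tau \asymp n^{-1/2}$, I would first observe that the side condition $m^{2(\alpha+1)}/N = o(1)$ together with $\alpha > 1$ forces $(1+m^{3-\alpha})/N \lesssim 1/N$, so the term $(1+m^{3-\alpha})h^2/N$ is absorbed whenever $h = O(1)$, and the effective bound is
\[
\bigl(h/n^{1/2}\wedge h^2\bigr) + m^{1-\alpha-2\beta} + m/N.
\]

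Next, I would split by the sign of $h - n^{-1/2}$. When $h \le n^{-1/2}$ the $\wedge$ gives $h^2$; optimizing $h^2 + m^{1-\alpha-2\beta} + m/N$ yields two sub-cases: if $h$ is so small that $h^2$ is dominated by $N^{-(\alpha+2\beta-1)/(\alpha+2\beta)}$, choose $m \asymp N^{1/(\alpha+2\beta)}$ to attain $N^{-(\alpha+2\beta-1)/(\alpha+2\beta)}$; otherwise pick $m$ in $[h^{-2/(\alpha+2\beta-1)}, Nh^2]$ so both $m^{1-\alpha-2\beta} \lesssim h^2$ and $m/N \lesssim h^2$, making $h^2$ the dominant term. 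When $h > n^{-1/2}$ the $\wedge$ gives $n^{-1/2}h$; analogously, any $m$ in $[(n^{-1/2}h)^{-1/(\alpha+2\beta-1)}, Nn^{-1/2}h]$ yields rate $n^{-1/2}h$. In both sub-cases I must verify the interval for $m$ is non-empty (this forces the upper bound on $h$ in each regime) and that my chosen $m$ satisfies $m^{2(\alpha+1)}/N = o(1)$.

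Finally, I would compare $n^{-1/2}h$ against $n^{-(\alpha+2\beta-1)/(\alpha+2\beta)}$: they cross at $h \asymp n^{-(\alpha+2\beta-2)/(2(\alpha+2\beta))}$, which sets the switching threshold to $\tau = 0$. The bifurcation between Case 1 ($n \lesssim N^{(\alpha+2\beta-1)/(\alpha+2\beta)}$) and Case 2 ($n \gtrsim N^{(\alpha+2\beta-1)/(\alpha+2\beta)}$) arises because the interval $[N^{-(\alpha+2\beta-1)/(2(\alpha+2\beta))},\, n^{-1/2}]$ for the intermediate ``$h^2$'' regime is non-empty precisely when $n \lesssim N^{(\alpha+2\beta-1)/(\alpha+2\beta)}$; in Case 2 this interval collapses and the ``$h^2$'' sub-regime disappears. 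The main obstacle is purely bookkeeping: ensuring that in every sub-case the chosen $m$-window is non-empty, that the side condition $m^{2(\alpha+1)}/N = o(1)$ remains compatible with the lower bound $m \gtrsim h^{-2/(\alpha+2\beta-1)}$ or $m \gtrsim (n^{-1/2}h)^{-1/(\alpha+2\beta-1)}$, and that the transition thresholds on $h$ match between the two sides of each boundary so the regimes tile $[0,\infty)$ without gaps or overlaps.
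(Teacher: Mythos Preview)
The paper does not include an explicit proof of this corollary; it is left as an immediate consequence of Theorem~\ref{thm:pre_error}, to be derived by the same regime-splitting argument used in the proof of Corollary~\ref{cor:rate}. Your plan follows exactly that template—optimize the $\tau=0$ bound to get the benchmark $n^{-(\alpha+2\beta-1)/(\alpha+2\beta)}$, then optimize the $\tau\asymp n^{-1/2}$ bound in each sub-regime of $h$, and compare—so the approach is the intended one.

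There is one slip worth fixing. You write that $m^{2(\alpha+1)}/N=o(1)$ and $\alpha>1$ force $(1+m^{3-\alpha})/N\lesssim 1/N$. This is false whenever $1<\alpha<3$: then $m^{3-\alpha}\to\infty$, so $(1+m^{3-\alpha})/N$ is not $O(1/N)$. What the side condition does give you is $m^{3-\alpha}=o(N)$ (since $3-\alpha<2\alpha+2$), hence $m^{3-\alpha}h^2/N=o(h^2)$. That suffices in the $h\lesssim n^{-1/2}$ sub-regimes, where $h^2$ is already a term in the bound. In the $h\gtrsim n^{-1/2}$ sub-regime the retained term is $n^{-1/2}h$, not $h^2$, and $o(h^2)$ alone does not imply $\lesssim n^{-1/2}h$; you must instead use the regime-specific constraints (e.g.\ $h\lesssim n^{-(\alpha+2\beta-2)/(2(\alpha+2\beta))}$ together with $m^{2(\alpha+1)}=o(N)$ and $n\lesssim N$) to verify $m^{3-\alpha}h^2/N\lesssim n^{-1/2}h$ directly. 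This check goes through, but it is part of the ``bookkeeping'' you flag rather than an automatic absorption.
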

		
		\begin{proof}[Proof of Theorem \ref{thm:pre_error}]
			The Karhunen-L$\grave{\mathrm o}$eve expansion leads to $X^\star(t) = \sum_{k=1}^\infty \xi_k^\star \phi_k(t)$, where $\xi_k^\star = \int_\tdomain X^\star(t)\phi_k(t)dt$. Recall that $b(t) = \sum_{k=1}^\infty b_k \phi_k(t)$ and $\hat b(t) = \sum_{k=1}^m \hat b_k \hat \phi_k(t)$, and denote $\hat \xi_k^\star = \int_\tdomain X^\star(t) \hat \phi_k(t)dt$. Then,
			\begin{align*}
				& \quad \expect_\star (\langle X^\star, \hat b - b\rangle^2) \\
				& =  \expect_\star \bigg( \sum_{k=1}^m \xi_k^\star b_k + \sum_{k=m+1}^\infty \xi_k^\star b_k - \sum_{k=1}^m \hat \xi_k^\star \hat b_k  \bigg)^2  \\
				& = \expect_\star \bigg\{ \sum_{k=m+1}^\infty \xi_k^\star b_k + \sum_{k=1}^m \xi_k^\star (b_k - \hat b_k) +  \sum_{k=1}^m ( \xi_k^\star - \hat \xi_k^\star) (\hat b_k - b_k) + \sum_{k=1}^m ( \xi_k^\star - \hat \xi_k^\star) b_k \bigg\}^2 \\
				& \le 4 \expect_\star \bigg( \sum_{k=m+1}^\infty \xi_k^\star b_k \bigg)^2 + 4 \expect_\star \bigg\{ \sum_{k=1}^m \xi_k^\star (b_k - \hat b_k) \bigg\}^2 + 4 \expect_\star \bigg\{ \sum_{k=1}^m ( \xi_k^\star - \hat \xi_k^\star) (\hat b_k - b_k) \bigg\}^2 + 4 \expect_\star \bigg\{ \sum_{k=1}^m ( \xi_k^\star - \hat \xi_k^\star) b_k \bigg\}^2 \\
				& \le 4 \sum_{k=m+1}^\infty b_k^2 \lambda_k + 4 \sum_{k=1}^m \lambda_k(b_k - \hat b_k)^2 + 4\expect_\star(\|X^\star\|_2^2)\sum_{k=1}^m \lambda_k^{-1}\|\phi_k - \hat \phi_k\|_2^2  \sum_{k=1}^m \lambda_k(\hat b_k - b_k)^2 + 4m\expect_\star(\|X^\star\|_2^2)\sum_{k=1}^m\|\phi_k - \hat \phi_k\|_2^2 b_k^2 . 
			\end{align*}
			
			Note that 
			\[ \sum_{k=m+1}^\infty b_k^2 \lambda_k = O(m^{1-\alpha-2\beta}), \]
			\[  m\expect_\star(\|X^\star\|_2^2)\sum_{k=1}^m\|\phi_k - \hat \phi_k\|_2^2 b_k^2 = O_P(mN^{-1}),\]
			\[ \expect_\star(\|X^\star\|_2^2)\sum_{k=1}^m \lambda_k^{-1}\|\phi_k - \hat \phi_k\|_2^2  = O_P(m^{3+\alpha}N^{-1}) = o_P(1). \]
			Consequently, it suffices to bound the term $\sum_{k=1}^m \lambda_k (b_k - \hat b_k)^2$.
			
			First consider the case of $\tau \asymp n^{-1/2}$. Based on the proof arguments in Lemma \ref{lem:rate_w} and Lemma \ref{lem:rate_delta}, we obtain 
			\[  \sum_{k=1}^m \lambda_k (b_k - \hat b_k)^2 = O_P\bigg(\frac{h}{n^{1/2}} \land h^2 + m^{1-\alpha-2\beta} + \frac{m}{N} + \frac{1+m^{3-\alpha}}{N}h^2 \bigg). \]
			
			Next, we focus on the case of $\tau = 0$. According to the proof arguments in Lemma \ref{lem:w/o_lasso}, we have
			\[ \sum_{k=1}^m \lambda_k (b_k - \hat b_k)^2 = O_P\bigg( m^{1-\alpha-2\beta} + \frac{m}{n} + \frac{m}{N} \bigg) . \]
			
			Combining the above pieces together completes the proof.
			
		\end{proof}
		
		\section{Auxiliary Lemmas and Proofs}
		
		\begin{lemma}\label{lem:Delta}
			Note that $\expect \hat \Delta_{kj}^2 = O(N^{-1}(kj)^{-\alpha} )$, $\expect \hat \Delta_k^2 = O(N^{-1}k^{-\alpha}) $ and $\expect \hat \Delta^2 = O(N^{-1})$.
		\end{lemma}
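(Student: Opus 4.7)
The plan is to exploit two structural features: (i) the pooled estimator decomposes as $\hat K^{\mathcal A} - K^{\mathcal A} = \sum_{l=1}^L \pi_l (\hat K^{(l)} - K^{(l)})$, and (ii) the source samples are mutually independent, so cross terms vanish in expectation. Thus all three bounds reduce to controlling the corresponding per-source quantities, which is standard, and then combining them using $\sum_{l=1}^L \pi_l^2 / n_l = \sum_{l=1}^L n_l/N^2 = N^{-1}$.

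For $\expect \hat \Delta^2$, I would expand
\[ \hat \Delta^2 = \int\int \Big( \sum_{l=1}^L \pi_l (\hat K^{(l)} - K^{(l)}) \Big)^2 ds\, dt, \]
and use independence of the $L$ samples together with $\expect[\hat K^{(l)} - K^{(l)}] = O(n_l^{-1})$ (a negligible bias from sample centering) to kill cross terms. For each source, the classical moment bound $\expect \int\int (\hat K^{(l)} - K^{(l)})^2 \lesssim n_l^{-1} (\int K^{(l)}(t,t) dt)^2$ holds under Assumption \ref{assump:x-moment} (finite fourth moment of $X^{(l)}$), so summation gives $\expect \hat \Delta^2 \lesssim \sum_l \pi_l^2 n_l^{-1} = N^{-1}$.

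For $\expect \hat \Delta_{kj}^2$ and $\expect \hat \Delta_k^2$ I would work in the Karhunen--Lo\`eve coordinates of source $l$. Writing $X_i^{(l)}(t) - \mu^{(l)}(t) = \sum_r \xi_{i,r}^{(l)} \phi_r(t)$ and ignoring the lower order sample-mean centering correction, I get the representation
\[ \int (\hat K^{(l)} - K^{(l)}) \phi_k \phi_j = \frac{1}{n_l}\sum_{i=1}^{n_l} \xi_{i,k}^{(l)} \xi_{i,j}^{(l)} - \lambda_k^{(l)} \mathds{1}(k=j) + O_P(n_l^{-1}), \]
whose variance is $n_l^{-1} \var(\xi_{1,k}^{(l)} \xi_{1,j}^{(l)})$. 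Cauchy--Schwarz and Assumption \ref{assump:x-moment} bound this variance by $c \lambda_k^{(l)} \lambda_j^{(l)} \lesssim (kj)^{-\alpha}$ using Assumption \ref{assump:cov-decay}, yielding $\expect \hat \Delta_{kj}^2 \lesssim \sum_l \pi_l^2 (kj)^{-\alpha}/n_l = (kj)^{-\alpha}/N$. For $\hat \Delta_k^2$, I would expand $\int(\hat K^{(l)} - K^{(l)})\phi_k$ in the orthonormal basis $\{\phi_j\}$ so that $\expect \int [\int(\hat K^{(l)} - K^{(l)})\phi_k]^2 = \sum_j \expect[\int(\hat K^{(l)} - K^{(l)})\phi_k\phi_j]^2 \lesssim \sum_j \lambda_k^{(l)}\lambda_j^{(l)}/n_l = \lambda_k^{(l)} (\trace K^{(l)})/n_l \lesssim k^{-\alpha}/n_l$, since $\alpha>1$ makes $\sum_j \lambda_j^{(l)}$ finite. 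Summing over $l$ with $\sum_l \pi_l^2/n_l = N^{-1}$ gives $\expect \hat \Delta_k^2 \lesssim k^{-\alpha}/N$.

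The main obstacle is mostly bookkeeping: handling the sample-mean centering in $\hat K^{(l)}$ rigorously (which contributes only an additional $O(n_l^{-1})$ term), and verifying the cross-fourth-moment bound $\expect (\xi_k^{(l)} \xi_j^{(l)})^2 \lesssim \lambda_k^{(l)}\lambda_j^{(l)}$ from Assumption \ref{assump:x-moment} via Cauchy--Schwarz. Everything else is the decomposition-plus-independence template above, which yields the uniform $1/N$ rate because $\sum_l n_l / N^2 = 1/N$ exactly matches the effective pooled sample size.
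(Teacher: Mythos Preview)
Your proposal is correct and follows essentially the same route as the paper: decompose $\hat K^{\mathcal A}-K^{\mathcal A}$ over sources, use independence across $l$ to eliminate cross terms, write each projected piece in terms of the scores $\xi_{i,k}^{(l)}$, bound the fourth moments via Cauchy--Schwarz and Assumption~\ref{assump:x-moment}, and combine using $\sum_l \pi_l^2/n_l = N^{-1}$; the paper handles the centering by an explicit U-statistic decomposition, and for $\hat\Delta_k^2$ simply sums $\sum_j \expect\hat\Delta_{kj}^2$ via Parseval. One small point: since $\hat K^{(l)}$ uses the $n_l-1$ normalization it is exactly unbiased, so the cross terms are identically zero and your $O(n_l^{-1})$ bias remark is unnecessary.
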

		
		\begin{proof}
			Note that
			\begin{align*}
				&\quad \hat{K}^{\mathcal A}(s,t) - K^{\mathcal A}(s,t)  \\
				&= \sum_{l=1}^L \pi_l \bigg[ \frac{1}{n_l-1} \sum_{i=1}^{n_l} \left\{ \big( X_i^{(l)}(s) - \bar{X}^{(l)}(s) \big) \big( X_i^{(l)}(t) - \bar{X}^{(l)}(t)  \big) \right\} - \\
				& \quad \expect \big\{\big(X^{(l)}(s) -\mu^{(l)}(s) \big) \big( X^{(l)}(t) -\mu^{(l)}(t) \big)\big\}  \bigg].
			\end{align*}
			By the independence and the Cauchy-Schwarz inequality, we have
			\begin{eqnarray*}
				\expect \hat \Delta_{kj}^2 & = & \expect \left\{\int \int (\hat{K}^{\mathcal A}(s,t) - K^{\mathcal A}(s,t)) \phi_k(s) \phi_j(t) dsdt \right\}^2\\
				& = & \expect \left[ \sum_{l=1}^L \pi_l \left\{  \frac{1}{n_l-1} \sum_{i=1}^{n_l}\bigg( \big(\xi_{i,k}^{(l)} - \bar{\xi}_k^{(l)}\big)\big(\xi_{i,j}^{(l)} -\bar{\xi}_j^{(l)} \big) \bigg) - \lambda_k^{(l)}\mathds{1}(k=j)  \right\} \right]^2 \\
				& = & \expect \left[ \sum_{l=1}^L \pi_l \left\{  \frac{1}{n_l} \sum_{i=1}^{n_l} \big(\xi_{i,k}^{(l)} \xi_{i,j}^{(l)} - \lambda_k^{(l)}\mathds{1}(k=j) \big) - \frac{1}{n_l(n_l-1)} \sum_{i_1 \neq i_2} \xi_{i_1,k}^{(l)}\xi_{i_2, j}^{(l)}  \right\} \right]^2 \\
				& \le & c\sum_{l=1}^L \pi_l^2 \frac{1}{n_l} \big\{\expect(\xi_{k}^{(1)})^4 \expect(\xi_{j}^{(1)})^4\big\}^{1/2}  = O(N^{-1}(kj)^{-\alpha}),
			\end{eqnarray*}
			according to Assumptions \ref{assump:cov-decay} and \ref{assump:x-moment}.
			Notice that
			\[ \expect \hat \Delta_k^2 = \sum_{j=1}^{\infty}\expect \hat \Delta_{kj}^2 = O(N^{-1}k^{-\alpha}). \]
			Moreover, using similar arguments obtains $\expect \hat \Delta^2   = O(N^{-1}). $
		\end{proof}
		
		\begin{lemma}\label{lem:eigfun}
			If $m^{2(\alpha+1)}N^{-1} \to 0$, then for $k=1, \dots, m$,
			\[ \|\hat \phi_k - \phi_k\|_2^2 = O_P(k^2 N^{-1}). \]
		\end{lemma}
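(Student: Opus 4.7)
The plan is to exploit the perturbation expansion of Lemma \ref{lem:expansions}, the second-moment estimates of Lemma \ref{lem:Delta}, and the refined eigengap estimates of Lemma \ref{lem:lam_k-lam_j} to obtain the sharp rate $k^2/N$. A direct Davis--Kahan bound via the operator norm and the minimum eigengap $k^{-\alpha-1}$ would yield only $\|\hat\phi_k - \phi_k\|_2^2 \lesssim k^{2(\alpha+1)}/N$; the improvement hinges on the Fourier-by-Fourier moment bound $\expect \hat\Delta_{kj}^2 = O(N^{-1}(kj)^{-\alpha})$ coupled with the index-dependent eigengap $|\lambda_k^{\mathcal A} - \lambda_{k\pm \ell}^{\mathcal A}| \gtrsim \ell\, k^{-\alpha-1}$.

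Starting from Lemma \ref{lem:expansions}, orthonormality of $\{\phi_j\}$, and substitution of $\hat\phi_k = \phi_k + (\hat\phi_k - \phi_k)$ inside the integral, together with the second-order bound $|\langle\hat\phi_k - \phi_k, \phi_k\rangle| \lesssim \|\hat\phi_k - \phi_k\|_2^2$ from normalization, I arrive at
\[
\|\hat\phi_k - \phi_k\|_2^2 \le 2A_k + 2B_k + O_P(\|\hat\phi_k - \phi_k\|_2^4),
\]
where
\[
A_k = \sum_{j\neq k} (\hat\lambda_k^{\mathcal A} - \lambda_j^{\mathcal A})^{-2} \hat\Delta_{kj}^2, \quad B_k = \sum_{j\neq k} (\hat\lambda_k^{\mathcal A} - \lambda_j^{\mathcal A})^{-2} \Bigl(\textstyle\int (\hat K^{\mathcal A} - K^{\mathcal A})(\hat\phi_k - \phi_k)\phi_j\Bigr)^2.
\]
Lemma \ref{lem:prob_events} and the assumption $m^{2(\alpha+1)}/N = o(1)$ place us on the event $\mathcal E_1$, on which both $(\hat\lambda_k^{\mathcal A} - \lambda_j^{\mathcal A})^{-2} \le 2(\lambda_k^{\mathcal A} - \lambda_j^{\mathcal A})^{-2}$ and the worst-case $(\hat\lambda_k^{\mathcal A} - \lambda_j^{\mathcal A})^{-2} \le Cm^{2(\alpha+1)}$ hold.

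To estimate $\expect A_k$, I split the sum according to the position of $j$ relative to $k$, invoking Lemma \ref{lem:lam_k-lam_j}: for $j < k/2$, $(\lambda_k^{\mathcal A} - \lambda_j^{\mathcal A})^{-2} \lesssim j^{2\alpha}$, contributing $N^{-1} k^{-\alpha} \sum_{j < k/2} j^\alpha \asymp k/N$; for $j > 2k$, $(\lambda_k^{\mathcal A} - \lambda_j^{\mathcal A})^{-2} \lesssim k^{2\alpha}$, contributing $N^{-1} k^\alpha \sum_{j > 2k} j^{-\alpha} \asymp k/N$ (using $\alpha > 1$); and in the critical middle regime $k/2 \le j \le 2k$ with $j \neq k$, the refined bound $(\lambda_k^{\mathcal A} - \lambda_j^{\mathcal A})^{-2} \lesssim k^{2\alpha+2}/(k-j)^2$ gives $N^{-1}k^{2} \sum_{\ell \neq 0, |\ell|\le k} \ell^{-2} = O(k^2/N)$. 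Adding the three contributions yields $\expect A_k = O(k^2/N)$, hence $A_k = O_P(k^2/N)$.

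For $B_k$, Parseval's identity gives
\[
\sum_{j\neq k} \Bigl(\textstyle\int (\hat K^{\mathcal A} - K^{\mathcal A})(\hat\phi_k - \phi_k)\phi_j\Bigr)^2 \le \bigl\|(\hat K^{\mathcal A} - K^{\mathcal A})(\hat\phi_k - \phi_k)\bigr\|_2^2 \le \hat\Delta^2 \|\hat\phi_k - \phi_k\|_2^2,
\]
so $B_k \le Cm^{2(\alpha+1)} \hat\Delta^2 \|\hat\phi_k - \phi_k\|_2^2$. Since $\hat\Delta^2 = O_P(N^{-1})$ and $m^{2(\alpha+1)}/N = o(1)$, the prefactor is $o_P(1)$, and $B_k$ can be absorbed into the left-hand side upon rearrangement, yielding $\|\hat\phi_k - \phi_k\|_2^2 = O_P(k^2/N)$. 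The main obstacle is the middle regime in the analysis of $A_k$: a crude use of the minimum eigengap $k^{-\alpha-1}$ would inflate the bound to $k^{2(\alpha+1)}/N$, and the sharper $k^2/N$ rate critically depends on the index-dependent gap $|\lambda_k^{\mathcal A} - \lambda_{k\pm \ell}^{\mathcal A}| \gtrsim \ell\, k^{-\alpha-1}$ and the summability of $\sum_\ell \ell^{-2}$.
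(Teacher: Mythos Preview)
Your proposal is correct and follows essentially the same route as the paper: the paper cites (5.16) of Hall--Horowitz to get $\|\hat\phi_k-\phi_k\|_2^2\le 2\hat\theta_k^2$, then on $\mathcal E_1$ splits $\hat\theta_k^2$ into exactly your $A_k$ and $B_k$, bounds $B_k$ by $Cm^{2(\alpha+1)}\hat\Delta^2\|\hat\phi_k-\phi_k\|_2^2$ via Parseval, and states $\sum_{j\neq k}(\lambda_k^{\mathcal A}-\lambda_j^{\mathcal A})^{-2}\expect\hat\Delta_{kj}^2=O(k^2/N)$ by Lemma~\ref{lem:lam_k-lam_j}. Your three-region computation of $A_k$ simply spells out that last step, and your extra $O_P(\|\hat\phi_k-\phi_k\|_2^4)$ term (from $\langle\hat\phi_k-\phi_k,\phi_k\rangle$) is what the Hall--Horowitz inequality absorbs directly.
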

		
		\begin{proof}
			By (5.16) in \citet{hall2007methodology}, 
			\[ \|\hat \phi_k - \phi_k \|_2^2 \le 2 \hat \theta_k^2, \]
			where 
			\[ \hat \theta_k^2 = \sum_{j: j \neq k} (\hat \lambda_k^{\mathcal A} - \lambda_j^{\mathcal A})^{-2} \left( \int(\hat K^{\mathcal A} - K^{\mathcal A}) \hat \phi_k \phi_j \right)^2. \]
			Under the event $\mathcal E_1$ defined in \ref{eq:event1}, 
			\begin{align}\label{eq:hat_theta}
				\hat \theta_k^2 & \le 4 \sum_{j: j \neq k} (\lambda_k^{\mathcal A} - \lambda_j^{\mathcal A})^{-2} \left( \int(\hat K^{\mathcal A} - K^{\mathcal A})  \phi_k \phi_j \right)^2 + \nonumber\\
				&\quad 4 \sum_{j: j \neq k} (\lambda_k^{\mathcal A} - \lambda_j^{\mathcal A})^{-2} \left( \int(\hat K^{\mathcal A} - K^{\mathcal A}) (\hat \phi_k - \phi_k) \phi_j \right)^2.
			\end{align}
			The second term in \eqref{eq:hat_theta} is bounded by
			$ Cm^{2(\alpha+1)} \hat \Delta^2 \| \hat \phi_k - \phi_k \|_2^2.$
			Moreover, by Lemma \ref{lem:lam_k-lam_j},
			\[ \sum_{j: j \neq k} (\lambda_k^{\mathcal A} - \lambda_j^{\mathcal A})^{-2} \expect \left( \int(\hat K^{\mathcal A} - K^{\mathcal A})  \phi_k \phi_j \right)^2 = O(k^2 N^{-1}).  \]
			Using $\expect \hat \Delta^2 = O(N^{-1})$ in Lemma \ref{lem:Delta}, we have $\|\hat \phi_k - \phi_k\|_2^2 = O_P(k^2N^{-1})$ for $k=1, \dots, m$.
		\end{proof}
		
		\begin{lemma}\label{lem:g}
			Let $\hat g(t) = \sum_{l=1}^L \pi_l (n_l-1)^{-1} \sum_{i=1}^{n_l} \big\{(X_i^{(l)}(t) - \bar{X}^{(l)}(t) )(Y_i^{(l)} - \bar{Y}^{(l)} )\big\} $ and $g(t) = \sum_{l=1}^L \pi_l \expect\big\{\big(Y^{(l)} - \expect(Y^{(l)}) \big)\big( X^{(l)}(t) - \mu^{(l)}(t) \big) \big\}$. We have $\expect \langle \hat g -g, \phi_k\rangle^2 = O(k^{-\alpha} N^{-1})$ and $\expect \|\hat g -g\|_2^2 = O(N^{-1})$.
		\end{lemma}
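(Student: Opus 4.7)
The plan is to decompose along source indices and reduce everything to a single-source sample-covariance calculation. Write
\[ \hat g(t) - g(t) = \sum_{l=1}^L \pi_l \big( \hat g^{(l)}(t) - g^{(l)}(t) \big), \]
where $g^{(l)}(t) = \cov(X^{(l)}(t), Y^{(l)})$ and $\hat g^{(l)}$ is the Bessel-corrected per-source sample cross-covariance. Since the $L$ source samples are mutually independent and $\expect \hat g^{(l)} = g^{(l)}$ (the $(n_l-1)^{-1}$ normalization makes the sample covariance unbiased), projecting onto $\phi_k$ gives
\[ \expect \langle \hat g - g, \phi_k \rangle^2 = \sum_{l=1}^L \pi_l^2 \var\big( \langle \hat g^{(l)}, \phi_k \rangle \big). \]
The task is now reduced to bounding each per-source variance by $O(n_l^{-1} k^{-\alpha})$.

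For each source $l$, the quantity $\langle \hat g^{(l)}, \phi_k\rangle = (n_l-1)^{-1}\sum_{i=1}^{n_l} (\xi_{i,k}^{(l)} - \bar \xi_k^{(l)})(Y_i^{(l)} - \bar Y^{(l)})$ is the unbiased scalar sample covariance between $\xi_k^{(l)}$ and $Y^{(l)}$. Using the re-centering identity
\[ \sum_{i=1}^{n_l}(\xi_{i,k}^{(l)} - \bar \xi_k^{(l)})(Y_i^{(l)} - \bar Y^{(l)}) = \sum_{i=1}^{n_l}\xi_{i,k}^{(l)}(Y_i^{(l)} - \expect Y^{(l)}) - n_l \bar \xi_k^{(l)}(\bar Y^{(l)} - \expect Y^{(l)}), \]
the first sum is an i.i.d.\ sum contributing variance $O(n_l \expect[(\xi_k^{(l)})^2(Y^{(l)} - \expect Y^{(l)})^2])$, while the mean-correction term has mean-square $O(n_l^{-2}\lambda_k^{(l)})$ via Cauchy--Schwarz on $\bar \xi_k^{(l)}$ and $\bar Y^{(l)} - \expect Y^{(l)}$ together with the standard $\expect \bar \xi_k^{(l) 4} = O(n_l^{-2}(\lambda_k^{(l)})^2)$. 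Dividing by $(n_l-1)^2$ and absorbing the cross covariances yields $\var(\langle \hat g^{(l)}, \phi_k\rangle) \lesssim n_l^{-1}\expect[(\xi_k^{(l)})^2(Y^{(l)} - \expect Y^{(l)})^2]$. By Cauchy--Schwarz and Assumption~\ref{assump:x-moment}, $\expect(\xi_k^{(l)})^4 \le c_3 (\lambda_k^{(l)})^2$, and $\expect(Y^{(l)} - \expect Y^{(l)})^4$ is uniformly bounded via the decomposition $Y^{(l)} - \expect Y^{(l)} = \langle w^{(l)}, X^{(l)} - \mu^{(l)}\rangle + \epsilon^{(l)}$, Cauchy--Schwarz on $\langle w^{(l)}, \cdot\rangle$, the fourth-moment bounds in Assumption~\ref{assump:x-moment}, and the uniform estimate $\|w^{(l)}\|_2 = O(1)$ which follows from $\beta > 1$ in Assumption~\ref{assump:slope-decay} and $\|\bdelta^{(l)}\|_1 \le h = O(1)$. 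Hence $\expect[(\xi_k^{(l)})^2(Y^{(l)} - \expect Y^{(l)})^2] \lesssim \lambda_k^{(l)} \lesssim k^{-\alpha}$ by Assumption~\ref{assump:cov-decay}.

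Assembling the pieces with $\sum_{l=1}^L \pi_l^2/n_l = N^{-2}\sum_l n_l = N^{-1}$ delivers $\expect\langle \hat g - g, \phi_k\rangle^2 = O(k^{-\alpha}N^{-1})$, the first claim. For the second claim, the per-coordinate bound holds for every $k \ge 1$ (the derivation never used $k \le m$), so Parseval's identity gives
\[ \expect\|\hat g - g\|_2^2 = \sum_{k=1}^\infty \expect\langle \hat g - g, \phi_k\rangle^2 \le CN^{-1}\sum_{k=1}^\infty k^{-\alpha} = O(N^{-1}), \]
where convergence of the series uses exactly the assumption $\alpha > 1$ from Assumption~\ref{assump:cov-decay}. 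The main technical obstacle is the variance bookkeeping for the sample covariance: the double centering by $\bar X^{(l)}$ and $\bar Y^{(l)}$ generates non-i.i.d.\ cross terms, and the fourth-moment hypotheses in Assumption~\ref{assump:x-moment} are precisely what is needed to certify that each such term is of lower order, uniformly in $k$.
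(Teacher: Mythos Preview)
Your proposal is correct and follows essentially the same strategy as the paper: decompose by source, use independence across sources together with unbiasedness of the Bessel-corrected sample covariance to reduce to a per-source variance, bound that variance by $O(n_l^{-1}k^{-\alpha})$ via fourth-moment assumptions, combine using $\sum_l \pi_l^2/n_l = N^{-1}$, and sum over $k$ with Parseval. The only cosmetic difference is that the paper rewrites the centered sum via the U-statistic identity $(n_l-1)^{-1}\sum_i(\xi_{i,k}^{(l)}-\bar\xi_k^{(l)})(Y_i^{(l)}-\bar Y^{(l)}) = n_l^{-1}\sum_i\xi_{i,k}^{(l)}Y_i^{(l)} - \{n_l(n_l-1)\}^{-1}\sum_{i_1\neq i_2}\xi_{i_1,k}^{(l)}Y_{i_2}^{(l)}$ and then splits $Y^{(l)}-\expect Y^{(l)}$ into $\langle X^{(l)},w^{(l)}\rangle + \epsilon^{(l)}$, whereas you use the re-centering identity and a single Cauchy--Schwarz with the fourth moment of $Y^{(l)}-\expect Y^{(l)}$; both routes yield the same bound.
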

		
		\begin{proof}
			Note that
			\begin{align*}
				& \hat g(t) - g(t) \\
				=& \sum_{l=1}^L \pi_l \left[ \left\{ \frac{1}{n_l-1}\sum_{i=1}^{n_l} \big\{(X_i^{(l)}(t) - \bar{X}^{(l)}(t) )(Y_i^{(l)} - \bar{Y}^{(l)} )\big\}  \right\} - \expect\big\{\big(Y^{(l)} - \expect(Y^{(l)}) \big)\big( X^{(l)}(t) - \mu^{(l)}(t) \big) \big\} \right].
			\end{align*}
			By the independence and the Cauchy-Schwarz inequality,
			\begin{align*}
				\expect\langle \hat g -g, \phi_k\rangle^2 & =\expect \left(\sum_{l=1}^L \pi_l \left[ \left\{ \frac{1}{n_l-1}\sum_{i=1}^{n_l} \big\{(\xi_{i,k}^{(l)}- \bar{\xi}_k^{(l)} )(Y_i^{(l)} - \bar{Y}^{(l)} )\big\}  \right\} - \expect\big\{\big(Y^{(l)} - \expect(Y^{(l)}) \big)\xi_k^{(l)} \big\} \right]\right) ^2 \\
				&=  \expect \left[ \sum_{l=1}^L \pi_l \left\{\frac{1}{n_l}\sum_{i=1}^{n_l} \big( \xi_{i,k}^{(l)}Y_i^{(l)} - \expect(\xi_{k}^{(l)}Y^{(l)}) ) - \frac{1}{n_l(n_l-1)}\sum_{i_1 \neq i_2} \xi_{i_1,k}^{(l)}Y_{i_2}^{(l)}   \right\} \right]^2 \\
				& = \expect \sum_{l=1}^L \pi_l^2 \left\{\frac{1}{n_l}\sum_{i=1}^{n_l} \big( \xi_{i,k}^{(l)}Y_i^{(l)} - \expect(\xi_{k}^{(l)}Y^{(l)}) ) - \frac{1}{n_l(n_l-1)}\sum_{i_1 \neq i_2} \xi_{i_1,k}^{(l)}Y_{i_2}^{(l)}   \right\} ^2  \\
				& \le c \sum_{l=1}^L \pi_l^2 \frac{1}{n_l} \big\{ \expect(\xi_k^{(l)})^2 \expect(\epsilon^{(l)})^2 + \expect \big((\xi_k^{(l)})^2\langle X^{(l)},w^{(l)}\rangle^2\big) \big\} \\
				& = O(N^{-1}k^{-\alpha}),
			\end{align*}
			due to Assumptions \ref{assump:cov-decay}-\ref{assump:x-moment}.
			Note that
			\[ \expect \|\hat g-g\|_2^2 = \sum_{k=1}^\infty 	\expect\langle \hat g -g, \phi_k\rangle^2 =  O(N^{-1}).   \]
		\end{proof}
		
		\begin{lemma}\label{lem:T1}
			Let $g_j = \langle g, \phi_j\rangle$ and
			\[T_{k1} = \sum_{j:j \neq k} (\lambda_k^{\mathcal A} - \lambda_j^{\mathcal A})^{-1}g_j\left( \int(\hat K^{\mathcal A} - K^{\mathcal A})  \phi_k \phi_j \right). \]
			We have $\expect T_{k1}^2 = O\big(N^{-1}(k^{-\alpha} + k^{-\alpha}h^2 + k^{2-2\alpha}h^2)\big)$.
		\end{lemma}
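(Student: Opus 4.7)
The plan is to exploit independence across source datasets together with the mean-zero structure of $\hat K^{(l)} - K^{(l)}$ to reduce $T_{k1}$ to a sum of independent mean-zero terms, then control its variance by a fourth-moment calculation on the source scores followed by a three-regime estimate of the eigenvalue gaps.

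First, since $\hat K^{\mathcal A} - K^{\mathcal A} = \sum_{l=1}^L \pi_l (\hat K^{(l)} - K^{(l)})$ and the source samples are mutually independent,
\[
\expect T_{k1}^2 \;=\; \sum_{l=1}^L \pi_l^2 \, \expect\bigg\{\int (\hat K^{(l)} - K^{(l)}) \, \phi_k \psi_k \bigg\}^2,
\]
where $\psi_k := \sum_{j \neq k}(\lambda_k^{\mathcal A}-\lambda_j^{\mathcal A})^{-1} g_j \phi_j$. Expanding in the Karhunen--Lo\`eve basis of $X^{(l)}$, the inner integral equals, up to lower-order centering terms, the sample average $(n_l-1)^{-1}\sum_{i=1}^{n_l} \xi_{i,k}^{(l)} \eta_i^{(l)}$ with $\eta_i^{(l)} := \sum_{j \neq k}(\lambda_k^{\mathcal A} - \lambda_j^{\mathcal A})^{-1} g_j \xi_{i,j}^{(l)}$. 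Because the scores are uncorrelated and $j \neq k$, $\expect(\xi_k^{(l)} \eta^{(l)}) = 0$, and hence the variance of this average is $n_l^{-1}\expect[\xi_k^{(l)2}\eta^{(l)2}]$.

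Second, I would establish the diagonal-type bound
\[
\expect[\xi_k^{(l)2}\eta^{(l)2}] \;\lesssim\; \lambda_k^{(l)}\sum_{j \neq k}(\lambda_k^{\mathcal A}-\lambda_j^{\mathcal A})^{-2} g_j^2 \lambda_j^{(l)},
\]
by expanding $\eta^{(l)2} = \sum_{j,j'} u_j u_{j'} \xi_j^{(l)} \xi_{j'}^{(l)}$ (with $u_j := (\lambda_k^{\mathcal A}-\lambda_j^{\mathcal A})^{-1} g_j$): the diagonal terms $j = j'$ give the stated bound directly via $\expect(\xi_k^{(l)2}\xi_j^{(l)2}) \leq c\,\lambda_k^{(l)}\lambda_j^{(l)}$ from Assumption \ref{assump:x-moment}, while the off-diagonal cross moments $\expect[\xi_k^{(l)2}\xi_j^{(l)}\xi_{j'}^{(l)}]$ are controlled by splitting $\xi_k^{(l)2} = \lambda_k^{(l)} + (\xi_k^{(l)2}-\lambda_k^{(l)})$ and applying Cauchy--Schwarz using $\expect(\xi_k^{(l)})^4 \leq c_3 \lambda_k^{(l)2}$. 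Combining this with $\sum_l \pi_l^2/n_l \lesssim 1/N$ reduces the claim to controlling $N^{-1}\lambda_k \sum_{j\neq k}(\lambda_k^{\mathcal A}-\lambda_j^{\mathcal A})^{-2} g_j^2 \lambda_j$.

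Third, I would plug in $|g_j| \lesssim j^{-\alpha-\beta} + j^{-\alpha} h_j$, where $h_j := \sum_l \pi_l|\delta_j^{(l)}|$ satisfies $\sum_j h_j \leq h$, and split the sum into the three regimes $j < k/2$, $j > 2k$, and $k/2 \leq j \leq 2k$ with $j\neq k$, using the eigenvalue-gap estimates from Lemma \ref{lem:lam_k-lam_j} (as already applied for $T_{k3}$ in the proof of Lemma \ref{lem:rate_w}). The bias part contributes $O(k^{-\alpha})$, while the drift part contributes $O(k^{-\alpha} h^2)$ from the two tail regimes and $O(k^{2-2\alpha} h^2)$ from the near-$k$ regime (the factor $\lambda_k^{-1}\lambda_j \sim k^{-\alpha}\cdot k^{-\alpha}$ near $j\approx k$ is what produces $k^{2-2\alpha}$ after multiplication by $k^{2\alpha+2}|k-j|^{-2}$), yielding the stated rate. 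The main obstacle is the second step: retaining the $\lambda_j^{(l)}$ factor inside the sum, since a crude Cauchy--Schwarz $\sqrt{\expect\xi_k^{(l)4}\cdot\expect\eta^{(l)4}} \lesssim \lambda_k^{(l)}\sum_j u_j^2$ would lose this factor and give only the weaker near-$k$ rate $k^{2-\alpha}h^2$.
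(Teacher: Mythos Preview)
Your overall structure matches the paper's proof: decompose across sources by independence, reduce to a variance bound on $n_l^{-1}\sum_i \xi_{i,k}^{(l)}\eta_i^{(l)}$, then do a three-regime sum using Lemma~\ref{lem:lam_k-lam_j}. The discrepancy is in your second step.

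The ``diagonal-type bound'' $\expect[\xi_k^{(l)2}\eta^{(l)2}]\lesssim \lambda_k^{(l)}\sum_{j\neq k}u_j^2\lambda_j^{(l)}$ does not follow from the argument you sketch. Your off-diagonal control via Cauchy--Schwarz gives $|\expect[(\xi_k^{(l)2}-\lambda_k^{(l)})\xi_j^{(l)}\xi_{j'}^{(l)}]|\lesssim \lambda_k^{(l)}(\lambda_j^{(l)}\lambda_{j'}^{(l)})^{1/2}$, and summing this over $j\neq j'$ produces $\lambda_k^{(l)}\big(\sum_{j\neq k}|u_j|(\lambda_j^{(l)})^{1/2}\big)^2$, i.e.\ the square-of-sum form, not the sum-of-squares form you claim. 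Under Assumption~\ref{assump:x-moment} alone there is no reason for the mixed fourth moments $\expect[\xi_k^{(l)2}\xi_j^{(l)}\xi_{j'}^{(l)}]$ to vanish, so the diagonal reduction is not available.

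The good news is that this does not matter, and your concern in the last paragraph is misplaced. The paper proceeds exactly via the Cauchy--Schwarz route you call ``crude'': it bounds $\expect[\xi_k^{(l)2}\eta^{(l)2}]\le (\expect\xi_k^{(l)4})^{1/2}(\expect\eta^{(l)4})^{1/2}$ and then uses $\expect|\xi_{j_1}^{(l)}\cdots\xi_{j_4}^{(l)}|\le c\prod_{\kappa}(\lambda_{j_\kappa}^{(l)})^{1/2}$ (an immediate consequence of Assumption~\ref{assump:x-moment} via iterated Cauchy--Schwarz) to get
\[
\expect T_{k1}^2 \;\le\; cN^{-1}\sum_{l}\pi_l\,\lambda_k^{(l)}\Big(\sum_{j\neq k}|\lambda_k^{\mathcal A}-\lambda_j^{\mathcal A}|^{-1}|g_j|(\lambda_j^{(l)})^{1/2}\Big)^{2}.
\]
This \emph{retains} the factor $(\lambda_j^{(l)})^{1/2}$ inside the sum, so nothing is lost. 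Carrying out the three-regime computation on $\sum_{j\neq k}|\lambda_k^{\mathcal A}-\lambda_j^{\mathcal A}|^{-1}|g_j|\lambda_j^{1/2}$ with $|g_j|\lesssim j^{-\alpha-\beta}+j^{-\alpha}\sum_l\pi_l|\delta_j^{(l)}|$ gives contributions $O(1)$ from the bias part and $O(h+k^{1-\alpha/2}h)$ from the drift part; squaring and multiplying by $\lambda_k\asymp k^{-\alpha}$ yields exactly $k^{-\alpha}+k^{-\alpha}h^2+k^{2-2\alpha}h^2$. So replace your diagonal claim by the square-of-sum bound and the rest of your outline goes through unchanged.
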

		\begin{proof}
			Note that
			\[ T_{k1} = \sum_{l=1}^L \pi_l  \sum_{j:j \neq k} (\lambda_k^{\mathcal A} - \lambda_j^{\mathcal A})^{-1}g_j  \left\{ \frac{1}{n_l-1}\sum_{i=1}^{n_l} (\xi_{i,k}^{(l)} -\bar \xi_k^{(l)}) ( \xi_{i,j}^{(l)} - \bar \xi_j^{(l)})  \right\}. \]
			Since $E\big|\xi_{j_1}^{(l)}\xi_{j_2}^{(l)}\xi_{j_3}^{(l)}\xi_{j_4}^{(l)}\big| \le c \Pi_{\kappa=1}^4 (\lambda_{j_\kappa}^{(l)})^{1/2} $,
			\begin{align*}
				\expect T_{k1}^2 = & \sum_{l=1}^L \pi_l^2 \expect \left\{ \sum_{j: j \neq k} (\lambda_k^{\mathcal A} - \lambda_j^{\mathcal A})^{-1}g_j \left( \frac{1}{n_l}\sum_{i=1}^{n_l} \big(\xi_{i,k}^{(l)} \xi_{i,j}^{(l)}  \big) - \frac{1}{n_l(n_l-1)} \sum_{i_1 \neq i_2} \xi_{i_1,k}^{(l)}\xi_{i_2, j}^{(l)}  \right)  \right\}^2 \\
				\le & 2 \sum_{l=1}^L \pi_l^2 \frac{1}{n_l} \expect \left\{ \sum_{j:j \neq k} (\lambda_k^{\mathcal A} - \lambda_j^{\mathcal A})^{-1}g_j  \big(\xi_{i,k}^{(l)} \xi_{i,j}^{(l)}  \big) \right\}^2 + \\
				& 2  \sum_{l=1}^L \pi_l^2 \frac{1}{n_l(n_l-1)} \expect\left\{ (\xi_k^{(l)})^2 \left( \sum_{j:j \neq k} (\lambda_k^{\mathcal A} - \lambda_j^{\mathcal A})^{-1}g_j \xi_j^{(l)}   \right)^2  \right\} \\
				\le & c \sum_{l=1}^L \pi_l^2 n_l^{-1} \big(\expect(\xi_k^{(l)})^4\big)^{1/2} \left[\expect \left\{ \sum_{j:j \neq k} (\lambda_k^{\mathcal A} - \lambda_j^{\mathcal A})^{-1}g_j  \xi_{i,j}^{(l)} \right\}^4 \right]^{1/2} \\
				\le & c N^{-1} \sum_{l=1}^L \pi_l \lambda_k^{(l)}  \left\{ \sum_{j_1: j_1 \neq k} \cdots \sum_{j_4: j_4 \neq k} \expect(\xi_{i,j_1}^{(l)} \xi_{i,j_2}^{(l)}  \xi_{i,j_3}^{(l)}  \xi_{i,j_4}^{(l)}  ) \times \Pi_{\ell=1}^4 ((\lambda_k^{\mathcal A} - \lambda_{j_\ell}^{\mathcal A})^{-1}g_{j_\ell} ) \right\}^{1/2} \\
				\le & c N^{-1} \sum_{l=1}^L \pi_l \lambda_k^{(l)} \left\{ \sum_{j:j \neq k} |\lambda_k^{\mathcal A} - \lambda_j^{\mathcal A}|^{-1}|g_j|(\lambda_j^{(l)})^{1/2} \right\}^2 \\
				\le & c\frac{k^{-\alpha}+ k^{-\alpha}h^2 + k^{2-2\alpha}h^2}{N},
			\end{align*}
			where the last inequality is obtained by using Lemma \ref{lem:lam_k-lam_j}.
		\end{proof}
		
		\begin{lemma}\label{lem:lam_k-lam_j}
			Under Assumption \ref{assump:cov-decay}, we have
			\[
			|\lambda_k - \lambda_j| \ge c_\alpha
			\left\{ 
			\begin{array}{cc}
				j^{-\alpha}, & j<k/2, \\
				k^{-\alpha}, & j>2k, \\
				|k-j|k^{-(\alpha+1)}, & k/2 \le j \le 2k, j\neq k.
			\end{array}
			\right.,
			\]
			for some positive constant $c_\alpha$ only depending on $\alpha$.
		\end{lemma}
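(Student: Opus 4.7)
\textbf{Proof proposal for Lemma \ref{lem:lam_k-lam_j}.} The plan is to write $|\lambda_k - \lambda_j|$ as a telescoping sum of consecutive gaps and then apply the lower bound $\lambda_i - \lambda_{i+1} \ge c_1^{-1} i^{-\alpha-1}$ from Assumption \ref{assump:cov-decay}, distinguishing the three regimes for $j$ relative to $k$. Since the sequence is monotone, I may write, for $j<k$, $\lambda_j - \lambda_k = \sum_{i=j}^{k-1}(\lambda_i - \lambda_{i+1})$, and analogously for $j>k$.

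First, for the intermediate regime $k/2 \le j \le 2k$ with $j \ne k$, I would take (say) $j > k$ without loss of generality and bound
\[ \lambda_k - \lambda_j = \sum_{i=k}^{j-1}(\lambda_i - \lambda_{i+1}) \ge c_1^{-1} \sum_{i=k}^{j-1} i^{-\alpha-1} \ge c_1^{-1}(j-k)(2k)^{-\alpha-1}, \]
since each $i \le j-1 \le 2k-1$. The symmetric case $j<k$ uses $i \le k-1$ in the same way, and the two together yield the $|k-j|k^{-\alpha-1}$ bound.

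Next, for $j>2k$, I keep only the block $i \in [k, 2k]$ in the telescoping sum: since $j-1 \ge 2k$, the sum $\sum_{i=k}^{j-1} i^{-\alpha-1}$ is at least $\sum_{i=k}^{2k} i^{-\alpha-1} \ge (k+1)(2k)^{-\alpha-1} \gtrsim k^{-\alpha}$. Symmetrically, for $j<k/2$ I keep the block $i \in [j, 2j]$, which lies inside $[j, k-1]$ because $2j < k$, and get $\sum_{i=j}^{2j} i^{-\alpha-1} \ge (j+1)(2j)^{-\alpha-1} \gtrsim j^{-\alpha}$. Multiplying by $c_1^{-1}$ gives the claim with some constant $c_\alpha$ depending only on $\alpha$ (and $c_1$).

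There is no genuine obstacle here; the lemma is a routine consequence of the gap assumption, and the main care is simply in choosing a convenient sub-interval of the telescoping sum in each regime so that every term contributes a controlled lower bound. The only mild subtlety is ensuring the chosen sub-interval $[j,2j]$ or $[k,2k]$ actually lies inside the summation range $[j,k-1]$ or $[k,j-1]$; this is exactly where the dichotomy $j<k/2$ versus $j>2k$ (as opposed to $j\le k$ or $j\ge k$) is used. The same argument applies verbatim to the eigenvalues $\lambda_k^{(l)}$ and, by averaging, to $\lambda_k^{\mathcal A}=\sum_l \pi_l \lambda_k^{(l)}$, which is how the lemma is used elsewhere in the proofs.
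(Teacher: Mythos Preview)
Your proposal is correct and is precisely the standard telescoping argument: the paper does not give its own proof but simply refers to the argument on page 85 of \citet{hall2007methodology} and Lemma 7 of \citet{dou2012estimation}, which proceed in exactly the way you describe. Your remark that the same bounds hold for $\lambda_k^{(l)}$ and hence for the convex combination $\lambda_k^{\mathcal A}$ is also apt, since the lemma is invoked for $\lambda_k^{\mathcal A}$ throughout the proofs.
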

		\begin{proof}
			See the arguments on page 85 in \citet{hall2007methodology} and Lemma 7 in \citet{dou2012estimation}.
		\end{proof}
		
		\begin{lemma}\label{lem:expansions}
			If we have
			\[ K^{\mathcal A}(s,t) = \sum_{k=1}^\infty \lambda_k^{\mathcal A} \phi_k(s)\phi_k(t), \quad \hat K^{\mathcal A} = \sum_{k=1}^\infty \hat \lambda_k^{\mathcal A} \hat \phi_k(s) \hat \phi_k(t),   \]
			then,
			\[ \bigg| \hat \lambda_k^{\mathcal A} - \lambda_k^{\mathcal A} - \int \big( \hat K^{\mathcal A} - K^{\mathcal A} \big) \phi_k\phi_k \bigg|  \le \|\hat \phi_k - \phi_k\|_2(|\hat \lambda_k^{\mathcal A} - \lambda_k^{\mathcal A}| + \hat \Delta_k ), \]
			where $\hat \Delta_k =  [\int  \{\int \big( \hat K^{\mathcal A}(s,t) - K^{\mathcal A}(s,t) \big) \phi_k(s) ds \}^2 dt]^{1/2} $.
			Furthermore, if $\inf_{k\neq j} | \hat \lambda_j^{\mathcal A} - \lambda_k^{\mathcal A} | >0$, then
			\[  \hat \phi_j(t) - \phi_j(t) = \sum_{k: k\neq j} (\hat \lambda_j^{\mathcal A} - \lambda_k^{\mathcal A})^{-1} \phi_k(t) \int(\hat K^{\mathcal A} - K^{\mathcal A}) \hat \phi_j \phi_k + \phi_j(t)\int (\hat \phi_j- \phi_j) \phi_j.  \]
		\end{lemma}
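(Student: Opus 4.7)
The plan is to derive both assertions directly from the two eigenequations $K^{\mathcal A}\phi_k=\lambda_k^{\mathcal A}\phi_k$ and $\hat K^{\mathcal A}\hat\phi_k=\hat\lambda_k^{\mathcal A}\hat\phi_k$, combined with the orthonormality of $\{\phi_k\}$ and $\{\hat\phi_k\}$, the definition of $\hat\Delta_k$, and Cauchy--Schwarz. The proof is purely algebraic; no probabilistic estimate is required. The whole task amounts to selecting the right test function at each line and keeping track of the bookkeeping.

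For the first bound, I introduce $r_k(t)=\int(\hat K^{\mathcal A}(s,t)-K^{\mathcal A}(s,t))\phi_k(s)\,ds$, so that $\|r_k\|_2=\hat\Delta_k$ and $\int\hat K^{\mathcal A}(s,t)\phi_k(s)\,ds=\lambda_k^{\mathcal A}\phi_k(t)+r_k(t)$. Testing against $\hat\phi_k(t)$ and using $\int\hat K^{\mathcal A}(s,t)\hat\phi_k(s)\,ds=\hat\lambda_k^{\mathcal A}\hat\phi_k(t)$ on the other side gives $(\hat\lambda_k^{\mathcal A}-\lambda_k^{\mathcal A})\langle\hat\phi_k,\phi_k\rangle=\int\hat\phi_k\,r_k$. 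Writing $\hat\phi_k=\phi_k+(\hat\phi_k-\phi_k)$ on the right and noting that $\int\phi_k\,r_k=\int(\hat K^{\mathcal A}-K^{\mathcal A})\phi_k\phi_k$, I rearrange to
\[(\hat\lambda_k^{\mathcal A}-\lambda_k^{\mathcal A})-\int(\hat K^{\mathcal A}-K^{\mathcal A})\phi_k\phi_k=(\hat\lambda_k^{\mathcal A}-\lambda_k^{\mathcal A})(1-\langle\hat\phi_k,\phi_k\rangle)+\int(\hat\phi_k-\phi_k)r_k.\]
The claim then follows from $|1-\langle\hat\phi_k,\phi_k\rangle|=|\int(\phi_k-\hat\phi_k)\phi_k|\le\|\hat\phi_k-\phi_k\|_2$ and $|\int(\hat\phi_k-\phi_k)r_k|\le\|\hat\phi_k-\phi_k\|_2\hat\Delta_k$.

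For the second assertion, I expand $\hat\phi_j$ in the complete orthonormal basis $\{\phi_k\}$ as $\hat\phi_j=\sum_k\langle\hat\phi_j,\phi_k\rangle\phi_k$. The $k=j$ term, shifted by $-\phi_j$, gives $(\langle\hat\phi_j,\phi_j\rangle-1)\phi_j=\phi_j\int(\hat\phi_j-\phi_j)\phi_j$, which is exactly the last summand in the stated expansion. For $k\neq j$, I test the eigenequation $\int\hat K^{\mathcal A}(s,t)\hat\phi_j(s)\,ds=\hat\lambda_j^{\mathcal A}\hat\phi_j(t)$ against $\phi_k(t)$ and split $\hat K^{\mathcal A}=K^{\mathcal A}+(\hat K^{\mathcal A}-K^{\mathcal A})$; the $K^{\mathcal A}$ contribution simplifies through $\int K^{\mathcal A}(s,t)\phi_k(t)\,dt=\lambda_k^{\mathcal A}\phi_k(s)$ to $\lambda_k^{\mathcal A}\langle\hat\phi_j,\phi_k\rangle$, yielding $(\hat\lambda_j^{\mathcal A}-\lambda_k^{\mathcal A})\langle\hat\phi_j,\phi_k\rangle=\int(\hat K^{\mathcal A}-K^{\mathcal A})\hat\phi_j\phi_k$. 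Solving for the Fourier coefficient under the assumed spectral gap $\inf_{k\neq j}|\hat\lambda_j^{\mathcal A}-\lambda_k^{\mathcal A}|>0$ gives exactly the stated expansion.

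There is no real obstacle; the only mildly delicate point is engineering Part 1 so that the factor $1-\langle\hat\phi_k,\phi_k\rangle$ ends up multiplied by $\hat\lambda_k^{\mathcal A}-\lambda_k^{\mathcal A}$. One is then free to bound it by the linear quantity $\|\hat\phi_k-\phi_k\|_2$, matching the lemma's statement, rather than by the sharper but unnecessary quadratic identity $1-\langle\hat\phi_k,\phi_k\rangle=\tfrac12\|\hat\phi_k-\phi_k\|_2^2$.
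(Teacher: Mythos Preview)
Your proof is correct and self-contained. The paper does not supply its own argument for this lemma but simply cites Lemma~5.1 of \citet{hall2007methodology}; your derivation is essentially the standard proof given there, so there is nothing to compare.
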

		\begin{proof}
			See Lemma 5.1 in \citet{hall2007methodology}.
		\end{proof}
	
\end{document}